\def\favoritefont{\bfseries \sffamily}
\long\def\@makecaption#1#2{
   \vskip 10pt
   \setbox\@tempboxa\hbox{{\footnotesize {\favoritefont #1.} {\sffamily #2}}}
   \ifdim \wd\@tempboxa >\hsize         
       {\footnotesize {\favoritefont #1.} {\sffamily #2}\par}
     \else                              
       \hbox to\hsize{\hfil\box\@tempboxa\hfil}
   \fi}
\def\@begintheorem#1#2{\trivlist 
   \item[\hskip \labelsep{\favoritefont #1\ #2}]\itshape}
\def\@opargbegintheorem#1#2#3{\trivlist
   \item[\hskip \labelsep{\favoritefont #1\ #2}{\bf\ (#3)}]\itshape}
\def\QED{\ensuremath{{\Box}}}
\def\markatright#1{\leavevmode\unskip\nobreak\quad\hspace*{\fill}{#1}}
\newenvironment{proof}
	{\begin{trivlist}\item[\hskip\labelsep{\favoritefont Proof:}]}
	{\markatright{\QED}\end{trivlist}}
\newtheorem{theorem}{Theorem}
\newtheorem{lemma}[theorem]{Lemma}
\newtheorem{corollary}[theorem]{Corollary}
\newtheorem{obs}{Observation}
\newtheorem{definition}{Definition}
\newcommand{\qed}{}
\newcommand{\IR}{\ensuremath{\mathbb{R}}} 
\newcommand{\IN}{\ensuremath{\mathbb{N}}}
\newcommand{\lee}{\leqslant}
\newcommand{\gee}{\geqslant}
\newcommand{\ceil}[1]{{\left\lceil{#1}\right\rceil}}
\newcommand{\floor}[1]{{\left\lfloor{#1}\right\rfloor}}
\newcommand{\set}[1]{{\{ #1 \}}}
\newcommand{\seq}[1]{{\left< #1 \right>}}
\newcommand{\eps}{\varepsilon}
\newcommand{\bslash}{\!\setminus\!}
\newcommand{\bigOmega}{{\rm\Omega}}
\newcommand{\etal}{{\em et~al.\/}}
\newcommand{\REM}[1]{}
\newcommand{\eq}{{\ \leftarrow\ }}
\newcommand{\CF}{{\mathscr F}}
\newcommand{\CR}{{\mathscr R}}
\newcommand{\CI}{{\mathscr I}}
\newcommand{\CB}{{\mathscr B}}
\newcommand{\CP}{{\mathscr P}}
\newcommand{\CT}{{\mathscr T}}
\newcommand{\CS}{{\mathscr S}}
\newcommand{\CC}{{\mathscr C}}
\newcommand{\CH}{{\mathscr H}}
\newcommand{\Frechet}{Fr\'echet }
\newcommand{\distF}{\delta_F}
\newcommand{\distFS}{\delta_{\bar{F}}} 
\newcommand{\distWeakF}{\delta_{\bar{N}}} 
\newcommand{\distClosedF}{\delta_{\bar{C}}} 
\newcommand{\distPartialF}{\delta_{\bar{P}}} 
\newcommand{\distDisF}{\delta_{dF}} 
\newcommand{\distC}{\delta_C} 
\newcommand{\distSetF}{\delta_{\CF}} 
\newcommand{\SC}{slope-constrained }
\newcommand{\cell}[1]{{\CC_{#1}}}
\newcommand{\BNM}{\CB_{n \times m}}
\newcommand{\BNNM}{\CB_{2n \times m}}
\newcommand{\CO}{{\mathscr O}}
\newcommand{\COB}{{\bar{\CO}}}
\newcommand{\Feps}{\CF_\eps}
\newcommand{\LF}{L^\CF}
\newcommand{\BF}{B^\CF}
\newcommand{\LR}{L^\CR}
\newcommand{\BR}{B^\CR}
\newcommand{\LT}{L^{T}}
\newcommand{\BT}{B^{T}}
\newcommand{\OP}{\CT}
\newcommand{\Xmin}[2]{{#1}_{\min}(#2)}
\newcommand{\Xmax}[2]{{#1}_{\max}(#2)}
\newcommand{\umin}[1]{\Xmin{\bar{v}}{#1}}
\newcommand{\umax}[1]{\Xmax{\bar{v}}{#1}}
\newcommand{\vmin}[1]{\Xmin{v}{#1}}
\newcommand{\vmax}[1]{\Xmax{v}{#1}}
\newcommand{\minS}[1]{\mbox{minSlope}_{#1}}
\newcommand{\maxS}[1]{\mbox{maxSlope}_{#1}}
\newcommand{\entry}[1]{\mbox{entry}(\cell{#1})}
\newcommand{\exit}[1]{\mbox{exit}(\cell{#1})}
\newcommand{\proj}[1]{\pi_{#1}}
\newcommand{\Left}{\mbox{left}}
\newcommand{\Right}{\mbox{right}}
\newcommand{\union}{\mbox{\sc U}}
\newcommand{\lei}{\prec}
\newcommand{\Ho}[2]{\CH_{\overline{#1},\overline{#2}}}
\newcommand{\But}[2]{\CB_{\overline{#1},\overline{#2}}}
\newcommand{\Ov}[1]{\overline{#1}}
\newcommand{\PP}{{F}}
\newcommand{\SP}{\pi} 
\newcommand{\Lin}{\overleftrightarrow{cd}} 
\newcommand{\Dir}{\overrightarrow} 
\newcommand{\For}{{\bf for }}
\newcommand{\Foreach}{{\bf for each }}
\newcommand{\Do}{{\bf do }}
\newcommand{\row}[1]{{\CR_{#1}}}
\newcommand{\RSet}[2]{{\mathscr RS}_{#1}^{#2}} 
\newcommand{\fs}{free-space }
\newcommand{\FD}{\mathscr {FD}}
\newcommand{\FS}{\mathscr {FS}}
\newcommand{\provided}{{\ | \ }}
\newcommand{\lr}{\mbox{\sc Leftmost-Reachable}}
\newcommand{\rl}{\mbox{\sc Rightmost-Take-Off}}
\newcommand{\topp}{\text{top}}
\newcommand{\up}{\text{up}}
\newcommand{\nil}{\mbox{null}}
\newcommand{\bigTheta}{{\rm\Theta}}
\newcommand{\CD}{{\mathscr D}}
\newcommand{\Next}{\mbox{next}}
\newcommand{\Prev}{\mbox{prev}}
\newcommand{\CL}{{\mathbb L}}
\newcommand{\F}{\CF}
\newcommand{\R}{\CR}
\newcommand{\RE}{\mbox{\sc R}}
\newcommand{\Last}{\mbox{last}}
\newcommand{\reach}{\leadsto}
\newcommand{\lp}{\ell}
\newcommand{\rp}{r}
\newcommand{\lex}{\preceq}
\newcommand{\pset}{S}
\newcommand{\ap}{\oplus}
\newcommand{\gre}{{g}}
\newcommand{\sma}{{s}}
\newcommand{\Seg}[1]{{\overline{#1}}}
\newcommand{\ri}{r}
\newcommand{\CQ}{{\mathscr Q}}
\newcommand{\sq}{{\CS\CQ}}
\newcommand{\cfev}{{l}} 
\begin{document}

\title{
Applied Similarity Problems Using \Frechet Distance 
        \\}
\author{Kaveh Shahbaz}
\submitdate{Feb 15, 2011} \copyrightyear{2011}

\frontmatter


\pagestyle{fancy}
\fancyhf{}
\renewcommand{\headrulewidth}{0pt}
\fancyhead[LE,RO]{\thepage}
\renewcommand{\chaptermark}[1]{%
\markboth{\chaptername \ \thechapter.\ #1}{}}
\lhead{\nouppercase{\leftmark}}

\chapter {Introduction and Motivation}
\pagenumbering{arabic}
\setcounter{page}{1}
\label{ch:Background}


The problem of curve matching appears in a variety of different domains, like 
shape matching, GIS applications~\cite{AltERW03a, Buchin10, Appx-MM}, 
pattern recognition~\cite{FDRevisited,JiangXZ08}, 
computer vision~\cite{AltBook2009}, speech recognition~\cite{FDSpeech},  
time series analysis~\cite{FDTime},  
and signature verification~\cite{FDHandwriting,SriraghavendraKB07}.  The main questions
associated with curve matching in a specific domain
are: What is a good measure of similarity between curves? How can we compute it (or some approximation
of it) efficiently? Other questions that are often
of interest include: given a database of curves and a candidate
curve, can we find a nearest neighbor to this curve
in the database? Can we cluster curves with respect to
a given measure of similarity?

Curve matching has been studied extensively by computational 
geometers. 
The Hausdorff distance and the \Frechet distance are the most well-known distance measures 
to assess the resemblance of two curves
(see \cite{SomeOtherMetrics} for some other metrics such as 
the bottleneck distance, the volume of symmetric difference).
The Hausdorff distance between two curves $P$ and $Q$ is the smallest $\delta$,
such that $P$ is completely contained in the $\delta$-neighborhood of $Q$, and vice
versa. 
Although the Hausdorff distance is arguably a  natural distance
measure between curves and/or compact sets, it is too ”static”,
in the sense that it neither considers 
direction nor any dynamics of the motion along the curves
(see Figure \ref{fig:Haus}). The \Frechet distance deals with
this problem. It takes the order between points along 
the curves into consideration, making it a better 
measure of similarity for curves than alternatives 
such as the Hausdorff distance. 

The \Frechet distance was first defined by Maurice \Frechet in 1906~\cite{FirstFD}. 
While known as a famous distance measure in the field of mathematics 
(more specifically, abstract spaces), 
it was Alt and Godau~\cite{AltG95} who first applied it in measuring the similarity of polygonal curves in early 1990s. 

An intuitive way to understand the \Frechet metric is as follows:
imagine a man is walking his
dog, he is walking on one curve, the dog on the other. Both are allowed to
control their speeds, but are not allowed to go backwards. Then, the \Frechet
distance of the curves is the minimal length of a leash that is necessary.

Alt and Godau~\cite{AltG95} proposed an $O(n^2 \log n)$ time 
algorithm to compute the \Frechet distance, 
where $n$ is the total complexity of the curves. 
Since that time,
\Frechet metric has received much attention as a measure
of curve similarity and many variants have been studied. These include 
minimizing the \Frechet distance under various classes of transformations~\cite{AltTranslation,Mosig2005}, 
extending it to graphs~\cite{AltERW03a,VehicleTracking}, piecewise smooth curves~\cite{smoothFD}, simple polygons~\cite{Buchin2006}, surfaces~\cite{Alt2009Surface}, and
to more general metric spaces~\cite{WenkC08a,Chambers10,Cook2009},
in curve simplification~\cite{Agarwal2002}, 
protein structure alignment~\cite{JiangXZ08,FDRevisited} and morphing~\cite{GuibasNoCross}.


%


\begin{figure}[h]
	\centering
	\includegraphics[width=0.7\columnwidth]{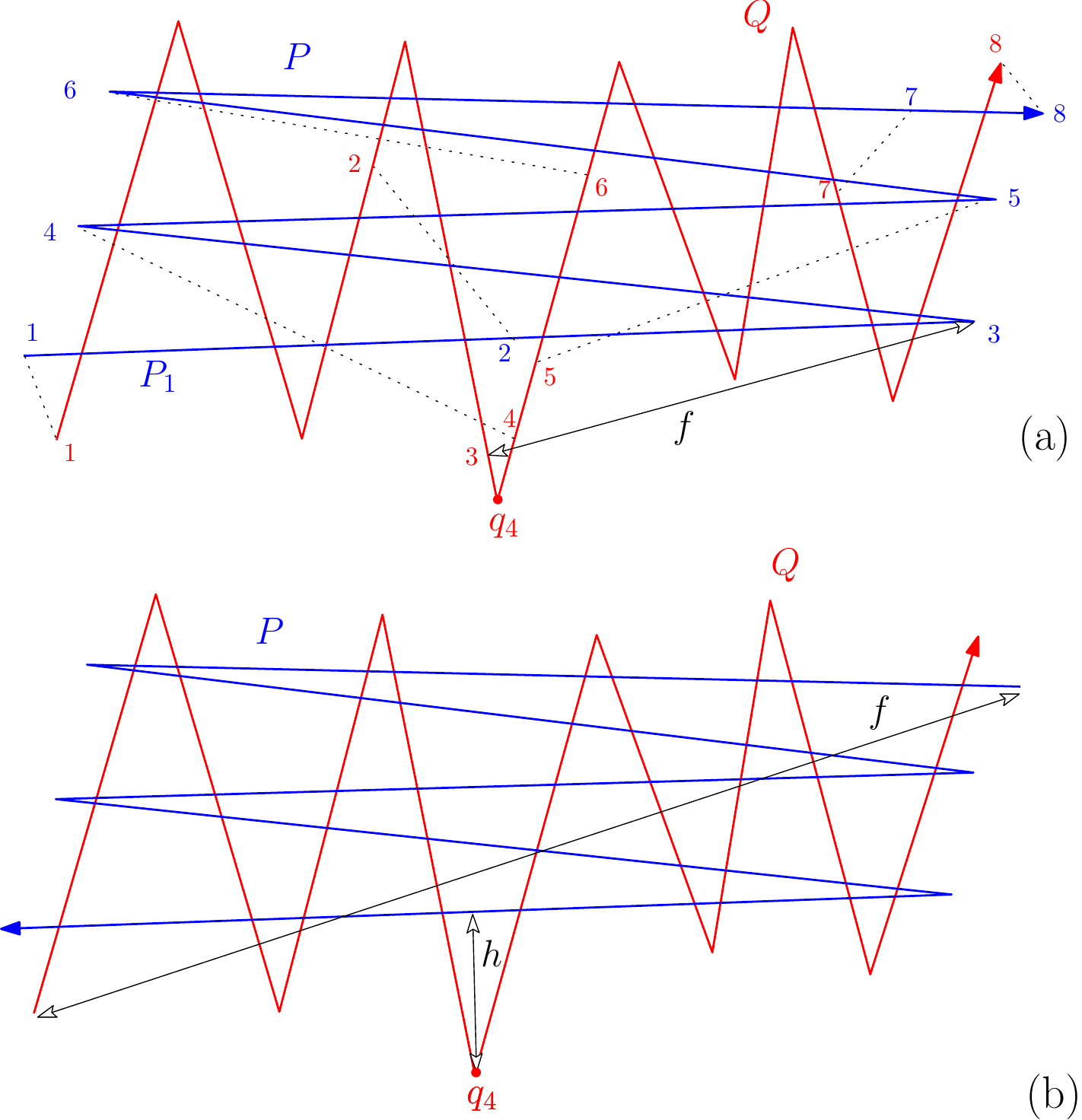}
	\caption{ Hausdorff vs. \Frechet distance. Shows two curves $P$ and $Q$ with small Hausdorff distance $h$ having a large \Frechet distance $f$.
	(a) The \Frechet distance is indicated by $f$. The Hausdorff distance is the distance from vertex $q_4$ to $P_1$. A sample walk is also shown with
a sequence of the locations of the moving objects.
(b) The direction of $P$ is reversed. The \Frechet distance is not same as before but the Hausdorff distance remains unchanged.
	Applet of Pelletier~\cite{FDApplet} is used to compute \Frechet distance.
	}
	\label{fig:Haus}
\end{figure}


\section{Contributions of the Thesis}

The main contributions of this thesis are summarized below: 
\begin{itemize}
\item We introduce a generalization of the well-known \Frechet 
distance between two polygonal curves which incorporates speed limits. 
We provide efficient 
algorithms for computing that metric~\cite{oursCCCG2009,oursSpeedJournal}.

\item We present an algorithm which computes 
the speed-constrained \Frechet distance when the
input curves are restricted to be inside a simple polygon \cite{oursCCCG2010}.

\item We introduce a new data structure called  
the \emph{free-space map} which can be used to solve several variants of 
\Frechet distance problems efficiently.
We improve algorithms for partial curve matching and
closed curve matching  using free-space map. 
We also obtain an improved algorithm for
the map matching algorithm of Alt \etal~\cite{AltERW03a}
for the case when the map is a directed acyclic graph.
We also solve \emph{minimum/maximum walk} problem efficiently using
our data structure~\cite{oursESA2011,oursPartialAlgorithmica}. 

\item We introduce the \emph{curve-pointset matching} problem and present 
an efficient algorithm to solve it~\cite{oursCCCG2011}. 

\item We provide NP-completeness proof of \emph{all-points curve-pointset matching} problem.

\end{itemize}

\section{Organization of the Thesis}

This thesis is organized as follows. In the next chapter, 
we describe the algorithm due to Alt and Godau \cite{AltG95} 
for computing the \Frechet distance. In addition, we summarize different variants of \Frechet distance problem 
which have been studied and
describe briefly the techniques for solving them.

Next, in Chapter \ref{ch:speedFD},
we introduce a new generalization of 
\Frechet distance and provide an efficient algorithm for computing it. 
The classical \Frechet distance between two polygonal curves corresponds to the 
maximum distance between two point objects that traverse the curves with arbitrary non-negative speeds.
Here, we consider a problem instance in which the speed of traversal 
along each segment of the curves is restricted to be within a specified range. 
We provide an efficient algorithm that  
decides in $O(n^2\log n)$ time whether the \Frechet distance with speed limits
between two polygonal curves is at most $\eps$,
where $n$ is the number of segments in the curves, and
$\eps \gee 0$ is an input parameter. 
We then use our solution to this decision problem 
to find the exact \Frechet distance with speed limits
in $O(n^3 \log n)$ time.

Given two polygonal curves inside a simple polygon,
in Chapter \ref{ch:speed-geodesic}, we study the problem of finding the 
\Frechet distance between the two curves
under the following two conditions
(i) the distance between two points on the curves is measured as the length of the shortest path
between them lying inside the simple polygon, and 
(ii) the traversal along each segment of the polygonal curves 
is restricted to be between a minimum and a maximum permissible speed
assigned to that segment.%
We provide an algorithm that decides in $O(n^2 (k + n))$ time 
whether the speed-constrained geodesic \Frechet distance between two polygonal curves inside a 
simple polygon is within a given value $\eps$, 
where $n$ is the number of segments in the curves, and $k$ is the complexity of the polygon. 

In Chapter \ref{ch:partial}, 
we propose a new data structure, \emph{free-space map}, 
that enables us to solve several variants of the \Frechet distance problem efficiently.
Our data structure encapsulates all the
information available in the 
free-space diagram of Alt and Godau~\cite{AltG95} to compute the
\Frechet distance. 
In addition, our data structure is capable of answering
more general type of queries than 
the free-space diagram. 
Given that the free-space map has the same size  and 
construction time 
($O(n^2)$, $n$ is the total complexity of the curves)
as the standard free-space diagram,
 it can be viewed as a powerful alternative.
 
Using our new data structure,
we present improved algorithms 
for several variants of
the \Frechet distance problem.
In particular, we improve the 
$O(n^2\log^2n)$ time algorithm 
for computing the partial \Frechet distance in \cite{AltG95}, 
by a $\log n$ factor. 
Also, we obtain improved algorithms 
for computing \Frechet distance between 
two closed curves, and the so-called \emph{minimum/maximum walk} problem. 
Our data structure leads to efficient 
algorithms for
the map matching algorithm of Alt \etal 
\cite{AltERW03a}
for the case when the map is a directed acyclic graph.

In Chapter \ref{ch:StayClose},
we examine the following variant
of the \Frechet distance problem, 
which we refer to as the \emph{Curve-Pointset Matching (CPM)} problem. 
Given a pointset $\pset$ of size $k$ and a polygonal 
curve $P$ of size $n$ in $\IR^d$,
we study the problem of finding a polygonal curve $Q$ whose vertices are from  $\pset$,	
and has a minimum \Frechet distance to $P$. 
In the decision version of that problem, 
given a distance $\eps \ge 0$, we 
present an $O(nk^2)$ time algorithm 
to decide if exists a curve $Q$ through some points of $\pset$
in $\eps$-\Frechet distance to curve $P$, 
where vertices of $Q$ are from $\pset$, 
and curve $Q$ need not contain all points of $\pset$ 
and may use a point of $S$ multiple times. 
Also, we show that the
curve of minimum \Frechet distance can be computed in
time $O(nk^2 \log(nk))$.
As a by-product of our result, 
we improve the map matching algorithm of Alt \etal\ \cite{AltERW03a}
by a $\log k$ factor for the case when the map is a complete graph.
Finally, in Chapter \ref{ch:NP-Complete}, we 
study the same problem as in the previous chapter, 
under the new condition that 
curve $Q$ must visit every point in the pointset $\pset$.
We refer to this problem 
as All-Points CPM problem and 
we show that it is NP-complete. 

\REM{
Finally, in Chapter \ref{ch:SpecialCase},
we study some special cases of  All-Points CPM problem, 
where  instead of a general curve, the input to the problem is 
a convex $xy$-monotone curve, 
convex $x$-monotone curve (or convex $y$-monotone curve) or
a convex polygon. 
}


\chapter{Related Work}
\label{ch:related}

\section{Classical \Frechet Distance Problem }
\label{sec:classicalFD}
The \Frechet distance is a metric to measure the similarity of polygonal curves. It was first defined by a French mathematician,
Maurice \Frechet \cite{FirstFD}.
The \Frechet distance between two curves is 
often referred to as a dog-leash distance because
it can be interpreted as the minimum-length leash required for a person to walk a dog,
if the person and the dog, each travels from its respective starting position to its ending position, without ever letting go off the leash or backtracking.
The length of the leash determines how similar the two curves are to each other:
a short leash means the curves are similar,
and a long leash means that the curves are different from each other.

Two problem instances naturally arise:  decision and optimization.
In the {\em decision problem}, one wants to decide whether two polygonal curves $P$  and $Q$
are within $\eps$ \Frechet distance from each other, i.e., if a leash of given length $\eps$ suffices.
In the {\em optimization problem}, one wants to determine the minimum such $\eps$.
In~\cite{AltG95}, Alt and Godau gave an $O(n^2)$ 
time algorithm for the decision problem,
where $n$ is the total number of segments in the curves.
They also solved the corresponding optimization problem in $O(n^2\log n)$ time.
Here, we first describe their decision algorithm:

\paragraph{Polygonal Curve (or Polyline).}
A {\em polygonal curve\/} in $\IR^d$ is  a continuous function 
$P:[0,n] \rightarrow \IR^d$ with $n \in \IN$, 
such that for each $i \in \set{0, \ldots, n-1}$,
the restriction of $P$ to the interval $[i, i+1]$ 
is affine (i.e., forms a line segment).
The integer $n$ is called the {\em length\/} of $P$.
Moreover, the sequence ${P(0), \ldots, P(n)}$ represents the set of {\em vertices\/} of $P$.
For each $i \in \set{1, \ldots, n}$, 
we denote the line segment $P(i-1)P(i)$ by $P_i$.


\paragraph{\Frechet Distance.}

A {\em monotone parametrization} of $[0,n]$ 
is a continuous non-decreasing function $\alpha: [0,1] \rightarrow [0,n]$
with $\alpha(0)=0$ and $\alpha(1)=n$.
Given two polygonal curves $P$ and $Q$ of lengths $n$ and $m$ respectively, 
the {\em \Frechet distance\/} between $P$ and $Q$ is defined as
\[
	\distF(P,Q) = \inf_{\alpha, \beta} \max_{t \in [0,1]} d( P(\alpha(t)), Q(\beta(t)) ),
\]
where $d$ is the Euclidean  distance, and $\alpha$ and $\beta$ range over all monotone parameterizations of 
$[0,n]$ and $[0,m]$, respectively.

\paragraph{Free-Space Diagram.}
To compute the \Frechet distance, a way of representing all possible 
person and dog walks is needed.
Let $\BNM = [0,n] \times [0,m]$ be an $n$ by $m$ rectangle  in the plane.
Each point $(s,t) \in \BNM$ uniquely represents a pair of points
$(P(s),Q(t))$ on the polygonal curves $P$ and $Q$.
We decompose $\BNM$ into
$n\cdot m$ unit grid cells $\cell{ij} = [i-1,i] \times [j-1,j]$
for $(i,j) \in \set{1, \ldots, n} \times \set{1, \ldots, m}$,
where each cell $\cell{ij}$ corresponds to
a segment $P_i$ on $P$ and a segment $Q_j$ on $Q$.
Given a parameter $\eps \gee 0$,
the {\em free space\/} $\Feps$ is defined as
\[
	\Feps = \set{(s,t) \in \BNM \ | \ d(P(s),Q(t)) \lee \eps }.
\]
We call any point $p \in \Feps$ a {\em feasible\/} point.
An example of the free-space diagram for two curves $P$ and $Q$ 
is illustrated in Figure~\ref{fig:diagram}.a.
The free-space diagram was first used in~\cite{AltG95}
to find the standard \Frechet distance in near quadratic time.
Consider any segment $P_i$ and $Q_j$ from polygonal curves $P$ and $Q$, 
respectively. Then, the free space inside cell $\cell{ij}$ is convex 
and can be determined in $O(1)$ time by computing the intersection of a unit square and an ellipse~\cite{AltG95}.

\begin{figure}[h]
	\centering
	\includegraphics[width=0.60\columnwidth]{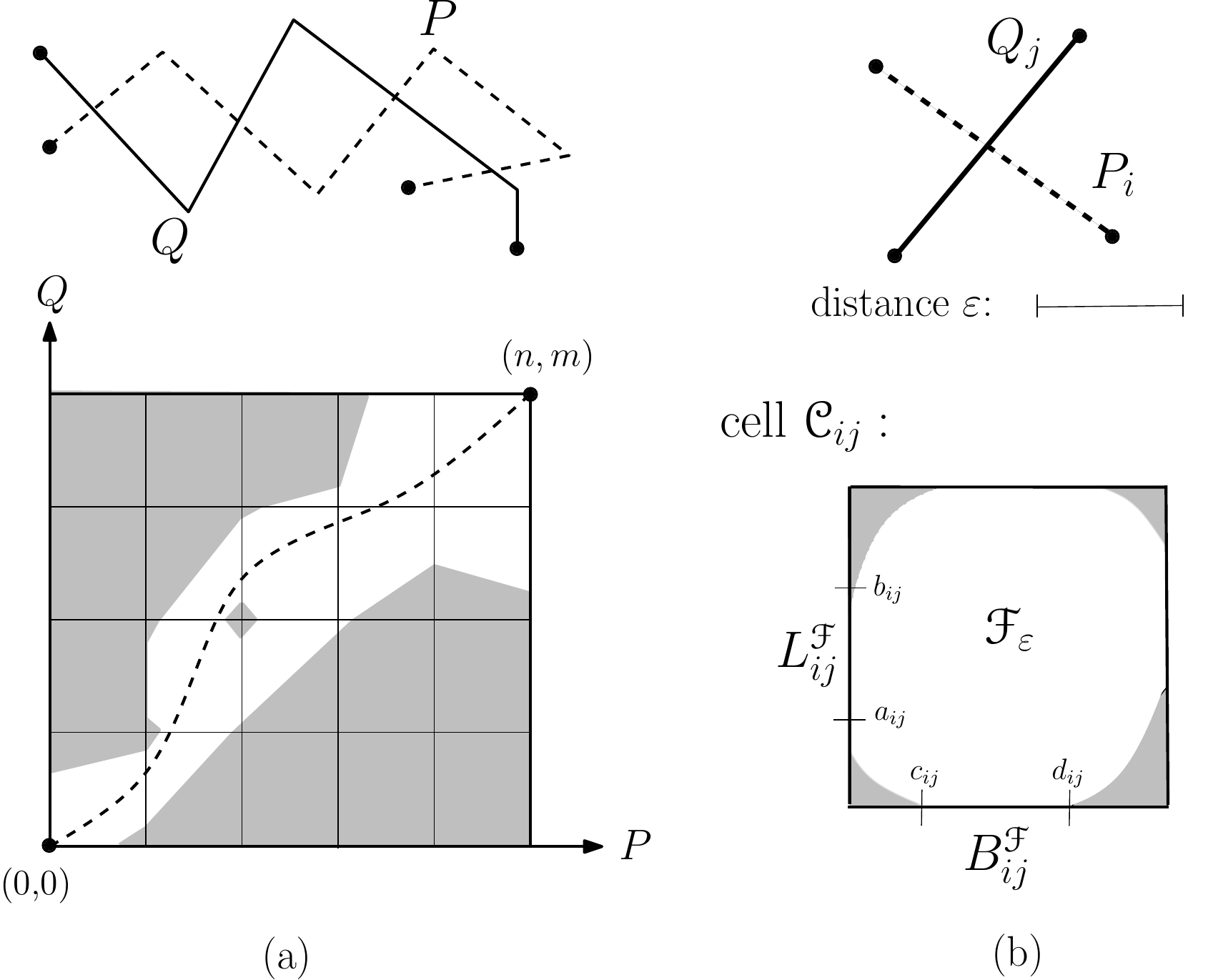}
	\caption{ (a) The free-space diagram for two polygonal curves $P$ and $Q$;
	(b) two segments $P_i$ and $Q_j$ and their corresponding free space.
	The diagram was generated using a Java applet developed by S. Pelletier~\cite{FDApplet}.}
	\label{fig:diagram}
\end{figure}


	Here, we show how the function of such an ellipse is computed:
	Let the coordinates of the endpoints of $P_i$ be: $(p_{1a}; p_{1b}); (p_{2a}; p_{2b})$
	and the coordinates of
	the endpoints of $Q_j$ be: $(q_{1a}; q_{1b}); (q_{2a}; q_{2b})$. 
	Let $P_i$ be defined by the function: $y = a_1x + b_1$
	and $Q_j$ be defined by the function: $y = a_2x + b_2$. Then: 
	\begin{displaymath}
	a_1 = \frac{ p_{2b} - p_{1b} }{ p_{2a} - p_{1a} }, b_1 = \frac{p_{2a}p_{1b} - p_{1a}p_{2b}}		{p_{2a} - p_{1a}}
	\end{displaymath}
	\begin{displaymath}
	a_2 = \frac{ q_{2b} - q_{1b} }{ q_{2a} - q_{1a} }, b_2 = \frac{q_{2a}q_{1b} - q_{1a}q_{2b}}{q_{2a} - q_{1a}}
	\end{displaymath}
	The points located in the 2-dimentional coordinate system of the parametrization of $P_i$
	and $Q_j$ satisfy: $x' \in [0,1], y' \in [0,1]$, thus the coordinate of points in $P_i$ are :

\abovedisplayshortskip=0pt
\belowdisplayshortskip=0pt
\abovedisplayskip=0pt
\belowdisplayskip=0pt

	\begin{displaymath}
	( (p_{2a} - p_{1a}) x' + p_{1a}, (p_{2b} - p_{1b})x' + p_{1b} ).
	\end{displaymath}
	The coordinates of the points in $Q$ are: 
	\begin{displaymath}
	( (q_{2a} - q_{1a}) y' + q_{1a}, (q_{2b} - q_{1b})y' + q_{1b} ).
	\end{displaymath}
	Every point in the free space inside $\cell{i,j}$, corresponds to exactly two points, one from $P_i$
	the other from $Q_j$ where their distance is less than or equal parameter $\epsilon$. Therefore, 
	\begin{displaymath}
	[ (p_{2a} - p_{1a})x' + p_{1a} - (q_{2a}-q_{1a}) ]^2 + [ (p_{2a} - p_{1b}) x' + p_{1b} - (q_{2b}-q_{1b})y'-q_{1b} ]^2 \le \epsilon^2
	\end{displaymath}
	by expanding the above inequality and considering the fact that :
	\begin{displaymath}
	(p_{2a}-p_{1a})^2 + (p_{2b} - p_{1b})^2 = |P_i|^2 ; (q_{2a} - q_{1a})^2 + (q_{2b}-q_{1b})^2 = |Q_i|^2.
	\end{displaymath}
	We derive the function of the ellipse as follows:
	\begin{eqnarray*}
	\lefteqn{|P_i|^2 x'^2 + |Q_j|^2y'^2 - 2 [(p_2a-p_1a)(q_2a-q_1a) + (p_2b-p_1b)(q_2b-q_1b)] x'y'} \\
	& & \mbox{}+2[(p_{1a}-q_{1a})(p_{2a}-p_{1a}) + (p_{1b}-q_{1b})(p_{2b}-p_{1b})]x'  \\
	& & \mbox{}-2[(p_{1a} - q_{1a})(q_{2a}-q_{1a}) + (p_{1b}-q_{1b})(q_{2b}-q_{1b})]y'  \\
	& & \mbox{}+(p_{1a}-q_{1a})^2 + (p_{1b} - a_{1b})^2 \le \epsilon ^2 
	\end{eqnarray*}	
	Since an ellipse is a convex  shape and the unit 
	square in the free-space diagram is convex too, 
	the intersection of two convex objects is convex and therefore, 
the free space inside each cell is convex. 
In addition, Alt and Godau~\cite{AltG95} observed that any $xy$-monotone 
path from $(0,0)$ to $(n,m)$ in the free space corresponds to traversals of $P$ and $Q$, 
where the traversing objects remain at a distance of at most $\eps$ from each other.

Based on the above observations, 
Alt and Godau~\cite{AltG95} provided an algorithm
to solve the decision problem (i.e., decide if $\distF(P,Q) 
\lee \eps$ for a given $\eps \gee 0$) in quadratic time as follows:

Let $L_{ij}$ (resp., $B_{ij}$) denote the left (bottom, resp.) line 
segment bounding $\cell{ij}$ (see Figure~\ref{fig:diagram}.b).
As a preprocessing step, the free space, $\Feps$, is computed by the algorithm.
Let $\LF_{ij} = L_{ij} \cap \Feps$ and $\BF_{ij} = B_{ij} \cap \Feps$ 
(see Figure~\ref{fig:diagram}.b).
Since $F_\eps$ is convex within $\cell{ij}$,
each of $\LF_{ij}$ and $\BF_{ij}$ is a line segment. 
The preprocessing step therefore involves computing line segments
$\LF_{ij}$ and $\BF_{ij}$ for all feasible pairs $(i,j)$, 
which can be done in $O(n^2)$ time.
A point $(s,t) \in \Feps$ is called {\em reachable\/} 
if there is a monotone path from $(0,0)$ to $(s,t)$ in $\Feps$.
Let $\LR_{ij}$ be the set of reachable points in $L_{ij}$,
and $\BR_{ij}$ be the set of reachable points in $B_{ij}$.
Observe that all non-empty sets $\LR_{ij}$ and $\BR_{ij}$ for each cell $\cell{ij}$ forms line 
segment~\cite{AltG95}. The algorithm processes the cells in the row-wise order, from $\cell{0,0}$ to $\cell{nm}$, 
and at each cell $\cell{ij}$, $\LR_{ij}$ and $\BR_{ij}$ are computed. 
Finally, at the last cell, if the top-right corner of $\BNM$ is reachable, 
``YES" is returned as the answer to the decision problem, otherwise ``NO" is returned.
Details are shown in Algorithm~\ref{alg:StandardFDec}. 
Given polygonal curves $P$ and $Q$ with total complexity $n$, 
Algorithm~\ref{alg:StandardFDec} decides in $O(n^2)$ time 
if $\distF(P,Q) \lee \eps$~\cite{AltG95}.


\begin{algorithm} [h]
\caption {\sc Standard \Frechet Decision Algorithm~\cite{AltG95} } \label{alg:StandardFDec}

\algsetup{indent=1.5em}
\begin{algorithmic}[1]
	\vspace{0.5em}
	\baselineskip=1\baselineskip

		\FOR { each cell $\cell{ij}$ } 
 
 		\STATE Compute $\LF_{ij}$  and $\BF_{ij}$
\ENDFOR

	\STATE Set $\LR_{0,0} = \BR_{0,0} = \set{(0,0)}$, \ 
		$\LR_{i,0} = \emptyset$ for $i \in \set{1, \ldots, n}$, \ 
		$\BR_{0,j} = \emptyset$ for $j \in \set{1, \ldots, m}$  
	\FOR {$i = 0$ to $n$} 
	 	\FOR {$j = 0$ to $m$}
			\STATE Compute  $\LR_{i+1,j}$ and $\BR_{i,j+1}$ from  $\LR_{i,j}$, $\BR_{i,j}$,  $\LF_{i+1,j}$ and $\BF_{i,j+1}$.
		\ENDFOR
	\ENDFOR
	\STATE\label{line:last} Return ``{\sc yes}" if $(n,m) \in \LR_{n+1,m}$, ``{\sc no}" otherwise. 

\end{algorithmic}
\end{algorithm}

The algorithm proposed by Alt and Godau for actually computing the \Frechet distance $\distF$ 
makes use of Algorithm \ref{alg:StandardFDec}, and the technique of parametric search of Megiddo~\cite{Megiddo83}, 
accompanied by a speedup technique due to Cole~\cite{Cole87}. 
The resulting algorithm has time complexity $O(n^2 \log n)$.

Let $\LF_{ij} = [a_{ij}, b_{ij}]$ and  $\BF_{ij} = [c_{ij}, d_{ij}]$ (see Figure~\ref{fig:diagram}.b). 
Notice that the free space, $\Feps$, is an increasing function of $\eps$, 
that is, for $\eps_1 \lee \eps_2$, we have $\CF_{\eps_1} \subseteq \CF_{\eps_2}$.
Therefore, to find the exact value of $\distF(P,Q)$,
we can start from $\eps = 0$, and continuously increase $\eps$ until
we reach the first point at which $\Feps$ contains a monotone path from $(0,0)$ to $(n,m)$.
This occurs at only one of the following ``critical values''~\cite{AltG95}:
\begin{itemize} \itemsep1pt
	\item[(A)] smallest $\eps$ for which $(0,0) \in \Feps$ or $(n,m) \in \Feps$. 
	These are the distances between starting point and endpoints of $P$ and $Q$.

	\item[(B)] smallest $\eps$ at which $\LF_{ij}$ or $\BF_{ij}$  becomes non-empty for some
 	pair $(i,j)$ (when a new passage opens between two adjacent cells in the diagram). 
	These are the distances between vertices of one curve and edges of the other (see Figure \ref{fig:TypeBC}a).
	
	\item[(C)] smallest $\eps$ at which $a_{ij} =b_{k\ell}$, or 
	$d_{ij} = c_{k\ell}$, for some $i,j,k$, and $\ell$, 
	(when a new horizontal or vertical passage opens within the diagram). 
	A critical distance of type (C) corresponds to the common 
	distance of two vertices of one curve 
	to the intersection point of their bisector 
	with an edge of the other curve~\cite{AltG95} (see Figure \ref{fig:TypeBC}b).
\end{itemize}

\begin{figure}[t]
	\centering
	\includegraphics[width=0.90\columnwidth]{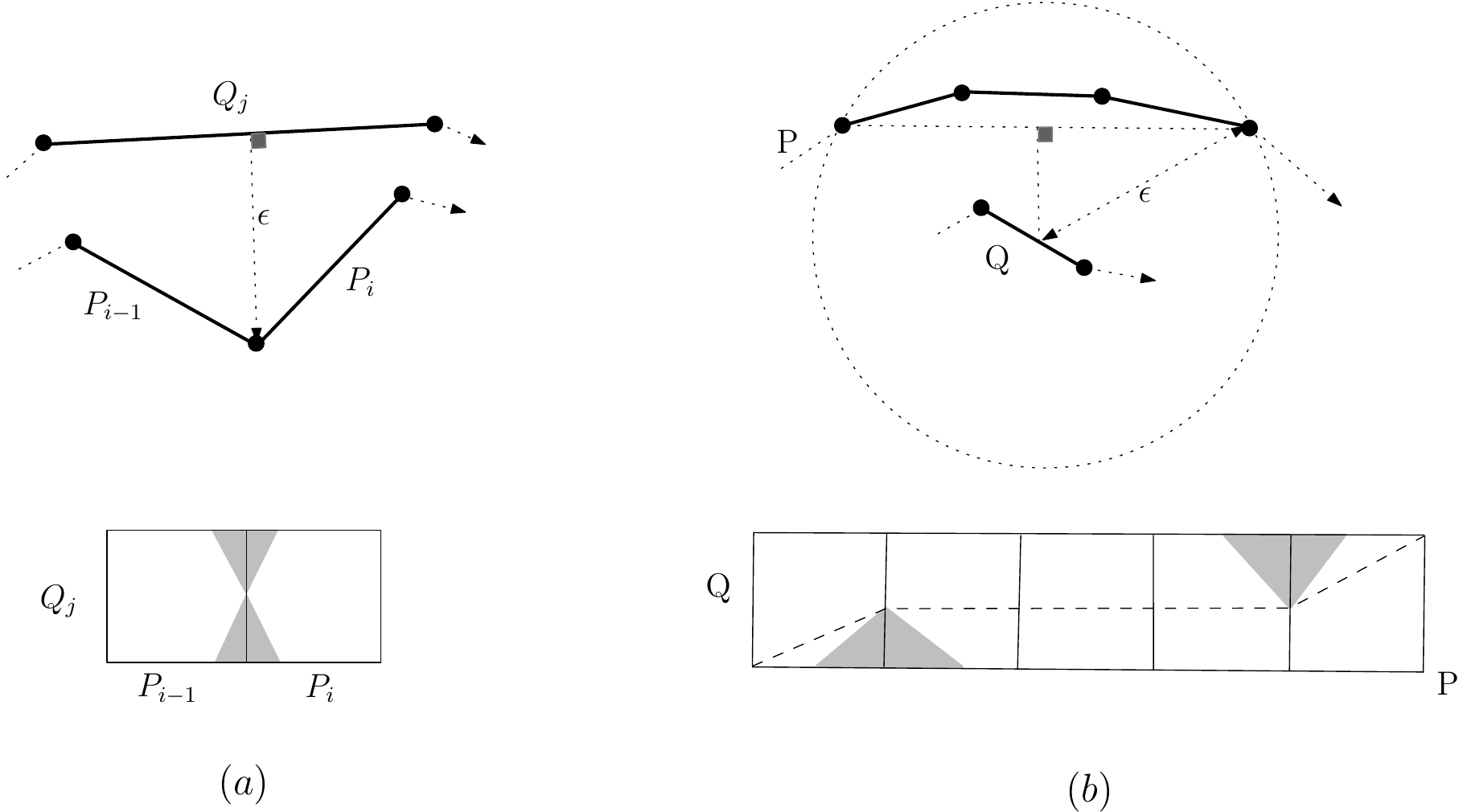}
	\caption{The geometric situations corresponding to Type (B) and Type (C)
 critical distances. 
(a) a new passage opens between two neighboring cells in the free-space diagram (b) a horizontal passage opens in the free-space diagram. }
	\label{fig:TypeBC}
\end{figure}

There are two critical values of type (A), 
$O(n^2)$ critical values of type (B), and $O(n^3)$ critical values of type (C),
each computable in $O(1)$ time.
Therefore, to find the exact value of $\distFS(P,Q)$,
one can compute all these $O(n^3)$ values, sort them, 
and do a binary search (equipped with Algorithm~\ref{alg:StandardFDec})
to find the smallest $\eps$ for which $\distF(P,Q) \lee \eps$,
in $O(n^3 \log n)$ total time.
However, as mentioned in~\cite{AltG95},
a parametric search method~\cite{Megiddo83,Cole87} can be applied
to the critical values of type (C) to get a faster algorithm.

The crucial observation made in~\cite{AltG95} is that
any comparison-based sorting algorithm that sorts
$a_{ij}, b_{ij}, c_{ij}$, and $d_{ij}$ (defined as functions of $\eps$)
has critical values that include those of type (C).
This is because the critical values of type (C) occur if 
$a_{ij} = b_{k\ell}$ or $d_{ij} = c_{k\ell}$, 
for some $i,j,k$, and $\ell$. Thus, Algorithm ~\ref{alg:CompFD}, 
uses parametic search to find the exact value of \Frechet distance.

\begin{algorithm} [t]
\caption {\sc Standard \Frechet Computation Algorithm~\cite{AltG95} } \label{alg:CompFD}
\algsetup{indent=1.5em}
\begin{algorithmic}[1]
	\vspace{0.5em}
	\baselineskip=0.9\baselineskip

\STATE Compute all critical values of types (A) and (B), and sort them.
	
	\STATE Binary search to find two consecutive values  $\eps_1$ and $\eps_2$ in  the sorted list
	such that $\distF \in [\eps_1,\eps_2]$.
	
	\STATE Let $S$ be the set of endpoints $a_{ij}, b_{ij}, c_{ij}$, $d_{ij}$ of intervals  
	$\LF_{ij}$ and $\BF_{ij}$ that are nonempty for $\eps \in [\eps_1,\eps_2]$.
	Use Cole's parametric search method~\cite{Cole87}  based on sorting the values in $S$ 
	to find the exact value of $\distF$.

\end{algorithmic}
\end{algorithm}

Steps 1 and 2 together take $O(n^2\log n)$ time. 
The parametric search in Step 3 takes $O(( k + T ) \log k)$ time,
where $k$ is the number of values to be sorted, 
and $T$ is the time needed by the decision algorithm.
In case of the standard \Frechet distance problem, $k = |S| = O(n^2)$, and $T = O(n^2)$.
We conclude that the exact \Frechet distance between two polygonal curves 
can be computed in $O(n^2 \log n)$ time~\cite{AltG95}.


\section{Variants of \Frechet Distance}

In this section, we summarize different variants 
of \Frechet distance metric
which have been studied in the literature.

\subsection{Weak \Frechet Distance}

One of the variants of the \Frechet metric studied in~\cite{AltG95} is the weak \Frechet
distance or non-monotone \Frechet distance. Coming back to the man-dog illustration 
of the \Frechet metric, in this instance,  
both the man and the dog are allowed to 
backtrack on their respective curves.

Let $\distWeakF(P,Q)$ denote the weak \Frechet distance between two polygonal curves $P$
and $Q$. In order to solve the decision and optimization problems, 
the same $m\times n$-diagram $\BNM$
can be used as in the previous section. 
Now the decision problem has a yes answer iff 
there exists a path from $(0,0)$ to $(m,n)$ in $\Feps$ which is not necessarily monotone~\cite{AltG95}. 
To solve the decision problem, an undirected labeled graph, $G=(V,E)$, is constructed 
on top of $\BNM$ as follows:

For each cell in the diagram, one node is added to $V$; two additional nodes $s$ and $t$
are added to the graph, where node $s$ represents point $(0,0)$
and node $t$ represents point $(m,n)$.
Two nodes are connected via an edge in the graph 
if their corresponding cells are adjacent in the diagram.
Furthermore, one edge connects node $s$ (resp., node $t$)  
to the node which corresponds to cell $\cell{11}$ 
(resp., cell $\cell{mn}$) as  depicted in Figure \ref{fig:weak}. 
The edge between two neighboring cells is labeled with a minimal $\epsilon$
for which there is a possible direct transition between the two cells within $\Feps$.
The edge $\{s,\cell{11} \}$ is labeled with the distance between starting points of the curves
and the edge  $\{s,\cell{11} \}$ is labeled with the distance between ending points of the curves. 
Let the weight of a path within $G$ be the largest weight of its edges. After constructing graph $G$, the decision problem has a positive answer 
iff there exists a path of weight $\epsilon$ between $s$ and $t$ within graph $G$.
This can be done by removing all edges of weight greater than $\epsilon$, 
and checking if $s$ and $t$ are in the same connected component, 
for example, by running BFS algorithm~\cite{AltG95}.

\begin{figure}[t]
	\centering
	\includegraphics[width=0.50\columnwidth]{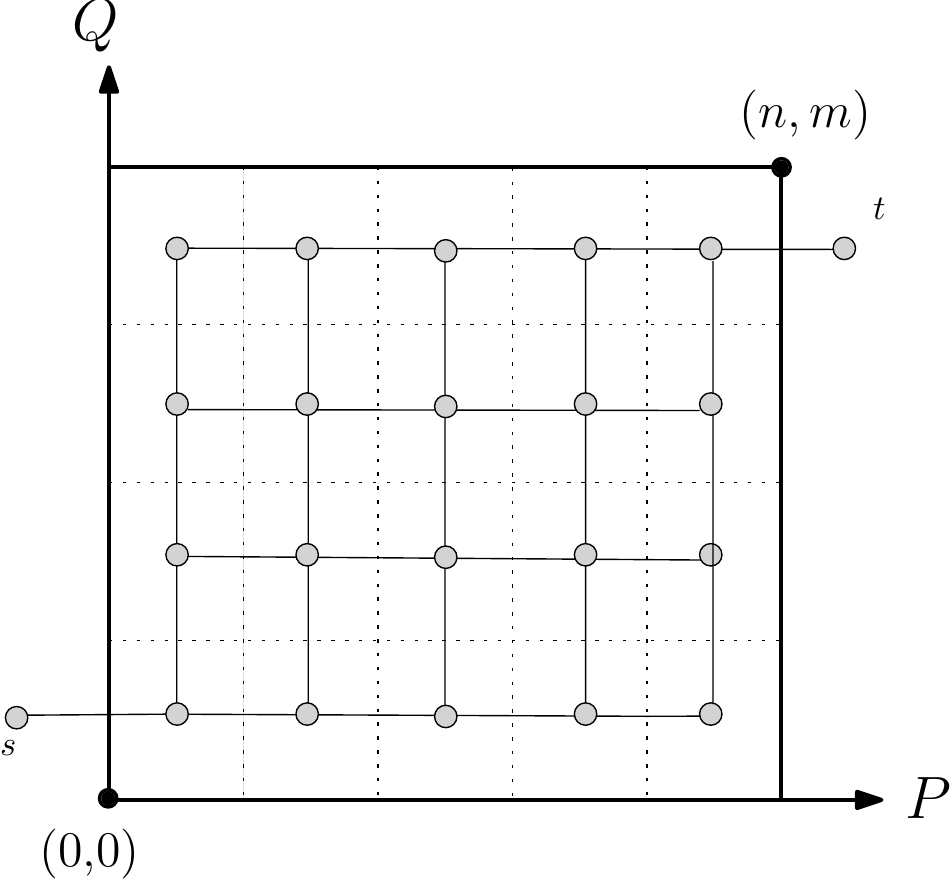}
	\caption{ The graph with grey nodes is built on top of $\BNM$ to compute the weak \Frechet distance}
	\label{fig:weak}
\end{figure}

The computation of the exact value of $\distWeakF(P,Q)$ consists of 
determining the minimum weight path within graph $G$ from $s$ to $t$.
This can be computed by using Prim's minimum spanning tree 
algorithm starting from $s$ and running it until the minimum spanning tree 
containing $s$ and $t$ is found. 
After finding 
the MST, one can run breadth first search algorithm to find 
a path from $s$ to $t$ in MST.  
We conclude that 
given two polygonal curves $P$ and $Q$ with total length $n$ and a distance $\epsilon$,
one can decide in $O(n^2)$ time if $\distWeakF(P,Q) \le \epsilon$ and 
the exact value of $\distWeakF(P,Q)$ can be found in $O(n^2 \log n)$~\cite{AltG95}.

\subsection{\Frechet Distance of a Set of Curves}
Dumitrescu \etal~\cite{SetofCurves} have extended the \Frechet distance 
notion between two curves to a set of curves and showed how to compute and approximate it .
The corresponding intuitive illustration is as follows. 
Suppose that points are moving, one on each of	given curves. 
The speed of each point may vary but no point is allowed to move backwards. 
Assume that all pairs of points are connected by strings of the same length. 
Then, the \Frechet distance of the set of curves is the minimum length of a connecting string that is necessary.

To compute the \Frechet distance of a set of $m$ 
curves $f_1,f_2,...,f_m$ (with complexity $n_1,n_2,...,n_m$, respectively), 
the approach of ~\cite{AltG95} can be adapted. 
First, a free-space diagram corresponding to each pair of curves is built. 
To answer the decision problem, one would
 check whether 
there exists a path from  $(0,...,0)$ to $(n_1,...,n_m)$ in free-space
diagram in $\IR^m$ which is monotone in all $m$ coordinates.
This  takes $O(n_1 ... n_m)$ time and  using parametric search, the resulting final algorithm
has time complexity $O(n_1...n_m\log(n_1..n_m))$.
In~\cite{SetofCurves}, a simple algorithm 
is proposed which computes 
the \Frechet distance of set of curves (i.e., $\distSetF$) approximately. 
Let $d_{ij} = \distF(f_i,f_j)$. Observe that 
$\distSetF \le \min_{1 \le i \le m} \max_{1\le j <k \le m} (d_{ij} + d_{ik})$~\cite{SetofCurves}.
Thus, one can compute all pairwise \Frechet distances and 
output  $\min_{1 \le i \le m} \max_{1\le j <k \le m} (d_{ij} + d_{ik})$ as
the \Frechet distance of a set of curves with the approximation ratio 2~\cite{SetofCurves}. The running time of this approach is $O(\sum_{1\le i <j \le m} n_i n_j \log (n_in_j))$ which is much better than that of the exact algorithm previously mentioned.

Alt \etal~\cite{AltTranslation} consider the problem of 
minimizing the \Frechet distance under translations:
Given two polygonal curves, search for a translation which, when
applied to the first curve, minimizes the \Frechet distance to the second one.
The decision algorithm decides whether there is a transformation
that, when applied to the first curve, results in a \Frechet 
distance less or equal than some given parameter $\epsilon$.
The runtime of the decision algorithm is $O( (mn)^3(m+n)^2)$. 
The parametric search adds only a logarithmic overhead,
since Cole's technique for parametric search based on sorting~\cite{Cole87} can be applied, so
the optimization problem can be solved in $O((mn)^3(m + n)^2 \log(m + n))$ time. 
In~\cite{JiangXZ08}, the authors present  algorithms for matching two polygonal chains 
in two dimensions to minimize their discrete \Frechet distance under 
translation and rotation.

%

\subsection{Average \Frechet and Summed \Frechet Distance }


Notice that the \Frechet metric is a max measure; 
it is defined as the maximum pointwise distance minimized over all parametrizations. 
This dependence on the maximum value can often lead to non-robust behavior, where small variations in
the input can distort the distance function by a large
amount. 
Consider for example the curves shown in Figure \ref{fig:FD-sum}. 
Assume one wants to match
the curve $f_2$ either to the curve $f_1$ or $f_3$. 
Intuitively, it seems that $f_3$ is the better
match. 
This is however not reflected by the \Frechet distance which is equal for 
both pairs of curves $(f_1, f_2)$ and $(f_2, f_3)$.
An average \Frechet distance 
was suggested in~\cite{VehicleTracking} which averages over certain 
distances instead of taking the maximum.
Efrat \etal~\cite{SumFD}  has combined {\em dynamic time warping} to compute an integral  version of the \Frechet distance, 
which can “smooth out” the impact of some outliers.
Dynamic time warping measure (DTW) is a measure which was
first proposed in the 60s as a measure of speech signal similarity.  
In the dog-man setting,
the DTW distance between two curves (defined as sequences
of points) is the sum of the leash lengths measured
at each (discrete) position (minimized over all
trajectories). 
%


\begin{figure}[t]
	\centering
	\includegraphics[width=0.65\columnwidth]{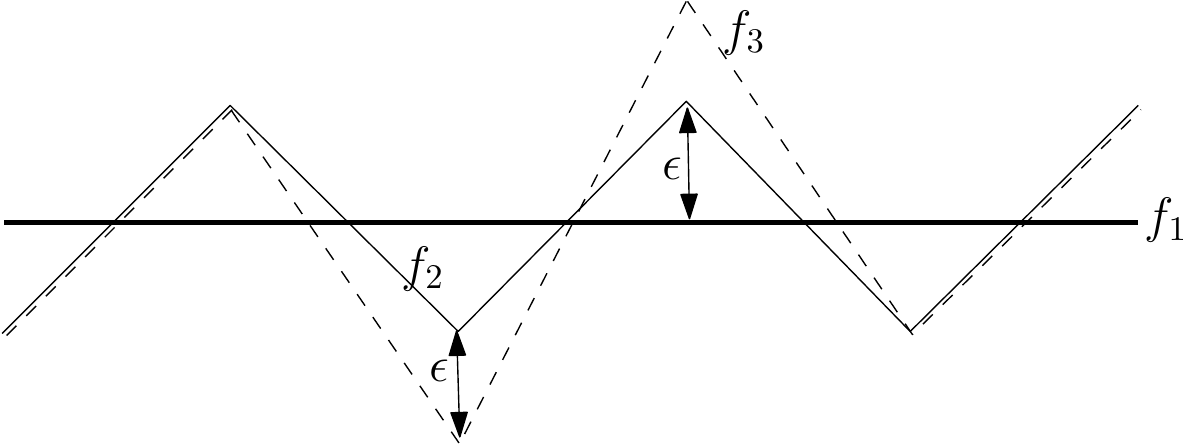}
	\caption{ $\distF(f_1,f_2) = \distF(f_2,f_3)$. But curve $f_2$ is more matched to curve $f_3$.	}
	\label{fig:FD-sum}
\end{figure}



\subsection{\Frechet Distance of Specific Families of Curves}

It has been an open problem to find a sub-quadratic algorithm 
for computing the \Frechet distance.
A lower bound of $\Omega(n \log n)$ is given by~\cite{LowerBound-FD}
for the problem of deciding whether the \Frechet 
distance between two curves is smaller 
or equal a given value. 
In~\cite{AltBook2009}, Alt conjectured that 
the decision problem may be 3SUM-hard~\cite{ClassOfN2}. 
In a very recent work~\cite{BuchinFasterFrechet},
Buchin~\etal~present an algorithm 
with total expected time
$O(n^2\sqrt{\log n}(\log \log n )^ {3/2})$
which is the first algorithm to achieve 
a running time of $o(n^2\log n)$ for computing \Frechet distance.
Furthermore, they 
show that there exists an algebraic decision tree 
for the decision problem of depth $O(n^{2-\gamma})$, 
for some $\gamma>0$. This 
provides some insights that might suggest
that the decision problem may not be 3SUM-hard.

The only subquadratic algorithms known are for
quite restricted classes of curves such as 
for closed convex curves and for $k$-bounded curves~\cite{Alt2003Comparison}, 
or discrete version of \Frechet distance \cite{DiscreteSUBQuadractic}. 
For a curve to be $k$-bounded means that for any two points on the curve
the portion of the curve in between them cannot be further away from either point than $\frac{k}{2}$
times the distance between the two points.  For closed convex curves the \Frechet distance
equals the Hausdorff distance and for $k$-bounded curves the \Frechet distance is at most
$(1 + k)$ times the Hausdorff distance, and hence the $O(n \log n)$ algorithm for the Hausdorff
distance applies.

The \Frechet distance of another family of curves, {\em $c$-packed} curves, 
is studied in a recent work by Driemel \etal~\cite{c-packedFD}.
A curve $P$ is called $c$-packed if the total length of $P$ inside any 
circle is bounded by $c$ times the radius of the circle. Intuitively, the constant $c$
measures how unrealistic the input curves are.

A $k$-bounded curve might have arbitrary length while maintaining a finite diameter, 
and as such may not be $c$-packed.
Unlike $k$-bounded curves, the \Frechet
distance between two $c$-packed curves might be arbitrarily 
larger than their Hausdorff distance. Indeed, $c$-packed
curves are considerably more general and a more natural
family of curves. For example, a $c$-packed curve might self cross and revisit the same location
several times, and the class of $c$-packed curves is closed under concatenation, 
none of which is true for $k$-bounded curves. Given two $c$-packed curves $P$ and $Q$
with total complexity $n$, a $(1+\epsilon)$-approximation of the \Frechet distance between them 
can be computed in $O(\frac{n}{\epsilon} + c n \log n)$ time~\cite{c-packedFD}.

In standard \Frechet metric, the objects are piecewise linear.
Rote~\cite{smoothFD} explores the \Frechet distance between more general curves where
each input curve is given as a sequence of smooth curve pieces that are “sufficiently well-behaved”, 
such as circular arcs, parabolic arcs, or some class of spline curves. 
He has shown that the combinatorial complexity, i.e., the number of steps, for solving the decision
 problem is not larger than for polygonal paths, $O(n^2)$ ($n$ is the total size of two given curves). 
Furthermore, under the assumption that the curves consist of algebraic pieces whose degree is bounded by a constant, 
the optimization problem can be solved in $O(n^2\log n)$ time, which matches the running time for the polygonal case.

\subsection{\Frechet Distance with no Leash Cross}
In the \Frechet metric, the leash is allowed to cross the two polylines.
A natural restriction to apply is to require that the leash not cross the polylines.
Efrat \etal~\cite{GuibasNoCross} has introduced two new metrics 
for measuring the distance between non-intersecting 
(not self-intersecting) polygonal curves:
Given two polylines with total complexity $n$, they 
present algorithms to compute the geodesic 
width of the two polylines in $O(n^2 \log^2 n)$ time 
using $O(n^2)$ space and the link width 
in $O(n^3 \log n)$ time using $O(n^2)$ working space
 where $n$ is the total number of edges of the
 polylines. Their computation of these metrics relies on two 
closely-related combinatorial 
strutures: the shortest-path diagram and the link diagram of a simple polygon.
The shortest-path (resp., link) diagram encodes the
Euclidean (resp., link) shortest path distance between 
all pairs of points on the boundary of the polygon. 
Later, Bespamyatnikh~\cite{BesNoCross} obtained a faster
algorithm for computing the geodesic width in $O(n^2)$ time, using $O(n)$ space.

\subsection{Directional-based \Frechet Distance}
Notice that small deviations in one curve can disproportionately 
influence the similarity of two curves.
Furthermore, translations and scalings can affect the result, 
and it is very difficult to make the \Frechet distance invariant under these
types of transformations.
To address these issues, ~\cite{Dir-FD} has proposed the 
{\em direction-based } \Frechet distance. 
Like the standard \Frechet distance, this measure
optimizes over all parametrizations for a pair of curves. Unlike the \Frechet
distance, it is based on differences between the directions of movement along the
curves, rather than on positional differences. 
Therefore, the directional-based \Frechet distance 
is invariant under translations and scalings.
It measures the similarity of polygonal curves by integrating over 
the angular differences between pairs of vectors.
The direction-based \Frechet distance of two polygonal curves with $m$ and $n$ vertices
can be computed in $O(mn)$ time, using $O(m + n)$ space~\cite{Dir-FD}.
Furthermore, the direction-based integral \Frechet distance is proposed
in \cite{Dir-FD} to ensure that small variations in one path do not disproportionately affect the similarity measure.

The measure most
closely related to the direction-based \Frechet 
distance is the turning angle distance~\cite{Arkin1990}. This distance measure
is essentially the same as the direction-based integral \Frechet distance, but
with the following important difference: the turning angle distance 
does not optimize over all possible one-to-one mappings between the two curves. 
Rather, the direction-based \Frechet distance optimizes over all possible one-to-one mappings between the two curves. 
The turning angle distance is 
easily  computed in $O(m + n)$ time for two
polygonal curves with $m$ and $n$ vertices~\cite{Arkin1990}. 
%

%
%

\subsection{\Frechet Distance of Closed Curves}
\label{sec:RelatedClosed}
Closed polygonal curves are curves with common starting and ending points. 
The man-dog illustration of \Frechet metric in this variant is as follows:  
the man and the dog 
are not only allowed to control their speeds, 
but also to choose optimal starting points on the closed curves to minimize 
the length of the leash.

Let $\distClosedF(P,Q)$ denote the \Frechet distance between two closed curves. 
Alt \etal~\cite{AltG95} proposed a polynomial time algorithm 
to solve the decision problem of $\distClosedF(P,Q)$ as follows.
First, a new diagram $\BNNM$ is constructed by 
concatenating two copies of $\BNM$ ~\cite{AltG95}. 
Then, a data structure is built on top of $\BNNM$ to check 
the following property in constant time:
 $\distClosedF(P,Q) \le \epsilon$ iff there exists
a $t \in [0,n]$  and a monotone curve from
$(t,0)$ to $(t+n,m)$ in the free space $\Feps$ of $\BNNM$~\cite{AltG95}.
Suppose diagram $D = \BNNM$ is given and 
$B,T, L,$ and $R$ are its bottom, top, left and right sides, respectively.
In the data structure, these sides are partitioned into some intervals 
where each interval is a connected subset of white points on the 
boundary of $D$. There are three types of intervals: 
%

\begin{itemize}\itemsep1pt

	\item $I$ is $n-$interval iff from no point on $I \subseteq B \cup L$, 
	a point on  $T \cup R$ can be reached by a monotone path 
	in $\Feps$ of $\BNNM$.
	
	\item $I$ is $r-$interval iff from any two points in $I \subseteq B \cup L$, the same set of points on $R \cup T$  can be reached.

	\item $I$ is $s-$interval iff from any point in $I \subseteq  L$ (resp., $I \subseteq  B$), 
	the horizontal (resp., the vertical) line segment connecting that point with $R$ (resp., $T$) lies 
	completely within $\Feps$.

\end{itemize} 

Two pointers $h$ and $\ell$ are attached
to each $r$-interval $I$:
pointer $h$ points to the highest point in $T \cup R$ that can be reached    
from $I$ and pointer $\ell$ points to
 the lowest point in  $T \cup R$ which is reachable from $I$. In addition, 
an $h$ pointer is assigned to each s-interval on $L$, and
an $\ell$ pointer is assigned to each  $s$-interval on $B$. Analogously, 
$T \cup R$ is partitioned into n-,s-, and r intervals depending on their reachability from 
$L \cup B$ and $h$ and $\ell$ pointers are attached to them. 

The data structure is constructed 
recursively by starting from diagram $D$
and splitting the diagram in half at its longer side into two diagrams 
$D_1$ and $D_2$. The recursion continues until 
 a $1 \times 1$- diagram, which is a cell, is reached.
For one cell, the partitioning and the pointers  can be found in $O(1)$ time. 

In order to merge the two solutions $D_1$ and $D_2$ 
into one for $D$, first the intervals on the
right side $R_1$ of $D_1$ are merged with the ones of the left side $L_2$ of $D_2$.
This causes a refinement of the partitions of $R_1$ and $L_2$. Each new 
interval inherits the type and pointers from the old interval of which it is 
subset. Then, the types and  pointers of the intervals on $L_1 \cup B_1$ and $T_2 \cup R_2$ are updated and the intervals and pointers on $T_1$ and $B_2$ remain unchanged.
Details of how intervals on $L_1 \cup B_1$ (or intervals on $T_2 \cup R_2$) 
are updated, can be found in~\cite{AltG95}. 
It is shown that the total time for merging is proportional to 
the number of intervals in the partitioning of $D_1$ and $D_2$; 
in the worst case, this number is 
 $O(nm)$. Thus, the runtime of the merging step is $O(nm)$
and consequently, the whole divide-and-conquer algorithm has $O(nm \log nm)$ running time.

Observe that given two points $u \in I \subseteq B$ and $v \in J \subseteq T$,
there exists a monotone path from $u$ to $v$ in $\Feps$ 	iff one of the 
following conditions (a) or (b) holds: 
(a) $I$ is an $r$-interval and $v$ lies between $h(I)$ and $\ell(I)$ 
(b)$I$ is type-s and $v$ lies to the right of $u$ and to the left of $\ell(I)$~\cite{AltG95}.
Having constructed the data structure on $\BNNM$, 
one can determine in $O(n)$ time by scanning intervals on the bottom and top 
side of $\BNNM$ simultaneously, if there exists $t \in [0,n]$  and a monotone curve from
$(t,0)$ to $(t+n,m)$ in $\Feps$ of $\BNNM$.
Given two closed curves $P$ and $Q$ with total length $n$, 
whether $\distClosedF(P,Q) \le \epsilon$
can be decided in $O(n^2 \log n)$ time. 
The exact value of 
$\distClosedF(P,Q)$ can be computed in $O(n^2 \log^2 n)$ time using parametric search.

	




\section{Partial Curve Matching}
\label{sec:RelatedPartial}
In this section, we discuss the problem
of measuring partial similarity between curves.

\subsection{Partial Curve Matching}
\label{sec:RelatedPartialMain}
Alt and Godau~\cite{AltG95} considered one natural partial similarity 
measure by computing the \Frechet distance between a single 
consecutive piece of subcurve of $P$ and another curve $Q$. 
Let $\distPartialF(P,Q) =$ inf $\{ \distF(R,Q) \ |$ where $R$ is a subcurve of $P \}$. 
The same technique for two closed curves 
 can be applied to solve the decision problem, i.e., to decide if $\distPartialF(P,Q) \le \epsilon$. Let  $P$
and $Q$ be two curves with length $n$ and $m$, respectively and a parameter $\epsilon \ge 0$ is given. Once we have constructed 
the data structure on top of $\BNM$, 
we only have to check the type of the intervals on the bottom side of 
$\BNM$. If all are of type $n$, then the answer is ``NO", otherwise the 
answer is ``YES". Therefore, the decision problem can be solved in  $O(n^2\log n)$ time and the exact value of $\distPartialF(P,Q)$ 
can be computed in $O(n^2\log^2 n)$ time using the 
parametric search~\cite{AltG95}.

The partial similarity measure introduced in~\cite{AltG95}  
only allows to have outliers in one of the input curve, and more importantly, 
it does not allow outliers
appearing in different (non-consecutive) locations along the input curve. 
In addition, the summed versions introduced in~\cite{SumFD} do 
not fully resolve the issue of partial similarity, especially when significant parts 
of the curves are dissimilar. 

Recently, Buchin \etal~\cite{ExactPartial} have
proposed a natural extension of the
\Frechet distance to measure the partial similarity between
curves. 
They introduce a continuous
partial curve similarity measure that allows general
types of outliers, and develop an exact algorithm to
compute it. The goal here is to maximize the total length of 
subcurves that are close to each other, where closeness is measured by 
the \Frechet distance. 

Specifically, given a distance threshold $\epsilon$ and two polygonal 
curves $P$ and $Q$, the partial \Frechet similarity between $P$ and $Q$ is the total length
of longest subcurves of $P$ and $Q$ that are matched with
\Frechet distance at most $\epsilon$. The \Frechet distance 
can be measured under any $L_p$ norm, and they consider the $L_1$ and
$L_\infty$ norms in~\cite{ExactPartial}.
The partial \Frechet similarity can be considered as
the length of the longest monotone path in a certain
polygonal domain with weighted regions, where the
weight is either 0 or 1. Hence computing that measure
bears similarity with the standard shortest path queries
in weighted regions. The algorithm in~\cite{ExactPartial} computes 
the partial \Frechet similarity measure in $O(mn(m + n) \log(mn))$ time, by constructing a "shortest-path map" type decomposition.

In another recent work \cite{BuchinLocallyFrechet}, 
Buchin \etal~introduce locally correct \Frechet matchings. They introduce the local correctness criterion for \Frechet matchings and prove
that there always exists at least one locally correct \Frechet matching between any
two polygonal curves. They provide an 
$O(n^3 \log n)$ algorithm to compute
such matching, where
$n$ is the total complexity of the two curves.

Although the \Frechet distance is considered to be a high quality metric to measure
the similarity between polygonal curves, it is very sensitive to the presence of outliers. 
In \cite{Jaywalking}, Driemel and Har-Peled
discuss a new notion of robust \Frechet distance, 
where they allow $k$ shortcuts between vertices of one
of the two curves, where $k$ 
is a constant given as an input parameter.
They provide a constant factor
approximation algorithm for finding the minimum 
\Frechet distance among all possible $k$-shortcuts.
However, their approach has this 
drawback that a shortcut is selected without considering the length of the ignored part. Therefore, such shortcuts may remove a significant portion of a curve.
Recently, in another work~\cite{SackAminPaper}, 
authors propose
an alternative \Frechet distance measure to tolerate outliers,
considering the length of portion of the curves that must 
be discarded. Roughly, their goal is to minimize the length of 
subcurves of two polygonal curves that need to be ignored to achieve a given  \Frechet distance.

\subsection{Map Matching}
\label{sec:RelatedMapMatching}

In GIS applications, the method of sampling the  movements of vehicles using GPS is affected 
by errors and consequently produces inaccurate trajectory data. To become useful, 
the data has to be related to the underlying road network by using map matching 
algorithms. A quality map matching algorithm utilizing the \Frechet distance is 
introduced in~\cite{AltERW03a}.

Given a planar graph $G$ as a road network and a polygonal curve $P$ as a 
trajectory of a vehicle, the objective is to find a path $\pi$  
in graph $G$ with minimum \Frechet distance to curve $P$. To find such a path, 
Alt \etal ~\cite{AltERW03a} generalized  the definition of free space 
between two curves to the free space between a graph and a curve as follows.

The free space of graph  $G = (V,E)$ and  curve $P$ is the union of all free spaces of edges of 
$G$ with the polygonal curve $P$. Observe that the free space of one node $v$ with 
curve $P$ is a one-dimensional free space (denoted by $FD_v$), 
and the individual free spaces of all 	edges 
incident to node $v$ with curve $P$ share a one-dimensional free space at $v$. 
Thus, we can 
glue together the two-dimensional free-space diagrams along the one-dimensional free space 
they have in common, according to the adjacency information of the graph. 
The resulting three-dimensional structure is called \emph{free-space surface} of graph $G$ and curve $P$ 
in~\cite{AltERW03a} (see Figure \ref{fig:freespacesurface}). 

\begin{figure}[t]
	\centering
	\includegraphics[width=0.75\columnwidth]{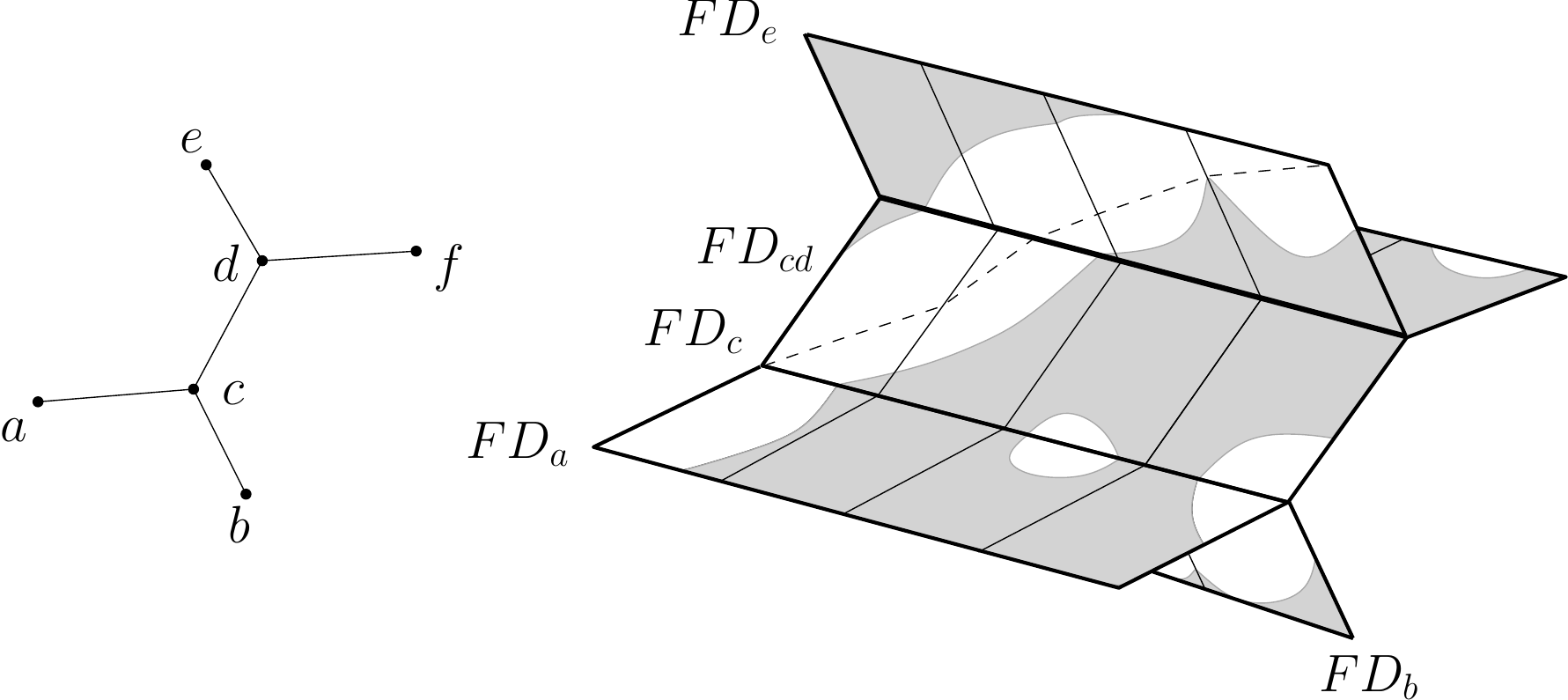}
	\caption{Free-space surface consists of free-space diagrams glued
together according to the topology of graph $G$. Grey dashed path is
a monotone path in the free space.}
	\label{fig:freespacesurface}
\end{figure}

Let $(i,j) \in E$ be an edge of graph $G$. Furthermore,  
let $FD_{ij}$ be an edge-curve free-space diagram 
corresponding to edge $(i,j)$, curve $P$, and distance $\epsilon$.
$FD_{ij}$  consists of one dimensional free-space $FD_i$,
then $m$ (size of curve $P$) cells in a row , 
and another one dimensional free-space $FD_j$(see Figure \ref{fig:freespacesurface}
and Figure \ref{fig:pointers}).

In~\cite{AltERW03a} it has been shown that, after constructing 
a free surface corresponding 
to a planar graph $G$ and a polygonal curve $P$,  there exists a 
path $\pi$ in $G$ s.t. $\distF(\pi,P) \le \epsilon$ 
iff there is a monotone path 
in the free-space surface from a lower left corner of
some individual edge-curve free-space diagram to an upper
 right corner of some other individual edge-trajectory 
free-space diagram (e.g., see the gray dashed path 
in Figure \ref{fig:freespacesurface}).

For $I$  a continuous interval of white points 
in $FD_i$, let the reachability pointers 
$\ell_{i,j}(I)$ and $r_{i,j}(I)$ be the leftmost 
and the rightmost white points, respectively, 
of $FD_j$ that can be reached from some point
in $I$ by a monotonic path in $FD_{ij}$
(see Figure \ref{fig:pointers}).
As a first step of the decision algorithm in~\cite{AltERW03a}, 
all one-dimensional free-spaces $FD_i$ 
(for every vertex $i \in V$), and also reachability pointers
are computed.
Next, the algorithm sweeps a line from left to
right (in direction of $P$) over all free spaces
at the same time while maintaining the points on the
sweepline that are reachable by some monotone path in
the free space from some lower left corner. It then updates 
this reachability information (using the 
reachability pointers)  Dijkstra-style while
advancing the sweepline.

\REM{
Let $I$ denote a maximal white interval on the bottom side of a cell in $FD_i$.
For each $I$ in $FD_i$, two pointers are computed:
$\ell_{ij}(I)$ which points to the leftmost point reachable from $I$
on $FD_j$ and $r_{ij}(I)$ which points to the rightmost point reachable from $I$
on $FD_{j}$.  
}


\REM{
 The decision algorithm in~\cite{AltERW03a} consists of three stages: 
a preprocessing stage which computes the free-space surface, a dynamic programming stage,
which decides if there exists a feasible path in the free-space surface,  
and a path reconstruction stage which constructs the path $\pi$ in $G$.
}

\REM{
Given graph $G = (V,E)$, 
polygonal curve $P$, and a distance $\epsilon$,
first all one-dimensional free-spaces $FD_i$ are constructed (for all $i \in V$).
Let $L_i$ and 
$R_i$ denote the leftmost and the rightmost point on $FD_i$,
respectively.
Let $(i,j) \in E$ be an edge of graph $G$, then $FD_{ij}$ is a 
two dimensional free space corresponding to edge $(i,j)$, 
curve $P$ and distance $\epsilon$.

For a vertex $i \in V$, let $R(j)$ be the set of all points
$u$ in $FD_j$ for which there exists a $k \in V$ and a path
$\pi$ from $k$ to $j$ in $G$ such that there is a monotone feasible path from
}

\REM{

Next, to decide if a path $\pi$ exits or not,  
the algorithm in~\cite{AltERW03a} sweeps a line 
from left to right (in the direction of curve $P$) over all free spaces at the same 
time while maintaining the points on the sweep-line that are reachable by a monotone 
path in the free space from some lower left corner. It then 
updates this reachability information while advancing the sweep-line.

Let $Q$ be a priority queue.  At start $\CR_{i} = \emptyset$, for all $i \in V$.
$R_i$ is in form of chain of white intervals on $R_j$ ~\cite{AltERW03a}.
Conceptually all $FD_{ij}$ are swept at once by a vertical sweep line from left to right. 
Let $x$ indicate the position of the sweep-line. For each 
$i \in V$, a set $\CR_{i}$ of white points is stored  
which is computed in dynamic programming manner.
 $\CR_i$ consists of all reachable points $u \in FD_i$, 
such that $u$ is to the right of $x$, 
and for which the last segment of their associated 
feasible monotone path crosses or ends at the sweep line.
In~\cite{AltERW03a}, it has been shown  that 
every $\CR_i$, for $i \in V$, is a consecutive chain, for every value of $x$.
$Q$ is initialized with all white $L_i$. For all $i \in V$, if $L_i$
is white we set $\CR_i = L_i$, otherwise $\CR_i = \emptyset$. 

\renewcommand{\labelenumi}{\Roman{enumi}.}\itemsep1pt

\begin{enumerate}

\item Remove the leftmost $I$ interval from $Q$

\item Let $\CR_i$ be the consecutive chain that contains $I$, insert the next white interval of $\CR_i$ which lies to the right of $I$ into $Q$.

\item For each $j \in V$  adjacent to $i$, update $\CR_j$ to comply with 
the new value of $x$. To do so, $\CR_j$ must be merged 
with $[\ell_{ij}(I),r_{ij}(I)]$. Because $\CR_j$ is a consecutive chain for every value of $x$,
merge can be done by simply checking the interval endpoints.
As soon as the left endpoint of $\CR_{j}$ changes, the old 
interval in $Q$ is deleted and the new one is inserted. 

\item For each interval $k$ that has been recently added to $\CR_j$, store
a path pointer to the interval $I$.

\end{enumerate}

The algorithm ends in one of the two cases:
Either  a $j \in V$ is found such that $R_j \in \CR_j$, then 
there exists a path $\pi$ in $G$ with $\distF(\pi,P) \le \epsilon$.  
Or, $Q$ is empty, which means
that there is no path in $G$ in $\epsilon$ \Frechet distance to $P$.
In the first case, the path pointers are used to 
construct a path $\pi$ in $G$ together with a feasible monotone path 
in the free-space surface.  More details of the algorithm can be found in ~\cite{AltERW03a}.

}
Given a planar graph $G$ with $n$ vertices, 
a polygonal curve $P$ 
with length $m$ and a distance $\epsilon$, 
the algorithm decides in $O(m n\log n)$ time whether there exists a path
$\pi$ in $G$ such that  
 $\distF(P,\pi) \le \epsilon$. 
One can  use  parametric search equipped with the decision algorithm, 
to find a path a $\pi$ in $G$ which minimizes $\distF(\pi,P)$, 
by spending  $O(mn\log(mn) \log n)$ time and using $O(mn)$ space.
The decision algorithm in~\cite{AltERW03a}
is only a log-factor slower than the standard \Frechet distance decision problem, although 
it accomplishes a  more complicated task of comparing curve $P$
to all possible curves in graph $G$.

\begin{figure}[t]
	\centering
	\includegraphics[width=0.7\columnwidth]{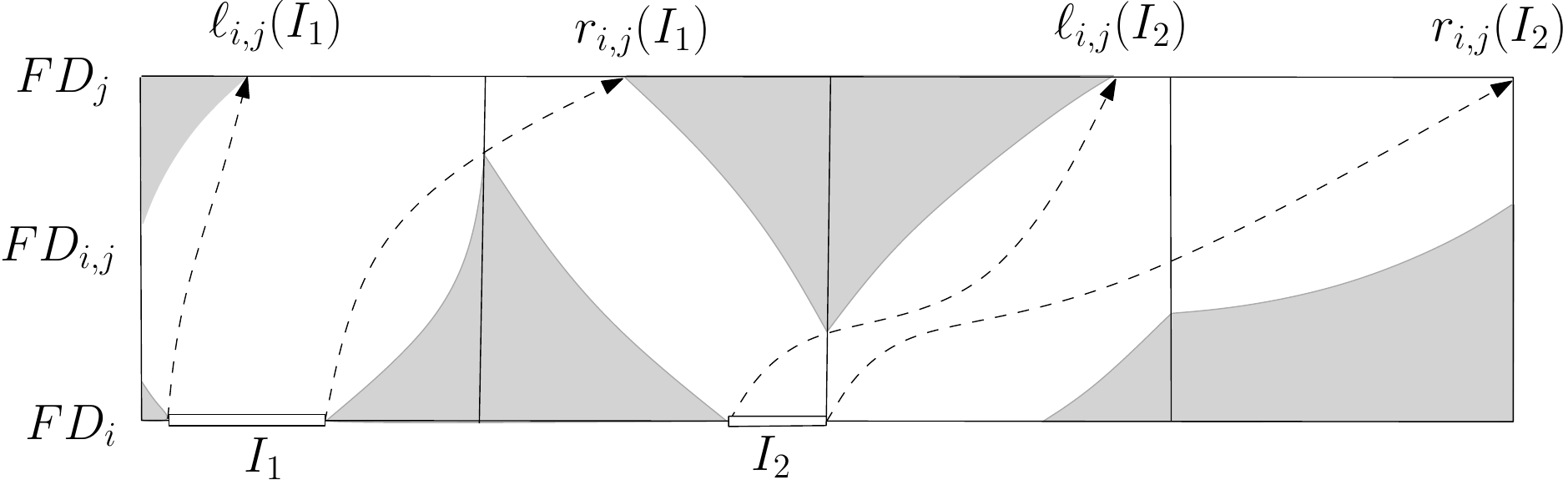}
	\caption{Reachability pointers}
	\label{fig:pointers}
\end{figure}

Map Matching based on the weak \Frechet distance has been also studied
by Brakatsoulas~\etal~\cite{VehicleTracking}, who give an $O(mn \log (mn))$ algorithm, where $m$ is the size of the curve and $n$ is the size of the graph. 
As explained before, the decision problem for the weak \Frechet distance
between two curves can be solved by testing if there
exists any path in the free space of the two curves from
lower left corner to upper right corner. This
can be done using any graph traversal algorithm such
as depth-first search in $O(mn)$ time.
In~\cite{VehicleTracking}, this approach 
is generalized to the map matching
problem by applying depth first search to
the free-space surface. They initialize the search with all
white lower left corners of individual edge-trajectory
free spaces, and stop the search if some upper
right white corner is found. Since the free-space surface
consists of $mn$ edge-segment cells, this algorithm runs
in $O(mn)$ time, which is a log-factor faster than the
algorithm based on the normal \Frechet distance. Applying
parametric search for optimization, in the same way as
in~\cite{AltG95}, adds an additional log-factor to the runtime
for a total of $O(mn \log(mn))$ to solve the optimization
problem. A new result~\cite{fasterMM-Weak1}
improves this running time to $O(mn)$. The method provided in that 
paper does not involve parametric search, 
and hence is also easier to implement. 
 Their algorithm also yields an $O(mn)$ algorithm for computing 
 the weak \Frechet distance between polygonal curves, 
 where one curve has size $m$ and the other has size $n$ (which
improves the $O(mn \log mn)$ result given by~\cite{AltG95}).

\subsection{Constrained Free-Space Diagram}

Spatio-temporal data is any information relating space and time. 
Recently, there has been considerable research 
in the area of analyzing and modeling spatio-temporal data~\cite{Cons-Free}. 
Movement patterns in such data refer to 
events and episodes expressed by a set of entities.
The problem of detecting movement patterns in spatio-temporal data has recently received considerable 
attention from several research communities, e.g., geographic information science, data mining, data bases and algorithms.

Buchin \etal~\cite{Cons-Free} propose a new and powerful tool, called constrained free space, 
for the analysis of trajectories, which, in  particular, allows for more temporally aware analyses. 

Their new tool provides an algorithm for detecting single file movement. 
A single file is a set of moving entities, which 
are following each other, one behind the other. 

Let a spatio-temporal trajectory $\CT$ of a moving entity $a$ be
given by $n$ time-space positions. That is, $\CT$ = $((t_1, p_1),...,(t_n, p_n))$, 
where $p_i \in \IR^2$ gives the position of entity $a$ 
at time $t_i$ for $i = 1,..., n$.
Assume that in between time stamps $t_i$ and $t_{i+1}$ the
entity $e$ moves with constant speed along a straight line
from $p_i$ to $p_{i+1}$ for $i = 1,..,n$ ~\cite{Cons-Free}. 

For detecting a single file behavior, we are given $m$ spatio-temporal trajectories
$\CT_1,..., \CT_m$ of entities $a_1,..., a_m$.
The entities $a_1,.., a_m$ are moving in single
file for a given time interval if during this time each entity
$a_{j+1}$ is following behind entity $a_j$ for $j=1,...,m-1$.
For the definition of following, fix three parameters
$T_{min}, T_{max}$, and $\delta \in \IR$ with $T_{min}<T_{max}$. The parameters
$T_{min}$ and $T_{max}$ specify minimum and maximum offsets in time, 
respectively, 
and $\delta$ specifies a maximum offset in space. 
One can detect whether one trajectory
is following behind another during a fixed time interval by
searching for a monotone path in the $[T_{min}, T_{max}]$-strip of
the free-space diagram of the trajectories. Let $k_{avg}$
and $k_{max}$ denote the average and maximum number of
cells intersected by the $[T_{min}, T_{max}]$-strip per row or column
of the free-space diagram. Then, for two trajectories of 
complexity $n$ each, it can be determined in $O(nk^2_{avg})$ time 
and $O(n + k_{max}^2)$ space during which time intervals one 
trajectory is following behind the other. 
Furthermore, for $m$ trajectories of complexity $n$ each, one can detect in $O(m^2nk_{avg})$ 
time and $O(nm+m^2 +k_{max})$ space all single file behaviors for a given time interval~\cite{Cons-Free} .

\section{\Frechet Distance in Different Metric Spaces}

In the \Frechet distance problem,
when the two curves are embedded in a general metric space, 
the distance between two points on the curves 
(i.e., the length of the shortest leash joining them) 
is not necessary the Euclidean distance, but sometimes it is a geodesic distance
due to existence of obstacles in the space.

\subsection{Geodesic \Frechet Distance}
In~\cite{WenkC08a}, Cook and Wenk  described an algorithm for the geodesic
\Frechet distance between two polygonal curves $P$ and $Q$ 
inside a simple polygon $K$. 
To solve the decision version, they used the free-space diagram 
structure introduced by Alt and Godau~\cite{AltG95}. 
The main observation here is that when two curves are located inside a 
simple polygon, the free space inside a cell is $x$-monotone, $y$-monotone, 
and connected~\cite{WenkC08a}. As such, only the boundaries of 
a cell need to be computed to propagate reachability 
in the free-space diagram.
There are $O(n^2)$ cells in the free-space diagram. 
Computing the boundary of each cell takes $O(\log k)$ time by the algorithm of Guibas 
and Hershberger~\cite{Guibas86}. 
Then, the reachability information is 
propagated through all cells in a dynamic programming manner as~\cite{AltG95}. 
Since the free space inside each cell is monotone, 
propagating reachability though each cell takes constant time. Therefore, 
if $P$ and $Q$ have total complexity $n$ and polygon $K$ has complexity $k$, 
after a one-time preprocessing step of $O(k)$ time, the geodesic \Frechet decision
problem can be solved for
any $\epsilon\ge 0 $  in $O(n^2 \log k)$ time and $O(k + n)$ space.
The space bounds follow because $O(1)$ space is needed per cell
and dynamic programming only requires that two rows of cells reside
in memory at any one time. The $O(k)$ term comes from storing the
preprocessing structures of~\cite{Guibas86} throughout the algorithm's execution.
Using parametric search, the exact 
geodesic \Frechet distance can be computed in 
$O(n^2 \log k \log n)$. 
Cook and Wenk~\cite{WenkC08a} proposed a randomized algorithm using 
a {\em red-blue } intersection approach which  
finds the exact geodesic \Frechet distance in 
$O(k + n^2 \log kn \log n)$ expected time and $O(k +n^3 \log kn)$ worst case
time.


Although the exact standard \Frechet distance is normally found in
$O(n^2 \log n)$ time using parametric search, parametric search
is often regarded as impractical because it is difficult to implement and
involves large constant factors~\cite{Cole87}. 
The randomized algorithm in~\cite{WenkC08a} 
is the first practical alternative to parametric search
for solving the exact \Frechet optimization problem.
Using the red-blue intersection approach as ~\cite{WenkC08a}, 
one can compute the exact \Frechet distance in $O(n^2 \log^2 n)$ expected
time and $O(n^2)$ space, where $n$ is the larger of the complexities of $P$
and $Q$~\cite{WenkC08a}.

%
%

\subsection{Homotopic \Frechet distance}

The definition of the classical \Frechet distance allows the leash to switch discontinuously, 
without penalty, from one side of an obstacle or a mountain to another.
Chambers  \etal~\cite{Chambers10} study the \Frechet distance between two 
polygonal curves $P$ and $Q$, located in the punctuated plane  consisting of  
$k$ points. They introduce a continuity requirement on the motion of the leash, i.e. the leash cannot switch, discontinuously, from one geodesic to another; in particular, the leash can not jump over obstacles  and can sweep over a mountain only if it is long enough
(see Figure~\ref{fig:Hom}).
This new similarity metric is called {\em homotopic} \Frechet distance.
It finds applications in morphing and robotics.
In spaces where shortest paths vary continuously
as their endpoints move, such as the Euclidean plane, the
\Frechet distance and homotopic \Frechet distance are
identical. In general, however, homotopic \Frechet distance
could be larger (but never smaller) than the classical
\Frechet distance.
Given two polygonal curves $P$ and $Q$ with complexity $n$ and $m$, respectively 
and $k$ points in the plane,
the homotopic \Frechet distance between $P$ and $Q$ in the plane can be computed
in $O(m^2n^2 k^3 \log(m n k))$ time~\cite{Chambers10}.

The algorithm for computing the geodesic \Frechet distance between two curves within a simple
polygon due to Cook and Wenk~\cite{WenkC08a}, 
is faster than the homotopic \Frechet computation algorithm in~\cite{Chambers10} by roughly a factor of $n$. This is because they use a randomized strategy in place of parametric search.

\begin{figure}[t]
	\centering
	\includegraphics[width=0.45\columnwidth]{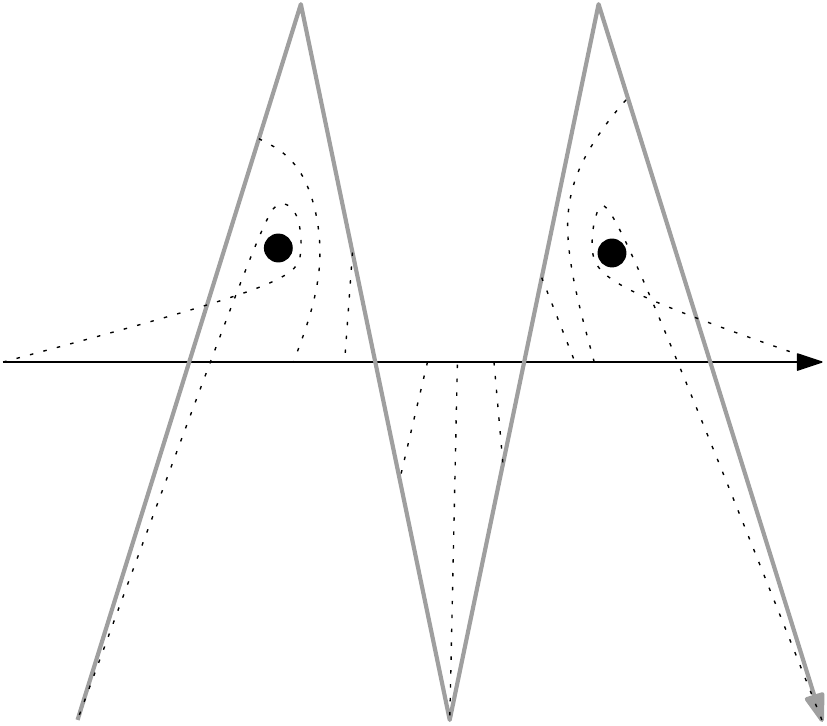}
	\caption{The dashed lines show the leash between two objects 
	while they are moving on their corresponding curves. 
	The leash can not jump over the obstacles.}
	\label{fig:Hom}
\end{figure}


%
%
%
Cook \etal~\cite{Cook2009} develop algorithms to compute the \Frechet distance of 
two curves  on convex and non-convex polyhedral surface.
Let $M$ be the total complexity of a problem space that contains
a polyhedral surface and auxiliary objects on the surface such as points,
line segments, and polygonal curves.
Then, \Frechet distance can be computed in $O(M^6 \log^2 M)$ time and $O(M^2)$ space in a convex polyhedral surface and 
$O(M^7 \log^2 M)$ time and $O(M^3)$ space in a non-convex polyhedral surface.

Cheung \etal~\cite{Cheung2009} consider two versions of 
the \Frechet distance problem in weighted planar subdivisions. In the first one, the distance between two points is the weighted length of the line segment joining the points. In the second one, the distance between two points is the length of the shortest path between the points. In both cases they give algorithms for finding a $(1+\epsilon)$-factor approximation of the \Frechet distance between two polygonal curves.

\section{Approximate \Frechet Distance}
A considerable amount of work 
has been done to improve running time
of computing \Frechet distance. 
Since improving the quadratic-time solution for general curves seems
to be hard, many researchers investigated \Frechet distance in restricted class 
of curves, rather than general curves. 
Also many works have been done 
to compute approximate \Frechet distance.

In~\cite{DisceteFD}, Eiter and Mannila 
introduced a 
close approximation and slightly simpler version of the \Frechet distance, called {\em discrete 
\Frechet distance}, which
only considers vertices of polygonal curves.
They showed that given two polygonal curves of $n$ and $m$ vertices, their discrete 
\Frechet distance can be computed  in $O(mn)$ time by a dynamic programming algorithm. 
Figure \ref{fig:DistFig} demonstrated the relationship between discrete 
and continuous \Frechet distance. 
It has been shown in~\cite{DisceteFD} that the discrete \Frechet distance is an 
upper bound for the \Frechet distance and the difference between these measures 
is bounded by the length of the longest edge of the polygonal 
curves. 
\REM{

Given a polygonal curve $P = \langle p_1,...,p_n\rangle$ of $n$ vertices, let
a $k$-walk along $P$ be a partitioning of $P$ into $k$  disjoint non-empty
subsets $\{P_i\}_{i=1..k}$. 
Assume two polygonal curves $P$  and $Q$ with 
the sequence of endpoints of $P = \langle p_1,...,p_n \rangle$ and $Q = \langle q_1,...,q_m \rangle$, respectively. 
A paired walk along $P$ and $Q$ is a $k$-walk $\{P_i\}_{i=1..k}$ along $P$ and a 
$k$-walk $\{Q_i\}_{i=1..k}$ s.t. for $1 \le i \le k$
either $|P_i| = 1$ or $|Q_i| = 1$ ( i.e., either $P_i$ or $Q_i$ contains exactly one vertex). 
Consider again the scenario in which the person walks along $P$
and the dog walks along $Q$. Then, the definition of paired walk 
is based on the three following cases: 
\begin{flushleft}
(1)  $| Q_i| > |P_i| =1 $ : the dog moves forward and the person stays.

(2)  $| P_i| > |Q_i| =1 $ : the person moves forward and the dog stays.

(3)  $| Q_i| = |P_i| =1 $ : both the person and the dog move forward.
\end{flushleft}

The cost of a paired walk $W = \{ (P_i,Q_i)\}$ along 
two curves $P$ and $Q$ is defined as 
$$
C_W(P_i,Q_i) = \max_{i} \max_{(a,b) \in P_i \times Q_i} dist (a,b).
	$$

Then, the discrete \Frechet distance between two polygonal curves 
$P$ and $Q$ :
$$\distDisF(P,Q) = \min C_W(P,Q).$$
}


In a very recent work, 
Agarwal \etal~\cite{DiscreteSUBQuadractic} 
show how to break the quadratic barrier for the discrete \Frechet distance. 
They propose sub-quadratic 
$O(\frac{mn \log \log n}{\log n})$ time
algorithm for computing discrete \Frechet distance 
using $O(n+m)$ space, 
where $n$ and $m$ are the complexity of 
two polygonal curves.

\begin{figure}[t]
	\centering
	\includegraphics[width=0.65\columnwidth]{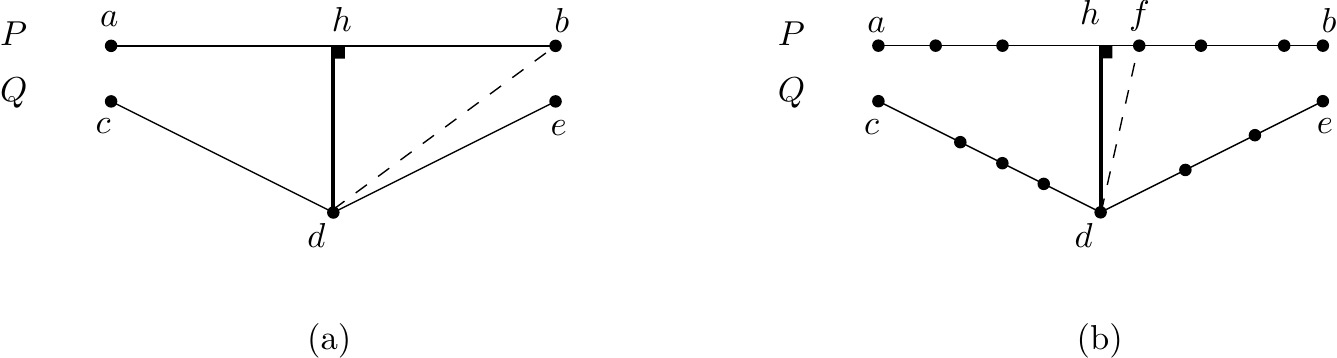}
	\caption{ (a) The discrete \Frechet distance could be arbitrarily larger than the continuous distance, e.g., $\distF(P,Q) = |\Ov{dh}|$, $\distDisF(P,Q) = |\Ov{db}|$.  (b) If we
put enough sample points on the two polygonal chains, 
then the resulting discrete \Frechet
distance, that is, $|\Ov{df}|$, closely approximates $|\Ov{dh}|$.
}
	\label{fig:DistFig}
\end{figure}



In~\cite{LowerBound-FD}, Buchin \etal~gave a lower bound of $\Omega(n \log n)$ time for the decision version of the \Frechet problem.
They also showed that
this bound holds for 
the discrete version of the problem as well.
The only subquadratic algorithms known are for 
restricted classes of curves such as for closed
convex curves and for $k$-bounded curves~\cite{Alt2003Comparison}. 
A curve is
called $k$-bounded when
for any two points on the curve, 
the portion of the curve in between them cannot be further
away from either point than $k/2$ times the distance between the two points. For closed convex
curves the \Frechet distance equals the Hausdorff distance and for $k$-bounded curves the \Frechet
distance is at most $(1+k)$ times the Hausdorff distance, and hence the $O(n \log n)$ algorithm for
the Hausdorff distance applies~\cite{Alt2003Comparison}.
Aronov et al.~\cite{FDRevisited}  proposed a near linear time 
$(1+\eps)$-approximation algorithm for the
discrete \Frechet distance, which only considers distances between vertices of the curves. Their
algorithm works for backbone curves, which are used to model protein backbones in molecular
biology. Backbone curves are required to have, 
unit edge length and a minimal distance
between any pair of vertices. 

In~\cite{Agarwal2002}, Agarwal \etal~consider the problem of approximating a polygonal curve 
$P$ under a given error criterion by another polygonal curve $P'$ whose vertices are a subset of the vertices of $P$. 
The goal is to minimize the number of vertices of $P'$ while ensuring that the error between $P'$ and $P$ is 
below a certain threshold.

In another recent work~\cite{c-packedFD},
the \Frechet distance has been studied between 
\emph{$c$-packed} curves. 
A curve $P$ is $c$-packed if the total length of $P$ inside any ball is bounded by $c$ times the radius of the ball. 
While not all curves are $c$-packed,the most real life curves are $c$-packed~\cite{c-packedFD}.
Given two polygonal $c$-packed curves $P$ and $Q$ 
with a total of $n$ vertices, and
a parameter $0 < \epsilon < 1$, 
they show that one can $(1+\eps)$-approximate the \Frechet distance between $P$ and $Q$ in
$O(\frac{cn}{\epsilon} + cn \log n)$ time.

\REM{
Guibas \etal ~\cite{Guibas91approximatingpolygons}  solved the line simplification 
problem under several optimization criteria including \Frechet distance. 
They give an $O(n^2 \log n)$ time algorithm for finding  an 
approximation that consists of a minimum number of links 
that has \Frechet distance at most $\epsilon$ to the original 
polygonal curve. 

Neyer~\cite{C-oriented} studies the C-oriented line simplification problem. 
Given a polygonal chain $P$ represented by an ordered set of vertices $P_1,...,P_n$ in the 
plane, a set of orientations C, and a constant $\epsilon$, a "C-oriented"  
polygonal chain $Q$ is found 
which consists of the minimum number of line segments that has \Frechet distance at
most $\epsilon$ to $P$. 

}

\REM{
Despite \Frechet distance is  a high quality similarity measure for polygonal curves, 
it is very sensitive to the presence of outliers. Consequently, researches have been carried out to formalize the notion of similarity among a set of curves that tolerate outliers. 
They are based on intersection of curves in local neighborhood \cite{Kreveld11}, topological features \cite{Buchin10}, or  adding flexibility 
to incorporate the existence of outliers \cite{Jaywalking}. In \cite{Jaywalking}, 
Driemel and Har-Peled discuss a new notion of robust Fr\'{e}chet distance, where they allow $k$ shortcuts between vertices of one of the two curves, where $k$ is a constant specified as an input parameter. 
They provide a constant factor approximation algorithm for finding the minimum \Frechet distance among all possible $k$-shortcuts. 
One drawback of their approach is that a shortcut is selected without considering the length of the ignored part. Such shortcuts could remove, for example, 
90 percent of the curve. As a result, substantial information about the similarity of the original curves could be ignored. A second drawback of their approach is
that the shortcuts are only allowed to one of the curves. Since noise could be present in both curves, shortcuts may be required on both to achieve a good result. 
}



%

\chapter{\Frechet Distance with Speed Limits}
\label{ch:speedFD}
In the classical \Frechet distance problem, 
the speed of motion on the two polygonal curves is unbounded.
 in which  motion speeds are bounded, both from below and from above. 
More precisely, associated to each segment of the curves, 
is a speed range
that specifies the minimum and the maximum speed allowed for travelling along that segment.
We say that a point object traverses a curve with \emph{permissible speed},
if it traverses the polygonal curve from start to end
so that the speed used on each segment falls within its permissible range.

The decision version of the \Frechet distance problem with speed limits is formulated as follows:
Let $P$ and $Q$ be two polygonal curves with minimum and maximum permissible speeds
assigned to each segment of $P$ and $Q$.
For a given $\eps \gee 0$, is there an assignment of speeds so that two point objects
can traverse $P$ and $Q$ with permissible speed and, throughout the entire
traversal, remain at distance at most $\eps$ from each other?
The objective in the optimization problem is to find the smallest such $\eps$.

In this chapter, we present a new algorithm that solves the 
decision version of the \Frechet distance problem with speed limits in $O(n^2 \log n)$ time. 
Our main approach is to compute a free-space diagram 
similar to the one used in the standard \Frechet distance algorithm (Section \ref{sec:classicalFD}).
However, since the complexity of the free-space diagram  in our problem
is cubic, in contrast to the standard free-space diagram that has quadratic complexity,  
we use a ``lazy computation'' technique to
avoid computing unneeded portions of the free space, 
and still be able to solve the decision problem correctly.
We then use our algorithm for the decision problem to
solve the optimization problem exactly in $O(n^3 \log n)$ time.

The \Frechet distance with speed limits we consider here
is a natural generalization of the classical \Frechet distance.
It has potential applications in GIS, when the speed of moving objects is considered
in addition to the geometric structure of the trajectories.
For a practical application of this metric, 
consider the case where 
trajectory of a vehicle is given 
to us, and we want to find the closest path 
in the road network to that trajectory. The good thing about the standard \Frechet metric 
is that we can use it here and find the closest path in the road network to the trajectory. 
Using our metric however, we can consider speed limits in the road network as well, 
and find a path in the road network which is more realistic.

This chapter is organized as follows. 
The problem is formally defined in the next section.
In Section~\ref{sec:decisionSpeed}, we describe a simple algorithm that solves the decision problem in $O(n^3)$ time.
In Section~\ref{sec:improvedSpeed}, we provide an improved algorithm for the decision problem 
that runs in $O(n^2\log n)$ time.
Section~\ref{sec:optimization} describes how the optimization problem can be solved efficiently.
Finally, we summarize in Section~\ref{sec:conclusion} and outline directions for future work.


\section{Preliminaries} \label{sec:preliminaries-Speed1}


\paragraph{\Frechet Distance with Speed Limits.}

Consider two point objects $\CO_P$ and $\CO_Q$ that traverse $P$ and $Q$, respectively 
from start to end.
If we think of the parameter $t$ in the parametrizations $\alpha$ and $\beta$
as ``time'', then $P(\alpha(t))$ and $Q(\beta(t))$ 
specify the positions of 
$\CO_P$ and $\CO_Q$ on $P$ and $Q$ respectively at time $t$.
The preimages of $\CO_P$ and $\CO_Q$ can be viewed as
two point objects $\COB_P$ and $\COB_Q$ traversing
$[0,n]$ and $[0,m]$, respectively,
with their positions at time $t$ being specified by $\alpha(t)$ and $\beta(t)$
($n$ is the length of $P$, $m$ is the length of $Q$).

In the classical definition of \Frechet distance,
the parametrizations $\alpha$ and $\beta$ are
arbitrary non-decreasing functions, 
meaning that $\COB_P$ and $\COB_Q$ (and therefore, $\CO_P$ and $\CO_Q$)
can move with arbitrary speeds in the range $[0,\infty]$.
In our variant of the \Frechet distance with speed limits,
each segment $S$ of the curves $P$ and $Q$
is assigned a pair of non-negative real numbers $(\vmin{S}, \vmax{S})$ 
that specify the minimum and the maximum permissible speed for moving along $S$.
The speed limits on each segment is independent of the limits of other segments.
When $\CO_P$ moves along a segment $S$ with speed $v$, 
$\COB_P$ moves along the preimage of $S$ (which is a unit segment) with speed $v/\|S\|$.
Therefore, the speed limit $(\vmin{S}, \vmax{S})$ on a segment $S$,
forces a speed limit on the preimage of $S$ which is bounded by the following two values:
\[
	\umin{S} = {\vmin{S} \over \|S\|} \ \ \mbox{and} \ \ \umax{S} = {\vmax{S} \over \|S\|}.
\]

We define a {\em speed-constrained parametrization of $P$\/}
to be a continuous surjective function $f: [0,T] \rightarrow [0,n]$ with $T > 0$
such that for any $i \in \set{1, \ldots, n}$,
the slope
of $f$ at all points $t \in [f^{-1}(i-1),f^{-1}(i)]$ 
is within $[\umin{P_i}, \umax{P_i}]$. 
Here, we define the slope of a function $f$ at a point $t$ to be 
$\lim_{h \rightarrow 0^+} f(t+h)/h$, where $h$ approaches 0 only from above (right).
By this definition, if $f$ is a continuous function, 
then the slope of $f$ at any point $t$ in its domain is well-defined,
even if $f$ is not differentiable at $t$. 

Given two polygonal curves $P$ and $Q$ of lengths $n$ and $m$, respectively 
with speed limits on their segments, 
the {\em speed-constrained \Frechet distance\/} between $P$ and $Q$ 
is defined as:
\[
	\distFS(P,Q) = \inf_{\alpha, \beta} \max_{t \in [0,T]} d( P(\alpha(t)), Q(\beta(t)) ),
\]
where $\alpha: [0,T] \rightarrow [0,n]$ ranges over all speed-constrained parametrizations of $P$ 
and $\beta:[0,T] \rightarrow [0,m]$ ranges over all speed-constrained parametrizations of $Q$.
Note that this new formulation of \Frechet distance is similar to the classical one,
with the only difference that the parametrizations here are restricted to have
limited slopes, reflecting the speed limits on the segments of the input polygonal curves.



\paragraph{Notation.}
We introduce some notation used throughout this chapter.

Let $\BNM = [0,n] \times [0,m]$ be an $n$ by $m$ rectangle  in the plane.
Each point $(s,t) \in \BNM$ uniquely represents a pair of points
$(P(s),Q(t))$ on the polygonal curves $P$ and $Q$.
We decompose $\BNM$ into
$n\cdot m$ unit grid cells $\cell{ij} = [i-1,i] \times [j-1,j]$
for $(i,j) \in \set{1, \ldots, n} \times \set{1, \ldots, m}$,
where each cell $\cell{ij}$ corresponds to
a segment $P_i$ on $P$ and a segment $Q_j$ on $Q$.
Given a parameter $\eps \gee 0$,
the {\em free space\/} $\Feps$ is defined as
\[
	\Feps = \set{(s,t) \in \BNM \ | \ d(P(s),Q(t)) \lee \eps }.
\]
We call any point $p \in \Feps$ a {\em feasible\/} point.
An example of the free-space diagram for two curves $P$ and $Q$ 
is given in Figure~\ref{fig:diagram}.a.

Each line segment bounding a cell in $\BNM$ is called an {\em edge\/} of $\BNM$.
We denote by $L_{ij}$ (resp., by $B_{ij}$) the left (resp., bottom) line segment bounding $\cell{ij}$.
For a cell $\cell{ij}$, we define the {\em entry side\/} of $\cell{ij}$ to be $\entry{ij} = L_{ij} \cup B_{ij}$,
and its {\em exit side\/} to be $\exit{ij} = B_{i,j+1} \cup L_{i+1,j}$.
Throughout this chapter, we process the cells in a {\em cell-wise\/} order,
in which a cell $\cell{ij}$ precedes a cell $\cell{k\ell}$ if either $i<k$
or $(i=k$  and $j<\ell)$ (this corresponds to the row-wise order of the cells,
from the first cell, $\cell{0,0}$, to the last cell, $\cell{nm}$). 

For an easier manipulation of the points and intervals on the boundary of the cells,
we define the following orders:
Given two points $p$ and $q$ in the plane, we say that 
$p$ is {\em before\/} $q$, 
and denote it by $p \lei q$, if either $p_x < q_x$ or $(p_x = q_x $  and  $ p_y > q_y)$.
For an interval $I$ of points in the plane, the {\em left endpoint\/} of $I$, denoted by $\Left(I)$,
is a point $p$ such that $p \lei q$ for all $q \in I$, $q \not= p$.
The {\em right endpoint\/} of $I$, denoted by $\Right(I)$, is defined analogously.
Given two intervals $I_1$ and $I_2$ in the plane,
we say that $I_1$ is {\em before\/} $I_2$, and denote it by $I_1 \lei I_2$, if
$\Left(I_1) \lei \Left(I_2)$ and $\Right(I_1) \lei \Right(I_2)$.
Note that  $I_1 \lei I_2$ implies that none of the intervals $I_1$ and $I_2$
can be properly contained in the other.


\section{The Decision Problem} \label{sec:decisionSpeed}

In this section, we provide an algorithm for solving the following decision problem:
Given two polygonal curves $P$ and $Q$ of lengths $n$ and $m$ respectively ($n \gee m$)
with speed limits on their segments, 
and a parameter $\eps \gee 0$,
decide whether $\distFS(P,Q) \lee \eps$.
We use a free-space diagram approach, similar to the one used in the standard \Frechet distance problem (Section \ref{sec:classicalFD}).
However, the complexity of the ``reachable portion'' on the cell boundaries is different 
in our problem; namely, each cell boundary in our problem has a complexity of $O(n^2)$,
while in the original problem cell boundaries have $O(1)$ complexity. 
This calls for a more detailed construction of the free space. 

Consider two point objects, $\CO_P$ and $\CO_Q$, traversing $P$ and $Q$, 
with their preimages, $\COB_P$ and $\COB_Q$, traversing $[0,n]$ and $[0,m]$, respectively.
When $\CO_P$ and $\CO_Q$ traverse $P$ and $Q$ from beginning to the end,
the trajectories of $\COB_P$ and $\COB_Q$ on $[0,n]$ and $[0,m]$
specify a path $\CP$ in $\BNM$ from $(0,0)$ to $(n,m)$.
Suppose that $\CP$ passes through a point $(s,t) \in \cell{ij}$.
The slope of $\CP$ at point $(s,t)$ is
equal to the ratio of the speed of $\COB_Q$ at point $t$
to the speed of $\COB_P$ at point $s$.
Therefore, the minimum slope at $(s,t)$ is obtained 
when $\COB_Q$ moves with its minimum speed at point $t$, 
and $\COB_P$ moves with its maximum speed at point $s$.
Similarly, the maximum slope is obtained 
when $\COB_Q$ moves with its maximum speed, 
and $\COB_P$ moves with its minimum speed.
We define 
\[
	\minS{ij} = {\umin{Q_j} \over \umax{P_i}} \ \ \mbox{and} \ \ \maxS{ij} = {\umax{Q_j} \over \umin{P_i}},
\]
where $\umin{\cdot}$ and $\umax{\cdot}$ are the speed limits for $\COB_P$ and $\COB_Q$ as
defined in Section~\ref{sec:preliminaries-Speed1}.
Indeed, $\minS{ij}$ and $\maxS{ij}$ specify
the minimum and the maximum ``permissible\rq{}\rq{} slopes 
for $\CP$ at any point inside $\cell{ij}$.
A path $\CP \subset \BNM$ is called {\em slope-constrained\/}
if for any point $(s,t) \in \CP \cap \cell{ij}$,
the slope of $\CP$ at $(s,t)$ is within $[\minS{ij}, \maxS{ij}]$.
A point $(s,t) \in \Feps$ is called {\em reachable\/} 
if there is a slope-constrained path from $(0,0)$ to $(s,t)$ in $\Feps$.

\begin{lemma} \label{lemma:reachable}
	$\distFS(P,Q) \lee \eps$ iff $(n,m)$ is reachable. 
\end{lemma}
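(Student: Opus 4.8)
The plan is to prove the two directions of the equivalence by translating between speed-constrained parametrizations of the curves and slope-constrained paths in the free-space diagram. The essential point is that both sides describe exactly the same object, a synchronized traversal of $P$ and $Q$ respecting the speed limits, viewed once as a pair of parametrizations $(\alpha,\beta)$ and once as a single monotone curve $\CP$ in $\BNM$. So the heart of the argument is a correspondence lemma: for any speed-constrained parametrizations $\alpha:[0,T]\to[0,n]$ and $\beta:[0,T]\to[0,m]$, the image $\CP = \{(\alpha(t),\beta(t)) : t\in[0,T]\}$ is a slope-constrained path from $(0,0)$ to $(n,m)$, and conversely every slope-constrained path arises this way (up to reparametrization).

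First I would establish the forward direction ($\Rightarrow$). Assume $\distFS(P,Q)\lee\eps$; strictly speaking $\distFS$ is an infimum, so I would first argue that the infimum is attained, or alternatively avoid this by a limiting/compactness argument — but it is cleaner to note that for the decision question it suffices to have parametrizations achieving distance $\lee \eps$, and since the space of speed-constrained parametrization pairs (after normalizing $T$) is compact and the max-distance functional is continuous, the infimum is realized. Given such $\alpha,\beta$ with $\max_t d(P(\alpha(t)),Q(\beta(t)))\lee\eps$, define $\CP(t) = (\alpha(t),\beta(t))$. Since $\alpha,\beta$ are continuous and non-decreasing with the right endpoints, $\CP$ is a continuous monotone path from $(0,0)$ to $(n,m)$, and it lies in $\Feps$ by definition of the free space. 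For the slope constraint: at a point $(s,t_0)\in\cell{ij}$, the slope of $\CP$ is $\beta'/\alpha'$ (interpreted via the one-sided-limit definition of slope given in the preliminaries), and since $\alpha$ has slope in $[\umin{P_i},\umax{P_i}]$ and $\beta$ has slope in $[\umin{Q_j},\umax{Q_j}]$, the ratio lies in $[\minS{ij},\maxS{ij}]$ exactly as defined. Hence $\CP$ is a slope-constrained path in $\Feps$ reaching $(n,m)$, so $(n,m)$ is reachable. One subtlety to handle carefully is points where $\alpha$ is locally constant (vertical segments of $\CP$): there the "slope" is $+\infty$, which is fine as long as $\umax{Q_j}>0$ and $\umin{P_i}$ can be taken arbitrarily close to handling degenerate speed ranges; I would address the boundary behavior of the slope definition at such points explicitly, since the paper's one-sided slope convention is designed precisely for this.

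For the reverse direction ($\Leftarrow$), assume $(n,m)$ is reachable, i.e.\ there is a slope-constrained path $\CP\subset\Feps$ from $(0,0)$ to $(n,m)$. I would parametrize $\CP$ by some parameter $t\in[0,T]$ as $\CP(t)=(x(t),y(t))$, with $x,y$ continuous and non-decreasing, and then set $\alpha = x$, $\beta = y$. The slope-constraint on $\CP$ inside each cell $\cell{ij}$ says $y'/x'\in[\minS{ij},\maxS{ij}]$, but this alone does not pin down that $\alpha$ and $\beta$ are \emph{separately} speed-constrained — I could have, say, $x'$ huge and $y'$ huge with an admissible ratio but both individually out of range. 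The fix is to reparametrize: the freedom to choose the common time parameter $t$ lets me rescale so that, segment by segment, both $\alpha$ and $\beta$ have slopes in their permissible ranges. Concretely, on the portion of $\CP$ lying in a fixed cell $\cell{ij}$, the condition $y'/x'\in[\minS{ij},\maxS{ij}]=[\umin{Q_j}/\umax{P_i},\,\umax{Q_j}/\umin{P_i}]$ is exactly the condition that there \emph{exists} a choice of speeds $\umin{P_i}\le v_P\le\umax{P_i}$, $\umin{Q_j}\le v_Q\le\umax{Q_j}$ with $v_Q/v_P$ equal to the given slope; then reparametrizing time on that cell's portion to make $x$ advance at rate $v_P$ makes $y$ advance at rate $v_Q$. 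Gluing these local reparametrizations across cells (and checking continuity at cell boundaries, where one coordinate is integral) yields genuine speed-constrained parametrizations $\alpha,\beta$ with $\max_t d(P(\alpha(t)),Q(\beta(t)))\lee\eps$ because $\CP\subset\Feps$. Hence $\distFS(P,Q)\lee\eps$.

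The main obstacle is this reparametrization step in the reverse direction: making precise that the pointwise ratio constraint can be "unbundled" into two independent slope constraints by a suitable choice of the common time variable, and that the local rescalings patch together into globally continuous, non-decreasing functions on a common domain $[0,T]$. Handling degenerate cells — where $\umin{P_i}=0$, or the path has vertical/horizontal stretches, or dwells at a single point — requires a little care with the one-sided slope convention, but these are not deep; the genuine content is the unbundling argument. Everything else (continuity, monotonicity, membership in $\Feps$) is a routine unwinding of the definitions of $\Feps$, $\minS{ij}$, $\maxS{ij}$, and speed-constrained parametrization.
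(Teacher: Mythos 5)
Your proof is correct and takes essentially the same approach as the paper's: in the $(\Leftarrow)$ direction, both arguments reduce to the observation that a slope in $[\minS{ij},\maxS{ij}]$ is, by definition, exactly a ratio $v_Q/v_P$ with $v_P\in[\umin{P_i},\umax{P_i}]$ and $v_Q\in[\umin{Q_j},\umax{Q_j}]$, and then use that per-cell speed choice to build the common time parametrization; the paper phrases this as constructing piecewise-linear point sequences $A,B$ from a cell-decomposed $\CP$ (straightening each cell piece via convexity of $\Feps\cap\cell{ij}$), while you phrase it as reparametrizing the common time variable, which is the same construction. You are somewhat more careful than the paper about the $(\Rightarrow)$ direction, where the paper says only ``straightforward'' and does not address why the infimum defining $\distFS$ is attained; your compactness remark correctly closes that small gap.
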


\begin{proof}
	The $(\Rightarrow)$ part is straightforward. 
	For $(\Leftarrow)$,
	we need to show that if $(n,m)$ is reachable, then  
	there exist a speed-constrained parametrization
	$\alpha:[0,T] \rightarrow [0,n]$ of $P$ (for some $T>0$), 
	and  a speed-constrained parametrization 
	$\beta:[0,T] \rightarrow [0,m]$ of $Q$ such that
	$d( P(\alpha(t)), Q(\beta(t) ) \lee \eps$ for all $t \in [0,T]$.
	If $(n,m)$ is reachable, then by definition there is a
	slope-constrained path $\CP$ from $(0,0)$ to $(s,t)$ in $\Feps$.
	We construct two parametrizations $\alpha$ and $\beta$ from $\CP$ as follows.
	Let $\cell{i_1j_1}, \cell{i_2j_2}, \ldots, \cell{i_Nj_N}$ be the sequence of cells 
	that $\CP$ passes through, where $(i_1,j_1) = (0,0)$ and $(i_N,j_N) = (n,m)$.
	We can assume w.l.o.g. that for any $k$ ($1 \lee k \lee N$),
	the path portion $\CP_k = \CP \cap \cell{i_kj_k}$
	is a line segment. Otherwise, we could replace 
	$\CP_k$ by a line segment connecting the two endpoints of $\CP_k$
	which lies completely inside $\Feps$ (because $\Feps \cap \cell{i_{k}j_{k}}$ is convex),
	and whose slope remains within $[\minS{i_{k}j_{k}}, \maxS{i_{k}j_{k}}]$.

	Let $(p_{k-1},q_{k-1})$ and $ (p_k,q_k)$ be the two endpoints of $\CP_k$.
	The sequence $\sigma = (p_0,q_0), \ldots, (p_N,q_N)$ uniquely represents $\CP$
	(see Figure~\ref{fig:parametrizations}.a).
	We incrementally construct two point sequences $A$ and $B$ from $\sigma$ to represent
	$\alpha$ and $\beta$, respectively.
	Let $t_0 = 0$, $a_0 = 0$, and $b_0 = 0$.
	We start with $A = \set{(t_0, a_0)}$, and $B = \set{(t_0, b_0)}$.
	At each subsequent step $k$ from $1$ to $N$, we update $A$ and $B$ as follows.
	Let $s$ be the slope of $\CP_k$.
	Since $s \in [\minS{i_{k}j_{k}}, \maxS{i_{k}j_{k}}]$,
	there exist a $v_P \in [\umin{P_{i_k}},\umax{P_{i_k}}]$
	and a $v_Q \in [\umin{Q_{j_k}},\umax{Q_{j_k}}]$ 
	such that $s = v_Q / v_P$.
	Let $t = (p_k - p_{k-1})/ v_P = (q_k - q_{k-1})/ v_Q$, and set $t_k = t_{k-1}+ t$.
	We add to $A$ the point $(t_k, p_k)$, and to $B$ the point $(t_k, q_k)$
	(see Figure~\ref{fig:parametrizations}.b).
	The slope of  segment $(t_{k-1},a_{k-1})(t_{k},a_{k})$ is $v_P$, and 
	the slope of segment $(t_{k-1},b_{k-1})(t_{k},b_{k})$ is $v_Q$.
	Therefore, both these newly created segments 
	satisfy the corresponding speed constraints in $\alpha$ and $\beta$.
	Therefore, after the $N$th step, we obtain two point sets $A$ and $B$ of size $N+1$
	that fully define the speed-constrained parametrizations $\alpha$ and $\beta$, respectively.
\end{proof}

\begin{figure}[t]
	\centering
	\includegraphics[width=0.75\columnwidth]{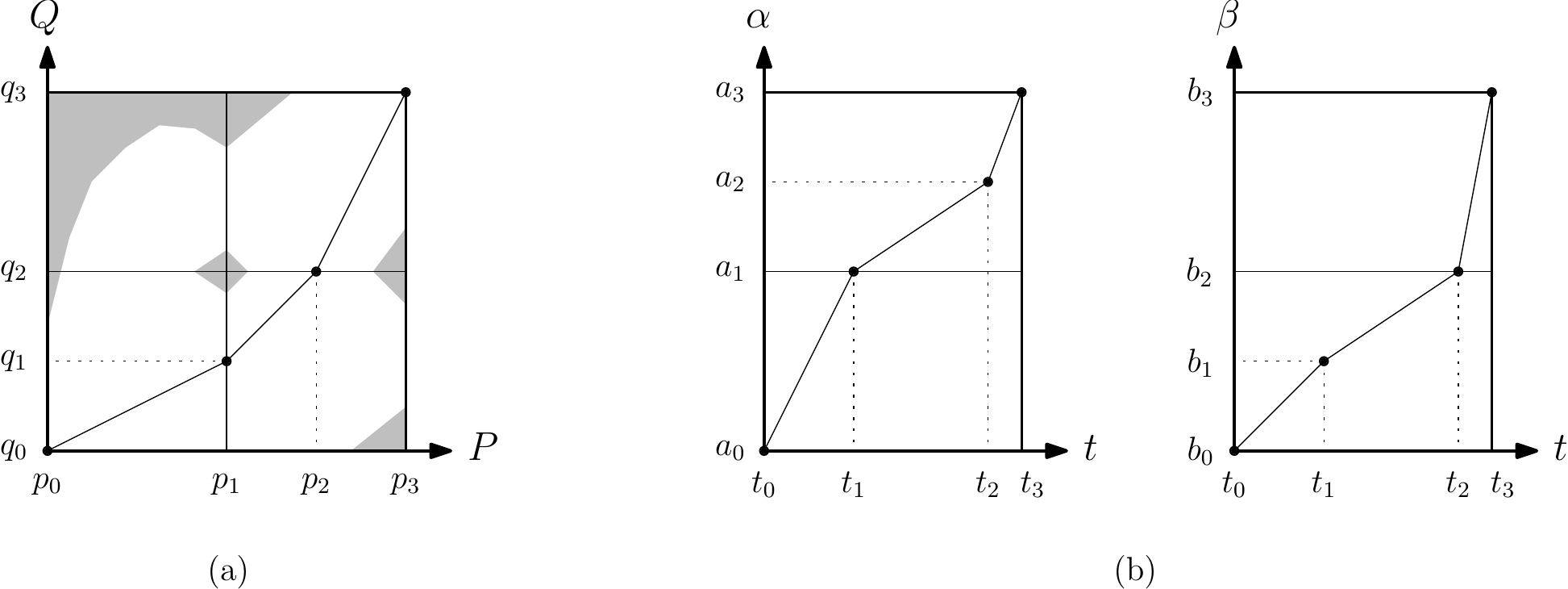}
	\caption{
	(a) A \SC path $\CP$ in the free space of $P$ and $Q$;
	(b) Two speed-constrained parametrizations of $P$ and $Q$,
	corresponding to the path $\CP$.}
	\label{fig:parametrizations}
\end{figure}


\paragraph{A Simple Algorithm.}
We now describe a simple algorithm for the decision problem.
As a preprocessing step, the free space, $\Feps$, is computed by the algorithm.
Let $\LF_{ij} = L_{ij} \cap \Feps$ and $\BF_{ij} = B_{ij} \cap \Feps$.
Since $F_\eps$ is convex within $\cell{ij}$(Section \ref{sec:classicalFD}),
each of $\LF_{ij}$ and $\BF_{ij}$ is a line segment. 
The preprocessing step therefore involves computing line segments
$\LF_{ij}$ and $\BF_{ij}$ for all feasible pairs $(i,j)$, 
which can be done in $O(n^2)$ time.
We then compute the reachability information on the boundary of each cell.
Let $\LR_{ij}$ be the set of reachable points in $L_{ij}$,
and $\BR_{ij}$ be the set of reachable points in $B_{ij}$.
We process the cells in cell-wise order, from $\cell{0,0}$ to $\cell{nm}$,
and at each cell $\cell{ij}$, we propagate the reachability information 
from the entry side of the cell to its exit side, using the following projection function. 
Given a point $p \in \entry{ij}$, the {\em projection} of $p$ onto the exit side of $\cell{ij}$ is defined as
\[
	\proj{ij}(p) = \set{ q \in \exit{ij} \ |  \  \mbox{the slope of $\overline{pq}$ is within } [\minS{ij}, \maxS{ij}]  }.
\]
For a point set $S \subseteq \entry{ij}$, we define 
$ \proj{ij}(S) =\bigcup_{p\in S} \proj{ij}(p)$
(see Figure~\ref{fig:project}.a).
To compute the set of reachable points on the exit side of a cell $\cell{ij}$,
the algorithm first projects $\LR_{ij} \cup \BR_{ij}$ to the exit side of $\cell{ij}$,
and takes its intersection with $\Feps$.
More precisely, the algorithm computes 
 $\LR_{i+1,j}$ and $\BR_{i,j+1}$ from
$\LR_{ij}$, $\BR_{ij}$, $\LF_{i+1,j}$, and $\BF_{i,j+1}$, using the following formula:
$\BR_{i,j+1} \cup \LR_{i+1,j} = \proj{ij}(\LR_{ij} \cup \BR_{ij}) \cap (\BF_{i,j+1} \cup \LF_{i+1,j})$
(see Figure~\ref{fig:project}.b).
Details are provided in Algorithm \ref{alg:main}.

\begin{figure}[t]
	\centering
	\includegraphics[width=0.58\columnwidth]{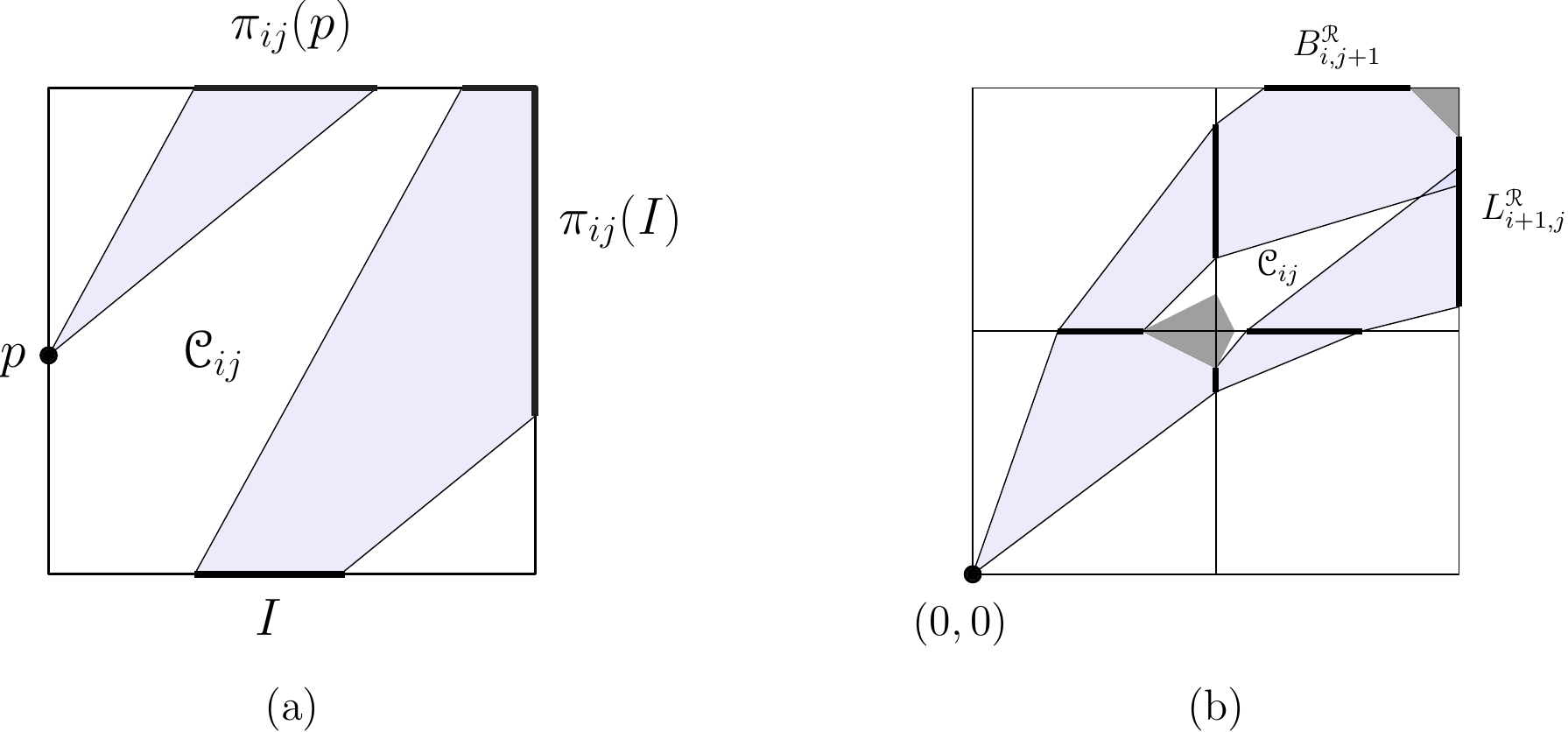}
	\caption{ (a) Projecting a point $p$ and an interval $I$ onto the exit side of $\cell{ij}$;
	(b) Computing reachable intervals on the exit side of a cell $\cell{ij}$.
	Dark gray areas represent infeasible (obstacles) regions.
	Reachable intervals are shown with bold line segments.
	}
	\label{fig:project}
\end{figure}


\vspace{0.5em}
\begin{algorithm} [h]
\caption {\sc Decision Algorithm} \label{alg:main}
\algsetup{indent=1.5em}
\begin{algorithmic}[1]
	\vspace{0.5em}
	\baselineskip=1\baselineskip

	\STATE Compute the free space, $\Feps$ \\
	\STATE Set $\LR_{0,0} = \BR_{0,0} = \set{(0,0)}$, \ 
		$\LR_{i,0} = \emptyset$ for $i \in \set{1, \ldots, n}$, \ 
		$\BR_{0,j} = \emptyset$ for $j \in \set{1, \ldots, m}$  
	\FOR {$i = 0$ to $n$} 
	 	\FOR {$j = 0$ to $m$}
			\STATE\label{lp:1} $\sigma = \LR_{ij} \cup \BR_{ij}$  
			\STATE\label{lp:2} $\lambda = \proj{ij}(\sigma)$  
			\STATE\label{lp:3} $\BR_{i,j+1} = \lambda \cap \BF_{i,j+1}$ 
			\STATE\label{lp:4} $\LR_{i+1,j} = \lambda \cap \LF_{i+1,j}$ 
		\ENDFOR
	\ENDFOR
	\STATE\label{line:last} Return 
``{\sc yes}" if $(n,m) \in \LR_{n+1,m}$, 
``{\sc no}", otherwise. 

\end{algorithmic}
\end{algorithm}
\vspace{0.5em}


\begin{lemma} \label{lemma:cell-process}
	After the execution of Algorithm~\ref{alg:main}, 
	a point $q \in \exit{ij}$ is reachable 	
	iff $q \in \BR_{i,j+1} \cup  \LR_{i+1,j}$.
\end{lemma}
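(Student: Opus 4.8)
The plan is to prove both directions of the equivalence by induction on the cell-wise order, using Lemma~\ref{lemma:reachable}'s underlying notion of slope-constrained paths together with the convexity of $\Feps$ inside each cell. The key structural fact I would establish first, as an invariant maintained by Algorithm~\ref{alg:main}, is: when the algorithm finishes processing cell $\cell{ij}$, the sets $\BR_{i,j+1}$ and $\LR_{i+1,j}$ equal exactly the reachable points on $B_{i,j+1}$ and $L_{i+1,j}$ respectively. Since $\exit{ij} = B_{i,j+1} \cup L_{i+1,j}$, the lemma follows immediately from this invariant. Note that a point on an exit edge of $\cell{ij}$ is also an entry edge of a later cell, so there is a subtlety: a point $q$ on, say, $L_{i+1,j}$ might also be reachable only via a path that passes through $\cell{i,j+1}$ and then enters $\cell{i+1,j}$ from below; I should be careful that the cell-wise processing order does not miss such paths. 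In fact it does not, because any slope-constrained path from $(0,0)$ to $q \in L_{i+1,j}$ must cross $\exit{ij'}$ for the appropriate earlier cell, and the monotonicity forced by $\minS{\cdot} > 0$ (slopes are positive, hence paths are $xy$-monotone) guarantees the path reaches $L_{i+1,j}$ only after passing through cells that precede $\cell{i+1,j-1}$ or $\cell{i,j}$ in cell-wise order.

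For the ($\Leftarrow$) direction I would argue: suppose $q \in \BR_{i,j+1} \cup \LR_{i+1,j}$. By line~\ref{lp:3} or~\ref{lp:4} of the algorithm, $q \in \proj{ij}(\sigma) \cap \Feps$ where $\sigma = \LR_{ij} \cup \BR_{ij}$, so there is a point $p \in \sigma$ with the slope of $\overline{pq}$ in $[\minS{ij},\maxS{ij}]$ and, since $\Feps$ is convex within $\cell{ij}$ and both $p,q \in \Feps \cap \cell{ij}$, the whole segment $\overline{pq}$ lies in $\Feps$. By the induction hypothesis applied to the cell processed just before (which has $p$ on its exit side), $p$ is reachable, so there is a slope-constrained path from $(0,0)$ to $p$ in $\Feps$; concatenating with $\overline{pq}$ gives a slope-constrained path to $q$. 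The base case is $\cell{0,0}$ with $\LR_{0,0} = \BR_{0,0} = \{(0,0)\}$, which is trivially the reachable point.

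For the ($\Rightarrow$) direction, suppose $q \in \exit{ij}$ is reachable via a slope-constrained path $\CP \subseteq \Feps$. As in the proof of Lemma~\ref{lemma:reachable}, I may assume $\CP$ is a polygonal path that is a single segment inside each cell it traverses. Let $p$ be the point where $\CP$ last crosses the entry side of $\cell{ij}$ before reaching $q$; then $p \in \entry{ij}$, $p$ is itself reachable, and the final segment $\overline{pq}$ of $\CP$ lies in $\Feps \cap \cell{ij}$ with slope in $[\minS{ij},\maxS{ij}]$, i.e.\ $q \in \proj{ij}(p)$. By the induction hypothesis, $p \in \LR_{ij} \cup \BR_{ij} = \sigma$, so $q \in \proj{ij}(\sigma) \cap \Feps = \BR_{i,j+1} \cup \LR_{i+1,j}$ by lines~\ref{lp:2}--\ref{lp:4}.

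The main obstacle I anticipate is handling the interaction between the projection operator and the cell-wise processing order cleanly: specifically, showing that $\sigma = \LR_{ij} \cup \BR_{ij}$ already contains \emph{all} reachable points of $\entry{ij}$ at the moment $\cell{ij}$ is processed, which requires knowing that every cell contributing an exit edge to $\entry{ij}$ has been fully processed earlier. This is where I would invoke the positivity of permissible slopes (so paths are strictly $xy$-monotone) to rule out a reachable path arriving at $\entry{ij}$ ``from the future.'' A secondary technical point is that $\proj{ij}(S)$ for an interval $S$ need not be a single interval a priori, but the intersection with the convex set $\Feps \cap \cell{ij}$ restricted to one exit edge is an interval; I should confirm the algorithm's $O(1)$-per-cell manipulations actually realize $\proj{ij}(\sigma) \cap (\BF_{i,j+1}\cup\LF_{i+1,j})$ correctly, though for the correctness statement of this lemma the set-theoretic identity suffices and the data-structural details can be deferred.
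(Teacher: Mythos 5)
Your proof is correct and follows the paper's inductive structure. The $(\Leftarrow)$ direction is essentially identical to the paper's: find $p \in \LR_{ij} \cup \BR_{ij}$ with $q \in \proj{ij}(p)$, invoke convexity of $\Feps$ within $\cell{ij}$ to place $\overline{pq}$ in the free space, and concatenate with the slope-constrained path to $p$ provided by the induction hypothesis. For $(\Rightarrow)$ you take a somewhat different route than the paper. The paper proves the contrapositive: fixing $p$ as the crossing of $\CP$ with $\entry{ij}$, it argues that if $q \notin \proj{ij}(p)$ then the subpath of $\CP$ from $p$ to $q$ inside $\cell{ij}$ must cross one of the two rays from $p$ of slope $\minS{ij}$ or $\maxS{ij}$, at which point the slope of $\CP$ leaves the permissible range. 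You instead invoke the single-segment-per-cell straightening already established inside the proof of Lemma~\ref{lemma:reachable} (valid because $\Feps \cap \cell{ij}$ is convex and a straight chord has slope equal to a weighted average of the replaced arc's slopes, hence still in $[\minS{ij},\maxS{ij}]$); once $\CP \cap \cell{ij}$ is the segment $\overline{pq}$, its single slope is permissible and $q \in \proj{ij}(p)$ follows immediately. Both are sound; yours avoids the slightly informal ray-crossing step and is a clean reuse of a fact the paper has already established.

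One small slip: you twice invoke \emph{positivity} of the permissible slopes to dispose of the worry that $\sigma$ might miss reachable points on $\entry{ij}$ that arrive ``from later cells.'' But $\minS{ij} = \umin{Q_j}/\umax{P_i}$ can be $0$ (whenever $\vmin{Q_j}=0$), so slopes are only guaranteed to be nonnegative, not positive. That suffices: nonnegative slope already makes every slope-constrained path monotone non-decreasing in both $x$ and $y$, which is exactly what rules out a path first touching $\exit{ij}$ and later re-entering $\entry{ij}$, and which guarantees that both $\LR_{ij}$ (produced when $\cell{i-1,j}$ is processed) and $\BR_{ij}$ (produced when $\cell{i,j-1}$ is processed) are finalized before the algorithm reaches $\cell{ij}$ in cell-wise order.
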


\begin{proof}
	We prove the lemma by induction on the cells in cell-wise order.
	$(\Leftarrow)$ 
	Let $q \in \BR_{i,j+1} \cup  \LR_{i+1,j}$.
	Then, by our construction, there is a point $p \in \LR_{ij} \cup \BR_{ij}$
	such that $q \in \proj{ij}(p)$.
	By induction hypothesis, $p$ is reachable, and therefore,
	there is a \SC path $\CP$ in $\Feps$ connecting $(0,0)$ to $p$.
	Now, $\CP$ concatenated with $\overline{pq}$ is a slope-constrained path from $(0,0)$ to $q$, 
	implying that $q$ is reachable.
	$(\Rightarrow)$ 
	We show that any point $q \in \exit{ij}$ which is not in $\BR_{i,j+1} \cup  \LR_{i+1,j}$ is unreachable.
	Suppose the contrary, i.e., $q$ is reachable.
	Then, there exists a \SC path $\CP$ in $\Feps$ that connects $(0,0)$ to $q$.
	Because the slope of $\CP$ cannot be negative, $\CP$ must
	cross $\entry{ij}$ at some point $p$.
	Now, $p$ is reachable from $(0,0)$, because it is on a \SC path from $(0,0)$ to $p$.
	Therefore, $p \in \LR_{ij} \cup \BR_{ij}$ by induction.
	Consider two line segments $s_1$ and $s_2$ that connect $p$ to $\exit{ij}$
	with slopes $\minS{ij}$ and $\maxS{ij}$, respectively.
	Since $q \not\in \proj{ij}(p)$,
	the portion of $\CP$ that lies between $p$ and $q$ must cross either $s_1$ or $s_2$.
	But, it implies that the slope of $\CP$ at the cross point falls out of the permissible range $[\minS{ij}, \maxS{ij}]$,
	and thus, $\CP$ cannot be slope-constrained: a contradiction.
\end{proof}

\begin{corollary} \label{cor:correctness}
	Algorithm~\ref{alg:main} returns 
`` {\sc yes}" iff $\distFS(P,Q) \lee \eps$.
\end{corollary}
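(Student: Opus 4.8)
The plan is to chain the two results just proved. By line~\ref{line:last} of Algorithm~\ref{alg:main}, the output is ``{\sc yes}'' exactly when $(n,m) \in \LR_{n+1,m}$, so it suffices to establish that $(n,m) \in \LR_{n+1,m}$ if and only if $(n,m)$ is reachable; the corollary then follows immediately from Lemma~\ref{lemma:reachable}, which identifies reachability of $(n,m)$ with the inequality $\distFS(P,Q) \lee \eps$.

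To obtain the equivalence ``$(n,m) \in \LR_{n+1,m}$ $\Leftrightarrow$ $(n,m)$ reachable'', I would apply Lemma~\ref{lemma:cell-process} to the final cell $\cell{n,m}$, which the nested loops of Algorithm~\ref{alg:main} process last. The corner $(n,m)$ lies on both $B_{n,m+1}$ and $L_{n+1,m}$, hence on $\exit{n,m}$, so Lemma~\ref{lemma:cell-process} gives that $(n,m)$ is reachable iff $(n,m) \in \BR_{n,m+1} \cup \LR_{n+1,m}$. It then remains to reconcile this with the membership test actually performed, which inspects only $\LR_{n+1,m}$. For the single point $(n,m)$ this is harmless: membership of $(n,m)$ in $\BF_{n,m+1}$ and in $\LF_{n+1,m}$ are both equivalent to $d(P(n),Q(m)) \lee \eps$ (each of these two edges contains the corner), so from lines~\ref{lp:3} and~\ref{lp:4}, writing $\lambda = \proj{n,m}(\LR_{n,m}\cup\BR_{n,m})$, we get $(n,m)\in\BR_{n,m+1} \Leftrightarrow (n,m)\in\lambda \Leftrightarrow (n,m)\in\LR_{n+1,m}$. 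Hence $(n,m)\in\LR_{n+1,m}$ iff $(n,m)\in\BR_{n,m+1}\cup\LR_{n+1,m}$ iff $(n,m)$ is reachable, and combining with Lemma~\ref{lemma:reachable} completes the argument.

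The main obstacle is really just this boundary bookkeeping: confirming that the last cell is processed, that the corner point sits on the relevant cell edges, and that the final membership test is consistent with what Lemma~\ref{lemma:cell-process} guarantees. Everything substantive---that the projection-and-intersection update in lines~\ref{lp:1}--\ref{lp:4} faithfully propagates reachability in both directions---has already been discharged in Lemmas~\ref{lemma:reachable} and~\ref{lemma:cell-process}, so no further geometric or inductive work is needed, and termination of Algorithm~\ref{alg:main} is immediate since both loops are finite.
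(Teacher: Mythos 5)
Your proof is correct and takes essentially the same route as the paper, which simply states the corollary "follows immediately from Lemmas~\ref{lemma:reachable} and \ref{lemma:cell-process}." The only difference is that you explicitly carry out the boundary bookkeeping the paper leaves implicit---namely, reconciling the fact that Lemma~\ref{lemma:cell-process} speaks of $\BR_{n,m+1}\cup\LR_{n+1,m}$ while line~\ref{line:last} tests only $\LR_{n+1,m}$, which you resolve correctly by observing that membership of the corner $(n,m)$ in $\BF_{n,m+1}$ and in $\LF_{n+1,m}$ coincide.
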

\begin{proof}
	This follows immediately from Lemmas~\ref{lemma:reachable} and \ref{lemma:cell-process}.
\end{proof}

We now show how Algorithm~1 can be implemented efficiently.
Let a {\em reachable interval\/} be a maximal contiguous subset of reachable points on 
the entry side (or the exit side) of a cell.
Therefore, each of $\LR_{ij}$ and $\BR_{ij}$ can be represented as a sequence of reachable intervals.
We make two observations:

\begin{obs} \label{obs:newInterval}
	For each cell $\cell{ij}$, the number of reachable intervals on $\exit{ij}$ is at most one more than
	the number of reachable intervals on $\entry{ij}$.
\end{obs}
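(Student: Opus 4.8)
The plan is to analyze how the projection operation $\proj{ij}$ and the subsequent intersection with the free space transform the reachable intervals on $\entry{ij}$ into those on $\exit{ij}$. The key structural fact I would establish first is a monotonicity property of $\proj{ij}$: since all admissible slopes lie in the fixed interval $[\minS{ij},\maxS{ij}]$ of nonnegative values, the projection is ``order-preserving'' with respect to the $\lei$ order defined in the Preliminaries. Concretely, if $p \lei p'$ are two points on $\entry{ij}$, then $\proj{ij}(p)$ and $\proj{ij}(p')$ are intervals on $\exit{ij}$ with $\proj{ij}(p) \lei \proj{ij}(p')$; in particular neither properly contains the other, and as $p$ sweeps along $\entry{ij}$ in $\lei$ order the two endpoints $\Left(\proj{ij}(p))$ and $\Right(\proj{ij}(p))$ move monotonically (weakly) along $\exit{ij}$. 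This is essentially a statement about a fixed cone of directions applied to a monotone source segment (the bent segment $L_{ij}\cup B_{ij}$), and it follows from elementary geometry of the cell.

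Next I would use this to show that $\proj{ij}(\sigma)$, the image of the whole reachable set $\sigma = \LR_{ij}\cup\BR_{ij}$ on the entry side, is itself a \emph{single} interval on $\exit{ij}$. Although $\sigma$ may consist of several disjoint reachable intervals $I_1 \lei I_2 \lei \cdots \lei I_t$, each $\proj{ij}(I_a)$ is a subinterval of $\exit{ij}$, and by the monotonicity just established these images are ordered along $\exit{ij}$ and—crucially—overlap or abut so that their union is connected. The reason is that the ``width'' of the projection cone is the same everywhere, so the gap between $I_a$ and $I_{a+1}$ on the entry side cannot survive: any point in $\exit{ij}$ lying $\lei$-between $\proj{ij}(\Right(I_a))$ and $\proj{ij}(\Left(I_{a+1}))$ is hit by the projection of some intermediate point of $\entry{ij}$, and since that intermediate point need not be reachable this does \emph{not} immediately give connectivity of $\proj{ij}(\sigma)$—so I would instead argue directly that consecutive images $\proj{ij}(I_a)$ and $\proj{ij}(I_{a+1})$ share an endpoint or overlap, because the cone has positive angular width and $I_a,I_{a+1}$ are both nonempty. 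Hence $\lambda := \proj{ij}(\sigma)$ is one contiguous interval (or empty).

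Finally, the reachable set on the exit side is $\lambda \cap (\BF_{i,j+1}\cup\LF_{i+1,j})$, i.e., the intersection of the single interval $\lambda$ with the free space restricted to the two-segment polyline $\exit{ij} = B_{i,j+1}\cup L_{i+1,j}$. Since $\Feps$ is convex within each cell, $\BF_{i,j+1}$ and $\LF_{i+1,j}$ are each a single (possibly empty) segment, so the free portion of $\exit{ij}$ consists of at most two intervals. Intersecting the connected set $\lambda$ with a union of at most two intervals yields at most two intervals. Comparing with $\entry{ij}$: if $\entry{ij}$ carried $t \ge 1$ reachable intervals then $\exit{ij}$ carries at most $2 \le t+1$; and if $t = 0$ then $\sigma = \emptyset$, $\lambda = \emptyset$, and $\exit{ij}$ carries $0 = t$ intervals. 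In all cases the number on the exit side is at most $t+1$, which is the claim.

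The main obstacle I anticipate is the second step—proving that $\proj{ij}(\sigma)$ is connected even though $\sigma$ may be disconnected. The cleanest route is the abutment argument: for two $\lei$-consecutive reachable intervals $I_a \lei I_{a+1}$ on $\entry{ij}$, pick the $\lei$-last point $u$ of $I_a$ and the $\lei$-first point $w$ of $I_{a+1}$; then $\proj{ij}(u)$ and $\proj{ij}(w)$ are nondegenerate intervals on $\exit{ij}$ (the cone $[\minS{ij},\maxS{ij}]$ is a genuine range), and one checks via the monotonicity lemma that $\Right(\proj{ij}(u)) \gex \Left(\proj{ij}(w))$ in the $\lei$ order—equivalently, the two image intervals overlap or touch—because moving the apex from $u$ to $w$ along the entry polyline shifts the cone's shadow by less than the shadow's own width. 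Making ``less than the shadow's own width'' precise (it hinges on the slopes being bounded away from the boundary directions of the cell edges, or on a short case analysis for apexes on $L_{ij}$ versus $B_{ij}$ versus straddling the corner $(i-1,j-1)$) is the one place where a little care with the cell geometry is needed; everything else is bookkeeping with the $\lei$ order.
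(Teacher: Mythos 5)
Your plan breaks down at its central step. You want to show that $\lambda = \proj{ij}(\sigma)$ is a \emph{single} contiguous interval by arguing that for two $\lei$-consecutive reachable intervals $I_a$ and $I_{a+1}$ on $\entry{ij}$, the images $\proj{ij}(I_a)$ and $\proj{ij}(I_{a+1})$ must overlap or abut ``because the cone has positive angular width.'' This is false. The width of the shadow that the slope cone $[\minS{ij},\maxS{ij}]$ casts on $\exit{ij}$ is controlled entirely by $\maxS{ij}-\minS{ij}$, which can be made arbitrarily small — or even zero: the lower-bound construction in Lemma~\ref{lem:lowerbd} deliberately assigns degenerate speed ranges like $[n/2,n/2]$, forcing $\minS{ij}=\maxS{ij}$, so the cone is a single ray and each point projects to a single point. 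Meanwhile the gap between $I_a$ and $I_{a+1}$ can be a constant fraction of the cell. Translating a narrow (or zero-width) shadow across a wide gap does not close the gap, so $\lambda$ can have exactly as many disjoint intervals as $\sigma$. Indeed, if your connectivity claim were true, every $\exit{ij}$ would carry at most two reachable intervals, and hence every entry side at most a bounded constant; this flatly contradicts Lemma~\ref{lem:lowerbd}, which exhibits cells with $\Theta(n^2)$ reachable intervals on their entry side, obtained precisely by letting intervals accumulate one per row.

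The paper's proof avoids this entirely and proves something weaker but sufficient: each reachable interval on $\entry{ij}$ projects to one contiguous interval (projections never \emph{split} an interval, though distinct ones may merge), so $\lambda$ has at most as many intervals as $\sigma$. The only source of growth is then partitioning $\lambda$ between $B_{i,j+1}$ and $L_{i+1,j}$ at the shared corner $(i,j)$, which can split at most the single interval of $\lambda$ containing that corner into two pieces — a net gain of one. Intersecting with $\BF_{i,j+1}$ and $\LF_{i+1,j}$, each a single segment by convexity of $\Feps$ within a cell, cannot increase the count. That directly yields the ``$+1$'' bound without any global connectivity of $\lambda$. Your first and third paragraphs (order-preservation of $\proj{ij}$ via Observation~\ref{obs:order}, and the structure of the free portion of $\exit{ij}$) are fine and match the paper's machinery, but the connectivity claim in your second paragraph is the unfixable hole, and your own hedge at the end — that making ``less than the shadow's own width'' precise ``needs a little care'' — is where the argument actually fails, not where it needs polishing.
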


\begin{proof}
	\REM{We show that the projection of reachable intervals from the entry side of a cell $\cell{ij}$
	to its exit side can 
	produce at most one new reachable interval.}
	Let $\sigma = \LR_{ij} \cup \BR_{ij}$ be the set of reachable points on $\entry{ij}$,
	and let $\lambda = \proj{ij}(\sigma)$ be the projection of $\sigma$ onto $\exit{ij}$.
	Since the projection on each reachable interval on the exit side is contiguous,
	no reachable interval in $\sigma$ can contribute to more than one reachable interval in $\lambda$.
	Therefore, the number of intervals in $\lambda$ is at most equal to the number of intervals in $\sigma$.
	(Note that projected intervals can merge.)
	However, after splitting $\lambda$ between $L_{i+1,j}$ and $B_{i,j+1}$,
	at most one of the intervals in $\lambda$ (the one containing $L_{i+1,j} \cap B_{i,j+1}$) may split into two, 
	which increases the number of intervals by at most one.
\end{proof}

\REM{
\begin{corollary} \label{cor:numIntervals}
	The number of reachable intervals on the entry side of each cell $\cell{ij}$ is $O(i \times j)$.
\end{corollary}

\begin{proof}
	Since there are at most $i \times j$ cells before $\cell{ij}$ 
	that can contribute to the number of intervals on $\entry{ij}$,
	and we start with one reachable interval (the one containing only (0,0))
	at the beginning, the number of intervals on $\entry{ij}$ can be at most~$i \times j$,
	bu Observation~\ref{obs:newInterval}
\end{proof}
}

\begin{corollary} \label{cor:numIntervals}
	The number of reachable intervals on the entry side of each cell is $O(n^2)$.
\end{corollary}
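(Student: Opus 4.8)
The plan is to bound, once and for all, the total number of reachable intervals that Algorithm~\ref{alg:main} ever creates over its entire run, and then observe that the reachable intervals on $\entry{ij}$ form a subset of these.

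The starting point is a reading of the proof of Observation~\ref{obs:newInterval} that is a little more precise than its statement. When a cell $\cell{ij}$ is processed, the step $\sigma \mapsto \lambda = \proj{ij}(\sigma)$ does not increase the number of intervals, because each reachable interval of $\sigma$ projects to a single contiguous piece of $\exit{ij}$ (distinct pieces may only merge); and the subsequent clipping $\lambda \mapsto (\lambda \cap \BF_{i,j+1},\ \lambda \cap \LF_{i+1,j})$ does not increase it either, since $\BF_{i,j+1}$ and $\LF_{i+1,j}$ are each a single interval. The only place a genuinely new reachable interval can appear is where the (at most one) interval of $\lambda$ that contains the common endpoint $L_{i+1,j}\cap B_{i,j+1}$ is cut in two, so each cell gives birth to at most one new reachable interval. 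I would make this bookkeeping explicit by attaching to every reachable interval a \emph{birth tag}: the initial interval $\{(0,0)\}$ carries a ``start'' tag; the piece split off while processing $\cell{ij}$, if one is created, carries the tag $(i,j)$; a reachable interval keeps its tag while it is projected and clipped forward from one cell edge to the next; and when two reachable intervals merge, the merged interval retains (say) the tag of the earlier-born one while the other tag is retired. Under this convention a tag is issued at most once --- once at the start and at most once per cell --- so only $O(nm)$ tags are ever issued in total; and at every instant of the execution two distinct live reachable intervals carry two distinct tags, since after its birth a reachable interval sits on a single cell edge at a time (the only way it would occupy two edges simultaneously is by being split at a corner, and that event immediately issues a fresh tag for the second piece).

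Granting this, the corollary is immediate: the reachable intervals on $\entry{ij}$ are precisely the intervals of $\LR_{ij}\cup\BR_{ij}$, each of them carries one of the $O(nm)$ tags, and distinct intervals carry distinct tags, so $\entry{ij}$ contains at most $O(nm)$ reachable intervals; since $n \gee m$, this is $O(n^2)$. (One can localize the count further: $\entry{ij}$ is fully determined once the cells $\cell{i-1,j}$ and $\cell{i,j-1}$ have been processed, and a slope-constrained path can reach $L_{ij}$ or $B_{ij}$ only by staying within rows $1,\dots,j$ and columns $1,\dots,i$, so only the tags of the $O(ij)$ cells in that sub-rectangle can occur on $\entry{ij}$, which recovers the sharper $O(i\cdot j)$ bound.) The step that needs the most care is the tagging argument itself --- verifying that merges never manufacture a new interval and that a split at a cell corner is the sole event that raises the interval count --- but this is essentially a re-packaging of the proof of Observation~\ref{obs:newInterval}, which may be cited directly; everything else is routine.
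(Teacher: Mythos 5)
Your proof is correct and takes essentially the same approach as the paper's: both rest on Observation~\ref{obs:newInterval} together with the $O(n^2)$ count of cells, with the paper establishing the bound quantitatively via the recurrence $R_{k+1}\leq R_k+k+1$ in the proof of Theorem~\ref{thm:naive}. Your tag-and-lineage bookkeeping makes explicit why a single entry side cannot carry more reachable intervals than the total number of birth events --- a point the paper leaves implicit when it states the corollary without proof --- and your $O(ij)$ parenthetical matches the local bound the paper's summation relies on.
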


The above upper bound of $O(n^2)$ is indeed tight as proved in Section~\ref{sec:improvedSpeed}.

\begin{obs} \label{obs:order}
	Let $\seq{I_1, I_2, \ldots, I_k}$ be a sequence of intervals on the entry side of a cell $\cell{ij}$. 
	If $I_1 \lei I_2 \lei \cdots \lei I_k$ then $\proj{ij}(I_1) \lei \proj{ij}(I_2) \lei \cdots \lei \proj{ij}(I_k)$.
\end{obs}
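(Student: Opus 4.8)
The plan is to reduce Observation~\ref{obs:order} to a pointwise monotonicity property of the projection map and then pass from points to intervals using continuity. First I would fix a consistent parametrization of the two sides of the cell. Parametrize $\entry{ij}=L_{ij}\cup B_{ij}$ by $u\in[0,2]$ traversing it in $\lei$-order: $u=0$ at the top-left corner of $\cell{ij}$, down $L_{ij}$ to the bottom-left corner at $u=1$, then right along $B_{ij}$ to the bottom-right corner at $u=2$. Parametrize $\exit{ij}=B_{i,j+1}\cup L_{i+1,j}$ by $v\in[0,2]$ in $\lei$-order: $v=0$ at the top-left corner, right along $B_{i,j+1}$ to the top-right corner at $v=1$, then down $L_{i+1,j}$ to the bottom-right corner at $v=2$. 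By the definition of $\lei$, for $p,p'\in\entry{ij}$ we have $p\lei p' \iff u(p)\le u(p')$, and the analogous statement holds on $\exit{ij}$.

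Next, for an entry point with parameter $u$ and slope $s\in[0,\infty]$, let $h(u,s)\in[0,2]$ denote the $v$-coordinate of the point where the ray from that entry point with slope $s$ first meets $\exit{ij}$; since this ray travels up and to the right inside the unit square $\cell{ij}$, it always reaches $\exit{ij}$ (ignoring a couple of degenerate corner cases). The geometric core of the argument is to show that (i) $h$ is continuous, (ii) $h(\cdot,s)$ is non-decreasing for each fixed $s$, and (iii) $h(u,\cdot)$ is non-increasing for each fixed $u$. Intuitively, (ii) says that translating the ray forward along $\entry{ij}$ pushes its exit point forward along $\exit{ij}$, and (iii) says that steepening the ray pulls its exit point back toward the top-left corner of $\cell{ij}$. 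I would prove (i)--(iii) by the routine but somewhat tedious case analysis on whether the entry point lies on $L_{ij}$ or on $B_{ij}$ and whether the exit point lies on $B_{i,j+1}$ or on $L_{i+1,j}$; alternatively (iii) is immediate from convexity of $\cell{ij}$, since a ray rotating about a fixed point of the boundary sweeps the rest of the boundary monotonically. This verification of (i)--(iii) is the main obstacle to a clean write-up.

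Granting (i)--(iii), write $\underline{s}=\minS{ij}$ and $\overline{s}=\maxS{ij}$. For an entry point $p$ with parameter $u$, property (iii) together with continuity gives that $\proj{ij}(p)$ is the nonempty sub-arc of $\exit{ij}$ with $v$-parameters in $[\,h(u,\overline{s}),\,h(u,\underline{s})\,]$; in particular $\Left(\proj{ij}(p))$ has parameter $h(u,\overline{s})$ and $\Right(\proj{ij}(p))$ has parameter $h(u,\underline{s})$. Hence if $p\lei p'$, i.e.\ $u(p)\le u(p')$, then (ii) yields $h(u(p),\overline{s})\le h(u(p'),\overline{s})$ and $h(u(p),\underline{s})\le h(u(p'),\underline{s})$, so $\Left(\proj{ij}(p))\lei\Left(\proj{ij}(p'))$ and $\Right(\proj{ij}(p))\lei\Right(\proj{ij}(p'))$. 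For an interval $I$ on $\entry{ij}$ with $u_1=u(\Left(I))\le u(\Right(I))=u_2$, the projection is $\proj{ij}(I)=\bigcup_{u\in[u_1,u_2]}\{\,q\in\exit{ij}:v(q)\in[h(u,\overline{s}),h(u,\underline{s})]\,\}$; since $u\mapsto h(u,\overline{s})$ and $u\mapsto h(u,\underline{s})$ are continuous and non-decreasing, the intermediate value theorem shows this union leaves no gaps and equals the sub-arc of $\exit{ij}$ with parameters in $[\,h(u_1,\overline{s}),\,h(u_2,\underline{s})\,]$. Consequently $\Left(\proj{ij}(I))=\Left(\proj{ij}(\Left(I)))$ and $\Right(\proj{ij}(I))=\Right(\proj{ij}(\Right(I)))$.

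Finally, suppose $I_1\lei I_2$, so that $\Left(I_1)\lei\Left(I_2)$ and $\Right(I_1)\lei\Right(I_2)$. Using the displayed identities and the pointwise monotonicity, $\Left(\proj{ij}(I_1))=\Left(\proj{ij}(\Left(I_1)))\lei\Left(\proj{ij}(\Left(I_2)))=\Left(\proj{ij}(I_2))$, and similarly $\Right(\proj{ij}(I_1))\lei\Right(\proj{ij}(I_2))$; by definition this is exactly $\proj{ij}(I_1)\lei\proj{ij}(I_2)$. Applying this to each consecutive pair of the given chain $I_1\lei I_2\lei\cdots\lei I_k$ yields $\proj{ij}(I_1)\lei\proj{ij}(I_2)\lei\cdots\lei\proj{ij}(I_k)$, as claimed.
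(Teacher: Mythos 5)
Your argument is correct and rests on the same key insight as the paper's: the left (resp.\ right) endpoint of $\proj{ij}(I)$ is reached from $\Left(I)$ (resp.\ $\Right(I)$) by the fixed extremal slope $\maxS{ij}$ (resp.\ $\minS{ij}$), and since parallel rays meet $\exit{ij}$ in a boundary-order-preserving fashion---your property~(ii)---the order of intervals is preserved. The paper's proof is a one-liner invoking exactly this (``all segments $\ell_t$ have slope $\maxS{ij}$ and thus are parallel''), while yours spells it out via the parametrization and $h(u,s)$, and additionally verifies by the intermediate value theorem that $\proj{ij}(I)$ is a single contiguous arc, a fact the paper leaves implicit here.
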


\begin{proof}
	For all $t \in \set{1,\ldots,k}$,
	let $\ell_t$ be the line segment connecting $\Left(I_t)$ to $\Left(\proj{ij}(I_t))$,
	and $r_t$ be the line segment connecting $\Right(I_t)$ to $\Right(\proj{ij}(I_t))$.
	The observation immediately follows from the fact that
	all segments in the set $\set{\ell_t}_{1 \lee t \lee k}$ have slope $\maxS{ij}$ (and thus are parallel),
	and all segments in $\set{r_t}_{1 \lee t \lee k}$ have slope $\minS{ij}$.
	Note that this proof holds even if the intervals in the original sequence and/or
	intervals in the projected sequence overlap each other.
\end{proof}

\REM{
Note that for any two reachable intervals $I_1$ and $I_2$ on the entry side of a cell ($I_1 \not= I_2$),
we have either $I_1 \lei I_2$ or $I_2 \lei I_1$.
Therefore, $\lei$ defines a total order
on the set of reachable intervals on the entry side (and the exit side) of a cell.
As a result, Observation~\ref{obs:order} is applicable to the sequence of reachable intervals as well.
}

\begin{theorem} \label{thm:naive}
	Algorithm~\ref{alg:main} solves the decision problem in $O(n^3)$ time.
\end{theorem}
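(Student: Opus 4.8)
The plan is to keep correctness and running time apart. Correctness is already done for us: Corollary~\ref{cor:correctness} says Algorithm~\ref{alg:main} returns ``{\sc yes}'' exactly when $\distFS(P,Q)\lee\eps$, so the only thing to argue is the $O(n^3)$ time bound. For that I would first dispose of the preprocessing: computing the free space $\Feps$, i.e.\ the segments $\LF_{ij}$ and $\BF_{ij}$ over all cells, costs $O(n^2)$ as already noted (the per-cell free space is a convex region cut out by an ellipse and a unit square, computable in $O(1)$ time). What remains is the cost of the double loop which, for each cell $\cell{ij}$ in cell-wise order, forms $\sigma=\LR_{ij}\cup\BR_{ij}$, computes $\lambda=\proj{ij}(\sigma)$, and sets $\BR_{i,j+1}=\lambda\cap\BF_{i,j+1}$ and $\LR_{i+1,j}=\lambda\cap\LF_{i+1,j}$.

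The next step is to show that one cell is handled in time linear in the number of reachable intervals on its entry side. I would keep $\LR_{ij}$ and $\BR_{ij}$ as lists of reachable intervals sorted by the order $\lei$. By Observation~\ref{obs:order}, $\proj{ij}$ is monotone with respect to $\lei$, so merging the two sorted lists and projecting each interval (its left endpoint slides along the fixed slope $\maxS{ij}$ and its right endpoint along $\minS{ij}$, then the result is clipped to $\exit{ij}$) yields the sorted list $\lambda$ in $O(|\sigma|)$ time; since $\BF_{i,j+1}$ and $\LF_{i+1,j}$ are each a single segment, intersecting the sorted $\lambda$ with them and splitting $\lambda$ at the corner $L_{i+1,j}\cap B_{i,j+1}$ is also $O(|\sigma|)$. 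Writing $N_{ij}$ for the number of reachable intervals on $\entry{ij}$, the work at $\cell{ij}$ is therefore $O(N_{ij}+1)$, and Observation~\ref{obs:newInterval} confirms the output lists grow by at most one interval so this propagates consistently. Summing over the $nm\le n^2$ cells (here $n\gee m$), the loop runs in $O(n^2)+\sum_{ij}O(N_{ij})$ time, so everything reduces to bounding $\sum_{ij}N_{ij}$.

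The last step — and the one I expect to be the real obstacle — is to argue $\sum_{ij}N_{ij}=O(n^3)$. Corollary~\ref{cor:numIntervals} only gives $N_{ij}=O(n^2)$ per cell, and multiplying by $O(n^2)$ cells yields a useless $O(n^4)$; the content is that the $O(n^2)$ bound is not attained at many cells simultaneously. The handle I would use is Observation~\ref{obs:newInterval}: since each processed cell increases the total number of currently maintained reachable intervals by at most one and we start with one, at most $O(n^2)$ intervals are ``pending'' at any moment of the cell-wise sweep; in particular the reachable intervals on all the left edges of a single row are present together and hence number $O(nm)=O(n^2)$, so summed over the $n$ rows the left edges contribute $O(n^3)$ in total. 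Extending this accounting to the bottom edges (which, unlike the left edges of a row, are not all pending at once) is where a purely combinatorial ``$+1$ per cell'' argument runs out of room, and one has to invoke the geometry of the slope-constrained projections to conclude that the reachable part of the free space has only cubic complexity — matching the cubic free-space complexity anticipated in the chapter introduction. Granting that bound, the loop costs $O(n^3)$, the preprocessing costs $O(n^2)$, and the theorem follows.
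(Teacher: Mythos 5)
Your reduction of the problem to bounding $\sum_{i,j} N_{ij}$ is exactly what the paper does, and the per-cell accounting (one join, linear projection/merge by Observation~\ref{obs:order}, linear clipping) is correct. The gap is the step you yourself flag as the ``real obstacle'': you only bound the contribution of the left edges (by observing that the frontier $\{L_{ij}\}_j$ at the start of a column holds $O(n^2)$ intervals, summing to $O(n^3)$ over all columns), you do not carry the accounting through for the bottom edges, and you then conclude with ``Granting that bound\ldots'' --- which is assuming precisely what the theorem needs. Moreover, your diagnosis of \emph{why} the argument stalls is wrong: you conjecture that one must ``invoke the geometry of the slope-constrained projections,'' but in fact the purely combinatorial ``$+1$ per cell'' bookkeeping from Observation~\ref{obs:newInterval} already suffices; the trick is to organize the count by anti-diagonals rather than by columns.

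Concretely, set $R_k = \sum_{i+j=k} r_{ij}$, the total number of reachable intervals on the entry sides of all cells on anti-diagonal $k$. The entry sides of anti-diagonal $k+1$ are exactly the exit sides of anti-diagonal $k$, and Observation~\ref{obs:newInterval} says each of the at most $k+1$ cells on anti-diagonal $k$ adds at most one interval in passing from entry to exit, so $R_{k+1} \lee R_k + (k+1)$. With $R_0 = 1$ this gives $R_k = O(k^2)$, hence $\sum_{i,j} r_{ij} = \sum_{k=0}^{2n} R_k = O(n^3)$. This accounting treats the left and bottom edges uniformly and closes the hole your column-by-column frontier argument left open, without any appeal to the geometry of the projections beyond what Observation~\ref{obs:newInterval} already encapsulates.
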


\begin{proof}
	The correctness of the algorithm follows from Corollary~\ref{cor:correctness}.
	For the running time, we first compute the time needed for processing a cell $\cell{ij}$.
	Let $r_{ij}$ be the number of reachable intervals on the entry side of $\cell{ij}$.
	We use a simple data structure, like a linked list, 
	to store each $\LR_{ij}$ and $\BR_{ij}$ as a sequence of its reachable intervals (sorted in $\lei$ order).
	We show that Lines~\ref{lp:1}--\ref{lp:4} can be performed in $O(r_{ij})$ time.
	In particular,  Line~\ref{lp:1} can be performed by a simple concatenation of two lists in O(1) time;
	and Lines \ref{lp:3} and \ref{lp:4} involve an easy intersection test for each of the intervals in $\lambda$, 
	which takes $O(r_{ij})$ time.
	The crucial part is Line~\ref{lp:2}
	at which reachable intervals are projected. 
	Computing the projection of each interval takes constant time.
	However, we need to merge intersecting intervals afterwards. 
	By Observation~\ref{obs:order}, the merge step can be performed via a linear scan, 
	which takes $O(r_{ij})$ time.
	The overall running time of the algorithm is therefore $O(\sum_{i,j}r_{ij})$.

	Since $r_{ij} = O(n^2)$ by Corollary~\ref{cor:numIntervals}, and there are $O(n^2)$ cells,
	a running time of $O(n^4)$ is immediately implied.
	We can obtain a tighter bound by computing $\sum_{i,j}r_{ij}$ explicitly.
	Define $R_k = \sum_{i+j=k}r_{ij}$, for $0 \lee k \lee 2n$.
	$R_k$ denotes the number of reachable intervals on the 
	entry side of all cells $\cell{ij}$ with $i+j = k$.
	By Observation~\ref{obs:newInterval}, 
	each of the $k+1$ cells contributing to $R_k$ can produce at most 1 new interval.
	Therefore, $R_{k+1} \lee R_{k} + k+1$. 
	Starting with $R_0 = 1$, we get $R_k \lee \sum_{\ell=0}^{k} (\ell + 1) = O(k^2)$. 
	Thus,
	\[
		\sum_{0\lee i,j \lee n} r_{ij} \ \lee \sum_{0 \lee k \lee 2n} R_k \ = \sum_{0 \lee k \lee 2n} O(k^2) \ = O(n^3).
	\]
\end{proof}


\section{An Improved Algorithm} \label{sec:improvedSpeed}

In the previous section,
we provided an algorithm that solves the decision problem in $O(n^3)$ time.
It is not difficult to see that any algorithm which is based on computing the reachability information on
all cells cannot be better than $O(n^3)$ time.
This is proved in the following lemma.

\begin{figure}[t]
	\centering
	\includegraphics[width=0.95\columnwidth]{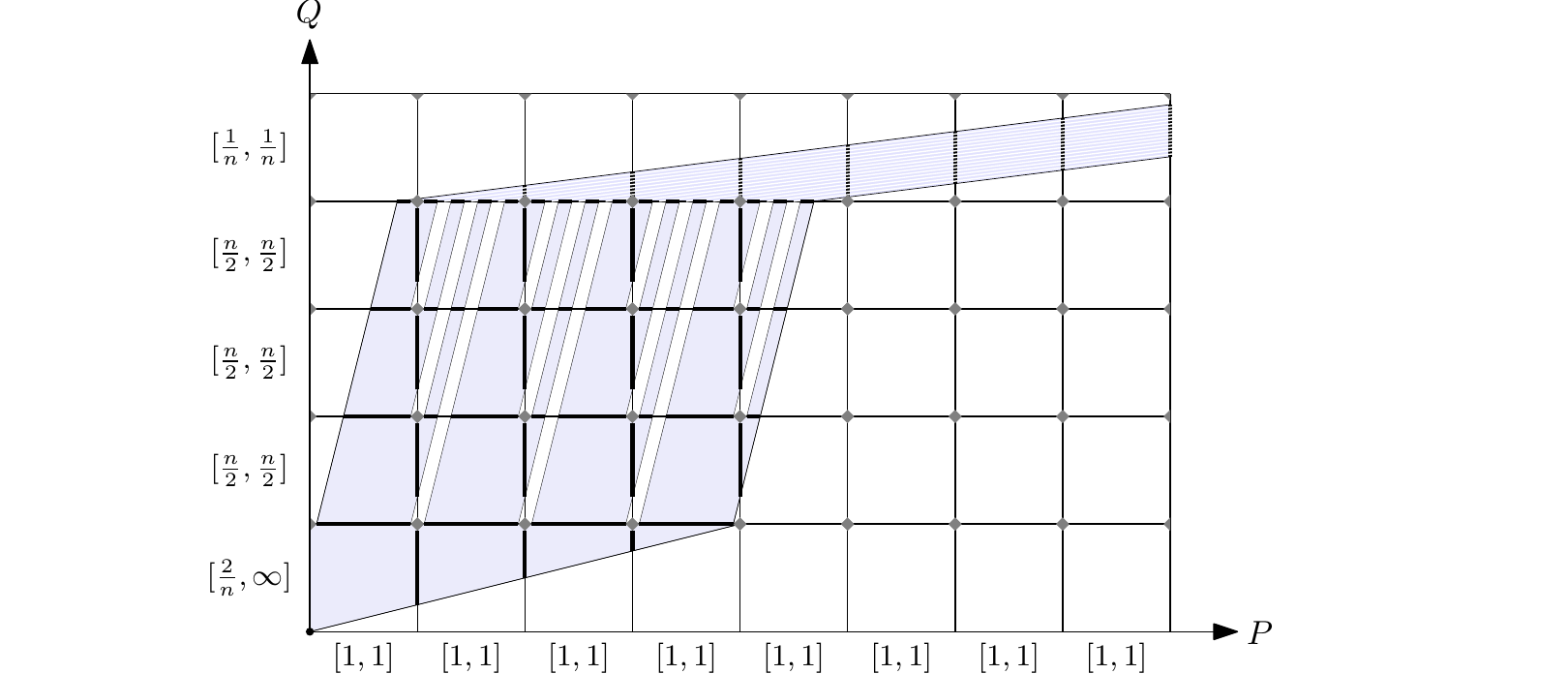}  
	\caption{ A lower bound example.
	The small gray diamonds represent obstacles in the free-space diagram.
	Reachable intervals are shown with bold black line segments.
	The numbers shown at each row and column represent speed limits on the corresponding segment.} 
	\label{fig:lowerbd}
\end{figure}

\begin{lemma} \label{lem:lowerbd}
	For any $n>0$, there exist two polygonal curves $P$ and $Q$ of size $O(n)$ 
	such that in the  free-space diagram corresponding to $P$ and $Q$, 
	there are $\Theta(n)$ cells each having $\Theta(n^2)$ reachable intervals on its entry side.
\end{lemma}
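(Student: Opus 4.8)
The essential insight is that a reachable interval on a cell's entry side corresponds to a monotone slope-constrained path from $(0,0)$, and distinct intervals arise from "chopping" a propagated beam against obstacles. So I would build the free-space diagram directly, rather than starting from curves: place a small obstacle (a tiny diamond) in each cell of the first, say, $n$ columns and rows so that a single reachable interval entering a row gets split into two, and arrange the geometry (via the speed limits $\minS{ij},\maxS{ij}$, which I control row-by-row and column-by-column) so that splits accumulate. After passing through $\Theta(n)$ columns, each contributing roughly one new interval per row, I want $\Theta(n)$ cells in a later column to each carry $\Theta(n^2)$ intervals on their entry sides. Then I would argue backwards: since the free space inside each cell is the intersection of a unit square with an ellipse, and the slope constraints in a cell are ratios of the per-segment speed limits, I can choose the vertex coordinates of $P$ and $Q$ (spacing them appropriately in $\IR^2$) and the speed values so that the free space of cell $\cell{ij}$ contains exactly the obstacle pattern I prescribed. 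This is the standard "reverse-engineer the free-space diagram" technique, and size $O(n)$ for each curve is immediate since there are $O(n)$ segments.

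\textbf{The construction in more detail.} I would use the diagram suggested by Figure~\ref{fig:lowerbd}: in a block of $\Theta(n)$ rows and $\Theta(n)$ columns, put one obstacle per cell, with the obstacles' positions within their cells staggered so that as a horizontal "fan" of reachable points propagates rightward through a row, each obstacle it meets carves one additional gap. Concretely, after processing column $c$, a single initial interval will have been split into roughly $c$ intervals in that row; since there are $\Theta(n)$ rows each accumulating splits independently (but fed by the same kind of propagation), and since vertically-propagating reachability also multiplies the count, the total on the entry side of cells near column $\Theta(n)$ and spread over $\Theta(n)$ rows reaches $\Theta(n)\cdot\Theta(n)=\Theta(n^2)$. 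I would make this precise by defining the speed limits so that $\minS{ij}$ and $\maxS{ij}$ are, say, very close to $1$ (a narrow permissible slope cone), which keeps the projection $\proj{ij}$ nearly a vertical translation, so that the obstacle pattern's combinatorial effect is easy to track and the number of reachable intervals grows by exactly the number of obstacles hit. This matches Observation~\ref{obs:newInterval}, which caps the growth at one new interval per cell, so the construction is tight against that bound.

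\textbf{The main obstacle} will be the realizability step: showing that the desired obstacle pattern inside each cell can actually be produced as (the complement within the cell of) the region $\{(s,t): d(P(s),Q(t))\le\eps\}$, which is forced to be a single convex ellipse-intersection, not an arbitrary shape. A single convex free region per cell can only produce at most one "obstacle" (one infeasible interval) on a given boundary edge, so I cannot get many splits within one cell — but I don't need to: I only need one split per cell, spread over $\Theta(n)$ cells per row. The real care is in choosing the curve vertices so that the convex free region in $\cell{ij}$ has its two boundary intervals $\LF_{ij},\BF_{ij}$ positioned exactly where the staggering requires, simultaneously for all $\Theta(n^2)$ relevant cells, while keeping the curves of size $O(n)$. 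I would handle this by making $P$ and $Q$ each a "staircase-like" or slowly-turning curve whose segment directions and lengths are tuned so that the ellipse in cell $\cell{ij}$ degenerates nicely (e.g., nearly a slab), giving me enough free parameters; verifying consistency of these choices across all cells is the bulk of the work, but it is routine coordinate geometry once the combinatorial target and the speed-limit values are fixed. Finally I would tally: $\Theta(n)$ cells (those in a fixed late column, across $\Theta(n)$ rows — or more carefully, the $\Theta(n)$ cells with $i+j$ equal to some value near $2\cdot\Theta(n)$ that lie in the dense block) each have $\Theta(n^2)$ reachable intervals on their entry sides, which is exactly the statement.
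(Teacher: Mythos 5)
Your proposal captures the general spirit — tune speed limits and obstacle placement so that splits accumulate — but the specific slope regime you propose cannot reach the stated bound, and the accounting at the end quietly slides from a statement about a \emph{total} to a statement about \emph{each cell}.

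The key issue is your choice to keep $\minS{ij}$ and $\maxS{ij}$ ``very close to $1$''. With a narrow slope cone, projections are essentially rigid translations, so any reachable interval stays confined near its own diagonal; a single cell's entry side then only collects intervals that were created on the $O(n)$ cells along the diagonal leading into it, since by Observation~\ref{obs:newInterval} each cell contributes at most one new split. That caps the count at $O(n)$ per cell, not $\Theta(n^2)$. Your tally betrays this: you first conclude ``the total on the entry side of cells near column $\Theta(n)$ and spread over $\Theta(n)$ rows reaches $\Theta(n)\cdot\Theta(n)=\Theta(n^2)$'' — a total over $\Theta(n)$ cells, hence $\Theta(n)$ per cell — and then assert that each of those cells carries $\Theta(n^2)$ intervals. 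The gap between these two statements is the whole content of the lemma, and nothing in the construction bridges it.

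To get $\Theta(n^2)$ intervals on a \emph{single} cell's entry side you need a mechanism that transports intervals from $\Theta(n^2)$ distinct cells to one boundary edge, and for that the slope cone has to vary dramatically across the diagram. The paper does this with two extreme regimes: the first row of $Q$ has limits $[2/n,\infty]$, so one interval fans out essentially across the whole row; the middle rows (limits $[n/2,n/2]$, slope $n/2$) are where the diamond obstacles add $\Theta(n)$ new intervals per row, giving $\Theta(n^2)$ total on the line $y=n/2$; and the final row has limits $[1/n,1/n]$, a slope of $1/n$, which lets a single monotone path cross all $n$ columns while rising only one unit — this is the funnel that drags all $\Theta(n^2)$ intervals onto the entry sides of the cells $\cell{i,n/2+1}$, $i>n/2$. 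The paper also sidesteps your worry about ``reverse-engineering'' the free space by making the curves degenerate: $P$ consists of $n$ coincident unit segments and $Q$ of $n/2$ coincident unit segments (plus one slightly shorter one), so every cell has the same diamond obstacle and there is nothing to verify cell by cell. Your concern about realizability is thus well placed, but the cure is to use a trivial free-space pattern rather than to tune $O(n^2)$ ellipses independently, and the crucial missing ingredient in your draft is the non-uniform speed regime — especially the last, near-zero slope row.
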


\begin{proof}
	Let $P$ be a polygonal curve consisting of $n$ horizontal segments of unit length
	centered at $(0,0)$,
	and let $Q$ be a polygonal curve consisting of $n/2+1$ vertical segments, 
	where each segment $Q_{2}$ to $Q_{n/2+1}$ has unit length centered at the origin,
	and $Q_1$ has length $1-{\delta}$, for a sufficiently small $\delta \ll 1/n$.
	Let $\eps = \sqrt{1/2- \delta+ \delta^2}$.
	The free-space diagram $\Feps$ for the two curves has a shape like 
	Figure~\ref{fig:lowerbd} (the gray diamond-shape regions show obstacles in the free space
	each having a width of $2\delta$ in $x$ direction).
	We assign the following speed limits to the segments of $P$ and $Q$.
	All segments of $P$ have speed limits $[1,1]$, 
	$Q_1$ has speed limits $[2/n, \infty]$, $Q_2$ to $Q_{n/2}$ have limits $[n/2, n/2]$, and
	$Q_{n/2+1}$ has limits $[1/n,1/n]$.
	The number of reachable intervals on each horizontal line $y=i$ is increased by $n/2$ 
	at each row $i$, for $i$ from 1 to $n/2$, 
	yielding a total number of $\Theta(n^2)$ reachable intervals on the line $y=n/2$.
	Since all these reachable intervals are projected to the right side in the last row, 
	each cell $\cell{i,n/2+1}$ for $i \in \set{n/2+1, \ldots, n}$ has $\Theta(n^2)$ 
	reachable intervals on its entry side.
\end{proof}

While the complexity of the free space is cubic by the previous lemma,
we show in this section that it is possible to
eliminate some of the unneeded computations, and
obtain an improved algorithm that solves the decision problem in $O(n^2 \log n)$ time.
The key idea behind our faster algorithm is to use a ``lazy computation'' technique: 
we delay the computation of reachable intervals until they are actually required.
In our new algorithm, instead of computing the projection of all reachable intervals 
one by one from the entry side of each cell to its exit side, 
we only keep a sorted order of projected intervals,
along with some minimal information
that enables us to compute the exact location of the intervals whenever necessary.

To this end, we distinguish between two types of reachable intervals.
Given a reachable interval $I$ in $\exit{ij}$, 
we call $I$ an \emph{interior interval}
if there is a reachable interval $I'$ in $\entry{ij}$ such that $I = \proj{ij}(I')$,
and we call $I$ a \emph{boundary interval} otherwise.
The main gain, as we see later in this section, is that the exact location of interior intervals
can be computed efficiently based on the location of the boundary intervals.
The following iterated projection is a main tool that we will use.

\paragraph{Iterated Projections.}
Let $I_1$ be a reachable interval on the entry side of a cell $\cell{i_1j_1}$,
and $I_k$ be an interval on the exit side of a cell $\cell{i_kj_k}$.
We say that $I_k$ is an \emph{iterated projection} of $I_1$, 
if there is a sequence of cells $\cell{i_2j_2}, \ldots, \cell{i_{k-1}j_{k-1}}$
and a sequence of intervals $I_2, \ldots, I_{k-1}$ such that for all $1 \lee t \lee k-1$,
$I_t \subseteq \entry{i_tj_t}$ and $I_{t+1} = \proj{i_tj_t}(I_t)$
(see Figure~\ref{fig:projection}).
In the following, we show that $I_k$ can be computed efficiently from $I_1$.

Given two points $p \in \cell{ij}$ and $q \in \cell{i'j'}$,
we say that $q$ is the {\em min projection\/} of $p$, 
if there is a polygonal path $\CP$ from $p$ to $q$ passing through 
a sequence of cells $\cell{i_1j_1}, \cell{i_2j_2}, \ldots, \cell{i_kj_k}$ ($k \gee 1$),
such that $(i_1,j_1) = (i,j)$, $(i_k,j_k) = (i',j')$,
and $\CP \cap \cell{i_tj_t}$ is a line segment whose slope is $\minS{i_tj_t}$,
for all $1 \lee t \lee k$.
The {\em max projection\/} of a point $p$ 
is defined analogously. 

\begin{figure}[t]
	\centering
	\includegraphics[width=0.35\columnwidth]{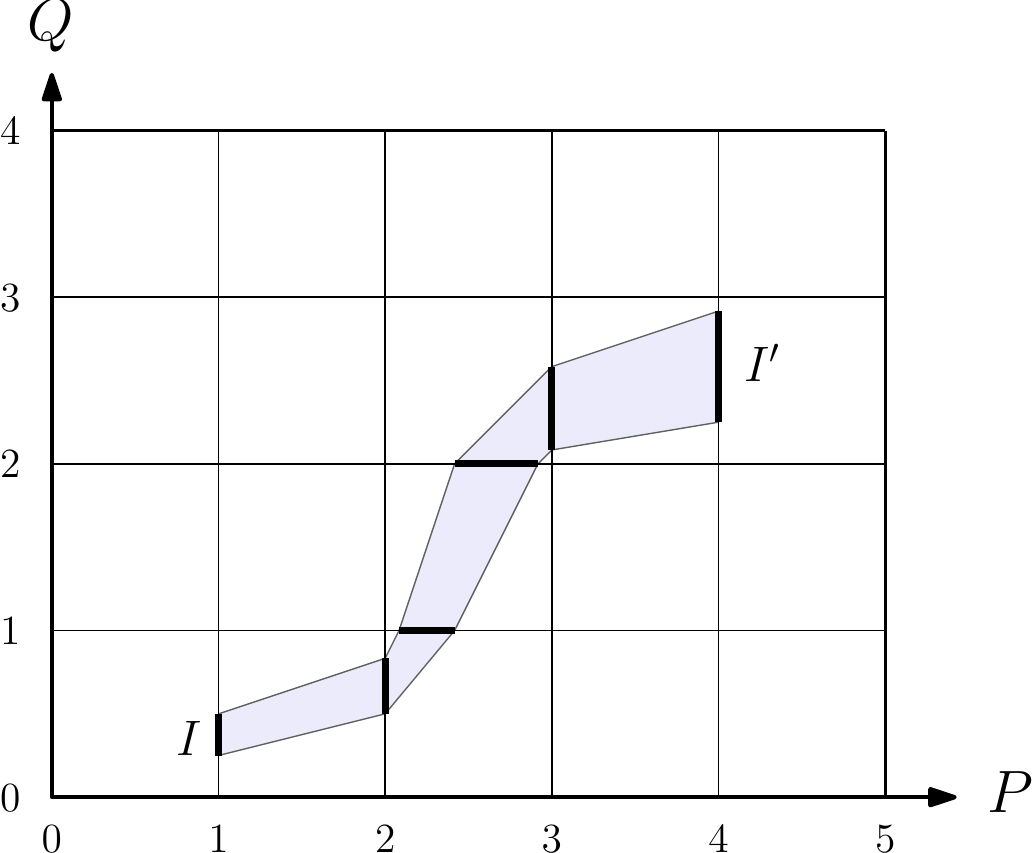}  
	\caption{ $I'$ is an iterated projection of $I$.}
	\label{fig:projection}
\end{figure}

\begin{lemma} \label{lemma:project}
	Using $O(n)$ preprocessing time and space, we can build a data structure 
	that for any point $p \in \BNM$ and any edge $e$ of $\BNM$,
	determines in $O(1)$ time if the min (or the max) projection of $p$ 
	onto the line containing $e$ lies before, after, or on $e$;
	and in the latter case, computes the exact projection of $p$ onto $e$ in constant time.
\end{lemma}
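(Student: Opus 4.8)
The plan is to \emph{linearize} min-slope (and max-slope) trajectories by a coordinate change that depends only on the speed limits, after which every query reduces to a single line intersection plus one inversion of a linear function.

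First I would do the preprocessing. The point is that inside a column $i$ every cell has the same min-slope ``denominator'': a min-slope path restricted to the strip $x\in[i-1,i]$ has slope $\minS{ij}=\umin{Q_j}/\umax{P_i}$, where the only $x$-dependence is through $\umax{P_i}$, and symmetrically the only $y$-dependence inside a row $j$ is through $\umin{Q_j}$. This suggests the piecewise-linear potential functions
\[
	X(x)=\int_0^x \frac{dt}{\umax{P_{\lceil t\rceil}}},\qquad Y(y)=\int_0^y \frac{dt}{\umin{Q_{\lceil t\rceil}}},
\]
so that $X$ is linear with slope $1/\umax{P_i}$ on $[i-1,i]$ and $Y$ is linear with slope $1/\umin{Q_j}$ on $[j-1,j]$. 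The values $X(0),\dots,X(n)$ and $Y(0),\dots,Y(m)$ are obtained by one prefix-sum scan, so with $n\gee m$ this is $O(n)$ time and space; I would store in the same way the arrays $\tilde X(x)=\int_0^x dt/\umin{P_{\lceil t\rceil}}$ and $\tilde Y(y)=\int_0^y dt/\umax{Q_{\lceil t\rceil}}$ for the max-slope case, still $O(n)$ overall.

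The key observation is that $Y(y)-X(x)$ is invariant along a min-slope trajectory: inside $\cell{ij}$ the path satisfies $y'=\minS{ij}\,x'=(\umin{Q_j}/\umax{P_i})\,x'$, hence $\tfrac{d}{dt}\big(Y(y)-X(x)\big)=\tfrac{y'}{\umin{Q_j}}-\tfrac{x'}{\umax{P_i}}=0$. Thus the (uniquely determined, monotone) min-slope trajectory issued from $p=(p_x,p_y)$ is exactly the level set $\{(x,y):Y(y)-X(x)=Y(p_y)-X(p_x)\}$, i.e.\ a slope-$1$ line in $(X,Y)$-coordinates; the analogous statement for max-slope trajectories uses $\tilde Y(y)-\tilde X(x)$. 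This gives the query. Given $p$, locate in $O(1)$ the column $i$ with $p_x\in[i-1,i]$ and the row $j$ with $p_y\in[j-1,j]$ (the grid is unit, so this is trivial), evaluate $c:=Y(p_y)-X(p_x)$ from the stored arrays. If $e$ is the horizontal edge at height $y_0$ spanning $[x_0,x_0+1]$, the min projection is the trajectory point with $y$-value $y_0$, whose $X$-value is $X^\star:=X(p_x)+Y(y_0)-Y(p_y)$; comparing $X^\star$ with the stored $X(x_0)$ and $X(x_0+1)$ decides before/on/after $e$ in $O(1)$, and in the ``on'' case the actual abscissa is recovered by inverting $X$ on $[x_0,x_0+1]$, where $X$ is linear, in $O(1)$. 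The vertical-edge case is symmetric (solve $Y^\star:=Y(p_y)+X(x_0)-X(p_x)$ and invert $Y$ on the one edge-length interval), and max projections run verbatim on $\tilde X,\tilde Y$.

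The step I expect to be the main obstacle is the bookkeeping around degenerate speed limits: $\umax{P_i}=\infty$ makes $X$ flat on a column (not strictly increasing), $\umin{Q_j}=0$ makes $Y$ jump across a row, and in these cases a min slope of $0$ means the trajectory runs along a cell boundary rather than into the interior. The level set $Y(y)-X(x)=c$ is still well defined, but the inverse of $X$ or $Y$ on the queried edge must be given a tie-breaking convention, and one must check that the monotone trajectory is still uniquely traced; I would dispose of these by a short case analysis. A secondary point to pin down is orientation: the trajectory is followed only forward (monotonically right and up), so when the line of $e$ lies behind $p$ I would report the projection as lying before $e$ — but in the intended use (projecting a cell's entry intervals forward to its exit edges, and iterating) the target line is always ahead of $p$, so this case does not arise. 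Everything else is a constant amount of arithmetic on the precomputed prefix sums.
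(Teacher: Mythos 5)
Your proposal is correct and is essentially the paper's own proof, repackaged: the four prefix-sum arrays $X,\tilde X,Y,\tilde Y$ are exactly the paper's time tables $T^P_{\min},T^P_{\max},T^Q_{\max},T^Q_{\min}$ (linearly interpolated to real arguments), and the invariance of $Y(y)-X(x)$ along a min-slope trajectory is precisely the paper's observation that $\COB_P$ at max speed and $\COB_Q$ at min speed take equal time along the corresponding preimages. The ``level set in $(X,Y)$-coordinates'' framing is a clean way to state the same $O(1)$ prefix-sum query.
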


\begin{proof}
	Suppose, w.l.o.g., that $e$ is a vertical edge of $\BNM$, 
	corresponding to a vertex $P(i)$ of $P$ and 
	a segment $Q({j-1})Q(j)$ of Q.
	Then $e = \set{i} \times [j-1,j]$. 
	Let $q$ be the min projection of $p$ on the line $x=i$.
	Let $p=(p_x, p_y)$ and $q=(q_x, q_y)$.
	The path connecting $p$ to $q$ in the definition of the min projection
	has slope $\minS{ij}$ in each cell $\cell{ij}$ it passes through.
	Such a path corresponds to the traversals of two point objects
	$\COB_P$ and $\COB_Q$, where $\COB_P$ traverses $[p_x,q_x]$
	with its maximum permissible speed, and $\COB_Q$ traverses $[p_y,q_y]$
	with its minimum permissible speed.
	\REM{
	Since the length of $[p_x,q_x]$ is known, and the 
	maximum permissible speed for  $\COB_P$ along each interval $[k-1,k]$ 
	is given as $\umax{P_k}$ (see Section~\ref{sec:preliminaries-Speed1}),
	we can easily compute the time $t$ needed for $\COB_P$ 
	to traverse $[p_x,q_x]$ with its maximum speed.
	Now, having time $t$, and the minimum permissible speeds for $\COB_Q$,
	we can easily compute the distance $d$ that $\COB_Q$ walks in $t$ time, 
	if it starts from $p_y$ and always uses its minimum speed.
	We can then simply compute $q_y = p_y +d$.
	If $q_y \in [j-1, j]$, then we conclude that $q$
	lies in $e$ and report its exact location.
	Otherwise, we report that $q$ is before or after $e$,
	depending on whether $q_y < j-1$ or $q_y > j$.
	An analogous method can be used when $e$ is a horizontal edge and/or 
	when max projection is required.
	}%
	Since each of the point objects $\COB_P$ and $\COB_Q$ can traverse $O(n)$ segments,
	computing the min projection can be easily done in $O(n)$ time. 
	However, we can speedup the computation using a simple table lookup technique.
	For $\COB_P$, we keep two arrays $T^P_{\min}$ and  $T^P_{\max}$ of size $n$,
	where for each $i \in \set{1, \ldots, n}$, 
	$T^P_{\min}[i]$ (resp., $T^P_{\max}[i]$) represents the minimum (resp., maximum) time 
	needed for $\COB_P$ to traverse the interval $[0,i]$.
	Similarly, we keep two arrays $T^Q_{\min}$ and  $T^Q_{\max}$ for $\COB_Q$. 
	These four tables can be easily constructed in $O(n)$ time.
	To find time $t$ needed for $\COB_P$ 
	to traverse $[p_x,q_x]$ with its maximum speed, we do the following:
	we first lookup $a=T^P_{\max}[\ceil{p_x}]$ and $b = T^P_{\max}[q_x]$ in $O(1)$ time.
	Clearly, $b-a$ is equal to the time needed for $\COB_P$ to traverse $[\ceil{p_x},q_x]$ 
	(note that $q_x$ is an integer).
	We also compute the time $t'$ needed for $\COB_P$ to traverse $[p_x, \ceil{p_x}]$
	directly from the length of the interval, 
	and the maximum speed of $\COB_P$ in interval $[\ceil{p_x}-1, \ceil{p_x}]$.
	Therefore, $t = t' + b - a$  can be computed in $O(1)$ time total.
	By similar table lookups, we compute the times $t_1$ and $t_2$
	needed for $\COB_Q$ to traverse $[p_y, j-1]$ and $[p_y, j]$, respectively,
	with its minimum speed.
	If $t_1 \lee t \lee t_2$, then we conclude that $q_y$ lies in $e$, 
	and we can easily compute its exact location on $e$ by computing the distance that
	$\COB_Q$ traverses in $t- t_1$ time using its minimum speed on interval $[j-1,j]$.
	Otherwise, we output that $q$ is before or after $e$,
	depending on whether $t < t_1$ or $t > t_2$, all in $O(1)$ time.
\end{proof}

\begin{corollary} \label{cor:project}
	If $I'$ is an iterated projection of $I$, 
	then $I'$ can be computed from $I$ in $O(1)$ time, after $O(n)$ preprocessing time. 
\end{corollary}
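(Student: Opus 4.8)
The plan is to reduce computing $I'$ to computing its two endpoints $\Left(I')$ and $\Right(I')$ separately, and to show that each of these endpoints is an \emph{iterated} point projection of the form handled by Lemma~\ref{lemma:project}. The first ingredient is the observation, already extracted inside the proof of Observation~\ref{obs:order}, that a single projection step behaves very simply on the two ends of a (contiguous) reachable interval: for any reachable interval $J$ on the entry side of a cell $\cell{ij}$, the point $\Left(\proj{ij}(J))$ is joined to $\Left(J)$ by a line segment of slope exactly $\maxS{ij}$, and $\Right(\proj{ij}(J))$ is joined to $\Right(J)$ by a segment of slope exactly $\minS{ij}$. Note that every interval arising in an iterated projection is a single contiguous interval (interior and boundary intervals are, by definition, maximal contiguous sets of reachable points), so each $\proj{i_tj_t}(I_t)$ is again a single interval and $\Left$, $\Right$ are well defined throughout.

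Iterating this fact along the witnessing chain $I_1 = I, I_2, \ldots, I_k = I'$, the point $\Left(I')$ is reached from $\Left(I)$ by a polygonal path that uses slope $\maxS{i_tj_t}$ inside each cell $\cell{i_tj_t}$ it crosses; that is, $\Left(I')$ is exactly the max projection of $\Left(I)$ in the sense of the definition preceding Lemma~\ref{lemma:project}. Symmetrically, $\Right(I')$ is the min projection of $\Right(I)$. I then invoke Lemma~\ref{lemma:project}: after a one-time $O(n)$ preprocessing (the four prefix-time tables), the max (resp.\ min) projection of a given point onto the line containing any edge of $\BNM$ can be located, and its exact position computed when it falls on that edge, in $O(1)$ time. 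Since the exit side $\exit{i_kj_k}$ on which $I'$ lies is the union of the two edges $B_{i_k,j_k+1}$ and $L_{i_k+1,j_k}$, I run the $O(1)$ test of Lemma~\ref{lemma:project} against both of these edges; because $\Left(I')$ is known to lie on $\exit{i_kj_k}$, exactly one test reports ``on the edge'' and returns the coordinates, and likewise for $\Right(I')$. Assembling $I' = [\Left(I'),\Right(I')]$ therefore takes $O(1)$ time, with the $O(n)$ preprocessing of Lemma~\ref{lemma:project} performed once and reused for every such query.

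The only delicate point — and the place a careless argument could slip — is keeping the cell-boundary order $\lei$ aligned with the geometry: one must confirm that the endpoint the definitions call $\Left$ is the one that travels along the \emph{maximum} slope (equivalently, that from a fixed entry point a steeper ray lands at an $\lei$-smaller exit point), and that this correspondence is preserved under composition so that the per-cell statement from Observation~\ref{obs:order} telescopes cleanly over the whole chain rather than, say, the two endpoints swapping roles at a corner. Once this bookkeeping is nailed down, the corollary is an immediate consequence of Lemma~\ref{lemma:project}.
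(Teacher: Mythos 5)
Your proof is correct and follows the same route as the paper: the paper likewise observes that if $I' = [a',b']$ is an iterated projection of $I = [a,b]$, then $a'$ is the max projection of $a$ and $b'$ is the min projection of $b$, and then invokes Lemma~\ref{lemma:project}. You spell out in more detail why the endpoint correspondence telescopes through the chain (the per-cell slope facts already noted in the proof of Observation~\ref{obs:order}), but the key observation and the reduction to Lemma~\ref{lemma:project} are identical to what the paper does.
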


\begin{proof}
	This is a direct corollary of Lemma~\ref{lemma:project}
	and the fact that if $I'=[a',b']$ is an iterated projection of $I=[a,b]$,
	then $a'$ is the max projection of $a$, and $b'$ is the min projection of $b$.
\end{proof}

\paragraph{The Data Structure.}
The main data structure that we need in our algorithm is a dictionary 
for storing a sorted sequence of intervals. 
A balanced binary search tree can be used for this purpose.
Let $T$ be the data structure that stores a sequence 
$\seq{I_1, I_2, \ldots, I_k}$ of intervals in $\lei$ order.
We need the following operations to be supported by $T$.

\begin{itemize}
	\item[] {\sc Search:} 
		Given a point $x$, find the leftmost interval $I$ in $T$ such that $x \lee \Left(I)$.
	\item[] {\sc Insert:} 
		Insert a new interval $I$ into $T$, right before $T.\mbox{\sc Search}(\Left(I))$,
		or at the end of $T$ if $I$ is to the right of all existing intervals in $T$. 
		In our algorithm, inserted intervals 
		are not properly contained in any existing interval of $T$, and therefore, the
		resulting sequence is always sorted.
	\item[] {\sc Delete:} 
		Delete an existing interval $I$ from $T$. 
	\item[] {\sc Split:} 
		Given an interval $I = I_j$, $1 < j \lee k$, split $T$ into 
		two data structures $T_1$ and $T_2$, containing 
		$\seq{I_1, \ldots, I_{j-1}}$ and  $\seq{I_j, \ldots, I_k}$, respectively. 
	\item[] {\sc Join:} 
		Given two data structures with interval sequences $\CI_1$ and $\CI_2$,
		where each interval in $\CI_1$ is before any interval in $\CI_2$,
		join the two structures to obtain a single structure $T$ 
		containing the concatenated sequence $\CI_1 \cdot \CI_2$.
\end{itemize}

It is straightforward to 
modify a standard balanced binary search tree
to perform all the above operations in $O(\log |T|)$ time
(for example, see Chapter 4 in \cite{Tarjan83}).
Note that the exact coordinates of the interior intervals are not explicitly stored in the data structure.
Rather, we compute the coordinates on the fly whenever a comparison is made, 
in $O(1)$ time per comparison,  using Corollary~\ref{cor:const}.

\paragraph{The Algorithm.}

Let $\LT_{ij}$ (resp., $\BT_{ij}$) denote the balanced search tree
storing the sequence of reachable intervals on $L_{ij}$ (resp., on $B_{ij}$).
The reachable intervals stored in the trees are not necessarily disjoint.
In particular, we allow interior intervals to have overlaps with each other, 
but not with boundary intervals.
Moreover, the exact locations of the interior intervals are not explicitly stored.
However, we maintain the invariant that 
each interior interval can be computed in $O(1)$ time, and that 
the union of the reachable intervals 
stored in $\LT_{ij}$ (resp., in $\BT_{ij}$) 
at each time is equal to  $\LR_{ij}$ (resp., $\BR_{ij}$).

The overall structure of the algorithm is similar to that of Algorithm~1.
We process the cells in cell-wise order,
and propagate the reachability information through each cell
by projecting the reachable intervals from the entry side to the exit side.
However, to get a better performance, 
cells are processed in a slightly different manner, as presented in Algorithm~2.
In this algorithm, $\exit{ij}$ is considered
as a single line segment whose points are ordered by $\lei$ relation.
For a set $S$ of intervals, we define $\union(S) = \bigcup_{I \in S} I$.
Given a data structure $T$ as defined in the previous subsection,
we use $T$ to refer to both the data structure and the 
set of intervals stored in $T$.
Given a point set $S$ on a line, by an interval (or a segment) of $S$ we mean 
a maximal continuous subset of points contained in $S$.

\vspace{0.5em}
\begin{algorithm} [h]
\caption {\sc Improved Decision Algorithm} \label{alg:improved}
\algsetup{indent=1.5em}
\begin{algorithmic}[1]
	\vspace{0.5em}
	\baselineskip=1\baselineskip

	\STATE\label{l:0} Compute the free space, $\Feps$
	\STATE\label{l:1} \For $i \in \set{0, \ldots, n}$ \Do $\LT_{i,0} = \emptyset$ 
	\STATE\label{l:2} \For $j \in \set{0, \ldots, m}$ \Do $\BT_{0,j} = \emptyset$
	\STATE\label{l:3} $\LT_{0,0}.\mbox{\sc Insert}([o,o])$ where $o = (0,0)$
	\FOR {$i = 0$ to $n$} \label{l:loop}
	 	\FOR {$j = 0$ to $m$}
			\STATE\label{l:cat} $T = \mbox{\sc Join}(\LT_{ij}, \BT_{ij})$  
			\STATE\label{l:cop} Project $T$ to the exit side of $\cell{ij}$
			\STATE\label{l:s} $S = \set{I \in T \ | \ I \not\subseteq \BF_{i,j+1} \mbox{ and } I \not\subseteq  \LF_{i+1,j} }$
			\STATE\label{l:del} \Foreach $I \in S$ \Do $T.\mbox{\sc Delete}(I)$
			\STATE\label{l:spl} $(\BT_{i,j+1}, \LT_{i+1,j}) = T.\mbox{\sc Split}(T.\mbox{\sc Search}((i,j)))$
			\STATE\label{l:ins1} \Foreach $I \subseteq (\union(S) \cap \BF_{i,j+1})$ \Do $\BT_{i,j+1}.\mbox{\sc Insert}(I)$ 
			\STATE\label{l:ins2} \Foreach $I \subseteq (\union(S) \cap \LF_{i+1,j})$ \Do $\LT_{i+1,j}.\mbox{\sc Insert}(I)$ 
		\ENDFOR
	\ENDFOR
	\STATE\label{l:last} Return {\sc yes} if $(n,m) \in \LT_{n+1,m}$, {\sc no} otherwise. 

\end{algorithmic}
\end{algorithm}
\vspace{0.5em}

The algorithm works as follows.
We first compute $\Feps$ in Line~\ref{l:0}.
Lines~\ref{l:1}--\ref{l:3} initializes the data structures 
for the first row and the first column of $\BNM$.
Lines~\ref{l:loop}--\ref{l:ins2} process the cells in cell-wise order.
For each cell $\cell{ij}$, Lines~\ref{l:cat}--\ref{l:ins2} 
propagate the reachability information through $\cell{ij}$
by creating data structures $\BT_{i,j+1}$ and $\LT_{i+1,j}$ on the exit side of $\cell{ij}$, 
based on $\BT_{ij}$ and $\LT_{ij}$, and the feasible intervals $\BF_{i,j+1}$ and $\LF_{i+1,j}$.
In Line~\ref{l:cat}, a data structure $T$ is obtained by joining the interval sequences in
$\BT_{ij}$ and $\LT_{ij}$.
We then project $T$ to the exit side of $\cell{ij}$ in Line~\ref{l:cop}
by (virtually) transforming each interval $I \in T$ to an interval $\proj{ij}(I)$ on $\exit{ij}$.
Since the projection preserves the relative order of intervals,
 by Observation~\ref{obs:order},
and we do not need to explicitly update the location of interior intervals on the exit side,
the projection is simply done by copying $T$ to the exit side of $\cell{ij}$ 
(boundary intervals will be fixed later in Lines~\ref{l:ins1}--\ref{l:ins2}).
Furthermore, since $\BT_{ij}$ and $\LT_{ij}$ are not needed afterwards in the algorithm,
we do not actually duplicate $T$. 
Instead, we simply assign $T$ to the exit side, without making a new copy.
In Line~\ref{l:s}, we determine a set $S$
of intervals that are not completely contained in $\BF_{i,j+1}$ or in $\LF_{i+1,j}$.
All such intervals are deleted from $T$ in Line~\ref{l:del}
(see Figure~\ref{fig:cell-intervals} for an illustration).
The remaining intervals in $T$ have no intersection with the corner point $(i,j)$.
Therefore, we can easily split $T$ in Line~\ref{l:spl} 
into two disjoint data structures, $\BT_{i,j+1}$ and $\LT_{i+1,j}$,
each corresponding to one edge of the exit side.
In Lines~\ref{l:ins1}--\ref{l:ins2} we insert the boundary intervals
to $\BT_{i,j+1}$ and $\LT_{i+1,j}$,
which are computed as those portions of $\union(S)$ that lie inside $\Feps$.
Note that whenever a boundary interval $I$ is inserted into a data structure,
its coordinates are stored along with the interval.
After processing all cells, the decision problem is easily answered in Line~\ref{l:last}
of the algorithm by checking if the target point $(n,m)$ is reachable.

\begin{figure}[t]
	\centering
	\includegraphics[width=0.35\columnwidth]{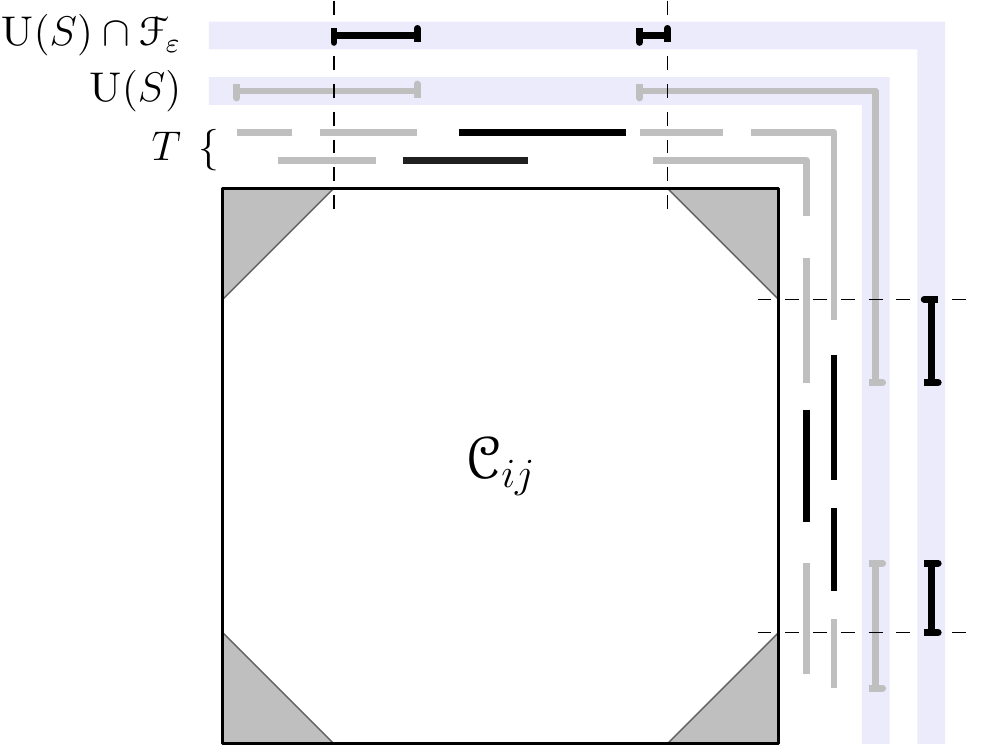}  
	\caption{An example of the execution of Algorithm 2 on a cell $\cell{ij}$. 
		The intervals of $S \subseteq T$ are shown in gray. 
		The black intervals in $T$ represent the interior intervals.
		The intervals in 	$\union(S) \cap \Feps$ are boundary intervals 
		which are inserted in Lines~\ref{l:ins1}--\ref{l:ins2}.
	}
	\label{fig:cell-intervals}
\end{figure}

\begin{lemma} \label{lemma:prop}
	After processing each cell $\cell{ij}$, 
	the following statements hold true:
	\begin{enumerate}
		\item[\em (i)] any interval inserted into $\exit{ij}$ in Lines~\ref{l:ins1}--\ref{l:ins2} is a boundary interval,
		\item[\em (ii)] each interior interval on $\exit{ij}$	can be expressed as an iterated projection of a boundary interval.
	\end{enumerate}
\end{lemma}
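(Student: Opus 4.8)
The plan is to establish (i) and (ii) together by induction on the cells in cell-wise order, while carrying along the auxiliary invariant that $\union(\LT_{k\ell}) = \LR_{k\ell}$ and $\union(\BT_{k\ell}) = \BR_{k\ell}$ hold after every cell; this is the Algorithm~\ref{alg:improved} analogue of Lemma~\ref{lemma:cell-process} and is proved in the same way, since Algorithm~\ref{alg:improved} performs exactly the same projections and free-space clippings as Algorithm~\ref{alg:main}, merely leaving interior intervals implicit. The invariant is what lets us move freely between ``interval stored in a tree'' and ``reachable interval''. I also regard the seed interval placed in Line~\ref{l:3} as a boundary interval. Base case: for the first cell nothing is inserted in Lines~\ref{l:ins1}--\ref{l:ins2}, so (i) is vacuous, and the only interior intervals on its exit side are projections of the seed, so (ii) holds too.

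For the inductive step I would fix a cell $\cell{ij}$ and assume (i), (ii), and the invariant for every earlier cell, in particular for $\cell{i-1,j}$ and $\cell{i,j-1}$, which produced the edges $L_{ij}$ and $B_{ij}$ making up $\entry{ij}$. After Lines~\ref{l:cat}--\ref{l:cop}, the tree $T$ holds $\set{\proj{ij}(I) \ | \ I \in \LT_{ij} \cup \BT_{ij}}$ in $\lei$-order, which by Observation~\ref{obs:order} is the correct order on $\exit{ij}$, and each $\proj{ij}(I)$ is recoverable in $O(1)$ time from the data stored with $I$ by Corollary~\ref{cor:project}. The intervals surviving the deletion in Line~\ref{l:del} are exactly those $\proj{ij}(I)$ with $\proj{ij}(I) \subseteq \BF_{i,j+1}$ or $\proj{ij}(I) \subseteq \LF_{i+1,j}$; these are kept as the interior intervals of $\exit{ij}$, and $\BT_{i,j+1}$, $\LT_{i+1,j}$ are read off from them by the split in Line~\ref{l:spl}. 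For each such survivor, the inductive hypothesis says that the entry-side interval $I$ is either a boundary interval or an iterated projection of one; appending the single projection step $\proj{ij}$ to $I$'s (possibly empty) witnessing sequence exhibits $\proj{ij}(I)$ itself as an iterated projection of a boundary interval. This is (ii) for $\cell{ij}$, and --- again by Corollary~\ref{cor:project} --- it keeps every interior interval $O(1)$-computable, so the invariant survives.

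For (i) I would argue by contradiction. An interval $I^\star$ inserted in Lines~\ref{l:ins1}--\ref{l:ins2} is a maximal connected piece of $\union(S) \cap \BF_{i,j+1}$ or of $\union(S) \cap \LF_{i+1,j}$; say the former, and suppose $I^\star = \proj{ij}(I)$ for a reachable interval $I$ on $\entry{ij}$. Since $I^\star \subseteq \BF_{i,j+1}$, the projection $\proj{ij}(I)$ neither straddles the corner $(i,j)$ nor leaves $\Feps$, so by Line~\ref{l:s} it is not in $S$; hence it survived deletion and is already stored as an interior interval. On the other hand $\proj{ij}(I) = I^\star \subseteq \union(S)$, so $\proj{ij}(I)$ is covered by pieces $\proj{ij}(I') \cap \BF_{i,j+1}$ of intervals $I'$ with $\proj{ij}(I') \in S$. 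I would then use two facts: (a) $\lei$-ordered reachable intervals on $\entry{ij}$ have $\lei$-ordered projections (Observation~\ref{obs:order}), hence none is properly contained in another, and being disjoint on the entry side they share no endpoint; (b) within a single cell the min- and max-projections are injective on $\entry{ij}$, since their defining lines through distinct entry points are parallel and meet $\exit{ij}$ at distinct points. From (a) and (b), a covering $S$-piece that reaches an endpoint of $\proj{ij}(I)$ must have that endpoint at an endpoint of $\BF_{i,j+1}$ --- a cell corner or a point of $\partial\Feps$ --- for otherwise it would agree with $I$ at an entry-side endpoint and hence coincide with $I$, contradicting $\proj{ij}(I') \in S$ while $\proj{ij}(I) \notin S$. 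Pushing this through the (at most two) pieces that reach the two ends of $\proj{ij}(I)$ forces $\proj{ij}(I)$ to equal $\BF_{i,j+1}$ exactly; equating the slope-determined interval $\proj{ij}(I)$ with the free-space-determined segment $\BF_{i,j+1}$ is a non-generic coincidence, which one excludes by general position (or simply absorbs into the boundary classification, since such an interval touches $\partial\Feps$). I expect this last argument --- making the forcing step airtight through the interaction of the Line~\ref{l:spl} corner split, the free-space clipping, and the possibly overlapping interior intervals --- to be the only real obstacle; everything else is bookkeeping on top of the invariant and Observation~\ref{obs:order}.
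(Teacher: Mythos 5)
Your proof of part~(ii) is, in substance, exactly the paper's proof: it is an induction over cells in cell-wise order, and the key step is that a surviving interval in $T$ is the single projection $\proj{ij}(I')$ of some entry-side interval $I'$, which by the inductive hypothesis is either itself a boundary interval or an iterated projection of one, so appending $\proj{ij}$ to the witnessing chain does the job. Carrying the extra invariant $\union(\LT_{k\ell})=\LR_{k\ell}$, $\union(\BT_{k\ell})=\BR_{k\ell}$ is not needed for Lemma~\ref{lemma:prop} itself — the paper proves that separately as Lemma~\ref{lemma:invar}, after Lemma~\ref{lemma:prop} and Corollary~\ref{cor:const} — but it does no harm.

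For part~(i) you take a genuinely different and much longer route than the paper. The paper's argument is one sentence: an interior interval is a reachable interval, hence lies entirely in $\Feps$ (inside $\BF_{i,j+1}$ or $\LF_{i+1,j}$), hence is never placed into $S$ by Line~\ref{l:s}; therefore $\union(S)$ cannot completely contain an interior interval. Your proof instead supposes $I^\star=\proj{ij}(I)$ and tries to derive a contradiction by combining the $\lei$-order on projections (Observation~\ref{obs:order}) with the injectivity of the min- and max-projections inside a single cell. That is a legitimate way to unpack the paper's terse ``therefore'': in the generic case where the inserted piece is, say, $[\Left(\BF_{i,j+1}),a]$ with $a<\Right(\BF_{i,j+1})$, the right endpoint $a$ equals $\Right(J)$ for some left-spilling $J\in S$, and injectivity of the min-projection forces $I=I'$ and hence $\proj{ij}(I)=J\in S$, a contradiction. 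Where the two routes part ways is the degenerate case you flag: when the left- and right-spilling pieces of $\union(S)$ merge and $I^\star=\BF_{i,j+1}$ exactly, neither endpoint of $I^\star$ is a projection of an entry-side point, so injectivity says nothing, and one would need $\Left(\proj{ij}(I))$ and $\Right(\proj{ij}(I))$ to both coincide with free-space boundary points — two independent coincidences. You are right that the paper's one-line argument does not treat this case and that your own argument here is not airtight without a general-position hypothesis. So the honest comparison is: (ii) is the paper's argument; for (i) you are fleshing out a claim the paper asserts without proof, you correctly expose a corner case, but you do not close it any better than the paper does. For what it is worth, the gap is harmless for the algorithm's correctness and complexity: if the coincidence occurs, $T$ simply acquires a duplicate of an already-surviving interior interval, Lemma~\ref{lemma:invar} is unaffected, the ``at most four new intervals per cell'' count still holds, and Corollary~\ref{cor:const} still goes through because the maximal reachable interval on that exit edge is then interior and is handled by part~(ii). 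A cleaner patch than general position would be to weaken (i) to say an inserted interval is either a boundary interval or equal to an already-surviving interior interval, which is all that Corollary~\ref{cor:const} actually needs.
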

\begin{proof}
	(i) This is easily shown by observing that no interior interval is added to $S$ in Line~\ref{l:s},
	and therefore, $\union(S)$ cannot completely contain any interior interval.
	(ii) The proof is by induction on the cells in cell-wise order.
	Let $I$ be an interior interval on $\exit{ij}$. 
	Then, $I$ is a direct projection of an interval $I' \subseteq \entry{ij}$ obtained in  Line~\ref{l:cop}.
	If $I'$ is a boundary interval, then we are done.
	Otherwise, $I'$ is an interior interval, and therefore, 
	it is by induction an iterated projection of another boundary interval $I''$.
	Since $I = \proj{ij}(I')$ and $I' \subseteq \entry{ij}$,
	$I$ is in turn an iterated projection of~$I''$.
\end{proof}

\begin{corollary} \label{cor:const}
	After processing each cell $\cell{ij}$, 
	the exact location of each reachable interval on $\exit{ij}$ is accessible in $O(1)$ time.
\end{corollary}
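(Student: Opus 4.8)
The plan is to prove the corollary by cases on the type of an interval held on $\exit{ij}$ once $\cell{ij}$ has been processed. By Lemma~\ref{lemma:prop}, every interval then stored in $\BT_{i,j+1}$ or $\LT_{i+1,j}$ is either a boundary interval (exactly one of those inserted in Lines~\ref{l:ins1}--\ref{l:ins2}) or an interior interval, and in the latter case it is an iterated projection of some boundary interval. For a boundary interval there is nothing to do: as noted in the description of Algorithm~\ref{alg:improved}, whenever a boundary interval is inserted into a tree its two endpoints are stored alongside it, so they are returned by a direct lookup in $O(1)$ time.

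The substance of the argument is the interior intervals. The idea is to carry, with every interior interval $I$, a pointer to a boundary interval $I''=[a'',b'']$ of which $I$ is an iterated projection; such an $I''$ exists by Lemma~\ref{lemma:prop}(ii), and $I''$ sits in some tree in which its endpoints are stored explicitly. I would argue this pointer is cheap to maintain: an interior interval is created only in Line~\ref{l:cop}, as $\proj{ij}(I')$ for some $I' \in T$, so if $I'$ is already interior we let $I$ inherit $I'$'s pointer, and if $I'$ is a boundary interval we set the pointer to $I'$ itself. Inheriting the pointer is legitimate precisely because the composition of an iterated projection with one further projection is again an iterated projection, so $I$ really is an iterated projection of the boundary interval recorded at $I'$. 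Given the pointer to $I''$, and knowing which of the two edges $B_{i,j+1}$, $L_{i+1,j}$ contains $I$ (this is fixed by which of $\BT_{i,j+1}$, $\LT_{i+1,j}$ stores $I$ after the split in Line~\ref{l:spl}), Corollary~\ref{cor:project} lets us recover $I=[a,b]$ from $I''$ in $O(1)$ time, since $a$ is the max projection of $a''$ and $b$ is the min projection of $b''$ onto the edge in question, each obtained in constant time from the lookup tables of Lemma~\ref{lemma:project}. The $O(n)$-time construction of those tables is performed once, before the main loop, so it does not enter the per-interval cost. Combining the two cases with Lemma~\ref{lemma:prop} gives the corollary.

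The step I expect to need the most care is the bookkeeping that makes the interior-interval pointers sound. Two things must be checked. First, that the ``lazy'' reinterpretation of $T$ as lying on $\exit{ij}$ in Line~\ref{l:cop} genuinely preserves the iterated-projection relationship, so a pointer inherited from $I'$ still correctly describes the projected interval. Second, that the boundary interval $I''$ targeted by an interior interval is never removed from its tree in Line~\ref{l:del} while that interior interval is still live. The latter holds because, until it is possibly deleted, the interval descended from a given boundary interval is never forked by the operations of Algorithm~\ref{alg:improved}: Line~\ref{l:cop} only reinterprets it, Line~\ref{l:cat} only concatenates, inserted boundary intervals are disjoint from interior ones, and the split in Line~\ref{l:spl} acts only on intervals not straddling the corner $(i,j)$ (anything straddling the corner is not contained in $\BF_{i,j+1}$ nor in $\LF_{i+1,j}$, hence is placed in $S$ in Line~\ref{l:s} and deleted in Line~\ref{l:del}). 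Hence no stray interior interval can be left pointing at a deleted node, and the pointer-based reconstruction above is always well defined.
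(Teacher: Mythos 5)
Your proof is correct and follows exactly the same structure as the paper's: split on boundary versus interior intervals via Lemma~\ref{lemma:prop}, read the coordinates directly in the boundary case, and invoke Corollary~\ref{cor:project} for the iterated-projection computation in the interior case. You additionally spell out the pointer-maintenance mechanics that the paper leaves implicit in its description of Algorithm~\ref{alg:improved}; this is a more explicit rendering of the same argument, not a different one.
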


\begin{proof}
	Fix a reachable interval $I$ on $\exit{ij}$.
	If $I$ is a boundary interval, then by Lemma~\ref{lemma:prop}(i), 
	it is inserted into a data structure by Lines~\ref{l:ins1}--\ref{l:ins2},
	and hence, its coordinates are stored in the data structure upon insertion.
	If $I$ is an interior interval, then by Lemma~\ref{lemma:prop}(ii),
	it is an iterated projection of a boundary interval, and hence, its location
	can be computed in $O(1)$ time using Corollary~\ref{cor:project}.
\end{proof}

\REM{
\begin{lemma} \label{lemma:prop}
	During processing each cell $\cell{ij}$, 
	the following statements hold true:
	\begin{enumerate}
		\item[\em (i)]  $\union(S)$ consists of at most three segments,
		\item[\em (ii)] at most two intervals are inserted into each of $\BT_{i,j+1}$ and $\LT_{i+1,j}$.
	\end{enumerate}
\end{lemma}

\begin{proof}
	(i) The set $A$ in Line~\ref{l:a} is obtained by removing two line segments 
	$\BF_{i,j+1}$ and $\LF_{i+1,j}$	from $\exit{ij}$. 
	Therefore, $A$ consists of at most 3 segments, one at the beginning, one at the end, 
	and one at the middle of $\exit{ij}$. 
	If the middle segment in $A$ is nonempty, it includes the point $(i,j)$.
	Therefore, each interval in $S$ (defined in Line~\ref{l:s}) 
	intersects at least one of the following three points: 
	$\Left(\BF_{i,j+1})$, $(i,j)$, and $\Right(\LF_{i+1,j})$.
	As a result, $\union(S)$ has at most three segments.
	(ii) is a direct corollary of (i).
\end{proof}

\begin{lemma} \label{lemma:invar}
	The following invariants are maintained by Algorithm~2:
	\begin{enumerate}
		\item[\em (i)] the interval sequences  stored in $\LT_{ij}$ and $\BT_{ij}$ are sorted,
		\item[\em (ii)] $\union(\LT_{ij} \cup \BT_{ij}) = \LR_{ij} \cup \BR_{ij}$.
	\end{enumerate}
\end{lemma}

\begin{proof}
	(i) Suppose by induction that the interval sequences in $\LT_{ij}$ and $\BT_{ij}$ are sorted. 
	Since any interval in $\LT_{ij}$ is before any interval in $\BT_{ij}$,
	the result of the join operation in Line~\ref{l:cat} is a sorted sequence.
	The sequence remains sorted after the projection in Line~\ref{l:cop} by Observation~\ref{obs:order}.
	Moreover, deleting intervals in Line~\ref{l:del}
	does not change the order of the remaining intervals. 
	Therefore, the resulting two sequences obtained from the split in Line~\ref{l:spl} are sorted.
	It only remains to check the sequences after the insertions in Lines~\ref{l:ins1} and \ref{l:ins2}.
	Let $I_1$ and $I_2$ be the two (possibly empty) intervals in $\union(S) \cap \BF_{i,j+1}$. 
	By our construction, each of $I_1$ and $I_2$, if not empty,
	intersects at least one of the endpoints of $\BF_{i,j+1}$.
	On the other hand, no interval in $\BT_{i,j+1}$ can intersect the 
	endpoints of $\BF_{i,j+1}$, as we have already removed all such intervals in Line~\ref{l:del}.
	Therefore, neither $I_1$ nor $I_2$ can be properly contained in an existing interval in $\BT_{i,j+1}$.
	It means that inserting $I_1$ and $I_2$ into $\BT_{i,j+1}$ results in a sorted sequence.
	By the same argument,  $\LT_{i+1,j}$ is sorted after insertions, and therefore, 
	invariant (i) holds for $\LT_{i+1,j}$ and $\BT_{i,j+1}$.
\end{proof}

}

\begin{lemma} \label{lemma:invar}
	After processing each cell $\cell{ij}$,
	$\BR_{i,j+1} \cup \LR_{i+1,j} = \union(\BT_{i,j+1} \cup \LT_{i+1,j})$.
\end{lemma}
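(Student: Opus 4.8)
The plan is to argue by induction on the cells of $\BNM$ in cell-wise order, with inductive hypothesis the assertion of the lemma for every cell processed before $\cell{ij}$; the base case is the initialization in Lines~\ref{l:1}--\ref{l:3}, which sets up $\LT$ and $\BT$ along the first row and column exactly as Algorithm~\ref{alg:main} initializes $\LR$ and $\BR$. The engine of the proof is the observation that, at the level of point sets (ignoring how those sets are encoded as sorted interval sequences), Lines~\ref{l:cat}--\ref{l:ins2} of Algorithm~\ref{alg:improved} re-implement exactly the assignment that Lines~\ref{lp:1}--\ref{lp:4} of Algorithm~\ref{alg:main} perform, namely $\BR_{i,j+1}\cup\LR_{i+1,j}=\proj{ij}(\LR_{ij}\cup\BR_{ij})\cap(\BF_{i,j+1}\cup\LF_{i+1,j})$; composing this with Lemma~\ref{lemma:cell-process} identifies the right-hand side with the set of reachable points on $\exit{ij}$. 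Because the statement concerns point sets only, the lazy representation of interior intervals (recoverable in $O(1)$ time by Corollary~\ref{cor:const}) plays no role, and the proof reduces to careful bookkeeping of which boundary points land in which structure.

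Concretely I would fix $\cell{ij}$ and walk through Lines~\ref{l:cat}--\ref{l:ins2}. By the inductive hypothesis applied to the two cells that produced $\LT_{ij}$ and $\BT_{ij}$, the unions of their stored intervals together equal the set $\sigma=\LR_{ij}\cup\BR_{ij}$ of Line~\ref{lp:1}. Since every interval of $\LT_{ij}$ precedes every interval of $\BT_{ij}$ in the $\lei$ order, the {\sc Join} of Line~\ref{l:cat} yields a sorted sequence $T$ with $\union(T)=\sigma$; the projection of Line~\ref{l:cop} replaces each stored interval $I$ by $\proj{ij}(I)$, which keeps the sequence sorted by Observation~\ref{obs:order} and, since $\proj{ij}$ distributes over unions, gives $\union(T)=\proj{ij}(\sigma)=\lambda$, the set of Line~\ref{lp:2}. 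The set $S$ of Line~\ref{l:s} consists precisely of the projected intervals not wholly inside $\BF_{i,j+1}$ nor wholly inside $\LF_{i+1,j}$; after the deletions of Line~\ref{l:del} the surviving intervals each lie wholly inside $\BF_{i,j+1}$ or wholly inside $\LF_{i+1,j}$, so the {\sc Split} of Line~\ref{l:spl} separates $T$ into a part lying in $B_{i,j+1}$ and a part lying in $L_{i+1,j}$ (the sortedness from Observation~\ref{obs:order} is what makes $T.\mbox{\sc Search}((i,j))$ land between the two groups), and Lines~\ref{l:ins1}--\ref{l:ins2} add back exactly $\union(S)\cap\BF_{i,j+1}$ and $\union(S)\cap\LF_{i+1,j}$. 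Putting the pieces together gives $\union(\BT_{i,j+1}\cup\LT_{i+1,j})=\union(T\setminus S)\cup\bigl(\union(S)\cap(\BF_{i,j+1}\cup\LF_{i+1,j})\bigr)=\lambda\cap(\BF_{i,j+1}\cup\LF_{i+1,j})$, which by the formula above and Lemma~\ref{lemma:cell-process} equals $\BR_{i,j+1}\cup\LR_{i+1,j}$. For the containment ``$\supseteq$'' I would also invoke the convexity of $\Feps\cap\cell{ij}$ (Section~\ref{sec:classicalFD}): a point of $\lambda\cap\Feps$ on $\exit{ij}$ is joined to some reachable $p\in\entry{ij}$ by a segment both of whose endpoints lie in the convex set $\Feps\cap\cell{ij}$, hence inside $\Feps$, so the point is genuinely reachable --- the same convexity argument that underlies Lemma~\ref{lemma:cell-process}.

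The step I expect to be the main obstacle, and the reason the lemma is phrased with the bundled union $\BT_{i,j+1}\cup\LT_{i+1,j}$ rather than with $\BR$ and $\LR$ separately, is the bookkeeping at the shared corner $(i,j)=B_{i,j+1}\cap L_{i+1,j}$ of the exit side (and, when invoking the inductive hypothesis, at the analogous corners of $\entry{ij}$). Depending on whether the interval of $\lambda$ through $(i,j)$ straddles the corner, ends on the $B$-side, starts on the $L$-side, or is the degenerate singleton $\{(i,j)\}$, the {\sc Split} and the insertions may place $(i,j)$ into $\BT_{i,j+1}$, into $\LT_{i+1,j}$, or into both; I would run through these cases and check that in each one $(i,j)$ lies in $\union(\BT_{i,j+1}\cup\LT_{i+1,j})$ iff $(i,j)$ is reachable, which is all the union statement requires. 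The same ambiguity means that the structure $\LT_{ij}$ (or $\BT_{ij}$) read in Line~\ref{l:cat} may omit a reachable corner of its edge; here I would argue that any such omitted corner is a point from which no admissible slope leads into $\cell{ij}$, so it does not affect any projection out of $\cell{ij}$ (its reachability having been propagated through the neighbouring cell's output instead), with the degenerate case of a cell of zero minimum permissible slope --- where motion along a cell edge is allowed --- being the situation demanding the most care.
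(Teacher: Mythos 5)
Your core argument is essentially the paper's: both proceed by induction in cell-wise order, reduce the claim via the Algorithm~\ref{alg:main} identity $\BR_{i,j+1}\cup\LR_{i+1,j}=\proj{ij}(\LR_{ij}\cup\BR_{ij})\cap(\BF_{i,j+1}\cup\LF_{i+1,j})$ to verifying that the delete/split/insert sequence of Lines~\ref{l:cat}--\ref{l:ins2} computes exactly the point set $\proj{ij}(\sigma)\cap(\BF_{i,j+1}\cup\LF_{i+1,j})$, and establish that identity by tracking which intervals are removed and what the insertions add back. (The paper phrases the last step by fixing a point $p$ and splitting on whether every interval of the projected collection containing $p$ was deleted; you phrase it as a direct set-algebra equality on $\union(T\setminus S)\cup(\union(S)\cap\Feps)$ --- the two arguments are interchangeable.) Two remarks on the surrounding material you add. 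First, the convexity detour you append for the ``$\supseteq$'' direction is not needed: the lemma only asserts that Algorithm~\ref{alg:improved} reproduces Algorithm~\ref{alg:main}'s point sets, and the connection between those point sets and actual reachability is already Lemma~\ref{lemma:cell-process}'s job, applied in the correctness theorem rather than here. Second, your worry that $\LT_{ij}$ read in Line~\ref{l:cat} might lack a reachable corner (because the split at cell $\cell{i-1,j}$ may have assigned that corner to $\BT_{i-1,j+1}$) is a legitimate observation that the paper's proof glides over with the phrase ``by induction'' --- the induction hypothesis gives only the bundled equality for each earlier cell, not $\union(\LT_{ij})=\LR_{ij}$ separately; you correctly identify that resolving it requires an argument about which corner points can actually affect the projection out of $\cell{ij}$, flag the zero-minimum-slope degeneracy as the hard case, but then leave it unresolved, so your proposal does not in fact close a gap the paper leaves open, it only names it.
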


\begin{proof}
	We prove the statement by induction on the cells in cell-wise order.
	Recall from Section~\ref{sec:decisionSpeed} (Algorithm~1) 
	that $\BR_{i,j+1} \cup \LR_{i+1,j} = \proj{ij}(\LR_{ij} \cup \BR_{ij}) \cap (\BF_{i,j+1} \cup \LF_{i+1,j})$.
	Therefore, it satisfies to show that 
	$\union(\BT_{i,j+1} \cup \LT_{i+1,j}) = \proj{ij}(\LR_{ij} \cup \BR_{ij}) \cap (\BF_{i,j+1} \cup \LF_{i+1,j})$.
	By Line~\ref{l:cat}, $\union(T) = \union(\LT_{ij} \cup \BT_{ij})$.
	Let $T_1$ be the set of intervals in $T$ right after the execution of Line~\ref{l:cop},
	$S$ be the set of intervals deleted in Line~\ref{l:del},
	$N$ be the set of new intervals inserted in Lines~\ref{l:ins1}--\ref{l:ins2},
	and $T_2 = (T_1 \bslash S) \cup N$.
	Fix a point $p \in \union(T_1)$, and let $K$ be the set of intervals in $T_1$ containing $p$.
	We distinguish between two cases: 
	\begin{itemize}
		\item $p \in \Feps$: There are two possibilities:	
		(1) $K \not\subseteq S$:
			Here, there is an interval in $K$ that remains in $T_1$ after
			deletion of $S$ in Line~\ref{l:del}. 
			Therefore, $p \in \union(T_2)$.
		(2) $K \subseteq S$:
			Here, all intervals of $K$ are removed in Line~\ref{l:del}.
			However, since $p \in \Feps$, there is an interval $I \in N$ such that $p \in I$.
			Therefore, after insertion of $I$ in Lines~\ref{l:ins1}--\ref{l:ins2}, we have
			$p \in \union(T_2)$.

		\item  $p \not\in \Feps$: 
		In this case, $K \subseteq S$,
		and hence $p \not\in \union(T_1 \bslash S)$. 
		Moreover, no interval in $N$ can contain $p$.
		Therefore, $p \not\in \union(T_2)$.
	\end{itemize}
	The above two cases together show that $\union(T_2) = \union(T_1) \cap \Feps$.
	Note that, $\union(T_1) = \proj{ij}(\union(\LT_{ij} \cup \BT_{ij}))$ (by Lines~\ref{l:cat} and \ref{l:cop}),
	and $T_2 = \BT_{i,j+1} \cup \LT_{i+1,j}$.
	Therefore, $\union(\BT_{i,j+1} \cup \LT_{i+1,j}) = \proj{ij}(\union(\LT_{ij} \cup \BT_{ij})) 
	 \cap (\BF_{i,j+1} \cup \LF_{i+1,j})$,
	which completes the proof, because $\LR_{ij} \cup \BR_{ij} = \union(\LT_{ij} \cup \BT_{ij})$
	by induction.
\end{proof}

\begin{theorem}
	Algorithm~\ref{alg:improved} solves the decision problem in $O(n^2 \log n)$ time.
\end{theorem}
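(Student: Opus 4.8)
The plan is to establish two things about Algorithm~\ref{alg:improved}: that it returns the correct answer, and that it runs in $O(n^2\log n)$ time. Correctness will be almost immediate from the machinery already in place. Lemma~\ref{lemma:invar} says that after every cell is processed, $\union(\BT_{i,j+1}\cup\LT_{i+1,j}) = \BR_{i,j+1}\cup\LR_{i+1,j}$, so in particular $(n,m)$ is reported reachable in Line~\ref{l:last} exactly when $(n,m)\in\LR_{n+1,m}$, which by Corollary~\ref{cor:correctness} (itself resting on Lemmas~\ref{lemma:reachable} and~\ref{lemma:cell-process}) holds iff $\distFS(P,Q)\lee\eps$. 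Before invoking Lemma~\ref{lemma:invar} I would check that the dictionary operations are always called on valid input: the {\sc Split} in Line~\ref{l:spl} never cuts a stored interval in two, because any interval straddling the split point $(i,j)$ has points on both $B_{i,j+1}$ and $L_{i+1,j}$ and is therefore contained in neither $\BF_{i,j+1}$ nor $\LF_{i+1,j}$, so it belongs to the set $S$ of Line~\ref{l:s} and was removed in Line~\ref{l:del}; and each interval inserted in Lines~\ref{l:ins1}--\ref{l:ins2} is not properly contained in an interval already present, since every interval touching $\BF_{i,j+1}$ or $\LF_{i+1,j}$ was just deleted.

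For the time bound, the main claim I would prove is that processing a cell creates only $O(1)$ new boundary intervals. A new boundary interval at $\cell{ij}$ is a maximal connected piece of $\union(S_{ij})\cap(\BF_{i,j+1}\cup\LF_{i+1,j})$, where $S_{ij}$ is the set $S$ of Line~\ref{l:s}. Any interval of $S_{ij}$ meeting $\BF_{i,j+1}$ is, by definition of $S$, not contained in $\BF_{i,j+1}$; since $\BF_{i,j+1}$ is a single segment and the interval is connected, the interval must leave $\BF_{i,j+1}$ through one of its two endpoints, so its intersection with $\BF_{i,j+1}$ contains $\Left(\BF_{i,j+1})$ or $\Right(\BF_{i,j+1})$. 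Hence $\union(S_{ij})\cap\BF_{i,j+1}$ is a union of intervals each containing an endpoint of $\BF_{i,j+1}$, i.e.\ at most two connected pieces; the same holds for $\LF_{i+1,j}$, for a total of at most four new boundary intervals per cell. Consequently at most $4n^2+1$ intervals are ever inserted into the dictionaries over the whole run (the $+1$ from Line~\ref{l:3}). Since intervals enter the dictionaries only through {\sc Insert} and, by the clean-split property above, are never cut in two, the total number of {\sc Delete}s, which equals $\sum_{ij}|S_{ij}|$, is at most the total number of {\sc Insert}s, hence $O(n^2)$. The same bound shows every dictionary holds $O(n^2)$ intervals at any time, so each {\sc Search}/{\sc Insert}/{\sc Delete}/{\sc Split}/{\sc Join} costs $O(\log(n^2)) = O(\log n)$.

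With these facts the running time will follow by accounting. Computing $\Feps$ in Line~\ref{l:0} costs $O(n^2)$ as in Section~\ref{sec:classicalFD}, and the initialization in Lines~\ref{l:1}--\ref{l:3} costs $O(n)$. For a cell $\cell{ij}$: the {\sc Join} of Line~\ref{l:cat} and the purely symbolic projection of Line~\ref{l:cop} cost $O(\log n)$ and $O(1)$; locating the block of stored intervals contained in $\BF_{i,j+1}$ and the block contained in $\LF_{i+1,j}$ takes $O(\log n)$ searches, after which $S_{ij}$ is at most three contiguous runs of the sorted sequence, enumerated and deleted in $O((|S_{ij}|+1)\log n)$; the {\sc Split} of Line~\ref{l:spl} costs $O(\log n)$; and Lines~\ref{l:ins1}--\ref{l:ins2} perform $O(1)$ insertions at $O(\log n)$ each. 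So the per-cell cost is $O((|S_{ij}|+1)\log n)$, and the grand total is $O(n^2) + O(\log n)\sum_{ij}(|S_{ij}|+1) = O(n^2\log n)$, using $\sum_{ij}|S_{ij}| = O(n^2)$ and that there are $O(n^2)$ cells. The only genuinely delicate point is the amortized bound $\sum_{ij}|S_{ij}| = O(n^2)$, which is exactly where the lazy-computation design — never materializing interior intervals and never splitting them — is essential; the rest is bookkeeping.
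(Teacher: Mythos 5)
Your argument is correct and follows essentially the same route as the paper's: correctness via Lemma~\ref{lemma:invar} and Lemma~\ref{lemma:cell-process}, and the time bound via the amortized count $\sum_{ij}|S_{ij}| = O(n^2)$ together with an $O(\log n)$ cost per dictionary operation. You add two useful clarifications the paper leaves implicit — that {\sc Split} never bisects a stored interval (so intervals are only ever created by {\sc Insert} and destroyed by {\sc Delete}, making the amortization airtight), and an explicit endpoint argument for why at most four boundary intervals arise per cell. The one thing you gloss over that the paper spells out is why each comparison inside the balanced tree costs only $O(1)$: interior intervals are stored without explicit coordinates, so a comparison requires materializing an iterated projection on the fly, and the $O(1)$ bound rests on Corollary~\ref{cor:const}. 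You allude to the lazy design in your closing sentence but should cite that corollary when claiming $O(\log n)$ per operation; as written, a reader could object that the comparison cost was never bounded.
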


\begin{proof}
	The correctness of the algorithm follows from Lemma~\ref{lemma:invar},
	combined with Lemma~\ref{lemma:cell-process}.
	For the running time, we compute the number of operations
	needed to process each cell $\cell{ij}$ in Lines~\ref{l:cat}--\ref{l:ins2}.
	Let $\OP$ denote the time needed for each data structure operation.
	Line~\ref{l:cat} needs one join operation that takes $O(\OP)$ time.
	Line~\ref{l:cop} consists of a simple assignment taking only $O(1)$ time.
	To compute the subset $S$ in Line~\ref{l:s}, we start walking from 
	the two sides of $T$, and add intervals to $S$ until we reach 
	the first intervals from both sides that do not belong to $S$.
	Moreover, we find the interval $I = T.\mbox{\sc Search}((i,j))$, 
	and start walking around $I$ in both directions until we find all consecutive intervals around $I$ 
	that lie in $S$ (see Figure~\ref{fig:cell-intervals}).
	To check if an interval lies in $S$ or not, we need to compute the coordinates
	of the interval that can be done in $O(1)$ time.
	Therefore, computing $S$ takes $O(|S|+\OP)$ time in total.
	Line~\ref{l:del} requires $|S|$ delete operation that takes $O(|S| \times \OP)$ time.
	Line~\ref{l:spl} consists of a split operation taking $O(\OP)$ time.
	The set $\union(S)$ used in Lines~\ref{l:ins1}--\ref{l:ins2}
	can be computed in $O(|S|)$ time by a linear scan over the set $S$.
	Since $\union(S)$ consists of at most three segments (see Figure~\ref{fig:cell-intervals}),
	computing $\union(S) \cap \Feps$ in Lines~\ref{l:ins1}--\ref{l:ins2} takes constant time.
	Moreover, there are at most four insertion operations in Lines~\ref{l:ins1}--\ref{l:ins2}
	to insert boundary intervals.
	Therefore, Lines~\ref{l:ins1}--\ref{l:ins2} takes $O(|S|+\OP)$  time.
	Thus, letting $s_{ij} = |S|$,
	processing each cell $\cell{ij}$ takes $O((s_{ij}+1) \times \OP)$ time in total.
	Since at most four new intervals are created at each cell,
	the total number of intervals created over all cells is $O(n^2)$.
	Note that any of these $O(n^2)$ intervals can be deleted at most once,
	meaning that $\sum_{i,j} s_{ij} = O(n^2)$. 
	Moreover, 	each comparison made in the data structures 
	takes $O(1)$ time by Corollary~\ref{cor:const},
	and hence, $\OP = O(\log n)$.
	Therefore, the total running time of the algorithm is 
	O($\sum_{i,j} (s_{ij}+1) \log n) = O( n^2 \log n )$.
\end{proof}


\section{Optimization Problem} \label{sec:optimization}

In this section, we describe how our decision algorithm can be used to 
compute the exact value of the \Frechet distance with speed limits between two polygonal curves. 
Let $\LF_{ij} = [a_{ij}, b_{ij}]$ and  $\BF_{ij} = [c_{ij}, d_{ij}]$.
Notice that the free space, $\Feps$, is an increasing function of $\eps$.
That is, for $\eps_1 \lee \eps_2$, we have $\CF_{\eps_1} \subseteq \CF_{\eps_2}$.
It is not hard to see that:

\begin{obs}
\label{obs:criticaltypes}
To find the exact value of $\delta = \distFS(P,Q)$,
we can start from $\eps = 0$, and continuously increase $\eps$ until
we reach the first point at which $\Feps$ contains a \SC path from $(0,0)$ to $(n,m)$.
This occurs at only one of the following ``critical values'':
\begin{itemize}
	\item[(A)] smallest $\eps$ for which $(0,0) \in \Feps$ or $(n,m) \in \Feps$,
	\item[(B)] smallest $\eps$ at which $\LF_{ij}$ or $\BF_{ij}$  becomes non-empty for some pair $(i,j)$,
	\item[(C)] smallest $\eps$ at which $b_{k\ell}$ is the min projection of $a_{ij}$, or 
		$d_{ij}$ is the max projection of $c_{k\ell}$, for some $i,j,k$, and $\ell$ .
\end{itemize}

\end{obs}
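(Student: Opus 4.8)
The plan is to follow the critical-values analysis of Alt and Godau (Section~\ref{sec:classicalFD}), changing only the type-(C) events so that they reflect the fact that in the speed-constrained setting reachability is carried from one cell boundary to a later one by \SC segments rather than by axis-parallel ones, and is therefore governed by the min/max projections of Lemma~\ref{lemma:project}. The starting point is monotonicity of the free space: for $\eps_1\lee\eps_2$ we have $\CF_{\eps_1}\subseteq\CF_{\eps_2}$, so every \SC path that lies in $\CF_{\eps_1}$ also lies in $\CF_{\eps_2}$; hence the predicate ``$(n,m)$ is reachable in $\Feps$'' is monotone non-decreasing in $\eps$. Combining this with Lemma~\ref{lemma:reachable}, the predicate is true exactly for $\eps\gee\delta$, where $\delta=\distFS(P,Q)$; in particular it is false for every $\eps\in[0,\delta)$ and true for every $\eps\gee\delta$, i.e., it jumps at $\delta$.

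Let $E$ be the finite set of all values of $\eps$ of types (A), (B), and (C). The core of the argument is the claim that the predicate ``$(n,m)$ is reachable'' is constant on every open interval of $\eps$ that is disjoint from $E$. Granting the claim, suppose for contradiction that $\delta\notin E$ (which forces $\delta>0$, as $\delta=0$ is already a type-(A) value). Since $E$ is finite, $\delta$ lies in a maximal open interval $U$ disjoint from $E$, and $U$ contains reals both smaller and larger than $\delta$; but the predicate is false on $[0,\delta)$ and true on $[\delta,\infty)$, so it is not constant on $U$, a contradiction. Hence $\delta\in E$, i.e., $\distFS(P,Q)$ is a critical value of type (A), (B), or (C), which is the assertion.

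To prove the claim, fix an open interval $U$ disjoint from $E$. As $U$ contains no type-(B) value, the set of non-empty free segments $\LF_{ij},\BF_{ij}$ is fixed throughout $U$, and each endpoint $a_{ij}(\eps),b_{ij}(\eps),c_{ij}(\eps),d_{ij}(\eps)$ --- the intersection of a fixed unit edge with a continuously varying ellipse --- depends continuously on $\eps$ on $U$. By Lemma~\ref{lemma:prop}, on each cell boundary every reachable interval is an iterated projection of a boundary interval, so unwinding the projections its left (resp.\ right) endpoint is an iterated max (resp.\ min) projection of some $a_{k\ell}$ or $c_{k\ell}$ (resp.\ $b_{k\ell}$ or $d_{k\ell}$), which by Corollary~\ref{cor:project} also depends continuously on $\eps$ on $U$. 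Consequently the reachable intervals --- and in particular whether the corner $(n,m)$ is covered --- vary continuously with $\eps$, and can change combinatorially only when one of these moving endpoints meets another free-segment endpoint; unwinding such a coincidence cell by cell exhibits it as a configuration of the form in (C), or else it coincides with an event of type (A) or (B). (Note that the merging or splitting of reachable intervals at a fixed cell corner leaves their union unchanged, hence is not a change of this kind.) Since $U$ avoids all values of types (A), (B), and (C), no such change occurs on $U$, so the reachable union on every boundary, and thus the reachability of $(n,m)$, is constant on $U$, which proves the claim.

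I expect the main obstacle to be exactly this last point --- establishing rigorously that the only values of $\eps$ at which $(n,m)$-reachability can change are the type-(C) alignments (together with types (A) and (B)) --- given that, by Lemma~\ref{lem:lowerbd}, a single cell boundary may carry $\Theta(n^2)$ reachable intervals whose endpoints are arbitrarily deep iterated projections, and given the care needed to match a generic ``projected endpoint meets a free-segment endpoint'' coincidence precisely to one of the two forms in (C) (and to handle degenerate cases such as a projection chain passing through a cell corner). The clean way to carry this out is to maintain all reachable-interval endpoints uniformly through the min/max-projection machinery of Lemma~\ref{lemma:project} and Corollary~\ref{cor:project}, so that every combinatorial change reduces to a coincidence of two such tracked quantities; the resulting bookkeeping is heavier than in the axis-parallel setting of Alt and Godau but is otherwise routine.
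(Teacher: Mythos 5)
The paper itself offers no proof of Observation~\ref{obs:criticaltypes} --- it prefaces the statement with ``it is not hard to see that'' and afterwards leans entirely on the analogy with the Alt--Godau critical-value analysis of Section~\ref{sec:classicalFD}. Your proposal supplies precisely the argument that analogy leaves implicit, and its skeleton is the right one: monotonicity of $\Feps$ makes ``$(n,m)$ reachable'' a threshold predicate with jump $\delta=\distFS(P,Q)$, and then one shows the jump must land on an event in the finite set $E$ of type-(A)/(B)/(C) values by establishing constancy on open intervals avoiding $E$. Your observation that mergers/splits of reachable intervals at a cell corner leave the reachable \emph{union} unchanged is a genuinely useful point that the paper never articulates, and your use of Lemma~\ref{lemma:prop} and Corollary~\ref{cor:project} to track endpoints as iterated min/max projections is exactly what the machinery of Section~\ref{sec:improvedSpeed} is for.

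One place you should tighten is the unwinding step. You write that every reachable-interval endpoint unwinds to an iterated projection of some $a_{k\ell}$, $b_{k\ell}$, $c_{k\ell}$, or $d_{k\ell}$, but the boundary interval that seeds the whole cascade is $\set{(0,0)}$, whose iterated min/max projections are $\eps$-independent points. A coincidence between such a fixed projection and a moving free-segment endpoint is a legitimate critical event, and at first sight it is not obviously one of the two forms in (C). It is in fact covered, because once $(0,0)\in\Feps$ (a type-(A) event), one has $(0,0)=a_{1,1}=c_{1,1}$ for all larger $\eps$, so iterated projections of the origin are a special case of iterated projections of an $a_{ij}$ or $c_{ij}$; but this should be stated rather than swept into the ``unwinding'' phrase. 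Similarly, you rightly flag --- but do not resolve --- the matching of a generic ``moving projected endpoint meets free-segment endpoint'' coincidence to precisely the two asymmetric patterns listed in (C); a careful case analysis over which of the four symbols $a,b,c,d$ appears on each side, respecting the $\lei$ orientation the paper imposes on cell boundaries, is needed to close that loop, and is the one piece of bookkeeping still owed. Since the paper asserts the observation without argument, your sketch is, to my reading, more explicit than what the authors themselves provide; with the $(0,0)$ base case made explicit and the type-(C) pattern-matching carried out, it would stand as a complete proof.
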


\begin{figure}[t]
	\centering
	\includegraphics[width=0.8\columnwidth]{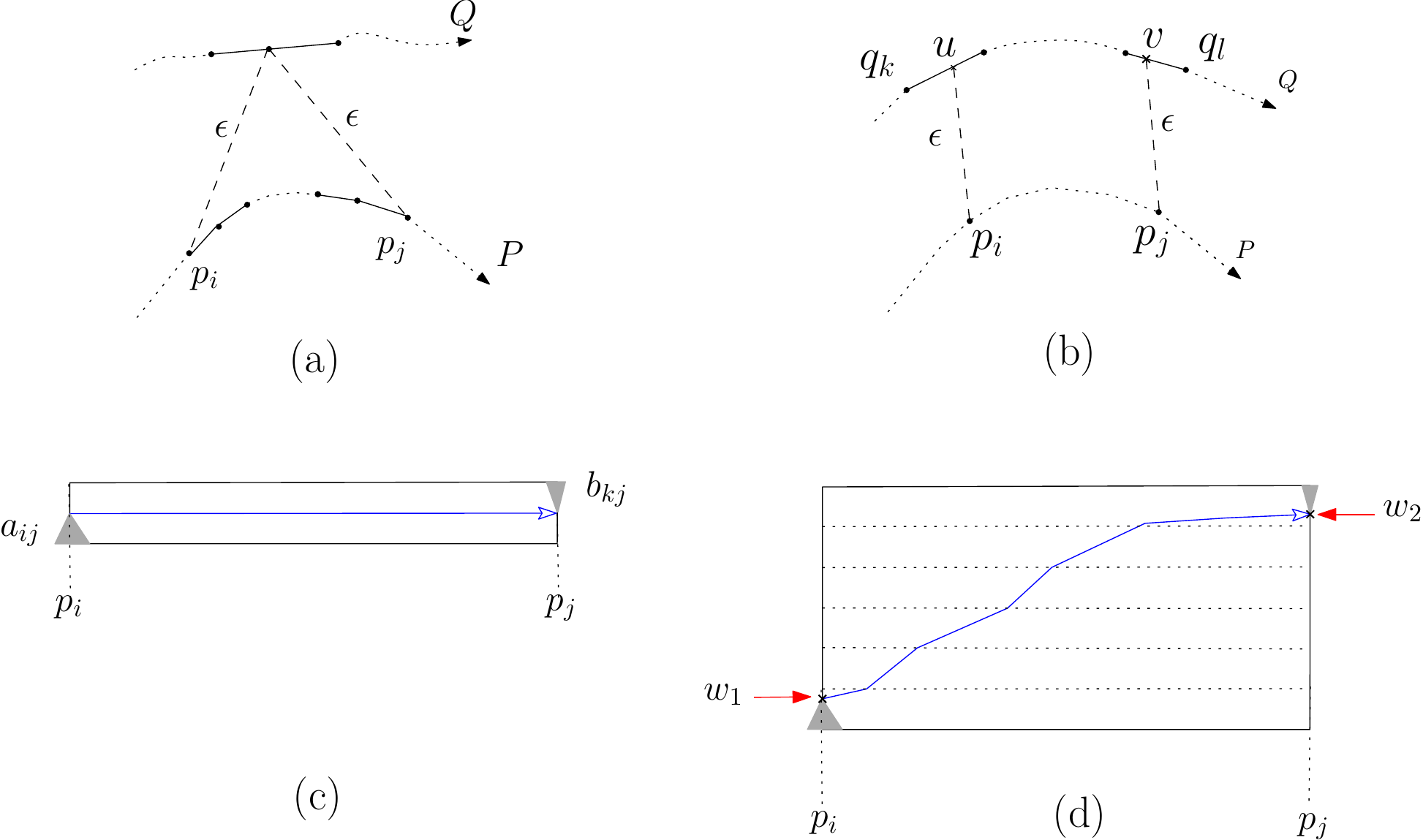}  
	\caption{(a,c) Type (C) critical distances in the standard \Frechet distance problem vs. 
	(b,d) type (C) critical distances in our instance of the problem.}
	\label{fig:typeCCC}
\end{figure}

Notice that here
type (A) and (B) of critical values are similar to 
the type (A) and (B) critical values in the standard \Frechet 
distance problem
(see Section \ref{sec:classicalFD} on Page \pageref{alg:StandardFDec}).
There are two critical distances of type (A) and 
$O(n^2)$ critical distance of type (B). All of these critical values 
can be computed in  $O(n^2)$  time.

Here, type (C) critical distances are slightly different from 
those distances in the standard 
\Frechet distance problem. 
Figure \ref{fig:typeCCC} illustrates that difference. 
In the standard \Frechet problem, a type (C) critical distance
corresponds to the common distance of two vertices of one curve 
to the intersection point of their bisector 
with an edge of the other curve (see Figure \ref{fig:typeCCC}a). This happens 
when a new horizontal or vertical passage opens within the diagram
(see Figure \ref{fig:typeCCC}c).
All  type (C) critical values in the standard \Frechet distance problem 
can be computed in $O(n^3)$ time. 

In our instance of the problem, computing type (C) critical distances
has further complications.  
Those distances arise when a new slope-constrained path opens within $\Feps$
which consists of a sequence of min-slopes (or max-slopes)
of the cells through which the path goes.
If $\eps$ is reduced, this path will seize to exist 
(for an instance, see Figure \ref{fig:typeCCC}d).

The geometric meaning of type (C) critical distances is as follows
(see Figure \ref{fig:typeCCC}b for an illustration).
Consider two vertices $p_i$ and $p_j$ from $P$ and
let $t_{p_ip_j}$ denote the time it takes for $\CO_p$
to travel from $p_i$ to $p_j$ on $P$ when the speed of $\CO_p$ on each 
segment is its corresponding maximum allowed speed.
Furthermore, let $q_k$ and $q_\ell$
be two vertices of $Q$ where 
$q_\ell$ is the first vertex after $q_k$
where $t_{p_ip_j} \le t_{q_\ell q_k}$ ($\CO_Q$ walks always 
with minimum allowed speed assigned to the segments of $Q$).
Now, let $u$ and $v$  be two points on $Q$,
where: 

(a) $u$ is before $v$ on $Q$,

(b) both are located between $q_k$ and $q_\ell$,

(c) $\|  u p_i\| = \|  v p_j\|$,

(d) and the time of travel from $u$ to $v$ is equal to $t_{p_ip_j}$. 

Then, among all such pairs of points $(u,v)$, 
let $(u_0,v_0)$ be the one which has the smallest distance. 
Since we are looking for the smallest distance where 
slope-constrained path opens up in the free-space diagram, 
$\| u_0p_i\|$ is a critical distance of type (C).

Next, we show how to compute all type (C) critical distances.
We first introduce a function, called {\sc Compute-Potential-Chains$(R, t)$}
provided in Algorithm \ref{alg:potentialChains}.
Input to that function consists of a curve $R$ and a fixed time $t \neq 0$.
The function computes a set $A$ which includes all 
the subcurves of $R$ from vertex $r_i$ to vertex $r_j$, $i < j$, where $r_j$ 
is the first vertex after $r_i$ such that $t \le t_{r_ir_j}$.
Algorithm \ref{alg:potentialChains} accomplishes this by using two pointers, 
called $\mu_1$ and 
$\mu_2$. At the start of the algorithm, $\mu_1$ points to the 
first vertex and $\mu_2$ points to the second vertex of $R$. 
Then $R$ is scanned once to report the set $A$ as described in 
Algorithm \ref{alg:potentialChains}.
In this algorithm, $chain_Q(\mu_1,\mu_2)$ means the polygonal chain 
of $Q$ which starts at $\mu_1$ and ends at $\mu_2$.

In Algorithm \ref{alg:typeCFinal}, we use the function stated 
in Algorithm \ref{alg:potentialChains}
to compute all the critical distances of type (C), 
for two curves $P$ and $Q$. 
For every pair of vertices $p_i$ and $p_j$ of $P$,
we call function {\sc Compute-Potential-Chains}$(Q, t_{p_ip_j})$ 
to compute subcurves $\alpha$ of $Q$ which start
at some vertex $q_k$ and end at some vertex $q_\ell$,
$k < \ell$, such that $q_\ell$ is the first vertex 
after $q_k$ where $t_{p_ip_j} \le t_{q_kq_\ell}$.
Then, for each curve $\alpha$, 
we do the calculation in Line 4 to compute critical distances of type (C).
We repeat the above for each pair of vertices $q_i$ and $q_j$ of $Q$
and curve $P$, in Line \ref{l:reprep}.
See Algorithm \ref{alg:typeCFinal} for more details.

\begin{algorithm} [t]
\caption {\sc Compute-Potential-Chains$(R, t)$ } 
\label{alg:potentialChains}
\algsetup{indent=1.5em}
\begin{algorithmic}[1]
	\vspace{0.5em}
	\baselineskip=1\baselineskip
	\STATE $A = \emptyset	$
	\STATE Let $(r_1,r_2,\dots,r_m)$ be the vertices of $R$

	\IF{$ t \le t_{r_1r_m} $}

	\STATE\label{l:0} $i = 1, j = 2$
	
	\STATE\label{l:0} $\mu_1 =r_i, \mu_2 = r_j$
	\WHILE{$\mu_1 \neq r_m$} \label{l:loopPot}

	\IF{$ t \le t_{\mu_1\mu_2}$}
	\STATE $A = A \cup chain_Q(\mu_1,\mu_2)$ 
	\STATE $i = i + 1$, $\mu_1 =r_i$
	\ELSE 
	\STATE $j = j + 1$, $\mu_2 =r_j$
	\ENDIF
	\ENDWHILE
	\ENDIF
	\STATE  return $A$

\end{algorithmic}
\end{algorithm}

 \begin{algorithm} [t]
\caption {\sc Compute type(c) critical distances} \label{alg:typeCFinal}
\algsetup{indent=1.5em}
\begin{algorithmic}[1]

\FOR { each pair $(p_i, p_j)$,  $0 \le i <  j \le n$}
\STATE A = {\sc Compute-Potential-Chains$(Q, t_{p_ip_j})$}
\FOR{each curve $\alpha$ in A}
\STATE let $(e_1, e_2, \dots, e_k)$ be the list of edges of $\alpha$, \\
 determine if there exists pairs of points $u \in e_1$, $v \in e_k$, \\
such that $\| up_i  \| = \| vp_j\|$ and $t_{uv} =  t_{p_ip_j}$
\\
among all such pairs, 

add  minimum of the distances $\| up_i \|$ to the critical distances of type (C).
\ENDFOR	
\ENDFOR

\FOR { each pair $(q_i, q_j)$,  $0 \le i <  j \le m$}
\STATE A = {\sc Compute-Potential-Chains$(P, t_{q_iq_j})$}
\STATE Repeat Lines 3 and 4 for each curve $\alpha$ in $A$ \label{l:reprep} 
\ENDFOR	

\REM{
\FOR { each $q_i, q_j, i, j \le m$}
\STATE B = $Compute-Potential-Chains(P, t_{q_iq_j})$
\FOR{each curve $\beta$ in B}
\STATE determine if a point $u$ exists on the first edge of $\beta$, \\
and a point $v$ exists on the last edge of $\beta$, \\
such that $\| uq_i  \| = \| vq_j\|$. \\
if exists, add distance $\| uq_i \|$ to critical distance of type (C)

\ENDFOR	
\ENDFOR		
}

\end{algorithmic}
\end{algorithm}

\begin{lemma}
\label{TimeOfTypeC}
Algorithm \ref{alg:typeCFinal} computes all critical values of type (C) 
in $O(n^3)$ total time. 
\end{lemma}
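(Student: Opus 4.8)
The plan is to bound the running time of Algorithm~\ref{alg:typeCFinal} by analyzing each of its three main components separately: the number of times {\sc Compute-Potential-Chains} is invoked, the cost of a single invocation, and the total size of the output it produces (which then drives the cost of Line~4).

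First I would argue that {\sc Compute-Potential-Chains}$(R,t)$ runs in time proportional to $|R|$ plus the total length of the chains it returns. The two pointers $\mu_1$ and $\mu_2$ only advance and never move backward, so each moves through at most $m = |R|$ vertices; the only nontrivial bookkeeping is that computing $t_{\mu_1\mu_2}$ should be done incrementally (adding/subtracting one edge's traversal time as a pointer moves) so that each pointer advance costs $O(1)$. Thus, setting aside the time to actually \emph{write out} the chains $chain_Q(\mu_1,\mu_2)$, a single call costs $O(|R|)$. Crucially, each vertex $r_i$ becomes $\mu_1$ exactly once, and at that moment a single chain starting at $r_i$ is emitted; so the set $A$ contains at most $|R|$ chains, but their lengths can sum to $\Theta(|R|^2)$ in the worst case. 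To avoid paying that quadratic cost per call, I would have {\sc Compute-Potential-Chains} return each chain implicitly as a pair of endpoint pointers $(\mu_1,\mu_2)$ rather than as an explicit list of edges; then a single call is genuinely $O(|R|)$ and $|A| = O(|R|)$.

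Next I would account for the outer loops. The first outer loop in Algorithm~\ref{alg:typeCFinal} runs over all $O(n^2)$ pairs $(p_i,p_j)$, and for each it makes one call to {\sc Compute-Potential-Chains}$(Q,\cdot)$ costing $O(|Q|) = O(n)$, and then iterates over the (at most $O(n)$) chains $\alpha$ in $A$, doing $O(1)$ work per chain in Line~4. Here I would note that Line~4 as written only needs the first edge $e_1$ and the last edge $e_k$ of $\alpha$, not the whole chain — so the implicit endpoint representation suffices, and the computation of the pair $(u,v)$ with $\|up_i\| = \|vp_j\|$ and $t_{uv} = t_{p_ip_j}$ is the solution of a constant-size algebraic system (intersecting a circle of appropriate radius, parametrized by the shared radius, against the fixed-time constraint), hence $O(1)$. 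Therefore the first outer loop costs $O(n^2) \cdot O(n) = O(n^3)$. By the symmetric argument (Line~\ref{l:reprep}), the second outer loop, running over the $O(n^2)$ pairs $(q_i,q_j)$ with curve $P$, also costs $O(n^3)$. Summing, the total is $O(n^3)$.

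The main obstacle I anticipate is making the "$O(1)$ work per chain" claim airtight. Two points need care: (i) ensuring {\sc Compute-Potential-Chains} never needs to materialize a chain explicitly, which forces the implicit-pointer interface and a compatible incremental maintenance of the $t_{r_ir_j}$ values; and (ii) verifying that Line~4's geometric subproblem — given the fixed points $p_i,p_j$, the two edges $e_1$ and $e_k$ with their speed ranges, find $u\in e_1$, $v\in e_k$ minimizing $\|up_i\|$ subject to $\|up_i\|=\|vp_j\|$ and $t_{uv}=t_{p_ip_j}$ — really does reduce to a bounded-degree polynomial system solvable in constant time (one must check that the monotonicity of $\|up_i\|$ along $e_1$ and of the matching constraint along $e_k$ pins down the minimizer without a search). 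Once these two points are dispatched, the loop-counting is routine and the $O(n^3)$ bound follows.
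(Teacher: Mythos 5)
Your proof follows the same approach as the paper's: bound the number of calls to {\sc Compute-Potential-Chains} by $O(n^2)$, argue that each call is linear via the forward-only two-pointer scan with $O(1)$-time travel-time comparisons, and show that Line~4 reduces to a constant-size algebraic system. One refinement you make that the paper leaves implicit is worth noting: you observe that if $chain_Q(\mu_1,\mu_2)$ is materialized explicitly, the total output of a single call could be $\Theta(|R|^2)$, and you fix this by returning only the endpoint pair, which is exactly what Line~4 needs since it inspects only $e_1$ and $e_k$; the paper simply asserts the linear bound without addressing this. Your second flagged obstacle — that the minimizer $(u_0,v_0)$ is pinned down by a bounded-degree polynomial system — is handled in the paper by writing the two constraints out explicitly (one quadratic distance equality, one linear time equation) and noting the solution set is finite with the minimum-distance pair selectable in $O(1)$; your monotonicity argument is a valid alternative justification of the same fact.
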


\begin{proof}
The correctness of Algorithm~\ref{alg:typeCFinal} follows from 
 Observation~\ref{obs:criticaltypes} and 
the geometric nature of type (C) critical distances
as described above. 

Algorithm \ref{alg:typeCFinal}  calls the function stated as Algorithm
\ref{alg:potentialChains}, $O(n^2)$ times
in Line 2.
Thus, to prove the lemma, it is sufficient to show that 
the running time of Algorithm  \ref{alg:potentialChains} is linear in the size 
of curve $R$.

Notice that the speed of travel on curve $R$ in Algorithm \ref{alg:potentialChains}
is equal to the minimum allowed speed assigned to each segment of $R$.
Thus, using the same approach as 
in Lemma \ref{lemma:project}, 
after linear time preprocessing, 
we can compute, in constant time, the time 
of travel from a vertex to another one.

The loop in Line \ref{l:loopPot} terminates  
when pointer $\mu_1$ reaches the last vertex of $R$.
Notice that pointer $\mu_1$ always moves forward in direction $R$ and 
points to vertices of $R$ one by one, in order. Also,
pointer $\mu_2$  always moves forward in direction $R$ and is never 
before $\mu_1$. Therefore, with one linear scan, 
Algorithm \ref{alg:potentialChains} computes and returns set $A$.

Next, we show that the computation in Line 4 of 
Algorithm \ref{alg:typeCFinal} can be 
done in $O(1)$ time.
Let 
$e_1 = ab$ and $e_k = cd$ be the first 
and last edges of $\alpha$
(see Figure \ref{fig:TimeOfTypeC}).
Suppose that the coordinate of the points in that figure are:
\begin{displaymath}
	a=(a_x,a_y), b=(b_x,b_y), c =(c_x,c_y), d=(d_x,d_y), p_i=(p_x,p_y), p_j=(q_x,q_y)
\end{displaymath}
Then, any point $u$ on segment $\Seg{ab}$ can be written as:
\begin{displaymath}
u = (b_x,b_y) + \frac{\|ub\|}{\|ab\|} (a_x-b_x,a_y -b_y)
\end{displaymath}

and any point $v$ on segment $\Seg{cd}$ can be written as:
\begin{displaymath}
v = (c_x,c_y) + \frac{\|cv\|}{\|cd\|} (d_x-c_x,d_y -c_y)
\end{displaymath}

We are looking for pairs of points $u$  and $v$ such that:

\begin{displaymath}
	\|p_iu\|^2 = \|p_jv\|^2 	
\end{displaymath}
\begin{displaymath}
	 \frac{\|ub\|}{v_{e_1}} + \frac{\|cv\|}{v_{e_k}} = t_{p_ip_j} - t_{bc}	
\end{displaymath}

Thus,

\begin{displaymath}
	(b_x-p_x+\frac{\|ub\|}{\|ab\|}(a_x-b_x))^2 + (b_y-p_y+	\frac{\|ub\|}{\|ab\|}(a_y-b_y))^2 		
\end{displaymath}

\begin{displaymath}
= 
\end{displaymath}

\begin{displaymath}
	(c_x-q_x+\frac{\|cv\|}{\|cd\|}(d_x-c_x))^2 + (c_y-q_y+	\frac{\|cv\|}{\|cd\|}(d_y-c_y))^2	
\end{displaymath}

\begin{displaymath}
	 \frac{\|ub\|}{v_{e_1}} + \frac{\|cv\|}{v_{e_k}} = t_{p_ip_j} - t_{bc}	
\end{displaymath}

\REM{
Replacing $\|ub\|$ by $S$ and $\|cv\|$ by $T$, we get:

\begin{displaymath}
	(b_x-p_x+\frac{S}{\|ab\|}(a_x-b_x))^2 + (b_y-p_y+	\frac{S}{\|ab\|}(a_y-b_y))^2 		
\end{displaymath}

\begin{displaymath}
= 
\end{displaymath}

\begin{displaymath}
	(c_x-q_x+\frac{T}{\|cd\|}(d_x-c_x))^2 + (c_y-q_y+	\frac{T}{\|cd\|}(d_y-c_y))^2	
\end{displaymath}

\begin{displaymath}
	 \frac{S}{v_{e_1}} + \frac{T}{v_{e_k}} = t_{p_ip_j} - t_{bc}	
\end{displaymath}
}

\vspace{0.2in}
Note that above equations can be solved in constant time. The following 
cases arise: (I) no such pair $(u,v)$ is  found, or 
(II) only one pair $(u,v)$ is found. In this case, $\|up_i\|$ is a critical distance, or
(III) more than one pairs of point $(u,v)$ are found. In this case, we determine, 
in constant time, the pair 
$(u_0,v_0)$ which has the minimum distance $\|u_0p_i\| = \|v_0p_j\|$ 
and then, $\|u_0p_i\|$ is a critical distance of type (C).
Hence, the running  time of Algorithm \ref{alg:typeCFinal} is $O(n^3)$.

\end{proof}

\begin{figure}[t]
	\centering
	\includegraphics[width=0.4\columnwidth]{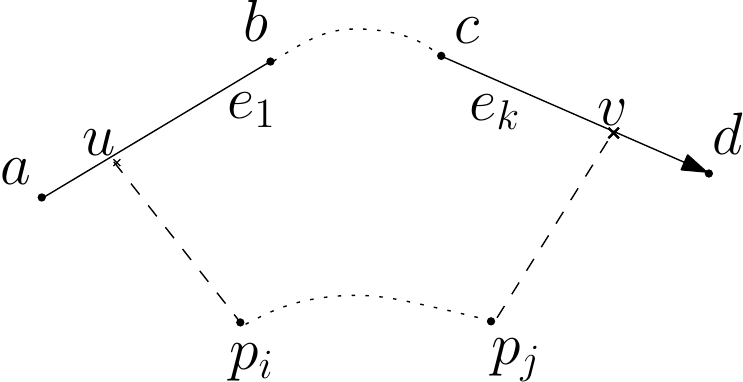}  
	\caption{Proof of Lemma \ref{TimeOfTypeC}}
	\label{fig:TimeOfTypeC}
\end{figure}


\begin{theorem}
	The exact \Frechet distance with speed limits can be computed in $O(n^3 \log n)$ time.
\end{theorem}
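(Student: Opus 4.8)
The plan is to reduce the optimization problem to a binary search driven by the decision algorithm of Section~\ref{sec:improvedSpeed}, exactly as in the classical case, but using the critical values identified in Observation~\ref{obs:criticaltypes}. First I would observe that the predicate ``$\distFS(P,Q) \lee \eps$'' is monotone in $\eps$: since $\CF_{\eps_1} \subseteq \CF_{\eps_2}$ whenever $\eps_1 \lee \eps_2$, and the slope bounds $\minS{ij}, \maxS{ij}$ do not depend on $\eps$, any \SC path from $(0,0)$ to $(n,m)$ that exists in $\CF_{\eps_1}$ also exists in $\CF_{\eps_2}$. Moreover $\Feps$ is a closed set, so the infimum defining $\distFS(P,Q)$ is attained; hence the decision algorithm returns ``yes'' precisely for $\eps \gee \delta$, where $\delta = \distFS(P,Q)$.

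Next I would assemble the candidate set. By Observation~\ref{obs:criticaltypes}, $\delta$ equals one of the critical values of type (A), (B), or (C). There are two type (A) values and $O(n^2)$ type (B) values, all computable in $O(n^2)$ time (these are distances between vertices and edges of the two curves, as in Section~\ref{sec:classicalFD}). By Lemma~\ref{TimeOfTypeC}, all $O(n^3)$ type (C) critical values can be computed in $O(n^3)$ time using Algorithm~\ref{alg:typeCFinal}. Collecting all of these yields a set $C$ of $O(n^3)$ candidate values that is guaranteed to contain $\delta$.

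Then I would sort $C$ in $O(n^3 \log n)$ time and binary-search over the sorted list: at each step I invoke the $O(n^2 \log n)$-time decision algorithm of Section~\ref{sec:improvedSpeed} to test whether $\distFS(P,Q) \lee \eps$ for the current candidate $\eps$. By the monotonicity established above, the search converges to the smallest $\eps \in C$ for which the answer is ``yes''; since $\delta \in C$ and the predicate is closed at $\delta$, this value is exactly $\delta$. The binary search makes $O(\log |C|) = O(\log n)$ calls to the decision algorithm, costing $O(n^2 \log^2 n)$ in total. Adding the $O(n^3)$ time to enumerate the type (C) values and the $O(n^3 \log n)$ time to sort $C$, the overall running time is $O(n^3 \log n)$, as claimed.

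The only delicate points are the two already handled: that the list of critical values provably contains $\delta$ (Observation~\ref{obs:criticaltypes}) and that the decision predicate is closed at $\delta$, so that the binary search returns the exact value rather than an approximation; once these are in place the rest is a routine combination of sorting and the decision algorithm. Unlike the classical \Frechet computation, parametric search would give no asymptotic gain here: the $O(n^3)$ cost of generating the type (C) values and the $O(n^3 \log n)$ cost of sorting them already dominate the $O(n^2 \log^2 n)$ spent on the binary search, so replacing the search by Cole's technique is unnecessary.
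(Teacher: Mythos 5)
Your proof is correct and follows essentially the same strategy as the paper: enumerate the $O(n^3)$ critical values of types (A), (B), (C), sort them, and binary-search using the $O(n^2\log n)$ decision algorithm, with sorting dominating at $O(n^3\log n)$. The only difference is that you spell out the monotonicity of the decision predicate and the closedness/attainment at $\delta$, which the paper leaves implicit; this extra care is sound but does not change the approach.
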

\begin{proof}
To find the exact value of $\distFS(P,Q)$,
we first compute all $O(n^3)$ critical distances of type (A), (B) and (C), 
and then we sort them. After sorting these values,
we do a binary search
(equipped with our decision algorithm)
to find the smallest $\eps$ for which $\distFS(P,Q) \lee \eps$.
In each search step, we solve the 
decision problem, if it has a positive answer, 
we continue with the half which contains 
smaller values. Otherwise, we continue with the half containing larger values.
The running time is dominated by the time of sorting $O(n^3)$ values, 
which is $O(n^3 \log n)$.
\end{proof}

In the standard \Frechet distance problem, 
parametric search based approach is used to compute the exact value of \Frechet distance. 
Next, we outline that approach and show that we cannot apply it to our instance of the problem.

Alt and Godau~\cite{AltG95} observed that
any comparison-based sorting algorithm that sorts
$a_{ij}, b_{ij}, c_{ij}$, and $d_{ij}$ 
(defined as functions of $\eps$)
has critical values that include those of type (C).
This is because the critical values of type (C) in the
standard \Frechet distance problem occur if 
$a_{ij} = b_{kj}$ or $c_{ij} = d_{ik}$,
for some $i,j$, and $k$. 
Therefore, 
to compute type (C) critical values, 
they   
used parametric search technique as follows.
First, compute all critical values of types (A) and (B), sort them
and then, perform binary search, and find two consecutive values  
$\eps_1$ and $\eps_2$ 
such that $\distF \in [\eps_1,\eps_2]$.
Let $S$ be the set of endpoints $a_{ij}, b_{ij}, c_{ij}$, $d_{ij}$ of intervals  
$\LF_{ij}$ and $\BF_{ij}$ that are nonempty for $\eps \in [\eps_1,\eps_2]$.
Then, Alt and Godau~\cite{AltG95} used Cole's parametric search method~\cite{Cole87}  
based on sorting the values in $S$ 
to find the exact value of $\distF$. 
Set $S$ consists of $O(n^2)$ polynomial functions 
$f_1(\eps) = a_{ij}, f_2 (\eps)=b_{kj}, f_3(\eps) = c_{ij}, ...$ 
of $\eps$. The values of these functions at 
$\delta$ will be given to a sorting network 
consists of parallel processors to get sorted (see Figure \ref{fig:PSproblem}).  
The crucial  requirement here is 
that at each stage,  
the transitivity of comparisons must hold, 
i.e.,  $f_1(\delta) \le f_2(\delta)$
and $f_2(\delta) \le f_4(\delta)$, implies that
$f_1(\delta) \le f_4(\delta)$.

That is not the case in our instance of the problem
because of the speed limit constraints.
Here, the critical values of type (C) occur if 
$b_{k\ell} = a_{ij}  +K_{ijk\ell}$ or $d_{ij} = c_{k\ell} + K'_{ijk\ell}$, 
for some $i,j,k$, and $\ell$. 
Although $K_{ijk\ell}$ or $K'_{ijk\ell}$
can be computed in $O(1)$ time
using Lemma \ref{lemma:project}, 
their value depends on  $i,j,k$ and $\ell$. 

Suppose that we use parametric search here. 
Assume that in the first stage of
parallel sorting, 
a processor compares
\REM{
e.g. $b_{k\ell}(\delta)$ with $a_{ij}(\delta) +K_{ijk\ell}$. 
Let $a_{ij}(\delta) +K_{ijk\ell} < b_{k\ell}(\delta)$. 
Furthermore, assume that another processor compares e.g.
$b_{gh}(\delta)$ with $a_{ef}(\delta) + K'_{efgh}$.
Let $ b_{gh}(\delta) < a_{ef}(\delta) + K'_{efgh}$. 
Then, assume in the next stage, 
$b_{k\ell}(\delta)$ is compared with $a_{ef}(\delta) + K''_{k\ell ef}$ and 
let $b_{k\ell}(\delta) < a_{ef}(\delta) + K''_{k\ell ef}$.
Unlike in the case of standard \Frechet distance problem, 
we cannot conclude that $b_{gh}(\delta)<b_{k\ell}(\delta)$ 
}
e.g. $f_{k\ell}(\delta)$ with $f_{ij}(\delta) +K_{ijk\ell}$. 
Let $f_{ij}(\delta) +K_{ijk\ell} < f_{k\ell}(\delta)$. 
Furthermore, assume that another processor compares e.g.
$f_{gh}(\delta)$ with $f_{ef}(\delta) + K'_{efgh}$.
Let $f_{gh}(\delta) < f_{ef}(\delta) + K'_{efgh}$. 
Then, assume in the next stage, 
$f_{k\ell}(\delta)$ is compared with $f_{ef}(\delta) + K''_{k\ell ef}$ and 
let $f_{ef}(\delta) + K''_{k\ell ef} < f_{k\ell}(\delta)$.
Unlike in the case of standard \Frechet distance problem, 
we cannot conclude that $f_{gh}(\delta)<f_{k\ell}(\delta)$ 
by transitivity since 
another $K'''_{ghk\ell}$ affects 
the comparison. 
Therefore, it seems unlikely that we can 
apply the parametric search technique
to compute $\distFS(P,Q)$, as pointed out by Alt~\cite{AltFinal}.

Recently, in~\cite{HarPeled11}, 
a  randomized algorithm is introduced that computes 
the \Frechet distance between two polygonal curves in $O(n^2 \log n)$ 
time with high probability, without using parametric search.
The key observation used in their algorithm is that
given a distance interval $I=[a,b]$, one can
find all type (C) critical distances in $I$ in
$O( (n^2 + k) \log n )$ time, where $k$ is number of these distances in range $I$.
They use a sweep line algorithm to achieve that running time.
In our instance of the problem, we have additional speed constraints, 
which makes it  hard to adopt the approach in \cite{HarPeled11}
to get a faster running time.
To be more precise,  consider the following sub-problem:

Suppose a curve $Q$, a time $t$, a distance interval $I=[a,b]$ 
and two vertices $p_i$ and $p_j$ from curve $P$ are given. 
Also assume that the object on $Q$ always walks 
with minimum speed associated to each edge. 
Now find all pairs of points $u$ and $v$ on $Q$ which satisfy the conditions:

(I) $\|p_i,u\|  = \|p_j, v \| = d$,  (II) $a \le d \le b$, and 
(III) time of travel from $u$ to $v$ on $Q$ is $t$.
It is unclear how to find such pairs efficiently. 

\REM{
The total preprocessing time must be less than $o(n^3)$ for all 
the vertices of $P$.
If the above sub-problem can be solved in 
$O(\log n + k)$ time,  
to compute $\distFS(P,Q)$,
one can use the randomized approach in \cite{HarPeled11}
to achieve an algorithm with expected running time less than $o(n^3)$.
}

\begin{figure}[t]
	\centering
	\includegraphics[width=0.5\columnwidth]{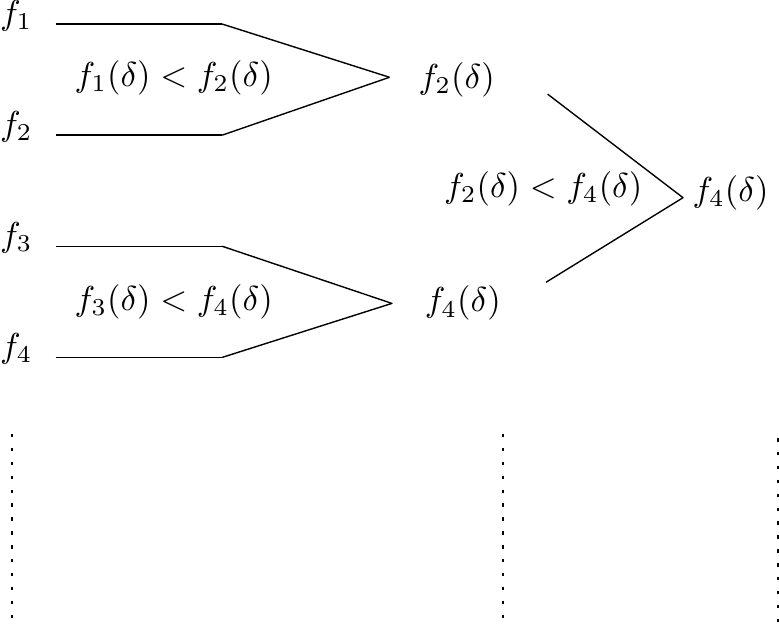}  
	\caption{Transitivity of comparisons must be kept during stages of parallel sorting in parametric search}
	\label{fig:PSproblem}
\end{figure}


\section{Conclusions} \label{sec:conclusion}

In this chapter, we introduced a variant of the \Frechet distance between two polygonal curves
in which 
the speed of traversal along each segment of the curves is restricted to be within a specified range.
We presented an efficient algorithm to solve the decision problem
in $O(n^2 \log n)$ time.
This led to a $O(n^3 \log n)$ time algorithm for finding 
the exact value of the \Frechet distance with speed limits.


Several open problems arise from our work.
In particular, it is interesting to consider 
speed limits in
other variants of the \Frechet distance studied in the literature. In the next chapter, we will study the
same problem  in the case where 
two curves lie inside a simple polygon.
Our result can be also useful 
in matching planar maps, where the objective is 
to find a curve in a road network that is as close as possible to a vehicle trajectory. 
In~\cite{AltERW03a}, the traditional \Frechet metric is used to match 
a trajectory to a road network. 
If the road network is very congested,  
the \Frechet distance with speed limits introduced here seems to find a more realistic path in the road network, 
close to the trajectory of the vehicle.
It is also interesting to extend our variant of the \Frechet distance 
to the setting where the speed limits on the segments of the curves change as functions over time.

Preliminary results of this chapter 
are presented in the 21st Canadian Conference on 
Computational Geometry~\cite{oursCCCG2009}. 
The full version of the paper is published in 
the special issue of Computational Geometry - Theory and Application~\cite{oursSpeedJournal}.
Alt~\cite{AltFinal} pointed out 
that due to the restrictions imposed by speed constraints, 
parametric search is not applicable.
It remains open whether there exists an algorithm 
that can solve the optimization 
problem faster than $O(n^3 \log n)$ time.

\chapter{Speed-constrained Geodesic \Frechet Distance }
\label{ch:speed-geodesic}

\REM{
Given two polygonal curves inside a simple polygon,
we study the problem of finding the 
\Frechet distance between the two curves
under the following two conditions
(i) the distance between two points on the curves is measured as the length of the shortest path
between them lying inside the simple polygon, and 
(ii) the traversal along each segment of the polygonal curves 
is restricted to be between a minimum and a maximum permissible speed
assigned to that segment.%

We provide an algorithm that decides in $O(n^2 (k + n))$ time 
whether the speed-constrained geodesic \Frechet distance between two polygonal curves inside a 
simple polygon is within a given value $\eps$, 
where $n$ is the number of segments in the curves, and $k$ is the complexity of the polygon. 
This leads to an algorithm for solving this variant of the \Frechet distance exactly
in $O(n^2(k + n)\log n)$ time and $O(n^2+k)$ space.
}


\section{Introduction} \label{sec:intro}


Several variants of the \Frechet distance have been studied in the literature.
Cook and Wenk~\cite{WenkC08a} studied  
the geodesic \Frechet distance inside a simple polygon.
In this variant, the leash is constrained to the interior of a simple polygon.  
Therefore,
a \emph{geodesic} distance is used to measure the length of the leash, which is 
the length of the shortest path inside the polygon connecting the two endpoints of the leash.
In~\cite{WenkC08a}, it is shown that the geodesic \Frechet distance 
between two polygonal curves of size $n$ inside a simple polygon of size $k$
can be computed in $O(n^2 \log (kn) \log n +k)$ expected time and $O(n^2+ k)$ space. 

In Chapter \ref{ch:speedFD}, we
introduced a generalization of the \Frechet distance,
in which users are allowed to set speed limits on each segment.
We showed that
for two polygonal curves of size $n$ with
speed limits assigned to their segments,
the speed-constrained \Frechet distance 
can be computed in $O(n^2 \log^2 n)$ time and $O(n^2)$ space. 
Note that in the problem instance of that chapter, there is no restriction for the leash to stay inside a simple polygon 
and thus, the leash lengths are measured using  the Euclidean distance.

In this chapter, we study the speed-constrained geodesic \Frechet distance inside a simple polygon
which is a simultaneous generalization of both 
\Frechet distances studied in~\cite{WenkC08a} and in the previous
chapter.
The decision version of the problem is formulated as follows:
Let $P$ and $Q$ be two polygonal curves inside a simple polygon, 
with minimum and maximum permissible speeds assigned to each segment of $P$ and $Q$.
For a given $\eps \gee 0$, 
can two point objects traverse
$P$ and $Q$ with permissible speeds (without backtracking) and, 
throughout the entire traversal, remain at geodesic distance at 
most $\eps$ from each other? The objective in the 
optimization problem is to find the smallest such $\eps$. 

We show that the 
decision version of the speed-constrained geodesic \Frechet distance problem 
can be solved in $O(n^2(k+n))$ time and $O(n^2+k)$ space,
where $n$ is the number of segments in the curves,
and $k$ is the complexity of the simple polygon.
This leads to a solution to the optimization problem
in $O(kn^3)$ time.

Algorithms for computing various variants of the \Frechet distance
are typically based on computing a free-space diagram consisting of $O(n^2)$ cells,
as we have seen in Chapters \ref{ch:related} and 
\ref{ch:speedFD},
and then propagating the reachability information 
one by one through the cells. 
While we adopt this general approach, 
the construction of the free-space diagram is more challenging in our problem
as we need to compute the whole free space inside each cell.
This is in contrast to other variants that 
only need to compute the free space on the boundaries of the cells.
A main contribution of our work is thus to 
fully describe the structure of the free space inside a cell,
establish its complexity,
and show how it can be computed efficiently.
Propagating the reachability information through the cells is also
more challenging  in our problem compared to the previous ones in previous chapters, 
as here, the shape of the free space inside a cell can
substantially affect the projection of the 
reachable intervals on its boundaries.

\REM{
The rest of the paper is organized as follows.  
We defines the problem formally in Section~\ref{sec:preliminaries} along with some notations.
Section~\ref{sec:ComputingFreeSpace} describes the structure of the free space 
and presents an algorithm for computing it efficiently. 
In Section \ref{sec:decision}, 
we propose an algorithm to compute our variant of the \Frechet distance. 
}

\section{Preliminaries} 
\label{sec:preliminaries}

A {\em polygonal curve\/} in $\IR^d$ is  a continuous function 
$P:[0,n] \rightarrow \IR^d$ with $n \in \IN$, 
such that for each $i \in \set{0, \ldots, n-1}$,
the restriction of $P$ to the interval $[i, i+1]$ 
is affine (i.e., forms a line segment).
The integer $n$ is called the {\em length\/} of $P$.
Moreover, the sequence ${P(0), \ldots, P(n)}$ represents the set of {\em vertices\/} of $P$.
For each $i \in \set{1, \ldots, n}$, 
we denote the line segment $P(i-1)P(i)$ by $P_i$. 
Given a simple polygon $K$ and two points $p,q \in K$,
the \emph{geodesic distance} of $p$ and $q$ with respect to $K$,
denoted by $d_K(p,q)$,
is defined as the length of the shortest path between $p$ and $q$
that lies completely inside $K$.

\paragraph{Speed-constrained geodesic \Frechet distance.}
Let $P$ be a polygonal curve such that
assigned to each segment $S$ of $P$,
there is a pair of non-negative real numbers $(\vmin{S}, \vmax{S})$ 
specifying the minimum and the maximum permissible speed for moving along $S$.
We define a {\em speed-constrained parametrization of $P$\/}
to be a continuous surjective function $f: [0,T] \rightarrow [0,n]$ with $T > 0$
such that for any $i \in \set{1, \ldots, n}$,
the slope of $f$ at all points $t \in [f^{-1}(i-1),f^{-1}(i)]$ 
is within $[\umin{P_i}, \umax{P_i}]$, where
$\umin{S} = {\vmin{S} / \|S\|}$ and $\umax{S} = {\vmax{S} / \|S\|}$.

Given a simple polygon $K$
and two polygonal curves $P$ and $Q$ inside $K$
of lengths $n$ and $m$ respectively 
with speed limits assigned to their segments, 
the {\em speed-constrained geodesic \Frechet distance\/} of $P$ and $Q$ inside $K$
is defined as
\[
	\distFS(P,Q) = \inf_{\alpha, \beta} \max_{t \in [0,T]} d_K( P(\alpha(t)), Q(\beta(t)) ),
\]
where $\alpha: [0,T] \rightarrow [0,n]$ ranges over all speed-constrained parametrizations of $P$ 
and $\beta:[0,T] \rightarrow [0,m]$ ranges over all speed-constrained parametrizations of $Q$.

\paragraph{Free-space diagram.}
Let $\BNM = [0,n] \times [0,m]$ be a $n$ by $m$ rectangle  in the plane.
Each point $(s,t) \in \BNM$ uniquely represents a pair of points
$(P(s),Q(t))$ on the polygonal curves $P$ and $Q$.
We decompose $\BNM$ into
$n\times m$ unit grid cells $\cell{ij} = [i-1,i] \times [j-1,j]$
for $(i,j) \in \set{1, \ldots, n} \times \set{1, \ldots, m}$,
where each cell $\cell{ij}$ corresponds to
a segment $P_i$ on $P$ and a segment $Q_j$ on $Q$.
Given  two polygonal curves $P$ and $Q$ inside a simple polygon $K$
and a parameter $\eps \gee 0$,
the {\em free space\/} $\Feps$ is defined as
$
	\Feps = \{(s,t) \in \BNM \ | \ d_K(P(s),Q(t)) \lee \eps \}.
$
We denote by $L_{ij}$ (resp., by $B_{ij}$) the left (resp., bottom) line segment bounding $\cell{ij}$. 
The {\em entry side\/} of $\cell{ij}$ is defined as $\entry{ij} = L_{ij} \cup B_{ij}$,
and its {\em exit side\/} as $\exit{ij} = B_{i,j+1} \cup L_{i+1,j}$. 
Given two points $p$ and $q$ on the boundary of a cell, we say that 
$p$ is {\em before\/} $q$, 
denoted by $p \lei q$, if either $p_x < q_x$ or 
$(p_x = q_x$ and $p_y > q_y)$.

\paragraph{Hourglass data structure.}
Fix a simple polygon $K$.
Given two points $p,q \in K$, 
we denote by $\SP(p,q)$ the shortest path 
between $p$ and $q$ that lies inside $K$,
and denote its length by  $\| \SP(p,q) \|$.
Let $\Ov{ab}$ and $\Ov{cd}$ 
be two 
line segments inside $K$.
The \emph{hourglass} $\Ho{ab}{cd}$
is defined as the maximal region bounded by 
the segments $\Ov{ab}$ and $\Ov{cd}$, 
and the shortest path chains 
$\SP(a,c)$, $\SP(a,d)$, $\SP(b,c)$ and $\SP(b,d)$.
Three examples of hourglasses are illustrated in Figure~\ref{fig:OpenH}.
(See~\cite{Guibas86} for applications of the hourglass.)
Note that for any two points $p \in \Ov{ab}$ and $q \in \Ov{cd}$,
the shortest path $\SP(p,q)$ is contained in $\Ho{ab}{cd}$.
The intersection of $\Ho{ab}{cd}$ and the boundary of $K$
consists of at most four polygonal curves,
each of which is called a \emph{chain} of $\Ho{ab}{cd}$.

\begin{figure}[h]
	\centering
	\includegraphics[width=0.8\columnwidth]{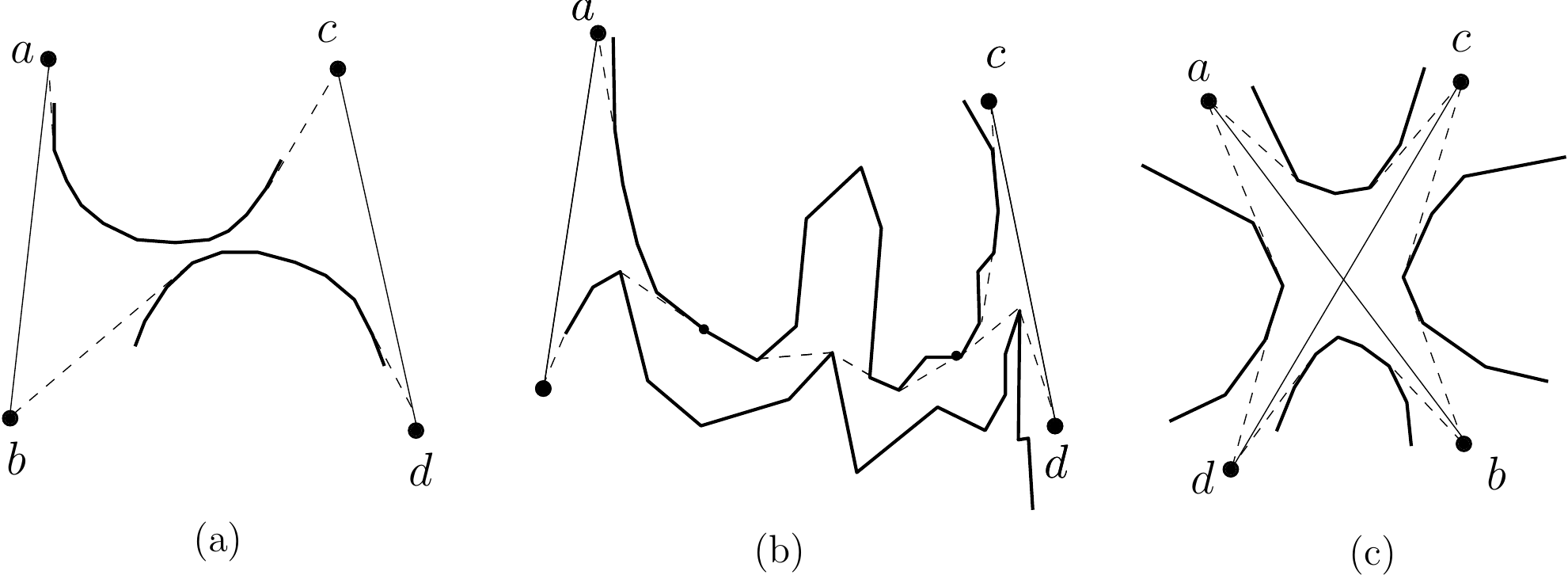}
	\caption{ 
	(a) An open hourglass (b) A closed hourglass (c) An intersecting hourglass.
	}
	\label{fig:OpenH}
\end{figure}

\REM{
If $\SP(a,c)$ and $\SP(b,d)$
are disjoint, the hourglass is called open, 
otherwise it is called closed. 
If $\Ov{ab}$ and $\Ov{cd}$ are crossing,
$\Ho{ab}{cd}$ is defined as the region bounded by 
$\SP(a,c)$, $\SP(a,d)$, $\SP(b,c)$ and $\SP(b,d)$.
}


\newcommand{\Sub}{{\sigma}}

\section{Computing the Free Space Inside a Cell} 
\label{sec:ComputingFreeSpace}

In the classical \Frechet distance problem
(Section \ref{sec:classicalFD}),
the free space inside each cell 
is convex and can be determined in $O(1)$ time. 
When distances are geodesic, 
the free space is not necessarily convex,
but it is still connected and $xy$-monotone 
(see~\cite{WenkC08a} for the proof).

%
%

Therefore, to solve the geodesic \Frechet distance (without speed limits),
one only needs to compute the free space on the boundaries of the cells.
 In~\cite{WenkC08a}, A. Cook \etal ~show how to compute  
the boundary of a cell in $O(\log k)$ time after
$O(k)$ time preprocessing, 
based on the algorithm of Guibas and Hershberger~\cite{Guibas86}.  
Also, one could use Chambers \etal's approach in~\cite{Chambers10}, 
to compute the boundary of the cells in $O(\log k)$ time.
In contrast to above works, in our generalized version where motion speeds are limited,
we need to compute the full description of the free space in the
interior of the cells as well in order to propagate the reachability 
information correctly. 

\REM{
To compute the full description of the free space in the interior of cells,
one could use Chambers \etal's approach in \cite{Chambers10}. 
That leads to $O(n^2k\log k)$ time solution, 
where $n$ is the total complexity of the curves and 
$k$ is the complexity of the polygon. Here, we 
propose an algorithm which 
computes the interior of all cells in $O(n^2k)$ total time. 
}

We use the hourglass data structure to compute the 
boundary of the free space inside a cell.
Consider an hourglass $\Ho{ab}{cd}$ 
and two points $p \in \Ov{ab}$ and $q \in \Ov{cd}$.
The shortest path $\SP(p,q)$ is either a straight segment 
(in case $p$ and $q$ see each other),
or consists of two tangents from $p$ and $q$ to the chains of $\Ho{ab}{cd}$
plus a subpath between the two tangent points.
We denote this subpath by $\Sub(p,q)$.
Note that $\Sub(p,q)$ consists of a sequence of vertices of $K$,
lying on at most two chains of the hourglass.

\begin{definition}
Consider an hourglass $\Ho{ab}{cd}$ and two intervals 
$\Ov{a' b'} \subseteq \Ov{ab} $ and $\Ov{c' d'} \subseteq \Ov{cd}$,
so that for any $p \in \Ov{a' b' }$ and any $q \in \Ov{c' d'}$, 
$\Sub(p,q)$ is the same.
The region bounded by the intervals  $\Ov{a'b'}$ and  $\Ov{c'd'}$ 
and the paths $\SP(a',c')$ and $\SP(b', d')$ is called a butterfly,
and is denoted by $\But{a' b'}{c' d'}$ (see Figure~\ref{fig:but}).
\end{definition}


\begin{lemma} \label{lemma:ButFhyperbolic}
	Given a butterfly $\But{a' b'}{c' d'}$,
	the function  $f(p,q) = \| \SP(p,q) \|$ over the domain $[a',b'] \times [c',d']$ 
	is a hyperbolic surface. 
\end{lemma}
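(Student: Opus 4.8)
The plan is to show that on the butterfly the geodesic distance $f(p,q)=\|\SP(p,q)\|$ separates into a term depending only on $p$, a term depending only on $q$, and an additive constant, and then to recognize each one‑variable term as (a branch of) a hyperbola.

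First I would use the defining property of a butterfly together with the structure of shortest paths inside a simple polygon. Recall that $\SP(p,q)$ is a polygonal chain turning only at reflex vertices of $K$, and that for $p\in\Seg{a'b'}$, $q\in\Seg{c'd'}$ it decomposes as a tangent segment from $p$ to its tangent point, the chain $\Sub(p,q)$, and a tangent segment from the other tangent point back to $q$. By the definition of $\But{a'b'}{c'd'}$, the chain $\Sub(p,q)$ — and hence its first vertex $u$, its last vertex $v$, and its length $\ell_0:=\|\Sub(p,q)\|$ — is one and the same for every pair $(p,q)\in\Seg{a'b'}\times\Seg{c'd'}$. Since $u$ is the tangent point reached from $p$ and $v$ the tangent point reached from $q$, we get $\SP(p,q)=\Seg{pu}\cup\Sub(p,q)\cup\Seg{vq}$, so
\[
  f(p,q)=\|p-u\|+\ell_0+\|v-q\| .
\]
If instead $\Sub(p,q)$ is empty for all pairs, i.e.\ $p$ and $q$ always see each other inside $K$, then $f(p,q)=\|p-q\|$; this degenerate case is handled at the end.

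Next I would parametrize the two segments affinely, $p=p(s)=a'+s(b'-a')$ and $q=q(t)=c'+t(d'-c')$ with $s,t\in[0,1]$. Expanding squared distances, $\|p(s)-u\|^2=As^2+Bs+C$ with $A=\|b'-a'\|^2>0$ and minimum value $C-B^2/(4A)$ equal to the squared distance from $u$ to the supporting line of $\Seg{a'b'}$, hence a nonnegative one‑variable quadratic; likewise $\|q(t)-v\|^2=A't^2+B't+C'$. Therefore
\[
  f(s,t)=\sqrt{As^2+Bs+C}\;+\;\sqrt{A't^2+B't+C'}\;+\;\ell_0 .
\]
Writing $z=\sqrt{As^2+Bs+C}$ and completing the square gives $z^2-A(s-s_0)^2=r^2$ with $r^2\ge 0$, the equation of a (possibly degenerate) branch of a hyperbola in the $(s,z)$‑plane — geometrically just the statement that the distance from a fixed point to a point moving along a line traces a hyperbola branch. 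Consequently the graph $z=f(s,t)$ is a translation surface, obtained by sliding the hyperbola branch $z=\sqrt{As^2+Bs+C}+\ell_0$ in the plane $t=0$ along the hyperbola branch $z=\sqrt{A't^2+B't+C'}$ in the plane $s=0$; in particular every cross section $s=\mathrm{const}$ or $t=\mathrm{const}$ is a hyperbola branch, which is what it means for $f$ to be a hyperbolic surface. For the degenerate visibility case $f(s,t)=\|p(s)-q(t)\|$, the same conclusion follows since $f^2$ is a quadratic in $(s,t)$ whose restriction to any axis‑parallel line is again a nonnegative one‑variable quadratic.

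The step I expect to be the crux is the first one: justifying that the combinatorial type of $\SP(p,q)$ is genuinely constant over the whole closed rectangle $\Seg{a'b'}\times\Seg{c'd'}$, so that the two tangent vertices $u$ and $v$ are fixed points and, crucially, $u$ does not depend on $q$ and $v$ does not depend on $p$. This is exactly where the definition of the butterfly (through the hourglass $\Ho{ab}{cd}$ and the invariance of $\Sub$) does the work, together with the facts that shortest paths in a simple polygon turn only at reflex vertices and vary continuously with their endpoints; everything after that is elementary algebra and the description of a hyperbola as the distance profile of a point to a line.
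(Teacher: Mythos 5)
Your proof takes essentially the same approach as the paper: decompose $\|\SP(p,q)\|$ into the two tangent segments plus the (fixed) subpath $\Sub(p,q)$, invoke the butterfly definition to conclude that the tangent points and the subpath length are constants, and observe that a sum of two point-to-segment Euclidean distances plus a constant is a hyperbolic surface. You add detail the paper leaves implicit — the explicit parametrization showing each one-variable term is a hyperbola branch, and the degenerate case where $p$ and $q$ are mutually visible — but the core argument is identical.
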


\begin{proof}
	Fix a point $p \in \Ov{a'b'}$ and a point $q \in \Ov{c'd'}$.
	Let $k_1$ and $k_2$ be the two endpoints of $\Sub(p,q)$.
	Then $\| \SP(p,q) \| = \|pk_1\| + \| \Sub(p,q) \| + \|k_2q\|$.
	By the butterfly property, $k_1$, $k_2$, and $\| \Sub(p,q) \|$ are fixed 
	for all $p$ and $q$ in the domain.
	Therefore,  $\| \SP(p,q) \|$ is the sum of two $L_2$ distances plus a constant,
	which forms a hyperbolic surface.
\end{proof}

\begin{figure}[h]
	\centering
	\includegraphics[width=00.5\columnwidth]{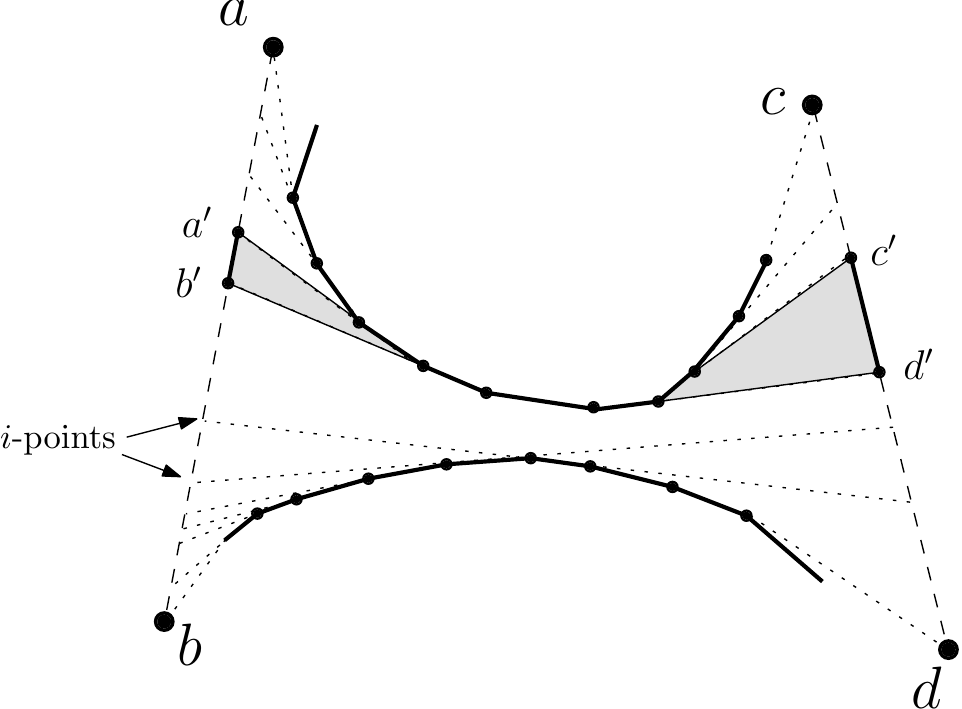}
	\caption{ An hourglass $\Ho{ab}{cd}$ with a butterfly $\But{a' b'}{c' d'}$.}
	\label{fig:but}
\end{figure}

Consider an edge $e$ on a chain of the hourglass $\Ho{ab}{cd}$. 
Extend $e$ to a line and find its intersection with $\Ov{ab}$ and $\Ov{cd}$ (as shown in Figure~\ref{fig:but}). 
We call such an intersection point an \emph{$i$-point}. 
Note that the number of $i$-points on each of the segments $\Ov{ab}$ and $\Ov{cd}$ is $O(k)$.

\begin{obs} \label{obs:$i$-pointbutt}
	Any two consecutive $i$-points $i_1,i_2 \in \Ov{ab}$ 
	and any two consecutive $i$-points $j_1,j_2 \in \Ov{cd}$
	form a butterfly $\But{i_1i_2}{j_1j_2}$.
\end{obs}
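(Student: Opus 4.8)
The plan is to reduce the claim to its defining condition: by the \emph{definition} of a butterfly, it suffices to show that $\Sub(p,q)$ is one and the same polygonal path for every $p \in \Ov{i_1 i_2}$ and every $q \in \Ov{j_1 j_2}$; once that is established, the region bounded by $\Ov{i_1 i_2}$, $\Ov{j_1 j_2}$, $\SP(i_1,j_1)$ and $\SP(i_2,j_2)$ is by definition the butterfly $\But{i_1 i_2}{j_1 j_2}$ (with the obvious matching of labels on the two sides). So the whole content is to argue that $\Sub(p,q)$ cannot change as $p$ moves inside one sub-interval of $\Ov{ab}$ delimited by consecutive $i$-points, and symmetrically for $q$ on $\Ov{cd}$.

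First I would recall the structure of $\SP(p,q)$ noted just before the definition of a butterfly: when $p$ and $q$ do not see each other inside $K$, one has $\SP(p,q) = \Ov{p v_1} \oplus \Sub(p,q) \oplus \Ov{v_\ell q}$, where $\Sub(p,q) = (v_1, \dots, v_\ell)$ is a sequence of reflex vertices of $K$, each lying on a chain of $\Ho{ab}{cd}$, and the first and last segments are tangent to the chains at $v_1$ and $v_\ell$ (meaning both chain edges incident to $v_1$ lie weakly on one side of $\overleftrightarrow{p v_1}$, and symmetrically at $v_\ell$). Thus $\Sub(p,q)$ is a subpath of the union of the at most two chains carrying it; since a subpath of a fixed polygonal chain is determined by its two endpoints, it is enough to show that $v_1$ depends only on the sub-interval of $\Ov{ab}$ containing $p$, and $v_\ell$ only on the sub-interval of $\Ov{cd}$ containing $q$.

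To see this, fix $q$ and vary $p$ in $\Ov{ab}$. Let $u$ and $w$ be the chain neighbours of $v_1$, with $u$ the one nearer the $\Ov{ab}$-side endpoint of the chain. Tautness of $\SP(p,q)$ at $v_1$ forces $p$ to lie on the side of $\overleftrightarrow{v_1 w}$ from which the turn at $v_1$ is convex; the instant $p$ crosses $\overleftrightarrow{v_1 w}$ the points $p, v_1, w$ become collinear and $v_1$ is shortcut out of $\Sub$, while $u$ enters $\Sub$ exactly when $p$ crosses $\overleftrightarrow{u v_1}$. Both $\overleftrightarrow{v_1 w}$ and $\overleftrightarrow{u v_1}$ are lines through chain edges of $\Ho{ab}{cd}$, hence they meet $\Ov{ab}$ at $i$-points; therefore $v_1$ is constant on each open sub-interval of $\Ov{ab}$ between consecutive $i$-points. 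The possibility that $p$ sees $q$ (so $\Sub(p,q) = \emptyset$) is handled the same way, since the visibility segment $\Ov{pq}$ becomes tangent to a chain precisely when $p$ reaches an $i$-point, so "whether $p$ sees $q$" is also constant across a sub-interval; and the border cases where a would-be $v_1$ is a chain endpoint $a,b,c,d$ do not arise for $p$ in the interior of $\Ov{ab}$, as they would force the taut path to backtrack along $\Ov{ab}$. The symmetric argument on $\Ov{cd}$ gives the same for $v_\ell$, and combining the two shows $\Sub(p,q)$ is identical for all $p \in \Ov{i_1 i_2}$, $q \in \Ov{j_1 j_2}$.

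I expect the only real friction to be in making precise the claim that $\Sub(p,q)$ is a subpath determined by the pair $(v_1,v_\ell)$, which requires distinguishing the three hourglass shapes of Figure~\ref{fig:OpenH}. In the open case $\Sub$ lies on a single chain and the claim is immediate, but in the closed and intersecting cases $\Sub$ can run along two chains meeting at the "waist", and one must check that the two chain pieces concatenate into a single polygonal path traversed in a forced way by \emph{every} $\SP(p,q)$, so that a subpath between prescribed endpoints is unambiguous. A short case analysis on the hourglass type, using that the waist portion is a fixed common subpath shared by all such shortest paths, closes this gap; everything else is the elementary tangency bookkeeping above.
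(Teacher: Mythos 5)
Your core reduction---that the observation amounts to showing the first and last tangent vertices $v_1$ and $v_\ell$ are locally constant in $p$ and $q$ respectively, and that each changes exactly when the tangent line coincides with a chain-edge extension, i.e.\ at an $i$-point---is the right reading of why $i$-points are defined the way they are, and the tangent-rotation bookkeeping you give for that case is sound. The paper states the observation without proof, and your reduction to the pair $(v_1,v_\ell)$, together with the acknowledged case analysis on hourglass type (your fix for the closed and intersecting cases---that the waist is a fixed common subpath shared by every $\SP(p,q)$, so a subpath between prescribed endpoints on it is unambiguous---is the correct way to close that), supplies what is missing for the generic, ``$\Sub(p,q)$ stays non-empty'' situation.

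The visibility step, however, does not hold up. You claim ``the visibility segment $\Ov{pq}$ becomes tangent to a chain precisely when $p$ reaches an $i$-point,'' but tangency of $\Ov{pq}$ to the chain means that $p$, some chain vertex $v$, and $q$ are collinear. For a fixed $q$, the critical position is $p = \overleftrightarrow{vq}\cap\Ov{ab}$; since $q$ lies on $\Ov{cd}$ and not on the chain, $\overleftrightarrow{vq}$ is not the extension of any chain edge, so this $p$ is in general not an $i$-point. Equivalently, if one parametrizes $p$ and $q$ linearly by $(s,t)\in[0,1]^2$, the collinearity locus $\{(s,t): p(s),v,q(t) \text{ collinear}\}$ is the zero set of a bilinear form---a hyperbolic arc---which cuts across the $i$-point grid rather than running along its lines. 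Consequently $\Sub(p,q)$ can pass from $\emptyset$ to $\{v\}$ strictly inside a rectangle $[i_1,i_2]\times[j_1,j_2]$ in an open hourglass, which is exactly what the butterfly definition forbids, so your argument that the visibility status ``is constant across a sub-interval'' fails. This is also a genuine soft spot in the observation as stated; it does not break the downstream use, because $\|\SP(p,q)\|$ is a sum of at most two Euclidean distances plus a constant on each side of the visibility locus and the two expressions agree on it (so the $c$-point computation still yields a correct free-space boundary), but a proof of the observation as written cannot simply assert that the transition aligns with $i$-points.
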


Consider two polygonal curves $P$ and $Q$ inside $K$.
Let $P_i = \Ov{ab}$ be a segment of $P$,
and $Q_j = \Ov{cd}$ be a segment of $Q$.
By dividing $\Ov{ab}$ and $\Ov{cd}$ at $i$-points,
the corresponding cell  $\cell{ij}$ in the free-space diagram 
is decomposed into $O(k^2)$ subcells,
where each subcell corresponds to a butterfly
(see Figure~\ref{fig:c-pointschains}).

Let $f(p,q) = \| \SP(p,q) \|$ be a function
defined over all $(p,q) \in [a,b] \times [c,d]$.
The intersection of the plane $z=\eps$ with the function $f$ determines the 
boundary of $\Feps$ inside the cell $\cell{ij}$.
The boundary of $\Feps$ crosses the boundary of each subcell 
in at most two points, each of which is called a \emph{$c$-point}.
The following two lemmas describe the structure of the free space inside  $\cell{ij}$.

\begin{lemma}
	Any two consecutive $c$-points on the boundary of $\Feps$ 
	are connected with a hyperbolic arc, 
	and the line segment connecting the two endpoints of the arc lies completely inside $\Feps$.
\end{lemma}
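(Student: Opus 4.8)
The statement has two parts, and I would handle them in the order they are stated. Recall the cell $\cell{ij}$ has been decomposed by $i$-points into $O(k^2)$ subcells, each corresponding to a butterfly, and on each butterfly $f(p,q)=\|\SP(p,q)\|$ is a hyperbolic surface by Lemma~\ref{lemma:ButFhyperbolic}. The boundary of $\Feps$ inside the cell is $\{f=\eps\}$, and $c$-points are where this level set crosses subcell boundaries.

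\medskip

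\textbf{Part 1: consecutive $c$-points are joined by a hyperbolic arc.} First I would argue that between two consecutive $c$-points, the boundary curve $\{f=\eps\}$ stays within a single subcell (equivalently, a single butterfly). This is essentially by definition: a $c$-point is exactly a crossing of a subcell boundary, so if the portion of $\{f=\eps\}$ between two consecutive $c$-points left the subcell it would create another $c$-point, a contradiction. Within that one butterfly, $f$ restricted to the subcell is the hyperbolic surface of Lemma~\ref{lemma:ButFhyperbolic}, whose form is $f(p,q)=\|p k_1\|+\|k_2 q\|+\text{const}$ for fixed points $k_1,k_2$ and a fixed constant. Parametrizing $p$ along $\Ov{ab}$ by a scalar $s$ and $q$ along $\Ov{cd}$ by a scalar $t$, this is $\sqrt{A_1 s^2+B_1 s+C_1}+\sqrt{A_2 t^2+B_2 t+C_2}+\text{const}$; setting it equal to $\eps$ and clearing the two square roots by squaring twice yields a single polynomial equation in $s,t$ of degree two in each variable — a conic. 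So the arc is a piece of a conic; I would then note (or cite the reasoning in the proof of Lemma~\ref{lemma:ButFhyperbolic} together with the convexity/monotonicity facts about $\Feps$ recalled from~\cite{WenkC08a}) that this conic is of hyperbolic type, giving a \emph{hyperbolic arc}. The cleanest phrasing: the level set of a sum of two Euclidean distances to fixed points, as the two source points move along two fixed lines, is the intersection of the graph $z=f$ (a hyperbolic surface) with the horizontal plane $z=\eps$, hence a hyperbolic arc.

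\medskip

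\textbf{Part 2: the chord lies inside $\Feps$.} For the second claim — that the segment connecting the two endpoints of the arc lies completely inside $\Feps$ — I would use convexity of $f$ on the butterfly. Since $f(p,q)=\|p k_1\|+\|k_2 q\|+\text{const}$ is a sum of (convex) norms of affine functions of $(s,t)$ plus a constant, $f$ is a convex function on the rectangular domain $[a',b']\times[c',d']$ of the butterfly. Both endpoints of the arc lie on $\{f=\eps\}$, and the region $\{f\le\eps\}$ (which is $\Feps$ restricted to this butterfly) is therefore convex; consequently the straight segment between the two arc endpoints lies in $\{f\le\eps\}\subseteq\Feps$. The only subtlety is that the two arc endpoints might sit on the boundaries of \emph{different} subcells, so the chord could in principle pass through several butterflies; but since the whole arc between them lies in one subcell (as argued in Part 1), its two endpoints lie in the closure of that single subcell, and the convexity argument applies within that one butterfly's domain. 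I would state this explicitly to close the gap.

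\medskip

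\textbf{Anticipated main obstacle.} The routine-but-careful part is verifying that the conic obtained by squaring twice is genuinely of hyperbolic type rather than degenerate or elliptic/parabolic — this is where I expect to either invoke the structural results of~\cite{WenkC08a} (connectedness and $xy$-monotonicity of the geodesic free space inside a cell) or do a short direct computation. The other point needing care, as noted, is the ``different subcells'' issue for the chord: one must observe that ``consecutive $c$-points'' already forces the connecting arc to live in one subcell, so that no additional casework on chords crossing subcell boundaries is needed. Everything else (convexity of a sum of norms, level sets of convex functions are convex sublevel sets) is standard and I would not grind through it.
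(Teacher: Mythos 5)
Your proof is correct and follows the same route as the paper, whose proof is simply the one-line citation ``This follows from Lemma~\ref{lemma:ButFhyperbolic}.'' You have spelled out exactly what that citation entails: the arc between consecutive $c$-points lies in one butterfly, where $f$ is the hyperbolic surface of Lemma~\ref{lemma:ButFhyperbolic}, so the $\eps$-level curve is a hyperbolic arc; and since $f$ is convex on the butterfly's rectangular domain (a sum of norms of affine functions plus a constant), its sublevel set $\{f\lee\eps\}$ is convex there, which gives the chord claim.
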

\begin{proof}
	This follows from Lemma~\ref{lemma:ButFhyperbolic}.
\end{proof}

\begin{lemma}
	The number of $c$-points inside a cell is $O(k)$. 
\end{lemma}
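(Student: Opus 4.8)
The plan is to improve on the trivial bound of $O(k^2)$ $c$-points --- which follows at once from the decomposition of $\cell{ij}$ into $O(k^2)$ subcells, each contributing at most two $c$-points on its boundary --- by exploiting the global shape of the free space inside the cell. The crucial input is the result of Cook and Wenk~\cite{WenkC08a} that $\Feps$ restricted to $\cell{ij}$ is connected and $xy$-monotone, meaning that its intersection with any axis-parallel line is a single (possibly empty) segment. First I would observe that the subcell decomposition of $\cell{ij}$ is exactly the grid formed by the $O(k)$ vertical lines through the $i$-points on $P_i$ and the $O(k)$ horizontal lines through the $i$-points on $Q_j$, together with the four edges $L_{ij}$, $L_{i+1,j}$, $B_{ij}$, $B_{i,j+1}$ of $\cell{ij}$; thus there are only $O(k)$ grid lines in all, and every $c$-point lies on one of them. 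The whole proof then reduces to showing that each such grid line carries at most a constant number of $c$-points.

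The key step is the claim that $\partial\Feps$ meets any axis-parallel line in at most two points inside $\cell{ij}$. Given $xy$-monotonicity this is nearly immediate: for a vertical line $\ell$, the set $\Feps \cap \ell$ is a single segment, so its two endpoints are the only points of $\partial\Feps$ on $\ell$ --- provided $\partial\Feps$ contains no nondegenerate vertical subsegment. To rule that out I would invoke the hyperbolic-surface structure of Lemma~\ref{lemma:ButFhyperbolic}: within any butterfly one has $\| \SP(p,q) \| = \|pk_1\| + \| \Sub(p,q) \| + \|k_2q\|$, so restricting to a vertical line $p = p_0$ gives a constant plus $\|k_2q\|$, a strictly convex function of $q$ along the segment $Q_j$; hence $\| \SP(p_0,q) \| = \eps$ has at most two solutions and the level curve contains no vertical segment. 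The symmetric argument works on horizontal lines. Combining, each of the $O(k)$ grid lines (and each of the four axis-parallel cell edges) meets $\partial\Feps$ in at most two points, so there are $O(k)$ $c$-points in total.

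I expect the only real obstacle to be bookkeeping: handling the case where $\Feps$ lies in the interior of $\cell{ij}$, so that $\partial\Feps$ is a closed curve rather than a union of arcs ending on $\partial\cell{ij}$, and handling $c$-points that fall on a grid vertex and hence bound up to four subcells at once. In both situations the count is unaffected --- the $xy$-monotonicity bound on intersections with a line does not care whether the boundary is open or closed, a grid vertex is counted once, and the four cell edges contribute only $O(1)$ further points. Finally, since by the preceding lemma $\partial\Feps$ inside $\cell{ij}$ is an alternating sequence of $c$-points and hyperbolic arcs, the same bound shows that the free-space boundary inside a cell has combinatorial complexity $O(k)$, which is what is needed downstream for computing and propagating reachability.
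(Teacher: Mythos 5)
Your proof is correct and rests on the same two facts the paper uses: the Cook--Wenk result that the free space inside a cell is $xy$-monotone, and the $O(k)$ axis-parallel grid lines induced by the $i$-points. The paper phrases the count as ``an $xy$-monotone curve crosses $O(n+m)$ cells of an $n\times m$ grid'' while you count at most two crossings per grid line; these are the same bound, and your extra care in ruling out degenerate boundary segments via the hyperbolic-surface structure of Lemma~\ref{lemma:ButFhyperbolic} is a reasonable tightening of a detail the paper's one-line citation handles implicitly.
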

\begin{proof}
	This follows from the fact that 
	any $xy$-monotone curve intersecting an $n \times m$ (non-uniform) grid 
	can cross at most $2(n+m)$ cells of the grid.
\end{proof}

\begin{figure}[h]
	\centering
	\includegraphics[width=0.4\columnwidth]{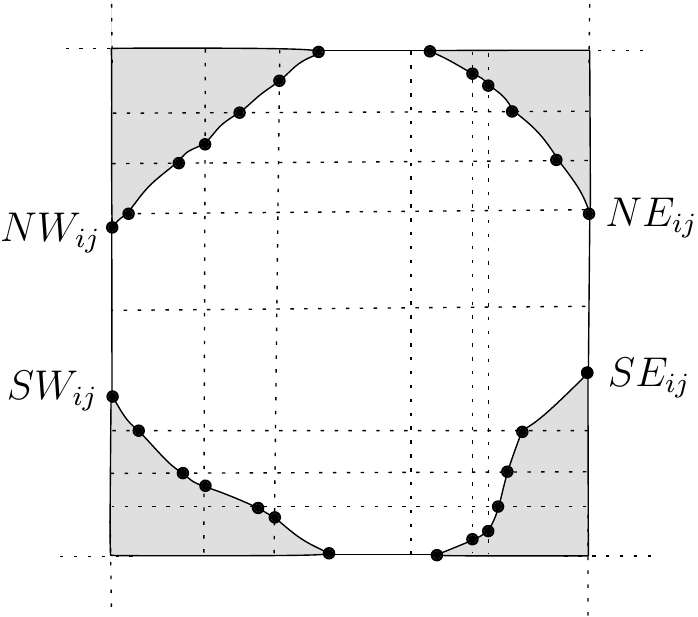}
	\caption{ The free space inside a cell. }
	\label{fig:c-pointschains}
\end{figure}

\paragraph{Computing c-points.}
Let $\Lin$ denote the line as a result of 
extending line segment $\Seg{cd}$.
Our algorithm for computing $c$-points is based on the following observation.

\begin{obs} \label{lemma:shrinkexpand}
	Consider an hourglass $\Ho{ab}{cd}$ and a fixed $\eps > 0$.
	Let $p$ be a point moving on $\Ov{ab}$,
	and let $q$ be a point that moves on the 
	line $\Lin$  to maintain 
	geodesic distance $\eps$ from $p$.
	When $p$ moves monotonically from $a$ to $b$,  
	$q$ has at most one directional change along $\Lin$.
\end{obs}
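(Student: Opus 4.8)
The plan is to reduce the claim to the convexity of a single planar region. Fix an affine parametrization of the segment $\Ov{ab}$ by a real variable $p$ ranging over an interval $I_p$, and an affine parametrization of the component of $\Lin\cap K$ containing the trajectory by a real variable $q$ ranging over an interval $I_q$, and put
\[
	R = \set{(p,q) \in I_p \times I_q \ | \ d_K(p,q) \lee \eps},
\]
identifying $p$ and $q$ with the corresponding points of $K$. For each fixed $p$ the function $d_K(p,\cdot)$ is convex along $\Lin$ (this is the same property that makes the geodesic free space $xy$-monotone; see~\cite{WenkC08a}), so each vertical fibre of $R$ is an interval $[\underline\phi(p),\bar\phi(p)]$, and the tracked point $q$, which satisfies $d_K(p,q)=\eps$, is always an endpoint of this fibre. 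Granting that $R$ is convex, its upper boundary $\bar\phi$ is a concave function and its lower boundary $\underline\phi$ is a convex function; moreover the tracked point coincides with one of these two functions throughout the whole motion, since a continuous switch from $\bar\phi$ to $\underline\phi$ would require $\underline\phi(p_0)=\bar\phi(p_0)$ at an interior $p_0$, and a convex planar region whose fibre over an interior point of its projection is a single point must be contained in a line, in which case the tracked point is affine in $p$. A function that is convex, concave, or affine on an interval is monotone or changes direction exactly once, which is the assertion.

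So everything reduces to proving that $R$ is convex, i.e.\ that $d_K(\cdot,\cdot)$ is jointly convex on $\Ov{ab}\times(\Lin\cap K)$. First I would subdivide $\Ov{ab}$ and $\Lin\cap K$ at their $i$-points; by Observation~\ref{obs:$i$-pointbutt}, which holds verbatim for any two line segments inside $K$, this subdivides $I_p\times I_q$ into finitely many axis-parallel rectangles, each corresponding to a butterfly $\But{a'b'}{c'd'}$. On the rectangle of such a butterfly, the butterfly property exploited in the proof of Lemma~\ref{lemma:ButFhyperbolic} gives $d_K(p,q)=\|pk_1\|+\|\Sub(p,q)\|+\|k_2q\|$ with $k_1$, $k_2$, and $\|\Sub(p,q)\|$ constant, so there $d_K$ is the sum of a convex function of $p$ alone and a convex function of $q$ alone, hence jointly convex.

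The hard part will be gluing these pieces into a globally convex function. For this I would use the elementary fact that a $C^1$ function whose convex domain is partitioned into finitely many convex pieces, on each of which the function is convex, is globally convex (restricted to any segment, its derivative is nondecreasing on each piece and continuous at the junctions, hence nondecreasing overall). So it suffices to show that $d_K$ is $C^1$ across the boundaries of the butterfly rectangles. On the interior of such a rectangle the gradient of $d_K$ equals $\bigl(\tfrac{p-k_1}{\|p-k_1\|},\ \tfrac{q-k_2}{\|q-k_2\|}\bigr)$, the pair of unit vectors from the first and the last vertex of $\SP(p,q)$ toward $p$ and toward $q$ respectively. A rectangle boundary is crossed exactly when the combinatorial type of $\SP(p,q)$ changes, namely when a vertex is inserted at or deleted from an end of $\Sub(p,q)$, or when the path collapses to a single segment; in each of these transitions the three points involved become collinear, so the two one-sided expressions for the affected unit vector point the same way and agree. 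Carrying out this short case analysis is the one place needing care; once it is done, $d_K$ is $C^1$, hence jointly convex, $R$ is convex, and the reduction in the first paragraph concludes the argument.
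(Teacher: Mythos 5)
Your strategy reduces the claim to joint convexity of $d_K$ on $\Ov{ab}\times(\Lin\cap K)$; if that held, the observation would follow cleanly. But that conclusion contradicts the paper itself. Section~\ref{sec:ComputingFreeSpace} states that ``when distances are geodesic, the free space is not necessarily convex'' (citing~\cite{WenkC08a}), and Section~\ref{sec:decision} deliberately forms the \emph{convex hull} of the chains $NW_{ij}$ and $SE_{ij}$ before computing tangents---a step that would be a no-op if every geodesic cell were already convex. So either the cited literature and the rest of the chapter are mistaken, or your convexity argument has a gap, and I believe it is the latter.

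The gap is in the partition used for the $C^1$ gluing. You rely on Observation~\ref{obs:$i$-pointbutt} to claim that the $i$-points cut $I_p\times I_q$ into a grid of axis-parallel rectangles, on each of which $\Sub(p,q)$ is fixed and $d_K$ is separably convex. This fails as soon as some pairs $(p,q)$ are mutually visible: the locus where $\SP(p,q)$ degenerates to a straight segment, i.e.\ where $\Sub(p,q)$ becomes empty, is governed by the collinearity of $p$, a chain vertex $v$, and $q$, a condition which is \emph{bilinear} in the affine parameters of $p$ and $q$ and generically traces a hyperbola across the cell, not a horizontal or vertical $i$-point line. The ``mutually visible'' regime is therefore neither a grid rectangle nor, in general, a convex piece, and the elementary gluing lemma does not apply to the partition as you describe it. What the observation actually needs is much weaker than convexity and is already on record: Cook and Wenk's result that the cell $R$ is connected, $x$-monotone, and $y$-monotone. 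From monotonicity, the boundary of $R$ meets every vertical line and every horizontal line at most twice, so it decomposes at its unique $p$-extrema and $q$-extrema into four arcs, each monotone in both coordinates; the trajectory of $\eta_1$ (resp.\ $\eta_2$) is the concatenation of two of these arcs, hence first decreasing then increasing in $q$ (resp.\ first increasing then decreasing), which is exactly ``at most one directional change.'' I would rebuild the argument on $xy$-monotonicity and drop the convexity claim and the gluing step entirely.
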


Observation~\ref{lemma:shrinkexpand} enables us to compute all $c$-points inside a cell
by two linear walks. Details are provided in Algorithm~\ref{alg:ComputeFD}.
In this algorithm, $h_c$ refers to a point on $\Ov{ab}$ which is closest to $c$,
$p_1 \prec_{\Dir{ab}} p_2$ means that 
$p_1$ is before $p_2$ in direction $\Dir{ab}$,
and $\PP(p,q)$ refers to the unique point in the free-space diagram 
corresponding to a point $p \in P$ and $q \in Q$.
The output of the algorithm is four connected $c$-point chains
as depicted in Figure~\ref{fig:c-pointschains}.

\renewcommand{\algorithmicrequire}{\textbf{Input:}}

\begin{algorithm} [h]
\caption {\sc computing c-points inside a cell } \label{alg:ComputeFD}
\algsetup{indent=1.5em}
\begin{algorithmic}[1]
	\vspace{0.5em}
	\baselineskip=1\baselineskip
	\REQUIRE  An hourglass $\Ho{ab}{cd}$ corresponding to a cell $\cell{ij}$ and a fixed $\eps > 0$.
	
	\STATE \label{lp:extend} Compute $i$-points on $\Ov{ab}$ and $\Ov{cd}$.
	\STATE \label{lp:q1q2} Find $q_1,q_2 \in \Lin$ s.t. $\| \SP(a,q_1) \| = \|\SP(a,q_2) \| = \eps $.
	\STATE \label{lp:initialization0} Set $\eta_1 = q_1$ and  $\eta_2 = q_2$, assuming that $q_1 \prec_{\Dir{cd}} q_2$.
	\STATE \label{lp:initialization} Set $\mu = a$.
	
	\WHILE { $\mu$ has not reached  $b$} 
		\STATE \label{lp:move} Move $\mu$ in direction $\Dir{ab}$, and 
			move $\eta_1$ on $\Lin$ s.t. $\| \SP(\mu,\eta_1) \| = \eps $ 
			until either $\mu$ or $\eta_1$ reaches an $i$-point.	
		\IF{$F(\mu,\eta_1) \in \cell{ij}$}
		\STATE Insert $F(\mu,\eta_1)$ into $SW_{ij}$ if  $\mu \prec_{\Dir{ab}} h_c$, 
			otherwise  insert $F(\mu,\eta_1)$ into $NW_{ij}$.
		\ENDIF
	\ENDWHILE
	
	\STATE Repeat lines 4--8 with $\eta_2$ instead of $\eta_1$ to obtain $NE_{ij}$ and $SE_{ij}$.
	\RETURN $NW_{ij}$, $SW_{ij}$, $NE_{ij}$, and $SE_{ij}$.
\end{algorithmic}
\end{algorithm}



\section{The Decision Problem} \label{sec:decision}

In this section, we show how the decision version of 
our \Frechet distance problem can be solved efficiently.
We use the notation of Chapter \ref{ch:speedFD}.
A path $\CP \subset \BNM$ is called {\em slope-constrained\/}
if for any point $(s,t) \in \CP \cap \cell{ij}$,
the slope of $\CP$ at $(s,t)$ is within 
$\minS{ij} = {{\umin{Q_j}}/{\umax{P_i}}}$ and $\maxS{ij} = {{\umax{Q_j}}/{\umin{P_i}}}$.
A point $(s,t) \in \Feps$ is called {\em reachable\/} 
if there is a slope-constrained path from $(0,0)$ to $(s,t)$ in $\Feps$.
As shown in previous chapter, 
 $\distFS(P,Q) \lee \eps$ if and only if the point $(n,m)$ is reachable.

Reachable points on the entry side of each cell
form a set of $O(n^2)$ disjoint intervals, 
each of which is called a \emph{reachable interval}(as in previous chapter).
To decide if  $(n,m)$ is reachable,
the general approach is to propagate the reachability information
one by one, in row-major order, 
from $\cell{0,0}$ to $\cell{nm}$.
The propagation in each cell $\cell{ij}$ involves
projecting the set of reachable intervals 
from the entry side of the cell to its exit side.

Since the free space inside a cell is not necessarily convex in our problem, 
the projection can be affected by the boundary of $\Feps$ inside a cell
(see Figure~\ref{fig:merging}). 
We use the $c$-point information computed in the previous section 
to compute projections.
Indeed, only $c$-points on the convex hull of $NW_{ij}$ and $SE_{ij}$ 
are needed to compute correct projections. 
Since $c$-points inside each chain are stored in a sorted $x$ (and $y$) order,
the convex hull of the chains can be computed using a Graham scan in $O(k)$ time.
We call the convex hull of $NW_{ij}$ (resp., $SE_{ij}$)
the \emph{left chain} (resp., the \emph{right chain}) of $\cell{ij}$.

Given a point $p \in \entry{ij}$, Algorithm~\ref{alg:projection} 
computes the projection of 
$p$ onto $\exit{ij}$ in $O(\log k)$ time.

\begin{lemma} \label{lemma:project-point}
	Given a point $p \in \entry{ij}$, Algorithm~\ref{alg:projection} 
	computes the projection of 
	$p$ onto $\exit{ij}$ in $O(\log k)$ time.
\end{lemma}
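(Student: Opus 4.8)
The plan is to prove that Algorithm~\ref{alg:projection} returns exactly $\proj{ij}(p)$, the set of points on $\exit{ij}$ reachable from $p$ by a slope-constrained path contained in $\Feps\cap\cell{ij}$, and that it does so in $O(\log k)$ time given the left chain and the right chain of $\cell{ij}$. First I would invoke the structural description of the free space obtained above: inside $\cell{ij}$ the complement of $\Feps$ is a union of regions incident to the four edges of the cell, and the analysis above shows that only the obstacle bounded by $NW_{ij}$ (incident to the top-left corner) and the obstacle bounded by $SE_{ij}$ (incident to the bottom-right corner) can separate a point of $\entry{ij}$ from a point of $\exit{ij}$; the obstacles bounded by $SW_{ij}$ and $NE_{ij}$ are incident to the entry side and the exit side respectively and are never crossed by a monotone, positive-slope path running from one side to the other. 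Moreover, as already noted, for the purpose of computing projections these two blocking obstacles may be replaced by their convex hulls --- the left chain and the right chain --- since a slope-constrained path need only skirt an obstacle along the obstacle's convex hull.

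With this reduction in hand, the core of the argument is a description of $\proj{ij}(p)$ as a clipped cone. Starting from the ``unconstrained'' admissible cone at $p$, bounded by the rays of slopes $\minS{ij}$ and $\maxS{ij}$, I would show that the supremum of slopes of admissible paths leaving $p$ equals $\sigma_{\ell}:=\min(\maxS{ij},\tau_{\ell})$, where $\tau_{\ell}$ is the slope of the supporting line of the left chain through $p$ having the left chain in its upper half-plane, and symmetrically that the infimum equals $\sigma_{r}:=\max(\minS{ij},\tau_{r})$, where $\tau_{r}$ is the slope of the supporting line of the right chain through $p$ having the right chain in its lower half-plane; when a chain is empty, does not lie ahead of $p$, or does not block the corresponding bounding ray, the clip is vacuous and $\tau_{\ell}$ (resp.\ $\tau_{r}$) is taken to be $+\infty$ (resp.\ $-\infty$). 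One direction is immediate: any slope-constrained path in $\Feps$ starting at $p$ has slope in $[\minS{ij},\maxS{ij}]$ at $p$ and, to remain in $\Feps$, cannot cross either chain, so its initial slope lies in $[\sigma_{r},\sigma_{\ell}]$. For the converse, given an exit point $q$ whose connecting segment $\Seg{pq}$ has slope in $[\sigma_{r},\sigma_{\ell}]$ but leaves $\Feps$, I would bend the path once along the convex hull of whichever chain it meets and continue monotonically; using the $xy$-monotonicity of $\Feps$ and the convexity of the chains, the resulting path stays in $\Feps$, is slope-constrained, and ends at $q$. Hence $\proj{ij}(p)$ is precisely the sub-arc of $\exit{ij}$ --- viewed as the polyline $B_{i,j+1}\cup L_{i+1,j}$ ordered by $\lei$ --- lying between the intersection of the slope-$\sigma_{r}$ ray with $\exit{ij}$ and the intersection of the slope-$\sigma_{\ell}$ ray with $\exit{ij}$; this is a single contiguous interval because the two obstacles sit in opposite corners of the cell. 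Algorithm~\ref{alg:projection} computes exactly these two intersection points, returning $\emptyset$ when $\sigma_{r}>\sigma_{\ell}$, so it is correct.

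For the time bound, recall that the left and right chains are convex polygons on $O(k)$ vertices, built once per cell by the Graham scan described above in $O(k)$ time; this cost is part of the per-cell preprocessing and is not charged to an individual projection. Given a convex polygon and an external point $p$, each supporting line through $p$ is found by the standard $O(\log k)$ binary search over the (already sorted) vertices of the chain, so computing $\tau_{\ell}$ and $\tau_{r}$ takes $O(\log k)$ time; forming $\sigma_{\ell},\sigma_{r}$ from $\maxS{ij},\minS{ij}$ and intersecting the two resulting rays with the two edges of $\exit{ij}$ takes $O(1)$ time. Thus one projection costs $O(\log k)$. I expect the main obstacle to be not the timing but the correctness of the ``only the two convex hulls matter'' reduction together with the accompanying case analysis: one must verify that concavities of the $NW_{ij}$ and $SE_{ij}$ obstacles cannot be exploited by a slope-constrained monotone path, and handle the degenerate hourglass configurations (closed and intersecting hourglasses, $p$ lying on a chain, a chain not facing $p$) under which one or both tangents are undefined or vacuous --- this is where most of the care goes, after which the $O(\log k)$ bound follows routinely.
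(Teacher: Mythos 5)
Your time-complexity argument is exactly the paper's proof: the paper's entire proof of this lemma is the two sentences that finding each tangent takes $O(\log k)$ time by binary search over the already-sorted convex chain, and the rest is $O(1)$. You go considerably further by attempting to also establish the \emph{correctness} of Algorithm~\ref{alg:projection}, which the paper takes for granted. The setup (only the $NW$ and $SE$ obstacles can separate entry from exit, their convex hulls suffice, the tangent lines clip the slope cone) reflects the paper's stated intent and is plausible.

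However, there is a genuine gap in the correctness half. You prove that the \emph{initial} slope of any admissible path leaving $p$ lies in $[\sigma_r,\sigma_\ell]$, and then assert ``Hence $\proj{ij}(p)$ is precisely the sub-arc \ldots between the slope-$\sigma_r$ and slope-$\sigma_\ell$ rays.'' That inference does not follow: bounding the initial slope does not bound the exit point, because after the initial segment the path may bend. Concretely, when $\tau_\ell<\maxS{ij}$ and the tangent $t_\ell$ touches the left chain at an \emph{interior} point $c$ (rather than at the chain's endpoint on $\exit{ij}$), a slope-constrained path can go from $p$ along $t_\ell$ to $c$ and then hug the convex chain with slope increasing from $\tau_\ell$ up toward $\maxS{ij}$; the thin wedge between $t_\ell$ and the chain past $c$ lies in $\Feps$, and this hugging path exits the cell at a point strictly \emph{before} (in the $\lei$ order) the intersection of the slope-$\sigma_\ell$ ray with $\exit{ij}$, including possibly the chain's own endpoint on $\exit{ij}$. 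Your ``converse'' argument (bend once to certify reachability of in-cone exits) is fine, but you have only shown that the cone is a \emph{subset} of $\proj{ij}(p)$, not that it equals it. To repair the direct direction you would need either a geometric argument that rules out the hugging path (which does not seem to hold in general for $\minS{ij}\le\tau_\ell<\maxS{ij}$), or a different, more careful description of the reachable set that accounts for paths detaching from the tangent at the chain's far vertex.

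Note, separately, that the paper itself never discharges this correctness obligation either, so this weakness is not unique to your proposal; but since the lemma states that the algorithm ``computes the projection,'' a complete proof must settle it, and yours currently does not.
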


\begin{proof}
	Finding each of the two tangents in Line~1 takes $O(\log k)$ time using binary search.
	The rest of the algorithm takes constant time.
\end{proof}

\begin{algorithm} [h]
\caption {\sc projection function} \label{alg:projection}
\algsetup{indent=1.5em}
\begin{algorithmic}[1]
	\vspace{0.5em}
	\baselineskip=1\baselineskip
	
	\REQUIRE A point $p \in \entry{ij}$
	\STATE Let $t_\ell$ and $t_r$ be tangents (if they exist) from $p$ to the left and to the right chain of $\cell{ij}$, respectively.
	\STATE Let $a_1$ and $a_2$ be the projection of $p$ in directions $t_\ell$ and $\maxS{ij}$, respectively.	
	\STATE Let $b_1$ and $b_2$ be the projection of $p$ in directions $t_r$ and $\minS{ij}$, respectively.
	\RETURN $[\max(a_1, a_2), \min(b_1, b_2)]$
\end{algorithmic}
\end{algorithm}

\begin{figure}[h]
	\centering
	\includegraphics[width=0.7\columnwidth]{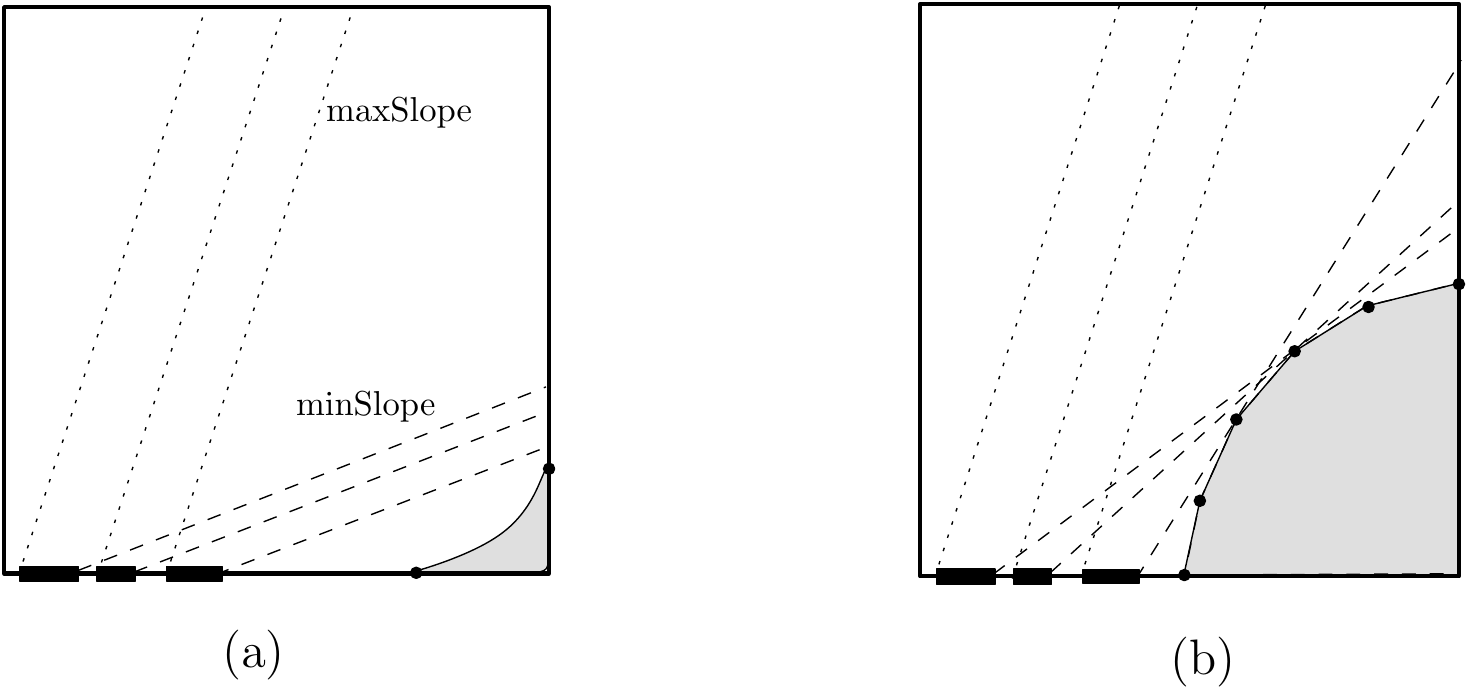}
	\caption{ 
		Projecting reachable intervals inside cells with convex and non-convex interior. 
	}
	\label{fig:merging}
\end{figure}

\begin{lemma} \label{thm:naive}
	Given a cell $\cell{ij}$ with $r_{ij}$ reachable intervals on its entry side, 
	Algorithm~\ref{alg:projection} projects all the reachable intervals onto the exit side of $\cell{ij}$
	in $O(k+r_{ij})$ time.
\end{lemma}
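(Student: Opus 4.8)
The plan is to avoid the $O(\log k)$ per-interval cost incurred by applying Algorithm~\ref{alg:projection} separately to the endpoints of each reachable interval, and instead to project all $r_{ij}$ intervals with a single coordinated scan that sweeps a source point along $\entry{ij}$ while advancing two pointers monotonically along the left and right chains of $\cell{ij}$. First I would build the left chain and the right chain, that is, the convex hulls of $NW_{ij}$ and $SE_{ij}$; since the $c$-points in each chain are already stored in sorted $x$- and $y$-order (Section~\ref{sec:ComputingFreeSpace}), a Graham scan computes each hull in $O(k)$ time. I would then take the reachable intervals on $\entry{ij}$ in $\lei$ order (they are maintained sorted by the data structure of Chapter~\ref{ch:speedFD}) and list their endpoints $p_1 \lei p_2 \lei \cdots$ in this order.

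The heart of the argument is a monotonicity claim: as $p$ moves monotonically along $\entry{ij}$ in $\lei$ order, the vertex of the left chain that the tangent $t_\ell$ from $p$ touches moves monotonically along that chain, and likewise for the tangent $t_r$ to the right chain; moreover, by the analogue of Observation~\ref{obs:order}, the output intervals $[\max(a_1,a_2),\min(b_1,b_2)]$ of Algorithm~\ref{alg:projection} are produced in $\lei$ order as well. Granting this, I would keep one pointer into the left chain and one into the right chain, and for each successive endpoint advance each pointer only forward, until it reaches the tangent vertex; the four candidate projections in directions $t_\ell,\maxS{ij}$ and $t_r,\minS{ij}$ are then each found in $O(1)$ time, with the degenerate cases in which a tangent does not exist (so that the corresponding $c$-point chain does not obstruct $p$) handled by falling back to the fixed-slope projection. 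Since each chain pointer only moves forward, the total pointer advancement is $O(k)$, the remaining work is $O(1)$ per endpoint, and a concluding linear scan merges overlapping projected intervals --- necessarily consecutive in the output by order preservation --- and splits the result between $B_{i,j+1}$ and $L_{i+1,j}$ at the corner $(i,j)$. This yields the claimed $O(k+r_{ij})$ bound.

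The main obstacle is establishing the monotonicity claim cleanly, especially across the reflex corner $(i-1,j-1)$ where $\entry{ij}$ turns from $L_{ij}$ to $B_{ij}$ (and the symmetric corner on the exit side). I would prove it by a non-crossing argument: for source points $p \lei p'$ on $\entry{ij}$, the tangent segments from $p$ and $p'$ to the convex left chain both have slope at least $\maxS{ij}$, hence cannot properly cross inside $\cell{ij}$, which together with convexity of the chain forces their touching vertices to appear in the same order as $p$ and $p'$; the symmetric statement with slope at most $\minS{ij}$ handles the right chain. It then remains to check the routine points that the $\max$ and $\min$ taken in Algorithm~\ref{alg:projection} preserve the $\lei$ order of endpoints, and that a source point that sees entirely past a chain can be absorbed into the scan without forcing a pointer to move backward; once the non-crossing argument is in hand, these are straightforward.
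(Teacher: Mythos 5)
Your proof takes essentially the same approach as the paper's. The paper uses what it calls a ``cross-ranking'' technique: it extends each edge of the left and right chains to a line, records where those lines cross the entry side to form a sorted list $T_2$, and merges $T_2$ with the sorted endpoint list $T_1$ so that each endpoint can read off its tangent vertex from its predecessor in $T_2$ in $O(1)$ time. Your two-pointer sweep, in which a chain pointer only advances forward as the source endpoint advances along $\entry{ij}$, is exactly the same merge implemented in-place; the monotonicity claim you state is precisely what justifies the paper's cross-ranking, and the non-crossing argument you sketch is a correct way to establish it (the paper uses it silently, deferring to Observation~\ref{obs:order} for the order-preservation of the resulting projected intervals). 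The only cosmetic difference is that the paper first clips the reachable intervals to the window $[a_1,a_2]$ cut out by the tangent lines of slope $\minS{ij}$ to the left chain and $\maxS{ij}$ to the right chain before projecting, so that every surviving point has a nonempty projection; you fold this into the degenerate-case handling instead, which does not affect the $O(k + r_{ij})$ bound.
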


\begin{proof}
Let $t_1$ be a line in direction $\minS{ij}$ 
tangent to the left chain of $\cell{ij}$,
and let $t_2$ be a line in direction $\maxS{ij}$ 
tangent to the right chain of $\cell{ij}$.
Let $a_1$ and $a_2$ be the intersection points of $t_1$ and $t_2$
with $\entry{ij}$, respectively.
For any point $p \in \entry{ij}$
that lies outside $[a_1, a_2]$, the projection of $p$ is empty. 
Therefore, we delete those portions of reachable intervals
that lie outside $[a_1, a_2]$. 
Now, the projection of each of the remaining intervals 
can be simply computed 
by projecting its two endpoints.

To avoid spending $O(\log k)$ time for projecting each endpoint, 
we use a cross-ranking technique. 
This reduces the total time needed for computing the tangents in Algorithm~\ref{alg:projection}.
Let $T_1$ be the list of all endpoints of the reachable intervals on $\entry{ij}$ in $\lei$ order.
We construct another list $T_2$ as follows. 
Perform an edge traversal of the right chain, starting with the rightmost edge.
Each edge encountered is extended to a line until it intersects the entry side at a point
which is then added to $T_2$.
We merge $T_1$ and $T_2$ (in $\lei$ order) to create a list $T$.
Each item in $T$ has a pointer to its corresponding $c$-point or reachable interval endpoint, and vice versa.
Moreover, each item in $T$ which comes from $T_1$
keeps a pointer to its preceding item in $T$ which comes from $T_2$.
Now, given a reachable interval endpoint $p$, to compute the tangent from $p$ to the right chain,
we simply find the item $t \in T$ corresponding to $p$, 
and then find the item in $T_2$ preceding $t$ in $T$. This item
uniquely determines the $c$-point at which the tangent from $p$ to the right chain occurs.
We process the left chain in the same way.
This enables us to compute each tangent in constant time, after the cross-ranking step,
leading to $O(k+r_{ij})$ total time for projecting all endpoints.
\end{proof}

Combined with the fact that $\sum_{0\lee i,j \lee n} r_{ij} = O(n^3)$ as in Chapter \ref{ch:speedFD},
the decision problem can be solved in $O(n^2(k+ n))$ time and $O(n^2+ k)$ space.

\begin{figure}[t]
	\centering
	\includegraphics[width=0.6\columnwidth]{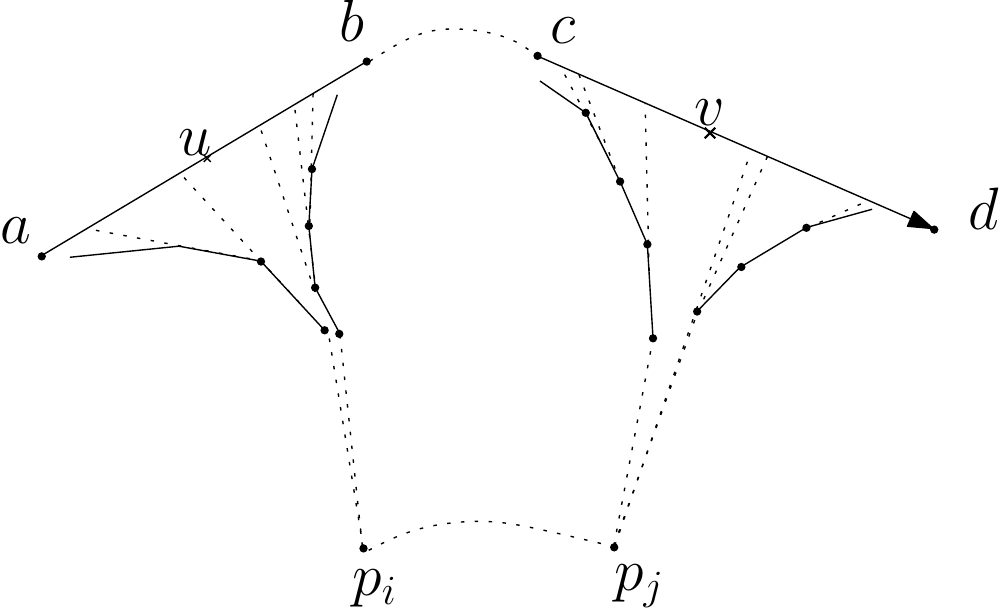}  
	\caption{Proof of Theorem \ref{thm:naivePoly}}
	\label{fig:PolyLast}
\end{figure}

\newcommand{\Funn}{\CF}

\begin{theorem} \label{thm:naivePoly}
   The exact value of $\distFS(P,Q)$ between curves $P$ and $Q$ inside polygon $K$
   can be computed in $O( kn^3)$ time.
\end{theorem}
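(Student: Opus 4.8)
The plan is to mirror the optimization strategy of Chapter~\ref{ch:speedFD}: exhibit a polynomial-size set of \emph{critical values} guaranteed to contain $\delta=\distFS(P,Q)$, and then binary search for $\delta$ among them using the decision algorithm of Section~\ref{sec:decision}, which runs in $O(n^2(k+n))$ time. As in Observation~\ref{obs:criticaltypes}, when $\eps$ grows from $0$ the free space $\Feps$ first contains a slope-constrained path from $(0,0)$ to $(n,m)$ at an $\eps$ of one of three types: (A) the smallest $\eps$ with $(0,0)\in\Feps$ or $(n,m)\in\Feps$; (B) the smallest $\eps$ at which some $\LF_{ij}$ or $\BF_{ij}$ first becomes non-empty; and (C) the smallest $\eps$ at which a new slope-constrained passage (a maximal run of min-slope or max-slope segments through consecutive cells) opens inside $\Feps$. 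The only structural change from Chapter~\ref{ch:speedFD} is that the underlying point-to-point distance is now the geodesic distance $d_K(\cdot,\cdot)$, which by Lemma~\ref{lemma:ButFhyperbolic} is piecewise hyperbolic over the $O(k)$ butterflies into which each cell is subdivided, instead of a single convex region per cell.

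First I would dispose of types (A) and (B). There are $O(1)$ critical values of type (A), each a geodesic distance between a fixed pair of endpoints of $P$ and $Q$, computable in $O(k)$ time by a shortest-path query inside $K$. For type (B), the relevant value for a cell edge is the minimum of $d_K(\cdot,\cdot)$ between the vertex and the segment that the edge represents; restricted to any one of the $O(k)$ butterflies of that cell this minimum is attained on a hyperbolic arc and is found in $O(1)$ time, so each of the $O(n^2)$ cell edges contributes one value obtained in $O(k)$ time --- in fact these values can be read off while running Algorithm~\ref{alg:ComputeFD} to build the free space inside the cell. This yields $O(n^2)$ type~(B) values in $O(kn^2)$ total time.

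The main work is type (C), where I would adapt Algorithm~\ref{alg:typeCFinal}. As there, for every ordered pair of vertices $(p_i,p_j)$ of $P$ I call \textsc{Compute-Potential-Chains}$(Q,t_{p_ip_j})$ to produce the $O(n)$ subchains $\alpha$ of $Q$ along which a new min-slope passage could close, and symmetrically with $P$ and $Q$ interchanged; the correctness of this reduction is unchanged from Lemma~\ref{TimeOfTypeC}. What differs is the per-chain subproblem: given $\alpha$ with first edge $e_1$ and last edge $e_k$, find $u\in e_1$, $v\in e_k$ with $d_K(p_i,u)=d_K(p_j,v)$ and travel time $t_{uv}=t_{p_ip_j}$, and take the smallest resulting $\|p_iu\|$ as a candidate critical value. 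Since $d_K(p_i,\cdot)$ along $e_1$ (resp.\ $d_K(p_j,\cdot)$ along $e_k$) is piecewise hyperbolic with $O(k)$ pieces, delimited by the $i$-points of $p_i$ on $e_1$ (resp.\ of $p_j$ on $e_k$), the equal-distance-plus-fixed-travel-time condition defines a monotone curve in the $(u,v)$-parameter square that meets only $O(k)$ of the $O(k^2)$ hyperbolic sub-patches; on each patch the condition reduces to a constant-size system of the type solved in the proof of Lemma~\ref{TimeOfTypeC}, solvable in $O(1)$ time. Hence each chain is processed in $O(k)$ time and yields a single critical value, so all $O(n^3)$ type~(C) critical values are produced in $O(kn^3)$ time.

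Finally I would merge and sort the $O(n^3)$ critical values of types (A), (B), (C) and binary search among them with the decision algorithm, locating $\delta$ with $O(\log(kn))$ decision calls of $O(n^2(k+n))$ time each; the overall time is $O(kn^3)$, with the type~(C) enumeration as the bottleneck. The hard part is precisely this type~(C) analysis: one must confirm that the geodesic distance from a curve vertex to a point on a curve edge is piecewise hyperbolic with only $O(k)$ breakpoints (following the hourglass/butterfly decomposition of Section~\ref{sec:ComputingFreeSpace}), and that imposing the fixed-travel-time constraint keeps the per-chain search linear in $k$ rather than quadratic --- everything else is a direct translation of the Euclidean-speed argument of Chapter~\ref{ch:speedFD}.
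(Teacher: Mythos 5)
Your proposal follows the paper's overall strategy exactly: the same three types of critical values, the same adaptation of Algorithm~\ref{alg:typeCFinal} with per-chain work dropping from $O(1)$ in the Euclidean case to $O(k)$ in the geodesic case, and the same sort-plus-binary-search wrap-up, giving $O(kn^3)$ dominated by type~(C) enumeration. Where you differ, mildly, is in how the $O(k)$ per-chain bound is obtained. The paper builds the two funnels $\Funn_{p_i,\Seg{ab}}$ and $\Funn_{p_j,\Seg{cd}}$, extends their chain edges to get $O(k)$ breakpoints $L_{\Seg{ab}}, L_{\Seg{cd}}$, cross-ranks them into $O(k)$-size lists $L'_{\Seg{ab}}, L'_{\Seg{cd}}$ by exploiting that $d_K(p_i,\cdot)$ restricted to a segment is \emph{bitonic}, and then runs the two-pointer walk from Algorithm~\ref{alg:potentialChains}. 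You instead argue directly in the $(u,v)$-parameter square: the $O(k)\times O(k)$ grid of hyperbolic sub-patches is traversed by a negatively-sloped (hence monotone) constraint line, which can meet only $O(k)$ patches, and on each patch the remaining system is constant size. These two arguments are essentially equivalent --- your ``monotone line crosses $O(k)$ patches'' is the geometric face of the paper's cross-ranking sweep --- but yours is arguably the cleaner picture. One small imprecision: you say ``the equal-distance-plus-fixed-travel-time condition defines a monotone curve,'' but two constraints in a two-dimensional parameter space give a discrete set, not a curve; what is a monotone curve is the travel-time constraint alone, and the equal-distance equation is then solved patch-by-patch along it. Also, the $O(k)$ breakpoints on $e_1$ come from the funnel of the single point $p_i$, not from an hourglass between two segments, so they are not quite the $i$-points of Section~\ref{sec:ComputingFreeSpace}; the count is the same but the terminology is a little off. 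Neither affects correctness or the final bound.
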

\begin{proof}
We use the same technique as in Section  \ref{sec:optimization}
to compute $\distFS(P,Q)$.
There are two critical distances of type (A) and 
$O(n^2)$ critical distances of type (B).
Geodesic distances inside a simple polygon 
are computed by the algorithms of Guibas and 
Hershberger~\cite{Guibas86, Hershberger91}.
These algorithms preprocess the polygon in $O(k)$
so that 
the shortest path queries between two points  or
between a point and a line segment 
can be solved in $O(\log k)$ time. 
Thus, we can compute type (A) and type (B) distances 
in $O(n^2 \log k)$ time.


As in previous chapter, 
there are $O(n^3)$ critical distances of type (C). 
To compute them, we use Algorithm~\ref{alg:typeCFinal} on Page
\pageref{alg:typeCFinal}, 
after modifying Line 4 of that algorithm.
For the case where $P$ and $Q$ are in the plane
and distances are Euclidean, we showed 
that Line 4 can be done in $O(1)$ time in Lemma \ref{TimeOfTypeC}. 
Here, because distances are geodesic, the 
run-time is $O(k)$ as explained in the following. 

In the algorithm of Guibas and 
Hershberger, all shortest paths between a point $p_i$ and a line segment $\Seg{ab}$ are represented by a funnel, denoted by  $\Funn_{p_i,\Seg{ab}}$
(see Figure \ref{fig:PolyLast}). $\Funn_{p_i,\Seg{ab}}$
is the region bounded by the line segment $\Seg{ab}$
and the shortest path chains $\SP(p_i,a)$ and 
$\SP(p_i,b)$. 
Extend all line segments in the shortest path 
chains $\SP(p_i,a)$ and $\SP(p_j,b)$ of funnel $\Funn_{p_i,\Seg{ab}}$
to a line, and find the intersection of those lines with 
$\Seg{ab}$ (see Figure \ref{fig:PolyLast}). Do the same in funnel $\Funn_{p_j,\Seg{cd}}$ with 
respect to segment $\Seg{cd}$. Now, maintain
the list of points which starts at 
$a$,  the intersection points on $\Seg{ab}$ in order, 
and ends at $b$, 
in a list denoted by  $L_{\Seg{ab}}$. Similarly, compute $L_{\Seg{cd}}$.
Next, we create 
two lists $L'_{\Seg{ab}}$ and $L'_{\Seg{cd}}$ from lists $L_{\Seg{ab}}$ and $L_{\Seg{cd}}$ to apply 
the same technique in Algorithm \ref{alg:potentialChains} and then, compute
critical distances of type (C)  in Algorithm \ref{alg:typeCFinal}.

Let $L'_{\Seg{ab}} = L_{\Seg{ab}}$ and 
$L'_{\Seg{cd}} = L_{\Seg{cd}}$. 
For each point $u \in L_{\Seg{ab}}$, 
compute distance $\| up_i\|$ and
find point(s) $v$ on $\Seg{cd}$ where
$\| up_i\| = \| vp_j\|$, and insert $v$ in $L'_{\Seg{cd}}$. 
Likewise, 
for each point $v \in L_{\Seg{cd}}$, 
compute distance $\| vp_j\|$ and
find point(s) $u$ on $\Seg{ab}$ where
$\| up_i\| = \| vp_j\|$, and insert $u$ in $L'_{\Seg{ab}}$.

The run-time to create these two lists is $O(k)$ because
the function which represent distances from a 
point to a line segment inside a polygon is a bitonic function. 
Therefore, distances from point $p_i$ (resp., point $p_j$)  to points in $L_{\Seg{ab}}$ (resp., to points in  $L_{\Seg{cd}}$)
are increasing or decreasing or bitonic. 
Thus, $L'_{\Seg{ab}}$ and $L'_{\Seg{cd}}$ can be computed 
 in $O(k)$ time using the cross-ranking technique. 

Having computed these two lists $L'_{\Seg{ab}}$ and $L'_{\Seg{cd}}$, 
we can then use two pointers as in Algorithm 
\ref{alg:potentialChains} and 
by the same  calculation 
described in Lemma \ref{TimeOfTypeC}, 
compute critical distances of type (C).
Therefore, the run-time of Line 4 of Algorithm~\ref{alg:typeCFinal} 
is $O(k)$ in this case. Since that line is executed 
$O(n^3)$ times, type (C) 
critical distances can be computed in $O(k n^3)$ total time.

After computing all type (A), (B) and (C) critical distances, we sort them 
and then, we perform binary search equipped with our decision algorithm, 
to find the the exact value of 
speed-constrained geodesic \Frechet  distance. 
Hence, we obtain an $O(kn^3)$ time algorithm to compute $\distFS(P,Q)$
for two curves $P$ and $Q$ inside a simple polygon.
\end{proof}

\section{Conclusion}
In this chapter, we introduced a variant of the \Frechet distance between two polygonal curves inside a 
simple polygon,
in which 
the speed of traversal along each segment of the curves is restricted to be within a specified range.

We presented an algorithm that decides in $O(n^2 (k + n))$ 
time whether the speed-constrained geodesic \Frechet distance between two polygonal curves inside a 
simple polygon is within a given value $\eps$, 
where $n$ is the number of segments in the curves, and $k$ is the complexity of the polygon. 

Several open problems arise from our work.
In particular, it is interesting to consider speed limits in
other variants of the \Frechet distance studied in the literature,
such as the \Frechet distance between two curves lying 
on a convex polyhedron~\cite{AnilFrechet}, or on a polyhedral surface~\cite{Cook2009}.

Results of this chapter 
are presented in 22nd Canadian Conference on Computational Geometry~\cite{oursCCCG2010}.

\chapter{Improved Algorithms for Partial Curve Matching}
\label{ch:partial}

\section{Introduction} 
\label{sec:introduction}

\REM{
The \Frechet distance is a widely-used metric for measuring the similarity of the curves. 
It finds applications in morphing~\cite{EfratGHMM02},
handwriting recognition~\cite{SriraghavendraKB07}, protein structure alignment~\cite{JiangXZ08}, etc.
This measure is often illustrated as the minimum-length leash needed for a
person to walk a dog, while each of them is traversing
a pre-specified polygonal curve without backtracking.
}
As described in Section \ref{sec:classicalFD},
Alt and Godau~\cite{AltG95} showed how the 
\Frechet distance between two polygonal curves
with $n$ and $m$ vertices can be computed in $O(nm \log(nm))$ time.
For their solution, they introduced the \fs diagram.
\REM{
The \fs diagram and its variants have been proved to be useful in other applications
involving the \Frechet distance
(see e.g. \cite{AltERW03,BuchinBG10,WenkC10,HR11}).

Various extensions of the \Frechet distance have been studied in the literature,
including \Frechet distance between two curves 
inside a simple polygon~\cite{WenkC10}, 
on polyhedral surfaces~\cite{CookW09}, 
and on simplicial complexes~\cite{HR11}.

}

As discussed in Section
\ref{sec:RelatedPartial},
Alt and Godau~\cite{AltG95} 
in their seminal work, 
studied the partial curve matching problem.
Given two polygonal curves $P$ and $Q$ of size $n$ and $m$, respectively, 
they presented an algorithm that decides in $O(nm \log(nm))$ time whether
there is a subcurve $R$ of $P$ whose \Frechet distance to $Q$ is at most $\eps$,
for a given $\eps \gee 0$. 
Using parametric search, they solved the optimization problem of finding the minimum such $\eps$
in $O(nm \log^2(nm))$ time.

Later, Alt \etal~\cite{AltERW03a} proposed a generalization of the partial curve matching problem 
to measure the similarity of a curve to some part of a graph.
Given a polygonal curve $P$ and a graph $G$, 
they presented an $O(nm \log m)$-time algorithm to decide whether there is a path $\pi$ in $G$ whose 
\Frechet distance to $P$ is at most $\eps$, 
with $n$ and $m$ being the size of $P$ and $G$, respectively.
A variant of the partial curve matching
in the presence of outliers is studied by Buchin \etal~\cite{ExactPartial},
leading to an algorithm with $O(nm (n+m)\log(nm))$ running time.

\paragraph{\bf Our results.\ }
In this chapter, we present a simple data structure, which we call \emph{\fs map},
that enables us to solve several variants of the \Frechet distance problem efficiently.
The results we obtain using this data structure are summarized below.
In the following, $n$ and $m$ represent the size of 
the two given polygonal curves $P$ and $Q$, respectively, 
and $\eps \gee 0$ is a fixed input parameter.

\begin{itemize}
\setlength{\itemsep}{.3em}
\renewcommand{\labelitemi}{${\tiny \bullet}$}

	\item \emph{Partial curve matching}. \ 
	Given two polygonal curves $P$ and $Q$, 
	we present an algorithm to decide in $O(nm)$ time whether
	there is a subcurve $R \subseteq P$ whose \Frechet distance to $Q$ is at most $\eps$.
	This improves the best previous algorithm for this decision problem 
	due to Alt and Godau~\cite{AltG95}
	(described in Section \ref{sec:RelatedPartial}),
	that requires $O(nm \log (nm))$ time.
	This also leads to an $O(\log (nm))$ faster algorithm for solving
	the optimization version of the problem, using parametric search.
	
	\item \emph{Closed \Frechet metric}. \ 
	As described in Section \ref{sec:RelatedClosed},	
	Alt and Godau showed that 
	for two closed curves $P$ and $Q$, 	
	the decision problem of whether the closed \Frechet distance between $P$
	and $Q$ (as defined in Section~\ref{sec:appl})
	is at most $\eps$ can be solved in $O(nm \log (nm))$ time. 
	We improve this long-standing result by giving an algorithm that runs in $O(nm)$ time.
	As a result, we also improve by a $\log (nm)$-factor 
	the running time of the optimization algorithm for computing the minimum such $\eps$.

	\item \emph{Minimum/Maximum walk}. \ 
	We introduce two new variants of the \Frechet distance
	as generalizations of the partial curve matching problem.
	Given two curves $P$ and $Q$ and a fixed $\eps \gee 0$, 
	the \emph{maximum walk} problem asks for the
	maximum-length continuous subcurve of $Q$ whose \Frechet distance to $P$ is at most $\eps$.
	We show that this optimization problem can be solved efficiently in $O(nm)$ time,
	without additional $\log (nm)$ factor.
	The \emph{minimum walk} problem is analogously defined, and can be solved efficiently
	within the same time bound.

	\item \emph{Graph matching}. \ 
	Given a directed acyclic graph $G$ with a straight-line embedding in $\IR^d$,
	for fixed $d \gee 2$,
	we present an algorithm to decide in $O(nm)$ time whether
	a given curve $P$ matches some part of 
	$G$ under a \Frechet distance of~$\eps$, 
	with $n$ and $m$ being the size of $P$ and $G$, respectively.
	This improves the map matching algorithm of Alt \etal~\cite{AltERW03a}
	(described in Section \ref{sec:RelatedMapMatching})
	by an $O(\log m)$ factor for the particular case in which $G$ is a directed acyclic graph.
	 
	
\end{itemize}

\REM{
They also provided an
$O(n^2 \log^2 n)$ time algorithm to compute the \Frechet distance between two
closed curves. Since then, different variants of the \Frechet distance have
been studied in the literature which are all based on constructing
a data structure called free space diagram, introduced by Alt and Godau~\cite{AltG95}.
We present a data structure on top of the free space diagram
that besides its simplicity, allows us to compute the \Frechet distance between
two closed curves, $O(\log n)$ time faster than the algorithm by Alt and
Godau. In addition, our data structure can be employed in several
applications related to partial curve matching.
}

The above improved results are obtained using a novel simple approach for
propagating the reachability information ``sequentially'' 
from bottom side to the top side of the \fs diagram.
Our approach is different from and simpler than the 
divide-and-conquer approach used by Alt and Godau~\cite{AltG95}
(explained in Section \ref{sec:RelatedPartial}),
and also, 
than the approach taken by Alt \etal~\cite{AltERW03a}
(explained in Section \ref{sec:RelatedPartial}) which is
a mixture of line sweep, dynamic programming, and Dijkstra's algorithm.

The \fs map introduced in this thesis
encapsulates all the information available in the
standard \fs diagram, yet
it is capable of answering a more general type of queries efficiently.
Namely, for any query point on the bottom side of the \fs diagram, 
our data structure can efficiently report all points on the top side of the diagram 
which are reachable from that query point.
Given that our data structure has the same
size and construction time as the standard \fs diagram, 
it can be viewed as a powerful alternative 
or generalization. 

\REM{
The current lower bound
for deciding whether the \Frechet distance
between two polygonal curves with total $n$ vertices is at most a given value $\eps$,
is $\bigOmega(n \log (n))$ \cite{LowerBound-FD}.
However, no subquadratic algorithm is known for this decision problem,
and hence, it is widely accepted that the actual lower bound 
for this problem is $\bigOmega(n^2)$ (see e.g.~\cite{AltBook2009}).
If this holds, then the results
obtained in this thesis do not only represent improvements, 
but are also optimal.
}

The remainder of this chapter is organized as follows.
In Section~\ref{sec:preliminaries}, we provide basic definitions
and elementary algorithms, such as vertical ray shooting, 
which will be used in our construction. 
In Section~\ref{sec:main}, we define the \fs map and 
show how it can be efficiently constructed.
In Section~\ref{sec:appl}, we present some applications of the \fs map to problems
such as partial curve matching, maximum/minimum walk, and closed \Frechet metric.
In Section~\ref{sec:graph}, we provide an improved algorithm
for matching a curve in a DAG.
We conclude in Section~\ref{sec:concl} with some open problems.


\section{Preliminaries} 
\label{sec:preliminaries}

\REM{
A {\em polygonal curve\/} in $\IR^d$ is  a continuous function 
$P:[0,n] \rightarrow \IR^d$ 
such that for each $i \in \set{1, \ldots, n}$,
the restriction of $P$ to the interval $[i-1, i]$ 
is affine (i.e., forms a line segment).
The integer $n$ is called the {\em size\/} of $P$.
For each $i \in \set{1, \ldots, n}$, 
we denote the line segment $P|_{[i-1,i]}$ by $P_i$.  

A {\em monotone reparametrization} of $[0,n]$ 
is a continuous non-decreasing function $\alpha: [0,1] \rightarrow [0,n]$
with $\alpha(0)=0$ and $\alpha(1)=n$.
Given two polygonal curves $P$ and $Q$ of size $n$ and $m$, respectively, 
the {\em \Frechet distance\/} between $P$ and $Q$ is defined as
\[
	\distF(P,Q) = \inf_{\alpha, \beta} \max_{t \in [0,1]} \| P(\alpha(t)), Q(\beta(t)) \|,
\]
where $\|\!\cdot\!\|$ denotes the Euclidean metric, 
and $\alpha$ and $\beta$ range over all monotone reparameterizations of 
$[0,n]$ and $[0,m]$, respectively.
}
Here, we borrow some notations from previous 
chapters.
Given a parameter $\eps \gee 0$,
the {\em free space\/} of the two curves $P$ and $Q$ is defined as
$$
	\Feps(P,Q) = \set{(s,t) \in [0,n] \times [0,m] \ | \ \|P(s),Q(t)\| \lee \eps }.
$$

We call points in $\Feps(P,Q)$ \emph{feasible}.
The partition of the rectangle $[0,n] \times [0,m]$ 
into regions formed by feasible and infeasible points
is called the \emph{\fs diagram} of $P$ and $Q$, denoted by $\FD_\eps(P,Q)$
(see Figure~\ref{fig:free-space}.a).

\begin{figure}[t]
	\centering
	\includegraphics[height=0.4\columnwidth]{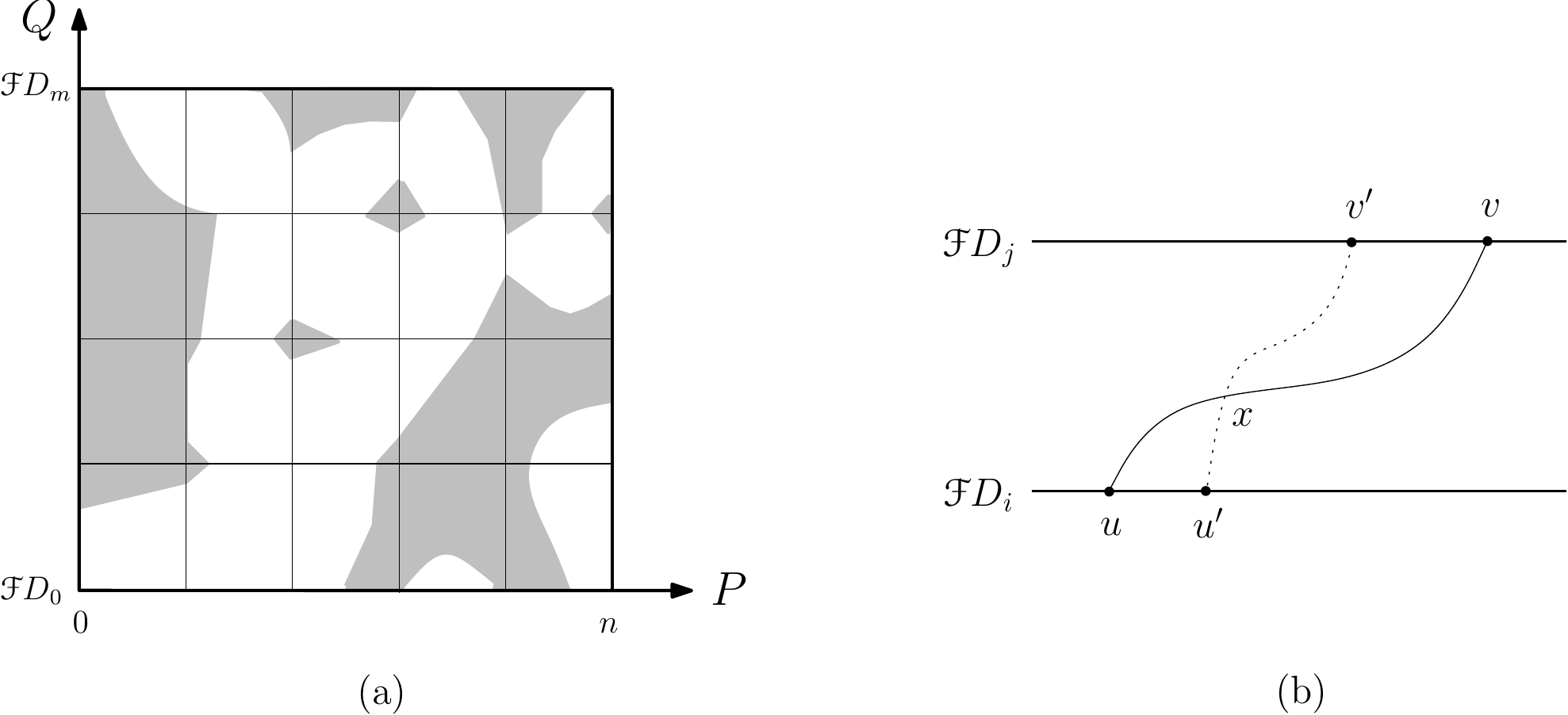}
	\caption{(a) An example of a \fs diagram.
	(b) Proof of the crossing lemma.}
	\label{fig:free-space}
	\vspace{-0.5em}	
\end{figure}

Let $P$ and $Q$ be two polygonal curves of size $n$ and $m$, respectively,
and $\eps \gee 0$ be a fixed parameter.
Following the notation used by Alt~\etal~\cite{AltERW03a},
we denote by $\FD_j$, for $0 \lee j \lee m$, 
the one-dimensional \fs diagram $\FD_\eps(P, Q) \cap ([0,n] \times \set{j})$,
corresponding to the curve $P$ and the point $Q(j)$. 
For each $(i,j) \in \set{1 \cdots n} \times \set{1 \cdots m}$,
the intersection of the free-space diagram with the square $[i-1,i] \times [j-1,j]$ 
is called a \emph{cell} of the diagram. 
Likewise, we call the intersection of $\FD_j$ with each interval $[i-1,i]$ a cell 
(or more precisely, the $i$-th cell) of $\FD_j$.

A curve is called \emph{feasible} if it lies completely within $\Feps(P,Q)$,
and is called {\em monotone} if it is monotone in both $x$- and $y$-coordinates.
Given two points $u$ and $v$ in the free space, we say that 
$v$ is \emph{reachable} from~$u$,
denoted by $u \reach v$,
if there is a monotone feasible curve in $\Feps(P,Q)$ from $u$ to~$v$.
Alt and Godau~\cite{AltG95} showed that $\distF(P,Q) \lee \eps$ 
if and only if 
$(0,0) \reach (n,m)$.
Clearly, reachability is ``transitive'':
if $u \reach v$ and $v \reach w$, then $u \reach w$.
Given two points $a$ and $b$ in the plane, 
we write $a < b$ if $a_x < b_x$,
and write $a \lee b$ if $a_x \lee b_x$.

\begin{lemma}[Crossing Lemma] \label{lemma:cross}
	Let $u,u' \in \F_i$ and $v,v' \in \F_j$ $(i < j)$ such that $u \lee u'$ and $v' \lee v$.
	If $u \reach v$ and $u' \reach v'$, then $u \reach v'$ and $u' \reach v$.
\end{lemma}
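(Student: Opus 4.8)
The plan is to produce the two monotone feasible curves that witness the hypotheses $u \reach v$ and $u' \reach v'$, argue that these curves are forced to cross each other, and then splice their pieces together at a crossing point; transitivity of $\reach$, already noted above, does the rest.

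First I would dispatch the degenerate cases. Since $u$ and $u'$ both lie on the horizontal line $y=i$, if $u_x = u'_x$ then $u=u'$, and the conclusions $u \reach v'$ and $u' \reach v$ follow immediately from $u' \reach v'$ and $u \reach v$; the case $v=v'$ is symmetric. Hence I may assume $u_x < u'_x$ and $v'_x < v_x$ from now on.

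Let $\pi$ be a monotone feasible curve from $u$ to $v$ and $\pi'$ a monotone feasible curve from $u'$ to $v'$ (these exist by definition of $\reach$). Both are $x$- and $y$-monotone, and since each is $y$-monotone with its two endpoints at heights $i$ and $j$, each lies inside the strip $i \lee y \lee j$. The crux is to show $\pi \cap \pi' \neq \emptyset$, and I would establish this with a Jordan-arc argument (this is the picture in Figure~\ref{fig:free-space}.b): extend $\pi'$ to an unbounded simple arc $\widehat\pi'$ by attaching the downward vertical ray $\{(u'_x,y) : y \lee i\}$ below $u'$ and the upward vertical ray $\{(v'_x,y) : y \gee j\}$ above $v'$. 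This arc is properly embedded in the plane and hence separates it into two regions. Along the line $y=i$ every point of $\widehat\pi'$ has $x \gee u'_x$ (because $\pi'$ is $x$-monotone starting at $u'$, and the lower ray sits at $x=u'_x$), so $u$ — which has $x$-coordinate $u_x < u'_x$ and can be joined leftward to infinity without meeting $\widehat\pi'$ — lies in one of the two regions; symmetrically, along $y=j$ every point of $\widehat\pi'$ has $x \lee v'_x$, so $v$ lies in the other region. Consequently the connected set $\pi$, which joins $u$ to $v$, must meet $\widehat\pi'$. But $\pi$ stays inside the strip $i \lee y \lee j$ while the two attached rays lie in $\{y \lee i\}$ and $\{y \gee j\}$, so any intersection of $\pi$ with a ray can only occur at height $i$ (at the point $u'$) or at height $j$ (at the point $v'$), both of which also lie on $\pi'$. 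In every case $\pi$ and $\pi'$ share a point $w$. I expect the careful treatment of possible horizontal and vertical segments of $\pi$ and $\pi'$ at heights $i$ and $j$ to be the only delicate point of this step.

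Finally, since $w$ lies on $\pi$, the portion of $\pi$ from $u$ to $w$ is a monotone feasible curve, giving $u \reach w$, and likewise the portion from $w$ to $v$ gives $w \reach v$; and since $w$ lies on $\pi'$, the same reasoning gives $u' \reach w$ and $w \reach v'$. By transitivity of $\reach$ we obtain $u \reach w \reach v'$ and $u' \reach w \reach v$, that is, $u \reach v'$ and $u' \reach v$, which completes the proof.
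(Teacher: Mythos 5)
Your proof is correct and takes essentially the same approach as the paper: both observe that the two monotone feasible paths must cross (a Jordan-curve separation argument) and then splice the pieces at the crossing point, using transitivity. You supply more detail than the paper does—the paper simply asserts that $u'$ and $v'$ lie on opposite sides of $\pi$, whereas you make the separation explicit by extending $\pi'$ with vertical rays and using monotonicity and strip containment to rule out spurious intersections.
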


\begin{proof}
	Let $\pi$ be a monotone feasible curve that connects $u$ to $v$. 
	Since $u'$ and $v'$ are on different sides of $\pi$,
	any monotone curve that connects $u'$ to $v'$ in $\Feps(P,Q)$
	intersects $\pi$ at some point $x$ (see Figure~\ref{fig:free-space}.b).
	The concatenation of the subcurve from $u$ to $x$ 
	and the one from $x$ to $v'$ gives a monotone feasible curve from 
	$u$ to $v'$. 
	Similarly, $v$ is connected to $u'$ by a monotone feasible curve through $x$.
	\qed
\end{proof}

\REM{
\begin{figure}[t]
	\centering
	\includegraphics[width=0.4\columnwidth]{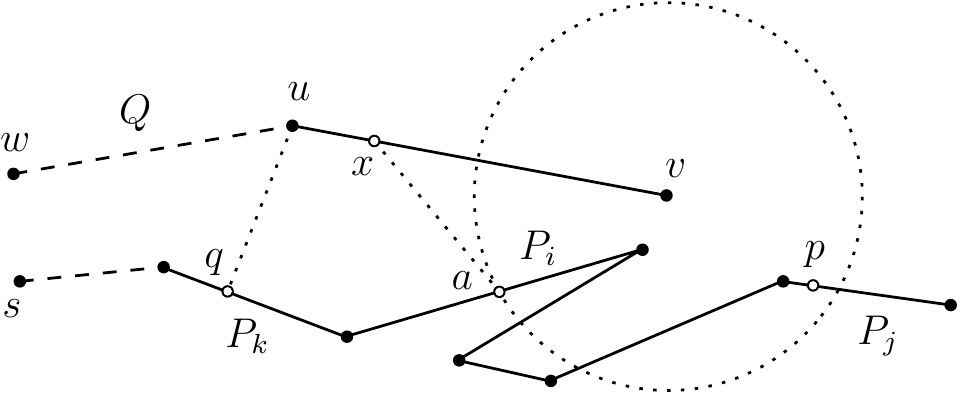}
	\caption{Crossing Lemma.}
	\label{fig:cross}
\end{figure}
}

For $0 \lee j \lee m$, we denote by $\F_j$ the set of feasible points in $\FD_j$. 
$\F_j$ consists of $O(n)$ feasible intervals,
where each feasible interval is a maximal continuous set of feasible points,
restricted to be within a cell.
For any feasible point set $S$ in $\Feps(P,Q)$,
we define the \emph{projection} of $S$ on $\FD_j$ as
$$
	\RE_{j}(S) := \set{v \in \F_j \provided \exists u \in S \,\mbox{ s.t. }\,  u \reach v}.
$$

For an interval $I$ on $\F_i$,
we define the \emph{left pointer} of $I$ on $\F_j$ $(i \lee j)$,
denoted by $\lp_{j}(I)$, to be the leftmost point in $\RE_{j}(I)$.
Similarly, the \emph{right pointer} of $I$ on $\F_j$, denoted by $\rp_{j}(I)$,
is defined to be the rightmost point in $\RE_{j}(I)$.
If $\RE_{j}(I)$ is empty, both pointers $\lp_{j}(I)$ and $\rp_{j}(I)$ are set to NIL.
These pointers were previously used in \cite{AltERW03a,AltG95},
and form a main ingredient of our data structure.
For a single point $u$, we simply use $\RE_{j}(u)$, $\lp_{j}(u)$, and $\rp_{j}(u)$
instead of $\RE_{j}(\set{u})$, $\lp_{j}(\set{u})$, and $\rp_{j}(\set{u})$, respectively. 
The following observation is an immediate corollary of Lemma~\ref{lemma:cross}.

\begin{obs} \label{obs:sorted}
	For any two points $u,v \in \F_i$ with $u \lee v$, and for any $j \gee i$,
	we have $\lp_{j}(u) \lee \lp_{j}(v)$ and $\rp_{j}(u) \lee \rp_{j}(v)$.	
\end{obs}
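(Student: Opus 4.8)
The plan is to read off Observation~\ref{obs:sorted} from the Crossing Lemma (Lemma~\ref{lemma:cross}) by a short argument by contradiction. The case $j=i$ is elementary, and I would dispose of it first by inspection: there $\lp_i(u)=u$ and $\lp_i(v)=v$ (nothing strictly to the left of a point is reachable from it), and for the right pointers one checks directly that $\RE_i(u)$ and $\RE_i(v)$ are the maximal horizontal feasible stretches of $\F_i$ starting at $u$ and at $v$, so $u \lee v$ forces $\rp_i(u) \lee \rp_i(v)$. Henceforth I assume $i<j$, which is exactly the range in which Lemma~\ref{lemma:cross} applies.

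For the left-pointer inequality, suppose toward a contradiction that $\lp_j(v) < \lp_j(u)$, and set $w := \lp_j(u)$ and $w' := \lp_j(v)$, so that $w, w' \in \F_j$ with $w' \lee w$, and by definition $u \reach w$ and $v \reach w'$. Applying Lemma~\ref{lemma:cross} with the substitution $(u,u',v,v') \mapsto (u,\,v,\,w,\,w')$ --- for which the hypotheses $u \lee v$, $w' \lee w$, $u \reach w$, $v \reach w'$ all hold --- yields $u \reach w'$. But $w' = \lp_j(v)$ then lies in $\RE_j(u)$ and strictly to the left of $\lp_j(u)$, contradicting the definition of $\lp_j(u)$ as the leftmost point of $\RE_j(u)$.

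The right-pointer inequality is symmetric: assuming $\rp_j(v) < \rp_j(u)$, put $w := \rp_j(u)$ and $w' := \rp_j(v)$, note $w' \lee w$, $u \reach w$, $v \reach w'$, and apply Lemma~\ref{lemma:cross} with the same substitution $(u,u',v,v') \mapsto (u,\,v,\,w,\,w')$ to obtain $v \reach w$, contradicting the maximality of $\rp_j(v)$ in $\RE_j(v)$. I expect no real difficulty here: the proof is a single invocation of the Crossing Lemma, and the only things to be careful about are matching the four points of Observation~\ref{obs:sorted} to the four roles in Lemma~\ref{lemma:cross} in the right order, and recalling that Lemma~\ref{lemma:cross} assumes $i<j$ so that $i=j$ must be argued separately. (If $\RE_j(u)$ or $\RE_j(v)$ is empty, the corresponding pointer is \textsc{nil}; the argument above already presumes both left, resp.\ both right, pointers are genuine feasible points, since otherwise the comparison in the contradiction hypothesis is not defined.)
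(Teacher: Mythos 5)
Your proof is correct and matches what the paper intends: the paper simply asserts the observation as ``an immediate corollary of Lemma~\ref{lemma:cross}'' and writes out no argument, and your contradiction with the role-matching $(u,u',v,v') \mapsto (u,v,w,w')$ supplies exactly the missing details, including the separate elementary treatment of $j=i$ (needed since the Crossing Lemma assumes $i<j$) and the NIL caveat.
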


For an interval $I$ on a horizontal line, we denote by $\Left(I)$ and $\Right(I)$ 
the left and the right endpoint of $I$, respectively.
The following simple lemma 
will be used frequently throughout this chapter.

\begin{lemma} \label{lemma:scan}
	Given two sequences $A$ and $B$ of points on a horizontal line sorted from left to right, 
	we can compute for each point $a \in A$,  
	the leftmost point $b \in B$ with $a \lee b$
	in $O(|A|+|B|)$ total time.
\end{lemma}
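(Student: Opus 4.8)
The plan is to use a single synchronized left-to-right sweep over the two sorted sequences, in the spirit of the merge step of mergesort. Write $A = \seq{a_1, a_2, \ldots, a_p}$ and $B = \seq{b_1, b_2, \ldots, b_q}$ for the two input sequences, both sorted so that $a_1 \lee a_2 \lee \cdots \lee a_p$ and $b_1 \lee b_2 \lee \cdots \lee b_q$ (recall $a \lee b$ abbreviates $a_x \lee b_x$). I would maintain a single index $k$ into $B$, initialized to $k = 1$, and process the points of $A$ in order $a_1, \ldots, a_p$. For the current point $a_i$, I advance $k$ (without ever decreasing it) as long as $k \lee q$ and $b_k < a_i$; when this loop stops, either $k > q$, in which case no point of $B$ satisfies $a_i \lee b$ and I report NIL for $a_i$, or $b_k \gee a_i$ and $k$ is minimal with this property (every $b_{k'}$ with $k' < k$ was strictly less than $a_i$), so $b_k$ is the leftmost point of $B$ with $a_i \lee b_k$, and I report it.

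The key invariant to state and verify is monotonicity of the answer: if $i < i'$ then $a_i \lee a_{i'}$, hence the leftmost $b \in B$ with $a_{i'} \lee b$ cannot lie to the left of the leftmost $b \in B$ with $a_i \lee b$; consequently the index $k$ is nondecreasing across successive iterations, which justifies never resetting it. This is what makes reuse of $k$ correct. Formally I would argue by induction on $i$ that, at the start of the iteration for $a_i$, the current value of $k$ satisfies $b_{k'} < a_i$ for all $k' < k$ (so we have not overshot), and then the advance loop restores the stronger property that $b_{k} \gee a_i$ (or $k>q$).

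For the running time: the outer loop runs $p = |A|$ times, and the total number of increments to $k$ over the whole execution is at most $q = |B|$ because $k$ only increases and is bounded by $q+1$. Each iteration does $O(1)$ work besides the increments it performs. Hence the total time is $O(p + q) = O(|A| + |B|)$, and the algorithm uses only $O(1)$ auxiliary space. I do not anticipate a real obstacle here; the only point that needs care in the write-up is the handling of ties (points of $A$ and $B$ sharing the same $x$-coordinate), which is resolved cleanly by testing the strict inequality $b_k < a_i$ in the advance loop so that an equal-$x$ point of $B$ is accepted as a valid (indeed leftmost) answer.
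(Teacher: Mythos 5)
Your proof is correct and uses essentially the same technique as the paper: a two-pointer merge-style sweep over the two sorted sequences, charging pointer advances to $|A|$ and $|B|$ for the $O(|A|+|B|)$ bound. The only cosmetic difference is that you drive the outer loop over $A$ and advance the $B$-pointer inside, while the paper drives the outer loop over $B$ and advances the $A$-pointer inside; these are symmetric realizations of the same idea.
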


\begin{proof}
	We scan the two sequences simultaneously from left to right using two pointers.
	Whenever we reach a point $b \in B$,
	we advance our pointer on $A$ until we reach the first point $a \in A$
	with $a > b$. We then make all points in $A$ scanned during this step up to
	(not including) $a$ to point to $b$.
	We then advance our pointer on $B$ by one, and repeat the above procedure.
	\qed
\end{proof}

\REM{
\begin{obs} \label{lemma:scan}
	Given two sequences $A$ and $B$ of points on a horizontal line sorted from left to right, 
	we can preprocess the two sequences into a data structure of size $O(|A|)$ in $O(|A|+|B|)$ time,
	such that for any query point $a \in A$, 
	the leftmost point $b \in B$ with $a < b$
	can be reported in $O(1)$ time,
	assuming random access to the elements of $A$.
\end{obs}

\begin{proof}
	We scan the two sequences simultaneously from left to right
	We scan the points in $B$ in order from left to right,
	and for each point $b$ in $B$, assign $b$ to all points $a \in A$ to the right of $b$
	which are not yet processed (i.e., are not yet assigned with any point of $B$).
	At each time, we keep a pointer to the leftmost non-processed point of $A$,
	and move the pointer to the right linearly as points of $A$ are processed.
	It is easy to verify that during this process, each point of $A$ 
	is assigned with the nearest point of $B$ to its right,
	and that each point of $A$ and $B$ is visited only once.
	\qed
\end{proof}
}

\subsection{Vertical Ray Shooting}
\label{sec:shooting}

The following special case of the vertical ray shooting problem
appears as a subproblem in our construction.
Consider a vertical slab $[0,1] \times [0,m]$ (see Figure~\ref{fig:shooting}).
For each $1 \lee i \lee m$, 
there are two (possibly empty) segments 
in the slab at height~$i$,
attached to the boundaries of the slab, one from left and the other from right.
Given a query point $q$,
the vertical ray shooting problem involves finding
the first segment in the slab directly above $q$.
If the query points are restricted to be among 
the endpoints of the segments,
we show below that the vertical ray shooting queries can be answered in $O(1)$
time, after $O(m)$ preprocessing time.

\begin{figure}[t]
	\centering
	\includegraphics[height=0.37\columnwidth]{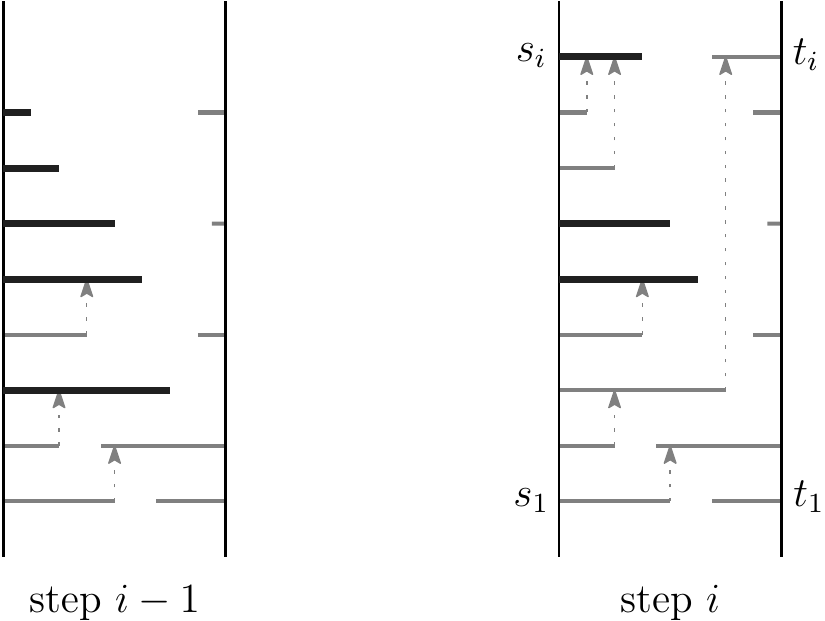}
	\caption{An example of the execution of Algorithm~\ref{alg:shooting}.
	Segments in the queue at the end of each step are shown in bold.
	}
	\label{fig:shooting}
\end{figure}

\begin{lemma} \label{lemma:shooting}
	Let $S$ be a set of segments
	$s_i = [0,a_i] \times \set{i}$, and 
	$T$ be a set of segments $t_i = [b_i, 1] \times \set{i}$
	with $0 \lee a_i \lee b_i \lee 1$, for $1 \lee i \lee m$. 
	We can find for each segment $s_i \in S$, 
	the first segment in $S \cup T$ 
	directly above $\Right(s_i)$ 
	in $O(m)$ total time.
\end{lemma}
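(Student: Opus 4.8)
The plan is to prove Lemma~\ref{lemma:shooting} with a single bottom-to-top sweep over the heights $1,\dots,m$, maintaining a double-ended queue $D$ (the ``queue'' of Figure~\ref{fig:shooting}) that holds the \emph{pending} segments of $S$, i.e.\ those $s_i$ whose answer has not yet been found. At step $j$ (for $j=1,\dots,m$): (i) while the segment $s_i$ currently at the back of $D$ has $a_i\lee a_j$, output $s_j$ as the answer for $s_i$ and pop $s_i$ off the back; (ii) while the segment $s_i$ at the front of $D$ has $a_i\gee b_j$, output $t_j$ as the answer for $s_i$ and pop $s_i$ off the front; (iii) push $s_j$ onto the back of $D$. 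Segments still in $D$ after step $m$ have no segment of $S\cup T$ above them. (We assume the $a_i$'s are pairwise distinct and likewise the $b_i$'s; the remaining degenerate cases are handled by a fixed tie-breaking convention and do not change the analysis.)

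The crux of the proof is the invariant that at every moment the entries of $D$, read from front to back, have strictly increasing heights and \emph{strictly decreasing} values $a_i$. Pops from either end obviously preserve it. When $s_j$ is pushed in step~(iii), the back-pop loop of step~(i) has already removed every back entry with $a$-value $\lee a_j$, so each entry remaining in $D$ has $a$-value $>a_j$; since $j$ exceeds the height of every entry enqueued before step $j$, pushing $s_j$ at the back keeps both orders intact. Thus the invariant holds throughout.

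Granting the invariant, I would verify correctness of each output. Suppose $s_j$ is output as the answer of $s_i$ in step~(i). Then $a_i\lee a_j$, so the vertical ray from $\Right(s_i)=(a_i,i)$ meets $s_j$; it remains to rule out a segment of $S\cup T$ at a height strictly between $i$ and $j$ lying directly above $(a_i,i)$. If some $s_{i'}$ with $i<i'<j$ had $a_{i'}\gee a_i$, then at step $i'$ the back-pop loop (threshold $a_{i'}$) would first remove every entry behind $s_i$ in $D$ — by the invariant those all have $a$-value $<a_i\lee a_{i'}$ — and then remove $s_i$ itself, contradicting the fact that $s_i$ is still pending at step $j>i'$; a symmetric argument with the front-pop loop and the value $b_{i'}$ rules out an intervening $t_{i'}$ above $(a_i,i)$. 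Hence $s_j$ really is the first segment above $\Right(s_i)$. The case in which $s_i$ receives $t_j$ as its answer, and the case in which $s_i$ survives to the end, are handled in exactly the same way.

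For the time bound, each $s_i\in S$ is pushed into $D$ once and popped at most once, each costing $O(1)$ on a doubly linked list; every iteration of the two while-loops in step $j$ is charged to a pop, and the rest of step $j$ is $O(1)$, so the whole sweep runs in $O(m)$ time. The one genuinely delicate point is the monotonicity invariant: it is simultaneously what makes the two-ended pruning output the \emph{first} segment above each query and what makes the pops amortize to $O(1)$ per step, so establishing it carefully is the heart of the argument.
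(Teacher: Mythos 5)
Your proof is correct and takes essentially the same approach as the paper's: a single bottom-to-top sweep with a deque, pruning from the back with $s_j$ and from the front with $t_j$, and pushing $s_j$ to the back. Your invariant (decreasing $a$-values from front to back) is exactly the paper's "decreasing lengths from bottom to top," and your correctness argument is simply a more carefully spelled-out version of the paper's observation that, since segments are processed in height order, the covering segment found when a deque entry is popped must be the first one above it.
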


\begin{proof}
	Algorithm~\ref{alg:shooting} 
	assigns to each segment $s_i$ of $S$,
	an \emph{up} pointer that points to the first segment directly above $\Right(s_i)$,
	if such a segment exists. 
	The algorithm makes use of a double-ended queue $Q$ 
	(a combination of a queue and a stack, commonly known as ``deque''),
	that supports the standard operations
	{\sc push}(), {\sc pop}(), and {\sc top}(),
	along with two additional operations
	{\sc bottom}() and {\sc bottom-pop}(), that are analogous to
	{\sc top}() and {\sc pop}(), respectively,
	but applied to the bottom of the queue.

\begin{algorithm} [h]
\caption {\sc Ray-Shooting$(S,T)$} \label{alg:shooting}
\algsetup{indent=1.5em}
\begin{algorithmic}[1]
	\vspace{0.5em}
	\baselineskip=1\baselineskip
	\STATE $Q \eq \emptyset$
	\STATE $Q.\mbox{\sc push}(s_1)$
	\FOR {$i$ from 2 to $m$} 
		\WHILE {$|Q.\mbox{\sc top}()| \lee |s_i| $}  \label{l:w1-1}
			\STATE $Q.\mbox{\sc pop}().\up \eq s_i$ \label{l:w1-2}
		\ENDWHILE 
		\WHILE {$|Q.\mbox{\sc bottom}()| \gee 1-|t_i| $}  \label{l:w2-1}
			\STATE $Q.\mbox{\sc bottom-pop}().\up \eq t_i$ \label{l:w2-2}
		\ENDWHILE 
		\STATE $Q.\mbox{\sc push}(s_i)$
	\ENDFOR 
\end{algorithmic}
\end{algorithm}
	
	We say that a segment $s \in S$
	is \emph{covered} by a segment $t \in S \cup T$,
	if a vertical ray from $\Right(s)$ intersects $t$.	
	For  $1 \lee i \lee m$,
	let $S_i = \set{s_1, \ldots, s_i}$ and $T_i = \set{t_1, \ldots, t_i}$.
	The following invariant is maintained by the algorithm:
	At the end of iteration~$i$, 
	$Q$ contains a subset of segments from $S_i$
	that are not covered by any segment from $S_i \cup T_i$,
	in a decreasing order of their lengths from bottom to the top of the queue.
	The invariant clearly holds for $i=1$.
	Suppose by induction that the invariant holds for $i-1$.
	In the $i$-th iteration,
	we first pop off from the top of the queue all segments
	covered by $s_i$, in Lines~\ref{l:w1-1}--\ref{l:w1-2}.
	Then, we remove from the bottom of the queue all
	segments covered by $t_i$, in Lines~\ref{l:w2-1}--\ref{l:w2-2}.
	Finally, we add $s_i$ to the top of the queue.
	(See Figure~\ref{fig:shooting} for an illustration.)
	It is easy to verify that after the insertion of $s_i$,
	the segments of $Q$ are still sorted in a decreasing order of their lengths
	(because we have already removed segments smaller than $s_i$ from $Q$),
	and that, no segment of $Q$ is covered by a segment in $S_i \cup T_i$
	(because we have removed covered segments from $Q$).
	Furthermore, it is clear that any segment $s$ removed from $Q$
	is assigned to the first segment that is directly above $\Right(s)$,
	because we are processing segments in order from bottom to the top.
	The correctness of the algorithm therefore follows.
	Note that after the termination of the algorithm,
	$Q$ still contains some uncovered segments from $S$, whose
	up pointers are assumed to be NIL, 
	as they are not covered by any segment in $S \cup T$.
	Since each segment of $S$ is inserted into and removed from the queue at most once,
	Lines~\ref{l:w1-2} and \ref{l:w2-2} of the algorithm are executed at most $m$ times,
	and hence, the whole algorithm runs in $O(m)$ time.
	\qed
\end{proof}

Consider the vertical slab $[0,1] \times [0,m]$, 
and the two sets of segments $S$ and $T$ as defined above. 
We call a segment $t_j \in T$ {reachable} from a segment $s_i \in S$,
if there is a monotone path from a point on $s_i$ to a point on $t_j$ 
not intersecting any other segment in $S \cup T$.
For a segment $s_i \in S$, the \emph{topmost reachable} segment in $T$ is 
a reachable segment $t_j$ with the maximum index $j$. 
In Figure~\ref{fig:shooting}, for example, 
the topmost reachable segments for $s_1$ and $s_2$
are $t_2$ and $t_i$, respectively.

\begin{lemma} \label{lemma:top-segment}
	Let $S$ and $T$ be the two sets of segments  defined in Lemma~\ref{lemma:shooting}.
	Then, for each segment $s_i \in S$, $1 \lee i \lee m$, 
	the topmost reachable segment in $T$
	can be computed in $O(m)$ total time.
\end{lemma}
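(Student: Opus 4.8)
The plan is to reduce the computation, for every $s_i$, to following a chain of the ``first‑segment‑above'' pointers furnished by Lemma~\ref{lemma:shooting}, and then to evaluate all of these chains together in linear total time by memoisation.

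\emph{Step 1: a greedy characterisation of reachability.} Call the open interval $(a_k,b_k)$ the \emph{gap} at height $k$. I would first argue that a monotone path leaving a point of $s_i$ may as well start at the leftmost point $(0,i)$, and that among all such paths the ``greedy'' one that keeps its $x$‑coordinate as small as possible is optimal for getting high. Indeed, any monotone path that reaches height $j$ must, at every height $k$ with $i<k<j$, have $x$‑coordinate $>a_k$ (to miss $s_k$) and $<b_k$ (to miss $t_k$); since the $x$‑coordinate is non‑decreasing, it is at least the prefix maximum $L_k:=\max(a_{i+1},\dots,a_k)$ at height $k$, and conversely a path hugging the right edges of the $s$‑segments realises $x=L_k$ there. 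Hence the ray from $s_i$ is unobstructed exactly on the heights $i<k\lee\beta(i)$, where $\beta(i)$ is the first $k>i$ with $\max(L_{k-1},a_k)\gee b_k$, and a short case check shows the topmost member of $T$ it can touch is $t_{\beta(i)}$ (or $t_m$ if no such $k$ exists). This is the slab incarnation of the Crossing Lemma (Lemma~\ref{lemma:cross}); the delicate points are the strict/non‑strict bookkeeping at segment endpoints and the \emph{blocked heights} with $a_k=b_k$, which stop every path coming from below.

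\emph{Step 2: recognising the pointers and memoising.} The greedy level $L_k$ changes only at heights whose $a$‑value is a new prefix maximum, and the jump from such a height $p$ to the next event is precisely $\up(s_p)$, the pointer of Lemma~\ref{lemma:shooting}: a vertical ray from $\Right(s_p)=(a_p,p)$ meets the first height $k>p$ with either $a_k\gee a_p$ (an $s$‑segment — the level rises to $a_k$ and the walk continues ``carried by'' $s_k$) or $b_k\lee a_p$ (a $t$‑segment — the level has caught up with the left end of $t_k$, the walk stops, and $t_k$ is the answer). I would therefore define, for each $s$‑segment $s_p$, the quantity $f(s_p)=$ the index of the topmost reachable $t$‑segment once the walk has arrived carried by the right edge of $s_p$, and set $f(s_p)=p$ if the gap at $p$ is empty, $f(s_p)=k$ if $\up(s_p)=t_k$, $f(s_p)=f(s_k)$ if $\up(s_p)=s_k$, and $f(s_p)=m$ if $\up(s_p)$ is NIL. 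Because $\up(s_p)$ always points strictly upward, a single pass $p=m,m-1,\dots,1$ computes all $f(s_p)$ in $O(1)$ amortised time each. A genuine source $s_i$ differs only in that its walk starts at level $0$ rather than $a_i$ (this is why the answer does not depend on $a_i$), which affects only height $i+1$; that one height is resolved directly in $O(1)$, after which the walk is in a carried state and the answer equals $f(s_{i+1})$, with the boundary case $i=m$ handled separately.

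\emph{Assembling the bound.} Compute every $\up(s_i)$ in $O(m)$ by Lemma~\ref{lemma:shooting}, compute $f(\cdot)$ by one downward scan in $O(m)$, and output the answer for each $s_i$ in $O(1)$, for $O(m)$ total. The step I expect to be the real obstacle is Step~1: proving rigorously that the leftmost greedy walk reaches \emph{every} $t$‑segment that is reachable at all — so that the $\up$‑chain is genuinely optimal — while correctly absorbing the blocked heights and the slightly different behaviour at the source height. Once that characterisation is established, the linear‑time bookkeeping layered on top of Lemma~\ref{lemma:shooting} is routine.
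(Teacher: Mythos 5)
Your Step~2 --- computing $f$ by a top-down pass along the $\up$-pointers --- is essentially the paper's Algorithm~\ref{alg:top-segment}: the three cases of that algorithm are exactly your three recursive clauses for $f$. The trouble is in Step~1 and in what you read off at the end. Interpreting ``a monotone path from a point on $s_i$'' as one that may start at the leftmost point $(0,i)$, you obtain the greedy level $L_k=\max(a_{i+1},\dots,a_k)$ and conclude that the answer for the query $s_i$ is $f(s_{i+1})$. Algorithm~\ref{alg:top-segment}, however, seeds the chain at $\up(s_i)$ --- the ray shot upward from $\Right(s_i)=(a_i,i)$ --- so its output is your $f(s_i)$, corresponding to the \emph{stronger} prefix maximum $\max(a_i,a_{i+1},\dots,a_k)$. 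These disagree. Take $m=3$ with $(a_1,b_1)=(\tfrac12,\tfrac9{10})$, $(a_2,b_2)=(\tfrac15,\tfrac25)$, $(a_3,b_3)=(\tfrac3{10},\tfrac45)$: the ray from $(\tfrac12,1)$ is blocked at $t_2$ (since $b_2\lee a_1$), so Algorithm~\ref{alg:top-segment} returns $t_2$ for $s_1$; yet the monotone path $(0,1)\rightarrow(\tfrac3{10},2)\rightarrow(\tfrac45,3)$ misses $s_2,t_2,s_3$ and ends on $t_3$, so your rule returns $f(s_2)=t_3$.

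It is the version \emph{with} $a_i$ in the prefix maximum that the paper actually needs. In Corollary~\ref{cor:rightmost} and Lemma~\ref{lemma:propagate}, the monotone curve whose reach this lemma bounds starts in a reachable interval $I$ lying in cell $c_i$, on the bottom of the slab, and before it can move past column $i$ it must climb through the gap directly above $s_i$; monotonicity then pins its coordinate above $a_i$ at every subsequent column boundary --- precisely the $a_i$ term your Step~1 discards by letting the walk originate at $(0,i)$. So your characterisation is faithful to the loosely worded sentence defining reachability, but not to what Algorithm~\ref{alg:top-segment} computes nor to how it is consumed. The fix is small: start the walk at $\Right(s_i)$ (equivalently, include $a_i$ in the running maximum) and output $f(s_i)$ rather than $f(s_{i+1})$; your memoised downward scan then coincides with Algorithm~\ref{alg:top-segment}. (Your extra clause ``$f(s_p)=p$ when $a_p=b_p$'' flags a genuine degeneracy that the paper's own proof ignores --- with an empty gap the walk carried to $\Right(s_p)$ is already on $t_p$ and cannot be forwarded along $\up(s_p)$ --- so keeping that clause would make your argument strictly more careful on that point.)
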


\begin{algorithm} [h]
\caption {\sc Topmost-Reachable-Segments$(S,T)$} \label{alg:top-segment}
\algsetup{indent=1.5em}
\begin{algorithmic}[1]
	\vspace{0.5em}
	\baselineskip=1\baselineskip
	\FOR {$i$ from $m$ to $1$} 
		\IF {$s_i.\up = \nil$}
			\STATE $s_i.\topp \eq t_m$
		\ELSIF {$s_i.\up \in T$}
			\STATE $s_i.\topp \eq s_i.\up$
		\ELSE
			\STATE $s_i.\topp \eq s_i.\up.\topp$
		\ENDIF
	\ENDFOR 
\end{algorithmic}
\end{algorithm}

\begin{proof}
	Algorithm~\ref{alg:top-segment} scans all segments in $S$ from top to bottom,
	and assigns to each segment $s_i$ in $S$
	a \emph{top} pointer that points to the topmost segment in $T$ reachable from $s_i$.
	The algorithm works as follows.
	Suppose that the top pointers for all segments in $S$ above $s_i$ are computed.
	At $i$-th iteration, if $s_i$ is not covered by any other segment above it,
	(i.e., $s_i.\up$ is $\nil$), then the topmost reachable segment of $s_i$ is set to $t_m$.
	If $s_i$ is covered by a segment $t_j \in T$, 
	then the topmost reachable segment of $s_i$ is $t_j$.
	Otherwise, if $s_i$ is covered by a segment $s_j \in S$,
	then all segments in $T$ above $s_j$ that are reachable from $s_i$ are also reachable from $s_j$,
	and hence, the topmost such segment can be obtained from $s_j.\topp$ pointer,
	which is computed earlier.
	Therefore, computing all top pointers can be performed in $O(m)$ total time.
	\qed
\end{proof}

\noindent
Analogous to the previous lemma, 
a result can be stated for a horizontal slab.

\begin{figure}[t]
	\centering
	\includegraphics[height=0.18\columnwidth]{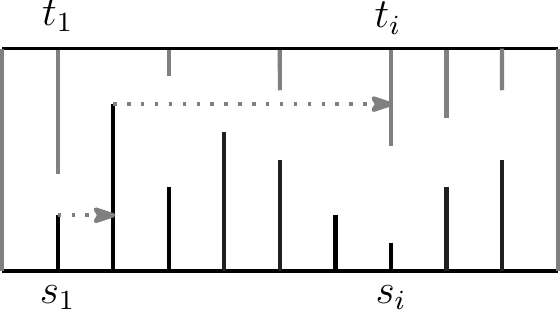}
	\caption{A horizontal slab with vertical segments. 
	The rightmost segment reachable from $s_1$ in this figure is $t_i$.
	}
	\label{fig:horizontal}
\end{figure}

\begin{corollary} \label{cor:rightmost}
	Consider a horizontal slab $[0,n] \times [0,1]$. 
	Let $S$ be a set of segments
	$s_i = \set{i} \times [0,a_i]$, and 
	$T$ be a set of segments $t_i = \set{i} \times [b_i, 1]$
	with $0 \lee a_i \lee b_i \lee 1$, for $1 \lee i \lee n$ (see Figure~\ref{fig:horizontal}).
	Then, for all segments $s_i \in S$, 
	the rightmost segment in $T$ reachable from $s_i$
	can be computed in $O(n)$ total time.
\end{corollary}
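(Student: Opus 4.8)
The plan is to obtain Corollary~\ref{cor:rightmost} from Lemma~\ref{lemma:top-segment} essentially for free, via a change of coordinates, so that no new combinatorics is required. First I would apply the reflection $\rho:(x,y)\mapsto(y,x)$ to the whole configuration. Under $\rho$, the horizontal slab $[0,n]\times[0,1]$ becomes the vertical slab $[0,1]\times[0,n]$; a segment $s_i=\set{i}\times[0,a_i]$ attached to the bottom edge $y=0$ becomes $[0,a_i]\times\set{i}$ attached to the left edge $x=0$, and a segment $t_i=\set{i}\times[b_i,1]$ attached to the top edge becomes $[b_i,1]\times\set{i}$ attached to the right edge. With $m$ taken to be $n$, these are exactly the two families $S$ and $T$ appearing in Lemma~\ref{lemma:shooting} and Lemma~\ref{lemma:top-segment}, and the hypothesis $0\lee a_i\lee b_i\lee 1$ carries over unchanged.

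Next I would verify that the relevant notions are preserved by $\rho$. A curve is monotone (nondecreasing in both $x$ and $y$) if and only if its image under $\rho$ is, and $\rho$ maps the slab and each segment bijectively as described above; hence $t_j$ is reachable from $s_i$ in the horizontal picture precisely when $\rho(t_j)$ is reachable from $\rho(s_i)$ in the reflected vertical picture, in the sense used by Lemma~\ref{lemma:top-segment}. Since $\rho$ does not change the index $i$ of a segment, the \emph{rightmost} reachable segment of $s_i$ in $T$ (the reachable $t_j$ of largest index) corresponds exactly to the \emph{topmost} reachable segment of $\rho(s_i)$. Therefore I would run the algorithm of Lemma~\ref{lemma:top-segment} (which itself invokes the ray-shooting routine of Lemma~\ref{lemma:shooting}) on the reflected instance and transport the computed \emph{top} pointers back through $\rho$; this produces, for every $s_i\in S$, its rightmost reachable segment in $T$, in $O(n)$ total time.

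I do not expect a genuine obstacle here. The only points that need a moment's care — and they are purely bookkeeping — are matching the orientation conventions, namely that the bottom/top edges of the horizontal slab correspond under $\rho$ to the left/right edges of the vertical slab (so that $S$ maps to $S$ and $T$ to $T$, not the reverse), and that ``largest index'' is the correct reading of both ``rightmost'' and ``topmost''. All the real content lies in Lemmas~\ref{lemma:shooting} and~\ref{lemma:top-segment}, and this corollary is simply their transpose.
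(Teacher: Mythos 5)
Your proposal is correct, and it is precisely what the paper intends: the paper offers no proof of Corollary~\ref{cor:rightmost}, saying only ``Analogous to the previous lemma, a result can be stated for a horizontal slab,'' which is exactly the reflection argument you spell out. Your check that $\rho:(x,y)\mapsto(y,x)$ sends $S$ to $S$ and $T$ to $T$ (not swapping them), preserves monotonicity, preserves reachability, and identifies ``rightmost'' with ``topmost'' via the shared index is the entire content, and it is carried out correctly.
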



\REM{
It is easy to observe that $\lp_{j}(I) = \lp_{j}(\Left(I))$. 
However, this is not the case that $\rp_{j}(I) = \lp_{j}(\Right(I))$ (see Figure~\ref{fig:pointers}.b).
We define $\Last_j(I)$ to be the rightmost point $n \in I$ such that $\RE_{j}(n)$ is non-empty. 
Obviously, $\rp_{j}(I) = \lp_{j}(\Last_j(I))$.
}

\REM{
\vspace{.5em}
\begin{obs} \label{obs:main}
	Let $I$ be a feasible interval of $\F_i$, and $j$ be an index $(0 \lee i \lee j \lee m)$. Then
	\addtolength\leftmargini{0.8em}
	\begin{enumerate}
		\item[\rm (i)] any point on $\RE_{j}(I)$ is reachable from $\Left(I)$;
		\item[\rm (ii)] for any point $n \in I$ with non-empty $\RE_j(n)$, $\rp_{j}(n) = \rp_{j}(I)$;
		\item[\rm (iii)] for any two point $n,m \in I$ with $n \lex m$, we have $\lp_{j}(n) \lex \lp_{j}(m)$ and $\rp_{j}(n) \lex \rp_{j}(m)$.
	\end{enumerate}		
\end{obs}

\begin{proof}
	(i) and (ii) follow from transitivity of reachability.
	(iii) follows from Lemma~\ref{lemma:cross}.
	\qed
\end{proof}
}

\REM{
\noindent
For $0 \lee j \lee m$, 
we define the \emph{reachable set} $\R(j)$ recursively as follows: 
\begin{itemize}
	\item $\R(0) = \F_0$,
	\item $\R(j) = \RE_{j}(\R(j-1))$ for all $1\lee j \lee m$.
\end{itemize}
}%


\section{The Main Data Structure} 
\label{sec:main}

In this section, we describe our main data structure 
that yields improved algorithms for several variants of the \Frechet distance. 
For $0 \lee j \lee m$, 
we define the \emph{reachable set} $\R(j) := \RE_{j}(\F_0)$
to be the set of all points in $\F_j$ reachable from $\F_0$.  
We call each interval of $\R(j)$,
contained in a feasible interval of $\F_j$,
a \emph{reachable interval}.
By our definition, $\R(0) = \F_0$.
The following observation is immediate by the transitivity of reachability.

\begin{obs} \label{obs:reachable}
	For $0 \lee i < j \lee m$, $\R(j) = \RE_{j}(\R(i))$.
\end{obs}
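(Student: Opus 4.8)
The plan is to prove the set equality $\R(j) = \RE_{j}(\R(i))$ by establishing the two inclusions separately. The only tools needed are the transitivity of reachability (already noted right before Observation~\ref{obs:reachable}: if $u \reach v$ and $v \reach w$ then $u \reach w$) and the elementary fact that a monotone feasible curve in $\Feps(P,Q)$ whose endpoints lie at heights $0$ and $j$ must, by continuity and monotonicity of its $y$-coordinate, pass through every intermediate height, in particular through $\FD_i$.

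For the inclusion $\RE_{j}(\R(i)) \subseteq \R(j)$: take any $v \in \RE_{j}(\R(i))$. By definition of $\RE_j(\cdot)$ there is a point $w \in \R(i)$ with $w \reach v$, and by definition of $\R(i) = \RE_i(\F_0)$ there is a point $u \in \F_0$ with $u \reach w$. Transitivity gives $u \reach v$, so $v \in \RE_j(\F_0) = \R(j)$. This direction is immediate and does not even need $i < j$.

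For the reverse inclusion $\R(j) \subseteq \RE_{j}(\R(i))$: take $v \in \R(j) = \RE_j(\F_0)$, so there is $u \in \F_0$ and a monotone feasible curve $\pi$ from $u$ to $v$. Since $u$ lies at height $0$ and $v$ at height $j > i$, the curve $\pi$ meets the horizontal line at height $i$; pick a point $w$ in $\pi \cap \FD_i$. Then $w$ is feasible (it lies on $\pi \subseteq \Feps(P,Q)$) and lies in $\FD_i$, hence $w \in \F_i$; the sub-arc of $\pi$ from $u$ to $w$ witnesses $u \reach w$, so $w \in \RE_i(\F_0) = \R(i)$; and the sub-arc of $\pi$ from $w$ to $v$ witnesses $w \reach v$. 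Therefore $v \in \RE_j(\R(i))$, completing the proof.

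There is no real obstacle here — the statement is genuinely a one-line consequence of transitivity together with the intermediate-height argument. The only point requiring a moment's care is the $\subseteq$ direction: one must observe that the splitting point $w$ chosen on $\pi$ is simultaneously feasible and located in the one-dimensional diagram $\FD_i$, and that both sub-arcs of $\pi$ remain monotone and feasible (which is automatic, since any sub-arc of a monotone feasible curve is again monotone and feasible). Once these trivialities are dispatched, the equality follows.
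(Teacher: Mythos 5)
Your proof is correct and takes the intended approach; the paper itself offers no proof, declaring the observation ``immediate by the transitivity of reachability.'' Your write-up is in fact slightly more careful than the paper's one-liner: you rightly point out that transitivity alone only gives the inclusion $\RE_j(\R(i)) \subseteq \R(j)$, while the reverse inclusion additionally needs the intermediate-height argument (a monotone feasible curve from height $0$ to height $j$ must cross $\FD_i$, and its sub-arcs remain monotone and feasible). That second ingredient is what makes the equality, not just one containment, go through, and your explicit identification of it is a genuine improvement in rigor over the paper's terse attribution.
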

An important property of the reachable sets is described in the following lemma.

\begin{lemma} \label{lemma:reach}
	For any two indices $i,j$ $(0 \lee i < j \lee m)$ and any point $u \in \R(i)$,
	$\RE_{j}(u) = \R(j) \cap [\lp_j(u), \rp_j(u)]$. 
\end{lemma}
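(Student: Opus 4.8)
The plan is to prove the two inclusions of $\RE_{j}(u) = \R(j) \cap [\lp_j(u), \rp_j(u)]$ separately, the containment $\subseteq$ being essentially bookkeeping and the reverse containment being the real content. For $\subseteq$: any $v \in \RE_{j}(u)$ satisfies $\lp_j(u) \lee v \lee \rp_j(u)$ directly from the definition of the left and right pointers, and since $u \in \R(i) = \RE_{i}(\F_0)$ there is a point $w \in \F_0$ with $w \reach u$, so by transitivity of $\reach$ we get $w \reach v$, i.e. $v \in \RE_{j}(\F_0) = \R(j)$; hence $v$ lies in the right-hand side. I would first note that if $\RE_{j}(u)$ is empty then both pointers are NIL, the interval is empty by convention, and the identity is trivial, so throughout we may assume $\lp_j(u)$ and $\rp_j(u)$ are genuine points of $\F_j$ (and, being points on a single horizontal line, all points of $\F_j$ are totally ordered left to right).

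For the reverse inclusion, take $v \in \R(j) \cap [\lp_j(u), \rp_j(u)]$; the goal is to produce a monotone feasible path from $u$ to $v$, i.e. to show $u \reach v$. First I would extract a witness from $v \in \R(j) = \RE_{j}(\F_0)$: there is $w \in \F_0$ and a monotone feasible curve $\pi$ from $w$ to $v$, and since $\pi$ runs from height $0$ up to height $j > i$ it meets the line $y = i$ at a feasible point $u' \in \F_i$ whose tail along $\pi$ witnesses $u' \reach v$. We also have $u \reach \lp_j(u)$ and $u \reach \rp_j(u)$ for free. Now $u$ and $u'$ both lie on $\F_i$ and are therefore comparable; I would split into the cases $u \lee u'$ and $u' \lee u$. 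If $u \lee u'$, apply the Crossing Lemma (Lemma~\ref{lemma:cross}) to $u \lee u'$ on $\F_i$ together with $v \lee \rp_j(u)$ on $\F_j$, using the reachabilities $u \reach \rp_j(u)$ and $u' \reach v$, to conclude $u \reach v$. If instead $u' \lee u$, apply the Crossing Lemma symmetrically to $u' \lee u$ on $\F_i$ and $\lp_j(u) \lee v$ on $\F_j$, using $u' \reach v$ and $u \reach \lp_j(u)$, again obtaining $u \reach v$. In either case $v \in \RE_{j}(u)$, finishing the proof.

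The only delicate point — and the main (minor) obstacle — is matching the hypotheses of the Crossing Lemma correctly in the two cases: deciding which of $u, u'$ plays the ``left'' role at height $i$, and checking that $v$ lying between $\lp_j(u)$ and $\rp_j(u)$ is exactly what supplies the required order relation between the two reached points at height $j$. Everything else reduces to transitivity of $\reach$ and the definitions of $\R(\cdot)$, $\RE_{j}$, and the pointers; Observation~\ref{obs:reachable} is implicitly the fact that lets us freely interchange ``reachable from $\F_0$'' with the $\R(i)$/$\R(j)$ notation when chaining these arguments.
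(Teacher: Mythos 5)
Your proof is correct and follows essentially the same route as the paper: the forward inclusion is dispatched by definitions and transitivity, and the reverse inclusion is handled by extracting a witness $u' \in \R(i)$ with $u' \reach v$ and invoking the Crossing Lemma with $\lp_j(u)$ or $\rp_j(u)$ depending on whether $u'$ lies left or right of $u$. The paper phrases this direction as a proof by contradiction and pulls $u'$ directly from $\R(j) = \RE_j(\R(i))$ rather than through an explicit curve from $\F_0$, but these are cosmetic differences, not a different argument.
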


\begin{proof}
	Let $S = [\lp_j(u), \rp_j(u)]$.
	By Observation~\ref{obs:reachable},
	$\R(j) = \RE_{j}(\R(i))$.
	Thus, it is clear by the definition of pointers 
	that $\RE_{j}(u) \subseteq \R(j) \cap S$.
	Therefore, it remains to be shown that $\R(j) \cap S \subseteq \RE_{j}(u)$.
	Suppose, by way of contradiction, that there is a point $v \in \R(j)  \cap S$
	such that $v \not\in \RE_{j}(u)$.
	Since $v \in \R(j)$, there exists some point $u' \in \R(i)$ such that
	$u' \reach v$.
	If $u'$ is to the left (resp., to the right) of $u$, then 
	the points $u, u', v$, and $\lp_j(u)$ (resp., $\rp_j(u)$)
	satisfy the conditions of Lemma~\ref{lemma:cross}.
	Therefore, by Lemma~\ref{lemma:cross}, $u \reach v$,
	which implies that $v \in \RE_{j}(u)$; a contradiction.
	\qed
\end{proof}

Lemma~\ref{lemma:reach} provides an efficient method for storing the sets
$\RE_{j}(I)$, for all feasible intervals $I$ on $\F_0$.
Namely, instead of storing each set $\RE_{j}(I)$ separately,
one set per feasible interval $I$, 
which takes up to $\bigTheta(n^2)$ space,
we only need to store a single set $\R(j)$, 
along with the pointers $\lp_{j}(I)$ and $\rp_{j}(I)$,
which takes only $O(n)$ space in total.
The set $\RE_{j}(I)$, for each interval $I$ on $\F_0$,
can be then obtained by $\R(j) \cap [\lp_j(I), \rp_j(I)]$.
For each interval $I$ on $\F_0$,
we call the set $\set{\lp_j(I), \rp_j(I)}$
a \emph{compact representation} of $\RE_{j}(I)$.
The following lemma is a main ingredient of our fast computation of reachable sets.

\begin{lemma} \label{lemma:propagate}
	For $0 < j \lee m$, if $\R(j-1)$ is given,
	then $\R(j)$ can be computed in $O(n)$ time. 
\end{lemma}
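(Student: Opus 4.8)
The plan is to compute $\R(j)=\RE_{j}(\R(j-1))$ — which equals $\R(j)$ by Observation~\ref{obs:reachable} — by a single left‑to‑right sweep over the $n$ cells lying in row $j$, namely the cells $[i-1,i]\times\set{j-1}\cup\set{j}$, more precisely $[i-1,i]\times[j-1,j]$ for $i=1,\dots,n$. The guiding observation is that any monotone feasible curve from a point of $\F_{j-1}$ to a point of $\F_{j}$ has its second coordinate confined to $[j-1,j]$, hence stays entirely inside this row strip and passes through a contiguous block of its cells from left to right. So it suffices to track reachability within the strip, cell by cell.

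First I would, for each cell $i$, read off the reachable portion of its bottom edge $B_i$ (the feasible interval $\F_{j-1}\cap([i-1,i]\times\set{j-1})$) directly from $\R(j-1)$. Since $B_i$ is a single interval, and by transitivity of reachability any point of $B_i$ lying to the right of a reachable point of $B_i$ is itself reachable, $\R(j-1)\cap B_i$ is a suffix of $B_i$ and thus a single interval; all these can be extracted by one synchronized scan of the $O(n)$ reachable intervals of $\R(j-1)$ against the $n$ cells. Next I would sweep the cells for $i=1,\dots,n$, maintaining the invariant that the reachable portion of the left edge of the current cell (equivalently, the right edge of the previous cell) is a single interval. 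Inside cell $i$ the free space is convex and available in $O(1)$ time (Section~\ref{sec:classicalFD}); combining the reachable part of $B_i$ coming from $\R(j-1)$ with the reachable left‑edge interval carried by the sweep, convexity makes it a constant‑size case analysis to output (a) the reachable portion of the top edge $T_i=\F_{j}\cap([i-1,i]\times\set{j})$, which is a suffix of $T_i$, and (b) the reachable portion of the right edge, which becomes the left‑edge data for cell $i+1$. Concatenating the top‑edge intervals over all $n$ cells, merging those that meet across a shared cell corner, yields $\R(j)$. The work is $O(1)$ per cell plus the $O(n)$ scan, for $O(n)$ total.

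Correctness rests on two facts. The first is that the reachable subset of each edge of each cell is a single interval: this follows from the Crossing Lemma (Lemma~\ref{lemma:cross}), which rules out interleaving of reachable and unreachable points along an edge, together with convexity of the cell free space. The second is that the per‑cell rule is exact, i.e. a point of $T_i$ is flagged reachable by the sweep precisely when some monotone feasible path reaches it from $\F_{j-1}$ within the strip; this again uses convexity — any monotone segment joining two feasible points of a convex cell is feasible — to reduce each cell to comparing the left endpoints of $B_i$, of the reachable left‑edge interval, of $T_i$, and of the right edge.

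I expect the main obstacle to be exactly this per‑cell bookkeeping: verifying that the reachable left‑edge interval propagated by the sweep, combined with the reachable interval on $B_i$ supplied by $\R(j-1)$, produces precisely the reachable intervals on the top and right edges, while handling the corner cases where one of these intervals is empty, where the bottom‑edge and left‑edge reachable parts do not meet at the shared corner $(i-1,j-1)$, and where a reachable interval of $\R(j)$ straddles several cells. Once the interval structure is pinned down via Lemma~\ref{lemma:cross}, each cell becomes a constant‑time computation and the claimed $O(n)$ bound follows.
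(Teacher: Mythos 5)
Your proof is correct, but it takes a genuinely different route from the paper's. You run the standard Alt--Godau per-cell dynamic program restricted to the strip $[0,n]\times[j-1,j]$: starting from the reachable suffix of each bottom-edge feasible interval (read off from $\R(j-1)$), you sweep cell by cell, maintaining the single reachable interval on the current left edge and, using convexity of the cell free space, derive in $O(1)$ the reachable intervals on the top and right edges. This is an elementary and self-contained $O(n)$ argument; the justification you give via the Crossing Lemma is heavier than needed --- the ``single interval'' fact follows just from being able to extend a monotone feasible path rightward or upward inside the convex cell --- but the conclusion is right, and the observation that the strip is all that matters (since any monotone path from $y=j-1$ to $y\in[j-1,j]$ cannot leave $[j-1,j]$) is the key point.

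The paper's proof does something different and a bit stronger: it first computes, for \emph{every} reachable interval $I$ on $\R(j-1)$, the two pointers $\lp_j(I)$ and $\rp_j(I)$ using the vertical-ray-shooting machinery of Section~\ref{sec:shooting} (Corollary~\ref{cor:rightmost} to locate the rightmost take-off obstacle, and the next/prev pointers of Lemma~\ref{lemma:scan}), and only then derives $\R(j)$ as the union of $\R(j)\cap[\lp_j(I),\rp_j(I)]$. The reason is that those per-interval pointers are not a throwaway intermediate; they are item~(ii) of the \fs map (Lemma~\ref{lemma:map}) and are consumed by the query procedure (Algorithm~\ref{alg:query}). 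Your sweep computes $\R(j)$ just as fast, but it only tracks ``reachable from $\R(j-1)$ as a whole,'' not ``reachable from each particular interval $I$,'' so it does not hand you $\lp_j(I),\rp_j(I)$ for free. In short: your proof is valid and simpler for the lemma as literally stated, while the paper's version is engineered to also produce the byproducts the rest of Section~\ref{sec:main} depends on.
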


	\begin{figure}[t]
		\centering
		\includegraphics[width=0.8\columnwidth]{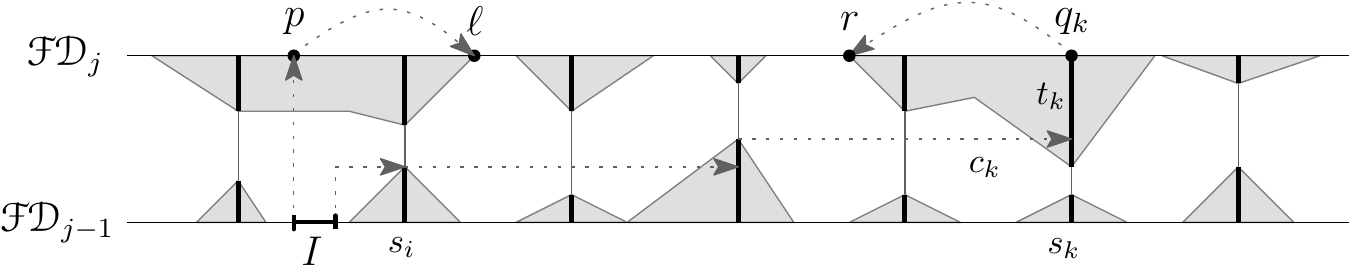}
		\caption{
			Computing $\R(j)$ from $\R(j-1)$.}
		\label{fig:propagate}
	\end{figure}
	
\begin{proof}
	Let $\CD$ be 
	the intersection of the \fs diagram
	with the rectangle $[0,n]\times[j-1,j]$.
	$\CD$ is composed of $n$ square cells, numbered from left to right by $c_1$ to $c_n$.
	For all reachable intervals $I$ on $\R(j-1)$, 
	we compute pointers $\lp_j(I)$ and $\rp_j(I)$
	in $O(n)$ time as follows.
	For each cell $c_k$ in $\CD$, the intersection of 
	the right boundary of $c_k$ with the infeasible part of the \fs diagram
	forms two (possibly empty) vertical segments, denoted by
	$s_k$ and $t_k$, respectively, 
	as in Figure~\ref{fig:propagate}. 
	For each cell $c_k$, 
	we denote the top-right corner of $c_k$ by $q_k$.
	We pre-compute for each point $q_i$, $1 \lee i \lee n$,
	a pointer $\Next(q_i)$ (resp., $\Prev(q_i)$) 
	that points to the first feasible point on or immediately
	after (resp., before) $q_i$ in $\FD_j$.
	Let $S$ be the set of all left and right endpoints of feasible intervals on $\FD_j$.
	Since for each point $q_i$, $\Next(q_i)$ and $\Prev(q_i)$, if not $\nil$, are
	included in $S$, we can compute all next/prev pointers using two linear scans
	in $O(n)$ time by Lemma~\ref{lemma:scan}.
	After computing $\Next(q_i)$ pointers,
	we can compute $\Next(q)$ for any point $q \in \FD_j$ in constant time.
	
	Now, fix an interval $I$ on $\R(j-1)$.
	We compute $\lp_j(I)$ and $\rp_j(I)$ as follows.
	Let $c_i$ be the cell containing $I$,
	let $p$ be the vertical projection of $\Left(I)$ onto $\FD_j$,
	and let $t_k$ be the rightmost segment reachable from $s_i$,
	computed by Corollary~\ref{cor:rightmost}
	(see Figure~\ref{fig:propagate}).
	We set $\ell = \Next(p)$ and $r = \Prev(q_k)$.
	(If $\Next(p) > \Prev(q_k)$, we set $\ell = r = \nil$.)
	It is easy to verify that no point before $\ell$ and no point after $r$ on $\FD_j$
	can be reachable from $I$, and that, every feasible point on $\FD_j$ between 
	$\ell$ and $r$ is reachable from $I$. 
	Therefore, $\lp_j(I) = \ell$ and $\rp_j(I) = r$.
	As a result, computing $\lp_j(I)$ and $\rp_j(I)$ 
	for each reachable interval $I$ on $\R(j-1)$
	takes $O(1)$ time,
	after $O(n)$ preprocessing time for computing the next/prev pointers.
	Thus, we can compute $\lp_j(I)$ and $\rp_j(I)$ for all reachable intervals $I$ on $\R(j-1)$
	in $O(n)$ total time.

	After computing the left and right pointers, 
	we can produce $\R(j) = \RE_j(\R(j-1))$
	by identifying those (portions of) intervals on $\F_j$
	that lie in at least one interval $[\lp_j(I), \rp_j(I)]$.
	Since for all intervals $I$ on $\R(j-1)$ sorted from left to right, 
	$\lp_j(I)$'s and $\rp_j(I)$'s are in sorted order 
	by Observation~\ref{obs:sorted},
	we can accomplish this step by a linear scan over the 
	left and right pointers in $O(n)$ time.
	\qed
\end{proof}

\REM{
\begin{algorithm} [h]
\caption {\sc Compute-Pointers$(I)$} 
\label{alg:pointers}
\algsetup{indent=1.5em}
\begin{algorithmic}[1]
	\vspace{0.5em}
	\STATE let $p = $ orthogonal projection of $\Left(I)$ onto $\FD_j$ 
	\STATE $\ell \eq \Next(p)$
	\STATE $r \eq \Prev(q_k)$
	\IF {$\ell = \nil \text{ or } r = \nil \text{ or } \ell > r$}
		\STATE $\ell, r \eq \nil$ 
	\ENDIF
	\STATE set $\lp_j(I) \eq \ell, \rp_j(I) \eq r$ 
\end{algorithmic}
\end{algorithm}
}
\newpage
\subsection{Data Structure}

We now describe our main data structure, 
which we call \fs map.
The data structure maintains reachability information 
on each row of the \fs diagram, using some additional
pointers that help answering reachability queries efficiently. 
The \emph{\fs map} of two curves $P$ and $Q$
consists of the following:
\begin{itemize}
	\item[(i)] the reachable sets $\R(j)$, for $0 \lee j \lee m$,
	\item[(ii)] the right pointer $\rp_{j}(I)$ for each reachable interval $I$ on $\R(j-1)$, $0 < j \lee m$,
	\item[(iii)] the leftmost reachable point after each cell in $\FD_j$, for $0 < j \lee m$, and
	\item[(iv)] the rightmost take-off point before each cell in $\FD_j$, for $0 \lee j < m$,
\end{itemize}
where a \emph{take-off} point on $\FD_j$ is a reachable point in $\R(j)$
from which a point on $\FD_{j+1}$ is reachable.
For example, in Figure~\ref{fig:query}, $\ell_j$ is the leftmost reachable point after $\ell'$,
and $r'$ is the rightmost take-off point before $r_{j-1}$.
For a cell $c$ in $\FD_j$, by \emph{after} $c$ we mean after $\Right(c)$,
and by \emph{before} $c$ we mean before $\Left(c)$.

\begin{lemma} \label{lemma:map}
	Given two polygonal curves $P$ and $Q$ of size $n$ and $m$, respectively,
	we can build the \fs map of $P$ and $Q$ in $O(nm)$ time.
\end{lemma}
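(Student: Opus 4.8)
The plan is to assemble the four components of the \fs map one after another, each in $O(nm)$ time, with Lemma~\ref{lemma:propagate} doing most of the work. First I would build the ordinary \fs diagram $\FD_\eps(P,Q)$ in $O(nm)$ time by computing, for each of the $nm$ cells, its convex feasible region (and in particular the feasible intervals on its four sides) in $O(1)$ time; this yields $\R(0)=\F_0$ for free. Then, sweeping $j$ from $1$ to $m$, I would apply Lemma~\ref{lemma:propagate} to obtain $\R(j)$ from $\R(j-1)$ in $O(n)$ time; as the proof of that lemma shows, the same computation also produces the pointers $\lp_j(I)$ and $\rp_j(I)$ for every reachable interval $I$ of $\R(j-1)$. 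Thus components (i) and (ii) of the \fs map are produced together in $O(nm)$ total time.

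For component (iii), the leftmost reachable point of $\FD_j$ lying after each cell, I would, for each fixed row $j$, invoke Lemma~\ref{lemma:scan} with $A$ the sorted list of the $n$ right cell-boundaries $1,\dots,n$ of $\FD_j$ and $B$ the sorted list of left endpoints of the reachable intervals of $\R(j)$; for each cell-boundary $a$ the lemma returns the leftmost $b\in B$ with $a\lee b$, which (up to the usual care about a point lying exactly on a cell boundary) is the desired point. This is $O(n)$ per row, hence $O(nm)$ overall. For component (iv), the rightmost \emph{take-off} point of $\FD_j$ lying before each cell ($0\lee j<m$), I would first extract the take-off points of $\R(j)$, i.e.\ those $u\in\R(j)$ with $\RE_{j+1}(u)\neq\emptyset$. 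While propagating from row $j$ to row $j+1$ (as above) one already has, for each reachable interval $I$ of $\R(j)$, the pointers $\lp_{j+1}(I)$ and $\rp_{j+1}(I)$; combining their monotonicity (Observation~\ref{obs:sorted}) with the Crossing Lemma (Lemma~\ref{lemma:cross}) shows that the take-off points inside $I$ form a single contiguous sub-interval, so the $O(n)$ take-off intervals of $\R(j)$ can be read off in $O(n)$ time. A right-to-left application of Lemma~\ref{lemma:scan} then assigns to every cell of $\FD_j$ the rightmost take-off point before its left boundary, again in $O(n)$ per row. Summing the four phases gives the claimed $O(nm)$ bound.

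The step I expect to be the main obstacle is component (iv): one has to argue carefully that the take-off points of each reachable interval are contiguous — this is precisely where the Crossing Lemma together with the monotonicity of the left/right pointers is needed — so that the take-off set of a row consists of only $O(n)$ intervals and can be produced and scanned in linear time. Components (i)--(iii), by contrast, are essentially immediate from the $O(nm)$-time construction of the \fs diagram, from Lemma~\ref{lemma:propagate}, and from the linear-scan routine of Lemma~\ref{lemma:scan}.
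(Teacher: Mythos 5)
Your construction of items (i)--(iii) coincides with the paper's: items (i) and (ii) come directly from iterating Lemma~\ref{lemma:propagate}, and item~(iii) is a per-row linear scan via Lemma~\ref{lemma:scan} over the left endpoints produced during the propagation, exactly as in the text.

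Your handling of item~(iv), however, has a genuine gap, and it is precisely the step you flagged as the main obstacle. The paper does \emph{not} try to recover the take-off intervals of $\R(j)$ from the forward pointers $\lp_{j+1}(I)$ and $\rp_{j+1}(I)$. Instead it runs a second, \emph{reverse} propagation on the rotated diagram: starting from $\R(j)$ it computes which points of $\FD_{j-1}$ can reach $\R(j)$ (another application of Lemma~\ref{lemma:propagate}, rotated $180^\circ$), and the left pointers of that reverse pass become the candidates for ``rightmost take-off before $c$.'' Your route tries to avoid this reverse pass, but it does not actually determine the take-off intervals. Your contiguity claim is correct — using the convexity of a cell's free region one can show that if $u' \in I$ can reach $\FD_{j+1}$ then so can every $u \in I$ with $u \le u'$, so the take-off points of $I$ are a prefix $[\Left(I), b]$ — but neither Observation~\ref{obs:sorted} nor Lemma~\ref{lemma:cross}, nor the pointers $\lp_{j+1}(I)$ and $\rp_{j+1}(I)$, tell you what $b$ is. Indeed, from the proof of Lemma~\ref{lemma:propagate}, $\rp_{j+1}(I)$ is determined solely by the cell index (via the rightmost-reachable-segment computation) and $\lp_{j+1}(I)$ solely by $\Left(I)$; the right endpoint $b$ of the take-off prefix depends on the largest $u \in I$ whose vertical projection's $\Next$ pointer still lands at or before the fixed right bound, which is exactly the quantity a \emph{reverse} sweep computes. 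So the ``so the take-off intervals can be read off in $O(n)$ time'' sentence is a non-sequitur: contiguity alone does not produce the endpoints. To close the gap you need an additional $O(n)$ reverse computation per row — either the paper's rotated pass, or a mirrored ray-shooting/$\Prev$-style scan — before applying your right-to-left instance of Lemma~\ref{lemma:scan}.
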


\begin{proof}
	We start from $\R(0) = \F_0$, 
	and construct each $\R(j)$ iteratively from $\R(j-1)$, for $j$ from 1 to $m$,
	using Lemma~\ref{lemma:propagate}.
	The total time needed for this step is $O(nm)$.
	The construction of $\R(j)$,
	as seen in the proof of Lemma~\ref{lemma:propagate}, 
	involves computing all right (and left) pointers,
	for all reachable intervals on $\R(j-1)$. 
	Therefore, item~(ii) of the data structure can be obtained at no additional cost.
	Item~(iii) is computed as follows.
	Let $S$ be the set of all left pointers obtained upon constructing $\R(j)$.
	For each cell $c$ in $\FD_j$, the leftmost reachable point after $c$, if any, 
	is a member of $S$. 
	We can therefore compute item~(iii) for each row $\FD_j$
	by a linear scan over the cells and the set $S$ 
	using Lemma~\ref{lemma:scan} in $O(n)$ time.
	For each row, item~(iv) can be computed analogous to item~(iii), but in a reverse order.
	Namely, given the set $\R(j)$, we compute the set of points on $\FD_{j-1}$
	reachable from $\R(j)$ in the \fs diagram rotated by 180 degrees.
	Let $S$ be the set of all left pointers obtained in this reverse computation.
	For each cell $c$ in $\FD_{j-1}$, the rightmost take-off point before $c$, if there is any, 
	is a member of $S$. 
	We can therefore compute item~(iv) for each row 
	by a linear scan over the cells and the set $S$ 
	using Lemma~\ref{lemma:scan} in $O(n)$ time.
	The total time for constructing the \fs map is therefore $O(nm)$.
	\qed
\end{proof}

In the following, we show how the reachability queries can be 
efficiently answered, using the \fs map.
For the sake of describing the query algorithm, 
we introduce two functions as follows.
Given a point $u \in \FD_j$,
we denote by $\lr(u)$ the leftmost reachable point on or after $u$ on $\FD_j$.
Analogously, we denote by $\rl(u)$ the rightmost take-off point on or before $u$ on $\FD_j$.
Note that both these functions can be computed in $O(1)$
time using the pointers stored in the \fs map.

\begin{algorithm} [t]
\caption {{\sc Query($u$)}, where $u \in \F_0$} 
\label{alg:query}
\algsetup{indent=1.5em}
\begin{algorithmic}[1]
	\vspace{0.5em}
	\baselineskip=1\baselineskip
	\STATE let $\ell_0 = r_0 = u$ \label{l:init}
	\FOR {$j = 1$ to $m$} 
		\STATE let $\ell'$ be the orthogonal projection of $\ell_{j-1}$ onto $\FD_j$  \label{l:l1}
		\STATE $\ell_j \eq \lr(\ell')$   \label{l:l2}
		\STATE let $r' = \rl(r_{j-1})$
		\IF {$r' < \ell_{j-1}$ or $r' = \nil$} \label{l:c1}
			\STATE $r_j \eq \nil$ \label{l:c2}
		\ELSE
			\STATE $r_j \eq \rp_{j}(I)$, for $I$ being the reachable interval containing $r'$
		\ENDIF 	
		\IF {$\ell_j$ or $r_j$ is $\nil$} 
			\STATE return $\nil$
		\ENDIF 
	\ENDFOR 
	\STATE return $\ell_m, r_m$

\end{algorithmic}
\end{algorithm}

\begin{lemma} \label{lemma:query}
	Let the \fs map of $P$ and $Q$ be given.
	Then, for any query point $u \in \F_0$, $\lp_{m}(u)$ and $\rp_{m}(u)$
	can be computed in $O(m)$ time.
\end{lemma}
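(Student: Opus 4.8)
**

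The plan is to prove that Algorithm~\ref{alg:query} correctly computes $\lp_m(u)$ and $\rp_m(u)$ in $O(m)$ time, by establishing the loop invariant that after iteration~$j$, the variables $\ell_j$ and $r_j$ equal $\lp_j(u)$ and $\rp_j(u)$ respectively (or are both $\nil$ if $\RE_j(u)$ is empty). The base case $j=0$ is immediate from Line~\ref{l:init}, since $\lp_0(u)=\rp_0(u)=u$. For the inductive step, I would use Observation~\ref{obs:reachable}, which tells us $\RE_j(u) = \RE_j(\RE_{j-1}(u))$, together with Lemma~\ref{lemma:reach}, which says $\RE_{j-1}(u) = \R(j-1) \cap [\lp_{j-1}(u),\rp_{j-1}(u)]$. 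So the points on $\FD_j$ reachable from $u$ are exactly those reachable from the reachable portion of $\FD_{j-1}$ lying between $\ell_{j-1}$ and $r_{j-1}$.

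The core of the argument is to verify the two updates separately. For the left pointer: by Observation~\ref{obs:sorted}, among all reachable points on $\FD_{j-1}$ between $\ell_{j-1}$ and $r_{j-1}$, the leftmost one, namely $\ell_{j-1}$ itself, has the leftmost image under $\RE_j$. Projecting $\ell_{j-1}$ vertically onto $\FD_j$ to get $\ell'$ and then taking $\lr(\ell')$ — the leftmost reachable point on or after $\ell'$ — gives precisely $\lp_j(u)$; this matches the computation of left pointers in the proof of Lemma~\ref{lemma:propagate}, where $\lp_j(I) = \Next(p)$ with $p$ the projection of $\Left(I)$. For the right pointer: again by Observation~\ref{obs:sorted}, the rightmost image is achieved by the rightmost point of $\RE_{j-1}(u)$ from which $\FD_j$ is reachable — that is, the rightmost take-off point on $\FD_{j-1}$ that is $\lee r_{j-1}$, which is exactly $r' = \rl(r_{j-1})$. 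One must check that $r'$ is genuinely reachable from $u$: since $r' \lee r_{j-1} = \rp_{j-1}(u)$ and $r'$ is reachable from $\F_0$ (being in $\R(j-1)$), Lemma~\ref{lemma:reach} gives $r' \in \RE_{j-1}(u)$ provided $r' \gee \ell_{j-1} = \lp_{j-1}(u)$; this is exactly the test on Line~\ref{l:c1}, and when it fails no point of $\FD_j$ is reachable from $u$, so $r_j = \nil$ is correct. When the test passes, $\rp_j(u) = \rp_j(r')$, and since $r'$ lies in some reachable interval $I$ of $\R(j-1)$, Observation~\ref{obs:sorted} (or the ``same right pointer within a feasible interval'' property) gives $\rp_j(r') = \rp_j(I)$, which is stored in item~(ii) of the \fs map.

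Finally, the running time: each iteration performs a constant number of operations — an orthogonal projection, two constant-time lookups of $\lr$ and $\rl$ via the pointers in items~(iii) and~(iv), a comparison, and one lookup of a stored right pointer from item~(ii) — all in $O(1)$ time; locating the reachable interval $I$ containing $r'$ is also $O(1)$ since $r'$ was obtained as a point of $\R(j-1)$ and we can store with it a handle to its interval. Hence the total is $O(m)$. I anticipate the main obstacle to be the careful handling of the $\nil$ / empty-reachable-set cases and the boundary subtleties in the right-pointer update — specifically, arguing cleanly that $r' = \rl(r_{j-1})$ is simultaneously a take-off point (so $\FD_j$ is reachable from it), is $\lee r_{j-1}$, and is $\gee \ell_{j-1}$ exactly when $\RE_j(u) \neq \emptyset$ — and in making sure the appeal to Lemma~\ref{lemma:reach} is valid at each step, which requires $u \in \R(\cdot)$ or at least that the chain of reachability from $\F_0$ is maintained. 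Once the invariant is phrased precisely, each piece follows from Observations~\ref{obs:reachable} and~\ref{obs:sorted}, Lemma~\ref{lemma:reach}, and the construction in Lemma~\ref{lemma:propagate}.
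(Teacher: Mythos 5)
Your proposal uses the same overall architecture as the paper's proof: induction on $j$ with the loop invariant $\ell_j = \lp_j(u)$ and $r_j = \rp_j(u)$, handling of the $\nil$ cases, and an $O(1)$-per-iteration time bound. Your right-pointer argument is essentially the paper's, phrased a bit more structurally (you go through Observation~\ref{obs:sorted} and Lemma~\ref{lemma:reach} to show $r' \in \RE_{j-1}(u)$ and then $\rp_j(u) = \rp_j(r') = \rp_j(I)$, rather than the paper's two-case contradiction), and it is valid.

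Where your proposal is thin is the left-pointer step. After correctly reducing to $\lp_j(u) = \lp_j(\ell_{j-1})$ via Observation~\ref{obs:sorted}, you assert that $\lr(\ell')$ ``gives precisely'' $\lp_j(\ell_{j-1})$ and offer Lemma~\ref{lemma:propagate} as the justification. That lemma, however, computes $\lp_j(I)$ for a \emph{reachable interval} $I$ of $\R(j-1)$, whose formula $\Next(p)$ is applied to the projection of $\Left(I)$; the point $\ell_{j-1}=\lp_{j-1}(u)$ need not be the left endpoint of a reachable interval (there can be points of $\R(j-1)$ strictly left of it in the same cell that are not reachable from $u$), so the lemma does not directly apply to the single point $\ell_{j-1}$. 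One direction, $\lr(\ell') \lee \lp_j(\ell_{j-1})$, is immediate ($\lp_j(\ell_{j-1})$ is in $\R(j)$ and at or after $\ell'$); the nontrivial direction, that $\lr(\ell')$ is itself reachable from $\ell_{j-1}$ (hence from $u$), is exactly what the paper establishes with its vertical-line/convexity argument for the case $\ell_j < \lp_j(u)$. An equivalent argument can be made via the Crossing Lemma: if $\lr(\ell')$ is reached from some $w \in \R(j-1)$ with $w > \ell_{j-1}$, apply Lemma~\ref{lemma:cross} with $u=\ell_{j-1}$, $u'=w$, $v=\lp_j(\ell_{j-1})$, $v'=\lr(\ell')$ (using $\lr(\ell') \lee \lp_j(\ell_{j-1})$) to conclude $\ell_{j-1}\reach\lr(\ell')$; the case $w \lee \ell_{j-1}$ follows from the vertical-segment/cell-convexity argument. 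So the claim is true and your approach is sound, but as written this step is an assertion rather than a proof, and the reference to Lemma~\ref{lemma:propagate} should be replaced with the crossing/convexity argument the paper uses. Your explicit accounting of the $O(1)$ work per iteration (including how the reachable interval containing $r'$ is located) is a nice addition that the paper's proof leaves implicit.
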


\begin{proof}
	The procedure for computing $\lp_{m}(u)$ and $\rp_{m}(u)$
	for a query point $u \in \F_0$ is described in Algorithm~\ref{alg:query}.
	The following invariant holds during the execution of the algorithm:
	After the $j$-th iteration, $\ell_j = \lp_{j}(u)$ and $r_j = \rp_{j}(u)$.
	We prove this by induction on $j$.
	The base case, $\ell_0 = r_0 = u$, trivially holds. 
	Now, suppose inductively that 
	$\ell_{j-1} = \lp_{j-1}(u)$ and $r_{j-1} = \rp_{j-1}(u)$.
	We show that after the $j$-th iteration, the invariant holds for $j$.
	We assume, w.l.o.g., that $\RE_j(u)$ is non-empty, i.e., $\lp_{j}(u) \lee \rp_{j}(u)$.
	Otherwise, the last take-off point from $\R(j-1)$ will be either $\nil$, or 
	smaller than $\ell_{j-1}$, which is then detected and handled 
	by Lines~\ref{l:c1}--\ref{l:c2}.
	
	\begin{figure}[b]
		\centering
		\includegraphics[width=0.7\columnwidth]{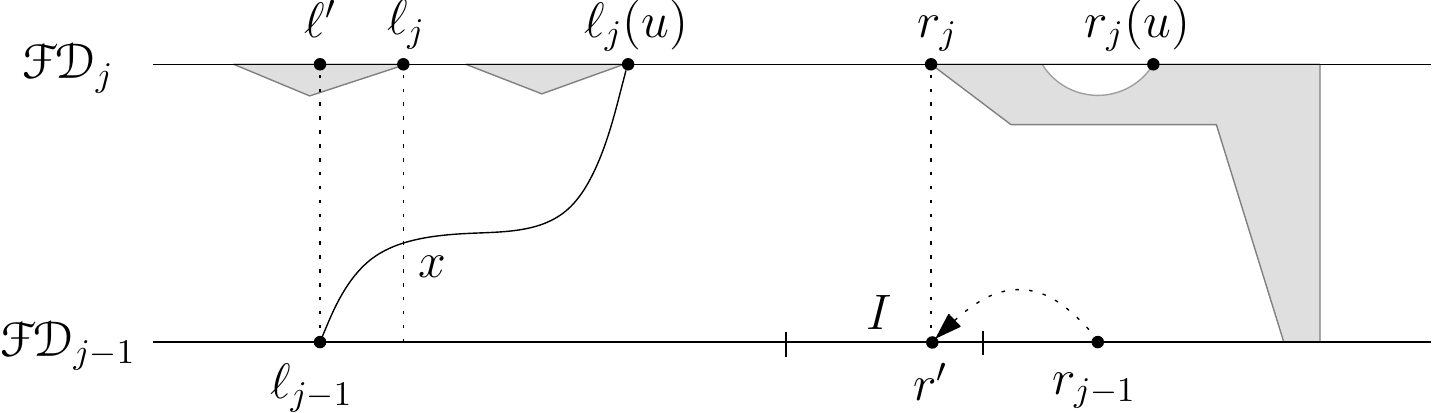}
		\caption{Proof of Lemma~\ref{lemma:query}.}
		\label{fig:query}
	\end{figure}

	We first show that $\ell_j = \lp_{j}(u)$.
	Suppose by contradiction that $\ell_j \not= \lp_{j}(u)$.
	If $\ell_j < \lp_{j}(u)$, then we draw a vertical line
	from $\ell_j$ to $\FD_{j-1}$ (see Figure~\ref{fig:query}).
	This line crosses any monotone path from
	$\ell_{j-1} = \lp_{j-1}(u)$ to $\lp_j(u)$ at a point $x$.
	The line segment $x\ell_j$ is completely in the free space,
	because otherwise, it must be cut by an obstacle, which 
	contradicts the fact that the free space inside a cell is convex~\cite{AltG95}.
	But then, $\ell_j$ becomes reachable from $\ell_{j-1}$ through $x$,
	contradicting the fact that $\lp_j(u)$ is the leftmost reachable point in $\R(j)$.
	The case, $\ell_j > \lp_{j}(u)$, cannot arise, 
	because then, $\lp_j(u)$ is a reachable point after $\ell'$ and
	before $\ell_j$, which contradicts
	our selection of $\ell_j$ as the leftmost reachable point of $\ell'$ in line~\ref{l:l2}.

	We can similarly show that $r_j = \rp_{j}(u)$.
	Suppose by contradiction that $r_j \not= \rp_{j}(u)$.
	The case $r_j > \rp_{j}(u)$ is impossible,
	because otherwise, $r_j$ is a point on $\R(j)$
	reachable from $\R(j-1)$ which appears after $\rp_j(u)$.
	This contradicts the fact that $\rp_j(u)$ is the rightmost point on $\R(j)$.
	If $r_j < \rp_{j}(u)$ (see Figure~\ref{fig:query}),
	then $\rp_{j}(u)$ is reachable from a point $x \in \R(j-1)$ with $x < r'$, 
	because $r'$ is the rightmost take-off point on or before $r_{j-1}$.
	But then, by Lemma~\ref{lemma:cross}, $\rp_{j}(u)$ is reachable from $r'$,
	which contradicts the fact that $r_j$ is the left pointer of the reachable interval $I$
	containing $r'$.
	\qed
\end{proof}

\REM{
The result of this section is summarized below.

\begin{theorem} \label{thm:mainFreeSpaceMap}
	Given two polygonal curves $P$ and $Q$ of size $n$ and $m$, respectively,
	we can build in $O(nm)$ time a data structure of size $O(nm)$, such that
	for any query point $u \in \F_0$, 
	a compact representation of $\RE_m(u)$
	can be reported in $O(m)$ time. 
\end{theorem}
}


\subsection{Improved Query Time} \label{sec:improved}

In this section, we show how the query time
in the \fs map can be improved
by keeping some additional information in our data structure,
without increasing either the preprocessing time or space complexity.
This improved query time is crucial for 
applications such as the minimum walk problem.

We use our vertical ray shooting data structure from Section~\ref{sec:shooting}.
For each feasible interval $I$ on $\F_0$, 
we partition $I$ into $O(m)$ subintervals, such that
for all points $u$ in a subinterval, 
the first segment directly above $u$ in the ray shooting data structure is the same.
Such a partitioning can be obtained by a simple scan on each column of
the \fs map from bottom to the top.
The total number of subintervals obtained this way is $O(nm)$.
	
\begin{theorem} \label{thm:mainFreeSpaceMap}
	Given two polygonal curves $P$ and $Q$ of size $n$ and $m$, respectively,
	we can build in $O(nm)$ time a data structure of size $O(nm)$, such that
	for any query point $u \in \F_0$, 
	a compact representation of $\RE_m(u)$
	can be reported in $O(\log m)$ time.
	Furthermore, if the subinterval containing $u$ is given as part of the query,
	then a compact representation of $\RE_m(u)$ can be reported in $O(1)$ time.
\end{theorem}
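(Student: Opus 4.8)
The plan is to leave the \fs map of Lemma~\ref{lemma:map} untouched and attach to it two further ingredients: a vertical ray-shooting structure per column (Lemma~\ref{lemma:shooting}), and, hanging off a partition of $\F_0$ into $O(nm)$ subintervals, a precomputed answer for each piece. By Lemma~\ref{lemma:reach} the ``compact representation'' we must report is just the pair $\{\lp_m(u),\rp_m(u)\}$, so it is enough to be able to read off each of these two numbers quickly.

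The right pointer needs essentially no new work. For any feasible interval $I$ of $\F_0$ one has $\rp_m(u)=\rp_m(\Right(I))$ for every $u\in I$: since $I$ lies inside a single cell and the free space in a cell is convex, the horizontal segment from $u$ to $\Right(I)$ is feasible and monotone, so $u\reach\Right(I)$; combined with $u\lee\Right(I)$ and Observation~\ref{obs:sorted} this forces $\rp_m(u)=\rp_m(\Right(I))$. There are $O(n)$ feasible intervals on $\F_0$, so all the values $\rp_m(\Right(I))$ can be produced in $O(nm)$ time (by Algorithm~\ref{alg:query}, or by a top-down right-pointer sweep analogous to the one below) and stored with their intervals.

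The left pointer is where the subintervals enter. Fix a column $i$; for every integer height $j$ the \fs diagram induces in that column a left-attached and a right-attached maximal infeasible horizontal segment (each a single interval, by convexity), giving sets $S=\{s_1,\dots,s_m\}$ and $T=\{t_1,\dots,t_m\}$ on which we build the structure of Lemma~\ref{lemma:shooting}. For $u$ in cell $\cell{i,1}$ let $\sigma(u)$ be the lowest segment of $S\cup T$ directly above $u$; the set of segments above a point changes only when its $x$-coordinate passes the right endpoint of some $s_j$ or the left endpoint of some $t_j$, so $\sigma(\cdot)$ is constant on each of $O(m)$ subintervals of the unique feasible interval of $\F_0$ in $\cell{i,1}$, for $O(nm)$ subintervals in all; these are produced by one bottom-to-top scan per column. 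The key claim is that $\lp_m$ is constant on each subinterval. Take $u,u'$ in the same subinterval with common first obstacle $\sigma$ lying at height $j$. For every $k<j$ the point directly above $u$ at height $k$ is feasible and, by convexity of the free space in a cell, reachable from $u$ along the vertical (hence monotone) segment, so the frontier $\ell_0,\ell_1,\dots$ of Algorithm~\ref{alg:query} started at $u$ is exactly this vertical ray through rows $1,\dots,j-1$. On row $j$ the vertical projection $\ell'$ of $\ell_{j-1}$ lands inside $\sigma$ and is therefore infeasible, so $\ell_j=\lr(\ell')$ is the leftmost reachable point of $\FD_j$ lying beyond $\sigma$, a value determined by $\sigma$ alone and independent of where $u$ sits in its shadow. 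Hence $\ell_j$ agrees for $u$ and $u'$, and since the remaining steps of Algorithm~\ref{alg:query} are deterministic, $\lp_m(u)=\ell_m=\lp_m(u')$.

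What is left, and is the main obstacle, is to precompute the common $\lp_m$ value for all $O(nm)$ subintervals within an $O(nm)$ time budget --- running Algorithm~\ref{alg:query} once per subinterval would cost $\bigTheta(nm^2)$. I would do this with a single sweep over the rows from $m$ down to $0$. For a reachable interval $I$ of $\R(j)$ put $\texttt{LM}(I):=\lp_m(\Left(I))$. Shooting a vertical ray upward from $\Left(I)$ inside its column either reaches row $m$ with no obstacle, in which case $\texttt{LM}(I)$ is the point of $\R(m)$ directly above $\Left(I)$, or it first meets a segment $\sigma$ at some height $j'>j$; by the synchronization argument above the frontier then jumps to $\Left(I')$, where $I'$ is the first reachable interval of $\R(j')$ beyond $\sigma$ (obtained in $O(1)$ from the \fs map), whence $\texttt{LM}(I)=\texttt{LM}(I')$, or $\texttt{LM}(I)=\nil$ if no reachable point follows $\sigma$. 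Since $I'$ always lies on a strictly higher row, one pass fills every $\texttt{LM}$ in $O(1)$ amortized time --- the ray-shooting lookups from the (sorted) left endpoints of the reachable intervals of a column being resolved by a merge-scan as in Lemma~\ref{lemma:scan} --- for $O(nm)$ total. The answer stored with a subinterval is then $\texttt{LM}$ of its target interval (or the straight projection to row $m$), together with the $\rp_m$ value of its parent feasible interval. A query point $u$ is handled by reading off its column, hence its feasible interval, from the $x$-coordinate of $u$ in $O(1)$ time, binary-searching for the enclosing subinterval among the $O(m)$ of that interval in $O(\log m)$ time --- or skipping this when the subinterval is supplied --- and returning the two stored numbers in $O(1)$; correctness of the reported $\{\lp_m(u),\rp_m(u)\}$ follows from Lemma~\ref{lemma:reach}. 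The \fs map, the $n$ ray-shooting structures, the $O(nm)$ subintervals and the $\texttt{LM}$/right-pointer tables all fit in $O(nm)$ space and are built in $O(nm)$ time.
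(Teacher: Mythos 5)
Your construction tracks the paper's closely: both store $\rp_m$ once per feasible interval of $\F_0$, both build the Lemma~\ref{lemma:shooting} ray-shooting structure per column, both do a top-down sweep that attaches to each reachable interval $I$ the value $\lp_m(\Left(I))$ by jumping through the first obstacle, and both partition $\F_0$ into $O(nm)$ subintervals on which the answer is effectively constant. Your explicit synchronization argument, tracing Algorithm~\ref{alg:query}'s frontier $\ell_0,\ell_1,\dots$ up the vertical ray until it first meets an obstacle, is a useful piece of justification that the paper leaves implicit.

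There is, however, a genuine gap in the choice of segments for the ray-shooting structure. You build the deque of Lemma~\ref{lemma:shooting} on the maximal \emph{infeasible} segments $s_j,t_j$ of each column, whereas the paper builds it on the maximal \emph{non-reachable} segments. This matters because Algorithm~\ref{alg:shooting} precomputes answers only for the query points $\Right(s_i)$, and the top-down sweep needs the first obstacle above $\Left(I)$ for each reachable interval $I$. With non-reachable segments these coincide: $\Left(I)$ \emph{is} $\Right(s_j)$ because the reachable set on a row within a cell is a single interval bounded on the left by the non-reachable segment. With infeasible segments, $\Left(I)$ can sit strictly to the right of $\Right(s_j^{\text{infeasible}})$ whenever a feasible-but-unreachable gap lies to the left of $I$, and the point $\Right(s_j^{\text{infeasible}})$ is then unreachable, so your argument that the vertical ray stays feasible does not transfer and the first obstacle above it can genuinely differ from the one above $\Left(I)$. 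The merge-scan patch you invoke does not repair this: Lemma~\ref{lemma:scan} is a one-dimensional merge of two $x$-sorted lists, the $x$-coordinates $\Left(I_j)$ are not monotone in $j$ within a column, and pushing them in place of $\Right(s_j)$ would break the decreasing-length invariant the deque in Algorithm~\ref{alg:shooting} relies on. Switching to non-reachable segments, as the paper does, removes the problem entirely; your subinterval partition of $\F_0$ is unchanged by the switch (for a reachable starting point the first infeasible and first non-reachable obstacle on the vertical ray are at the same height), so the rest of your argument then goes through as written.
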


\begin{proof}
	We first build the \fs map in $O(nm)$ time as per Lemma~\ref{lemma:map}.
	Let $I$ be a feasible interval on $\F_0$.
	For each $u \in I$, we have $\rp_m(u) = \rp_m(I) = \rp_m(\Right(I))$.
	Therefore, by storing $\rp_m(I)$ for all feasible intervals $I$ on $\F_0$,
	we can report $\rp_m(u)$ for each query point $u \in \F_0$ in $O(1)$ time.
	Since there are $O(n)$ feasible intervals on $\F_0$,
	and computing each right pointer takes $O(m)$ time by Lemma~\ref{lemma:query},
	this step takes $O(nm)$ time in total.
	To report $\lp_m(u)$ quickly, we store for each reachable interval $I \in \R(j)$,
	$0 < j < m$, the pointer $\lp_m(I)$ in the data structure.
	We can compute all these left pointers in $O(nm)$ time as follows.
	We first preprocess each column of the \fs map
	for vertical ray shooting as in Lemma~\ref{lemma:shooting},
	by assuming horizontal segments 
	to be non-reachable intervals on each row $\FD_j$. 
	To compute left pointers, we inductively process 
	the \fs map from top to bottom.
	Suppose that the left pointers are computed and stored for 
	all reachable intervals above $\FD_j$,
	and let $I$ be a reachable interval on $\FD_j$,
	with $q = \Left(I)$.
	We can find the first non-reachable segment $s$
	above $q$ using our ray shooting data structure in $O(1)$ time.
	If no such $s$ exists, $\lp_m(q)$ is directly above $q$ on $\R(m)$.
	Otherwise, as in Algorithm~\ref{alg:query},
	we project $q$ directly to a point $q' \in s$, 
	and then, find the first reachable point $p$ after $q'$.
	If such a point $p$ exists, it should be the left endpoint of a reachable interval $I'$,
	for which we have already stored the pointer $\lp_m(\Left(I'))$.
	Therefore, $\lp_m(q) = \lp_m(\Left(I'))$ can be computed 
	in $O(1)$ time. 
	As a result, finding all left pointers takes $O(n)$ time for each $\FD_j$,
	and $O(nm)$ time for the whole \fs map.

	Now, for each subinterval $J$ on $\F_0$,
	we compute $\lp_m(J)$ in the same way described above 
	in $O(1)$ time. Namely, we find the unique segment $s$ above $J$,
	find the first reachable point $p$ after $s$, 
	and take the pointer $\lp_m(p)$, which is stored in the data structure.
	The total time and space needed for this step is therefore $O(nm)$.
	For any query point $u \in \F_0$,
	we first locate the subinterval $J$ containing $u$ in $O(\log m)$ time.
	Now, $\lp_m(u) = \lp_m(J)$ and
	$\rp_m(u) = \rp_m(I)$ for the feasible interval $I$ containing subinterval $J$,
	both accessible in $O(1)$ time.
	Note that the only expensive operation in our query algorithm is
	to locate the subinterval containing the query point.
	If the subinterval is given, 
	then the query can be answered in $O(1)$ time.
	\qed
\end{proof}


\section{Applications} \label{sec:appl}

In this section, we provide some of the applications of our \fs map data structure.

\subsection{Partial Curve Matching\ }
Given two polygonal curves $P$ and $Q$, and an $\eps \gee 0$,
the partial curve matching problem involves
deciding whether there exists a subcurve $R \subseteq P$ such that $\distF(R,Q) \lee \eps$.
As noted in~\cite{AltG95}, 
this is equivalent to deciding whether 
there exists a monotone path in the free space from $\FD_0$ to $\FD_m$. 
This decision problem can be efficiently solved using the \fs map.
For each feasible interval $I$ on $\FD_0$,
we obtain a compact representation of $\R_m(\Left(I))$ using 
Theorem~\ref{thm:mainFreeSpaceMap} in $O(1)$ time.
Observe that $\R_m(I) = \emptyset$ if and only if 
$\R_m(\Left(I)) = \emptyset$.
Therefore, we can decide in $O(nm)$ time 
whether there exists a point on $\FD_m$ reachable from $\FD_0$.
Furthermore, we can use parametric search as in~\cite{AltG95}
to find the smallest $\eps$ for which the answer to the above decision problem is ``YES" in 
$O(nm \log (nm))$ time.
Therefore, we obtain:

\begin{theorem} \label{thm:partial}
	Given two polygonal curves $P$ and $Q$ of size $n$ and $m$, respectively,
	we can decide in $O(nm)$ time whether there exists a subcurve $R \subseteq P$ 
	such that $\distF(R,Q) \lee \eps$, for a given $\eps \gee 0$.
	A subcurve $R \subseteq P$ minimizing $\distF(R,Q)$ can be computed in $O(nm \log (nm))$ time.
\end{theorem}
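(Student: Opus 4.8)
The plan is to establish the decision-problem bound first, then derive the optimization bound via parametric search, exactly paralleling the structure of Alt and Godau's original argument but substituting our faster decision routine. For the decision problem, I would invoke the \fs map machinery developed in Section~\ref{sec:main}. The observation of~\cite{AltG95} is that $\distF(R,Q) \lee \eps$ for some subcurve $R \subseteq P$ if and only if there is a monotone feasible path from some point on $\FD_0$ to some point on $\FD_m$; in our notation, this is equivalent to $\R(m) \neq \emptyset$, or more precisely to the existence of a feasible interval $I$ on $\F_0$ with $\RE_m(\Left(I)) \neq \emptyset$. First I would build the \fs map of $P$ and $Q$ in $O(nm)$ time using Lemma~\ref{lemma:map}. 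Then, for each of the $O(n)$ feasible intervals $I$ on $\F_0$, I would query the data structure with the point $\Left(I)$ to obtain a compact representation of $\RE_m(\Left(I))$. Here I would cite Theorem~\ref{thm:mainFreeSpaceMap}: using the stored right pointers $\rp_m(I)$ and the left-pointer information, each such query runs in $O(1)$ time (no subinterval location is needed since $\Left(I)$ is an endpoint of a feasible interval, so $\rp_m(\Left(I)) = \rp_m(I)$ is directly stored, and $\lp_m(\Left(I))$ is likewise retrievable in $O(1)$). The answer is ``YES'' precisely when at least one of these compact representations is non-empty. The total running time is $O(nm)$ for building the map plus $O(n)$ for the queries, hence $O(nm)$.

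For the optimization part — computing a subcurve $R \subseteq P$ that minimizes $\distF(R,Q)$ — I would follow the parametric-search template of~\cite{AltG95} verbatim in spirit. The critical values of $\eps$ at which the combinatorial structure of the \fs diagram changes are of the same three types (A), (B), (C) described in Section~\ref{sec:classicalFD}: endpoint distances, vertex-edge distances, and the ``new passage'' distances where interval endpoints on adjacent cells coincide. There are $O(nm)$ of types (A) and (B) and $O((nm)^2)$ of type (C), and each is an algebraic function of $\eps$ of bounded degree. Running Cole's parametric search~\cite{Cole87} with our $O(nm)$-time decision procedure as the ``generic algorithm'' driven by a sorting network over the $O(nm)$ interval-endpoint functions, the cost is $O((k + T)\log k)$ where $k = O(nm)$ is the number of values being sorted and $T = O(nm)$ is the decision-algorithm time, giving $O(nm \log(nm))$ overall. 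The important point is that the decision algorithm here is ``comparison-driven'' in exactly the sense required: its only interactions with $\eps$ are through comparisons of the endpoints $a_{ij}, b_{ij}, c_{ij}, d_{ij}$ of feasible intervals (which is what the projections, Next/Prev pointers, and ray-shooting all reduce to), and these comparisons are transitive — in contrast to the speed-limited variant of Chapter~\ref{ch:speedFD} where the offset constants $K_{ijk\ell}$ broke transitivity. So parametric search applies cleanly, yielding the $O(nm\log(nm))$ bound for the optimization version.

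The main obstacle I anticipate is not in either running-time estimate but in arguing correctness of the reduction for the \emph{partial} matching version specifically — namely justifying that testing $\RE_m(\Left(I)) \neq \emptyset$ over all feasible intervals $I$ on $\F_0$ is both necessary and sufficient. Sufficiency is clear: any non-empty $\RE_m(\Left(I))$ exhibits a monotone feasible path witnessing a matching subcurve. For necessity I would appeal to the monotonicity/transitivity of reachability together with Observation~\ref{obs:reachable} and Lemma~\ref{lemma:reach}: if any point of $\FD_m$ is reachable from any point of $\FD_0$, it is reachable from a point lying in some feasible interval $I$ of $\F_0$, and by the Crossing Lemma (Lemma~\ref{lemma:cross}) reachability from an interior point of $I$ implies reachability from $\Left(I)$ — so the left-endpoint queries lose nothing. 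This is the one place where a small but genuine argument is needed; the rest is bookkeeping over the already-established lemmas. Everything else — the $O(nm)$ construction, the $O(1)$ queries, the parametric search overhead — follows directly from Lemma~\ref{lemma:map}, Theorem~\ref{thm:mainFreeSpaceMap}, and the standard machinery of~\cite{AltG95,Cole87}.
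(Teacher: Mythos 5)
Your proposal is correct and follows essentially the same route as the paper: build the free-space map in $O(nm)$ time via Lemma~\ref{lemma:map}, query the left endpoint of each of the $O(n)$ feasible intervals on $\F_0$ in $O(1)$ time using Theorem~\ref{thm:mainFreeSpaceMap}, observe that $\RE_m(I)=\emptyset$ iff $\RE_m(\Left(I))=\emptyset$, and then apply Alt--Godau parametric search with the faster decision routine for the optimization bound. Your additional remarks (why $\Left(I)$ suffices via the Crossing Lemma, and why transitivity of $\eps$-comparisons is preserved here unlike in the speed-limited setting) are correct elaborations of details the paper leaves implicit.
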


\subsection{Closed Curves}
Given two closed curves $P$ and $Q$,
define
\[
	\distC(P,Q) = \inf_{s_1,s_2 \in \IR} \distF(P \mbox{ shifted by } s_1, Q \mbox{ shifted by } s_2)
\]
to be the closed \Frechet metric
between $P$ and $Q$.

Consider a diagram $\CD$ of size $2n \times m$
obtained from concatenating two copies of the standard \fs diagram of $P$ and $Q$.
Alt and Godau showed that $\distC(P,Q) \lee \eps$ if and only if
there exists a monotone feasible path in $\CD$ from $(t,0)$ to $(n+t,m)$,
for a value $t \in [0,n]$.
We show how such a value $t$, if any exists,
can be found efficiently using a \fs map built on top of $\CD$.

\begin{obs} \label{obs:t}
	Let $i$ be a fixed integer $(0 < i \lee n)$,
	$I_i=[a,b]$ be the feasible interval on the $i$-th cell of $\FD_0$, and 
	$J_i=[c,d]$ be the feasible interval on the $(i+n)$-th cell of $\FD_m$.
	Then there exists a value $t \in [i-1,i]$ with $(t,0) \reach (n+t,m)$ if and only if 
	$\max((\lp_m(I_i))_x, c) \lee b+n$ and $\min((\rp_m(I_i))_x, d) \gee a+n$.
\end{obs}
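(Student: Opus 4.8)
The plan is to translate the geometric statement ``there exists $t \in [i-1,i]$ with $(t,0) \reach (n+t,m)$'' into the language of reachable sets and the left/right pointers maintained by the \fs map, and then reduce it to a one-line interval-overlap condition. First I would observe that by the Crossing Lemma (Lemma~\ref{lemma:cross}) and the transitivity of reachability, the set of target points reachable from the feasible interval $I_i$ is exactly $\R(m) \cap [\lp_m(I_i), \rp_m(I_i)]$, which is Lemma~\ref{lemma:reach} applied with $u$ ranging over $I_i$ (using $\lp_m(I_i)=\lp_m(\Left(I_i))$ and $\rp_m(I_i)=\rp_m(\Right(I_i))$, and monotonicity from Observation~\ref{obs:sorted}). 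So a point $(n+t,m)$ with $t\in[i-1,i]$ is reachable from some $(t,0)$ with the \emph{same} $t$ if and only if the ``staircase'' constraint can be satisfied; the subtlety is that the source abscissa $t$ and the target abscissa $n+t$ are linked.

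The key step is to handle this linkage. I would argue that, because the map from a source point $(t,0)$ to the interval of reachable target abscissae is monotone (Observation~\ref{obs:sorted}: as $t$ increases, both $\lp_m$ and $\rp_m$ move weakly right), the existence of a valid $t\in[i-1,i]$ is governed purely by the endpoints. Concretely, $(t,0)\reach(n+t,m)$ for some $t\in[i-1,i]$ iff there is a $t$ in that range for which the target point $(n+t,m)$ lies simultaneously (i) in the feasible interval $J_i$ on the $(i+n)$-th cell of $\FD_m$, i.e. $n+t \in [c,d]$, hence $t \in [c-n, d-n]$; and (ii) within the reachable window from the source point at abscissa $t$, i.e. $n+t$ between $\lp_m$ and $\rp_m$ of the appropriate source. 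Using the endpoint/monotonicity argument, the worst placement of the reachable window over $t\in[i-1,i]$ is captured by $\lp_m(I_i)$ (the leftmost, attained at $\Left(I_i)$, abscissa $a$) and $\rp_m(I_i)$ (the rightmost, at $\Right(I_i)$, abscissa $b$). So the family of reachable target abscissae, unioned over $t\in[i-1,i]$ but with the diagonal slack $t\mapsto n+t$ absorbed, is the interval $[(\lp_m(I_i))_x,\ b+n]$ on the left end and $[a+n,\ (\rp_m(I_i))_x]$ on the right end; intersecting with $[c,d]$ and checking nonemptiness gives exactly
\[
	\max\bigl((\lp_m(I_i))_x,\, c\bigr) \ \lee\ b+n
	\quad\text{and}\quad
	\min\bigl((\rp_m(I_i))_x,\, d\bigr) \ \gee\ a+n.
\]
Each direction is then a short verification: for ($\Leftarrow$) I would exhibit a concrete $t$ (take the target abscissa to be a point in the overlap interval, subtract $n$ to get $t$, check $t\in[i-1,i]$ via the two inequalities, and check that the source at abscissa $t$ indeed reaches it using Lemma~\ref{lemma:reach}); for ($\Rightarrow$) I would take the witnessing $t$ and read off that $n+t$ lies in both $J_i$ and the reachable window, then push the window to its extreme endpoints using Observation~\ref{obs:sorted}.

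The main obstacle I anticipate is making the ``absorb the diagonal'' step fully rigorous: one must be careful that as $t$ slides across $[i-1,i]$, the source point moves along the $i$-th bottom cell while the target constraint slides along the $(i+n)$-th top cell, and that the reachable interval $[\lp_m,\rp_m]$ depends on $t$ monotonically but not the ``$+n$'' shift in a way that could create a gap. The clean way to dispose of this is a crossing-type argument: if $t_1<t_2$ both fail but some $t_1<t<t_2$ would succeed, derive a contradiction by intersecting the monotone feasible path through $(t,0)$--$(n+t,m)$ with candidate paths through the endpoints, exactly as in Lemma~\ref{lemma:cross}. Once that monotonicity-over-$t$ claim is nailed down, the inequality in the statement is immediate, and the whole argument is a few lines.
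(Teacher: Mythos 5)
The paper states Observation~\ref{obs:t} without proof, so your attempt is filling a gap rather than mirroring an argument; judged on its own terms, the necessity direction $(\Rightarrow)$ is fine, but the sufficiency direction $(\Leftarrow)$ has a real hole that you flag but do not actually close.

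For $(\Rightarrow)$: given a witness $t$, the chain $\lp_m(I_i)=\lp_m((a,0)) \lee \lp_m((t,0)) \lee n+t \lee \rp_m((t,0)) \lee \rp_m(I_i)$, together with $c \lee n+t \lee d$ and $a \lee t \lee b$, gives all four inequalities; Observation~\ref{obs:sorted} is exactly what is needed, and this part of your plan is sound. For $(\Leftarrow)$, however, your witness is ``a target abscissa in the overlap interval, minus $n$,'' and you then say you would ``check that the source at abscissa $t$ indeed reaches it using Lemma~\ref{lemma:reach}.'' That lemma characterizes $\RE_m((t,0))$ via the \emph{$t$-specific} window $[\lp_m((t,0)),\rp_m((t,0))]$, not the \emph{aggregate} window $[\lp_m(I_i),\rp_m(I_i)]$ that the observation's inequalities control. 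The inequalities guarantee $\lp_m(I_i)=\lp_m((a,0)) \lee n+t$, but what is actually needed is $\lp_m((t,0)) \lee n+t$. Observation~\ref{obs:sorted} only says $\lp_m$ is nondecreasing in the source; it does not bound its rate of increase, so $\lp_m((t,0))$ can in general jump past $n+t$ as $t$ moves right of $a$, and your chosen witness may fail. Your proposed ``crossing-type argument'' is also aimed the wrong way: showing that the failure set is convex (no successful $t$ trapped between two failures) does not, by itself, exhibit a witness when the inequalities hold. Two smaller issues: you state $\rp_m(I_i)=\rp_m(\Right(I_i))$, which fails when $\Right(I_i)$ has empty $\RE_m$ (the paper's own commented-out remark flags this); and you do not address why $(n+t,m)\in\R(m)$, not merely $\F_m$, as Lemma~\ref{lemma:reach} requires. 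A fact you could exploit to make partial progress: any two points $u \lee u'$ in the \emph{same} feasible interval $I_i$ of $\FD_0$ satisfy $u\reach u'$ by sliding right along $\FD_0$, hence $\RE_m(u)\supseteq\RE_m(u')$, which together with Observation~\ref{obs:sorted} forces $\rp_m$ to be \emph{constant} on $I_i$. That cleanly handles the right endpoint, but no such constancy holds for $\lp_m$, and the left-endpoint step is precisely where your sufficiency argument still needs a genuinely new idea.
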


We  iterate on $i$ from 1 to $n$,
and check for each $i$ if a desired value $t \in [i-1,i]$ exists using Observation~\ref{obs:t}. 
Each iteration involves examining $\lp_m(I_i)$ and $\rp_m(I_i)$,
which are accessible in $O(1)$ time using Theorem~\ref{thm:mainFreeSpaceMap}.
The total time is therefore $O(nm)$, required for building the \fs map.

\begin{theorem} \label{thm:closed}
	Given two closed polygonal curves $P$ and $Q$ of size $n$ and $m$, respectively,
	we can decide in $O(nm)$ time whether $\distC(P,Q) \lee \eps$, for a given $\eps \gee 0$.
	Furthermore, $\distC(P,Q)$ can be computed in $O(nm \log (nm))$ time.
\end{theorem}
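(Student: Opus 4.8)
Given two closed polygonal curves $P$ and $Q$ of size $n$ and $m$, respectively, we can decide in $O(nm)$ time whether $\distC(P,Q) \lee \eps$, for a given $\eps \gee 0$. Furthermore, $\distC(P,Q)$ can be computed in $O(nm \log (nm))$ time.

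The plan is to build the $2n \times m$ diagram $\CD$ obtained by concatenating two copies of the standard \fs diagram of $P$ and $Q$, and to invoke the characterization of Alt and Godau: $\distC(P,Q) \lee \eps$ iff there is a monotone feasible path in $\CD$ from $(t,0)$ to $(n+t,m)$ for some $t \in [0,n]$. On top of $\CD$ we construct the \fs map of Lemma~\ref{lemma:map}, which takes $O((2n)\cdot m) = O(nm)$ time and gives us, via Theorem~\ref{thm:mainFreeSpaceMap}, access in $O(1)$ time per feasible interval $I$ on $\FD_0$ to the compact representation $\{\lp_m(I), \rp_m(I)\}$ of $\RE_m(I)$ — i.e., the leftmost and rightmost points of $\FD_m$ reachable from $I$ by a monotone feasible path in $\CD$.

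The core step is Observation~\ref{obs:t}: fixing an integer $i$ with $0 < i \lee n$, letting $I_i = [a,b]$ be the feasible interval in the $i$-th cell of $\FD_0$ and $J_i = [c,d]$ be the feasible interval in the $(i+n)$-th cell of $\FD_m$, one checks that a suitable $t \in [i-1,i]$ exists iff the ``diagonal shift'' constraints $\max((\lp_m(I_i))_x, c) \lee b+n$ and $\min((\rp_m(I_i))_x, d) \gee a+n$ both hold. The justification is geometric: a monotone path from $(t,0)$ to $(n+t,m)$ starts at a point of $I_i$ whose $x$-coordinate is $t$ and ends at a point of $J_i$ whose $x$-coordinate is $n+t$; since reachability sets on a row are intervals and the left/right pointers bound them (by the Crossing Lemma, Lemma~\ref{lemma:cross}), the two constraints exactly capture when the set of valid start coordinates and the set of valid end coordinates, pulled back to a common $t$, have nonempty intersection. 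Iterating $i$ from $1$ to $n$ and applying this $O(1)$ test at each step costs $O(n)$ beyond the $O(nm)$ spent building the map, so the decision problem is solved in $O(nm)$ total time.

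For the optimization part, I would observe that $\distC(P,Q)$ occurs at one of the usual critical values — distances between a vertex of one curve and an edge of the other (type B), and the values at which a horizontal/vertical passage opens (type C), now within the doubled diagram $\CD$ — of which there are $O(n^2 m)$, er, more precisely $O(nm(n+m))$ many, computable in closed form. Rather than sorting all of them, I would apply Cole's parametric search~\cite{Cole87} driven by the $O(nm)$-time decision procedure above, exactly as Alt and Godau do for the open case: one simulates a sorting network on the $O(nm)$ interval-endpoint functions of $\eps$, resolving each batch of comparisons by at most $O(\log(nm))$ calls to the decision algorithm, for a total of $O(nm \log (nm))$ time. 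The main obstacle I anticipate is not any of these steps individually but verifying that the parametric-search machinery applies cleanly to the doubled diagram $\CD$ — in particular that the comparisons arising when we sort the interval endpoints of $\CD$ are transitive as functions of $\eps$ (which they are, since within $\CD$ the free space in each cell is still the intersection of a square with an ellipse, so the type-C comparisons are of the form $a_{ij}(\eps) = b_{k\ell}(\eps)$ with no curve-dependent additive offsets, unlike the speed-limited variant of Chapter~\ref{ch:speedFD}), and that the set $t$ of feasible shifts found by the decision routine varies monotonically so that the binary search is well-defined.
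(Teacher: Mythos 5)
Your proposal matches the paper's proof essentially step for step: build the doubled $2n \times m$ diagram $\CD$, construct the free-space map on $\CD$ in $O(nm)$ time, test each $i$ via Observation~\ref{obs:t} using the $O(1)$-time compact representations from Theorem~\ref{thm:mainFreeSpaceMap}, and apply parametric search with the $O(nm)$ decision routine for the optimization. Your added remarks on why the type-(C) comparisons in $\CD$ remain transitive functions of $\eps$ (so Cole's technique is applicable, unlike in the speed-constrained setting of Chapter~\ref{ch:speedFD}) go beyond what the paper writes out but are correct.
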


\subsection{Maximum Walk}

Another variant of the \Frechet distance problem is 
the following:
Given two curves $P$ and $Q$ and a fixed $\eps \gee 0$, 
find a maximum-length continuous subcurve of $Q$ whose 
\Frechet distance to $P$ does not exceed $\eps$.
In the dog-person illustration, this problem
corresponds to finding the best starting point on $P$, 
such that when the person walks the whole curve $Q$,
his or her dog can walk the maximum length on $P$,
without exceeding a leash of length $\eps$. 
We show that this optimization problem,
which is a generalized version of the partial curve matching problem,
can be solved efficiently in $O(nm)$ time
using the \fs map. 
The following observation is the main ingredient.

\begin{obs} \label{obs:max}
	Let $R$ be a maximum-length subcurve of $P$ such that $\distF(R,Q) \lee \eps$.
	The starting point of $R$ corresponds to the left endpoint of a feasible interval $I$ on $\FD_0$,
	and its ending point corresponds to $\rp_m(I)$. 
\end{obs}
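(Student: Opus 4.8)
The plan is to reduce the maximum-walk problem to the free-space diagram and then show that an optimal walk can always be \emph{snapped} so that it starts at the left endpoint of a feasible interval on $\FD_0$ and ends at that interval's right pointer on $\FD_m$.

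First I would recall the characterization underlying partial matching: a subcurve $R$ of $P$ with parameter range $[a,b] \subseteq [0,n]$ satisfies $\distF(R,Q) \lee \eps$ if and only if there is a monotone feasible path in $\Feps(P,Q)$ from $(a,0)$ to $(b,m)$ (Alt and Godau~\cite{AltG95}; see the beginning of this subsection). Thus a maximum-length subcurve of $P$ corresponds to a monotone feasible path from $\FD_0$ to $\FD_m$ whose $x$-extent, measured along $P$, is as long as possible; in particular the arc length of $R$ is monotone with respect to inclusion of its parameter interval $[a,b]$, so enlarging $[a,b]$ can only help.

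Now take an optimal $R$, realized by a monotone feasible path $\pi$ from $(a,0)$ to $(b,m)$, and let $I$ be the feasible interval of $\FD_0$ containing $(a,0)$. Since $\Left(I) \lee (a,0)$ and the horizontal segment joining them lies inside $I \subseteq \Feps(P,Q)$, prepending it to $\pi$ gives a monotone feasible path from $\Left(I)$ to $(b,m)$; hence $\Left(I) \reach (b,m)$, so $\RE_m(\Left(I)) \neq \emptyset$ and $(b,m) \lee \rp_m(\Left(I))$. Moreover $\Left(I) \reach w$ for every $w \in I$ (again via a horizontal segment inside $I$), so transitivity of reachability gives $\RE_m(w) \subseteq \RE_m(\Left(I))$ for all $w \in I$, whence $\RE_m(I) = \RE_m(\Left(I))$ and therefore $\rp_m(I) = \rp_m(\Left(I))$. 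In particular $\Left(I) \reach \rp_m(I)$, so the subcurve $R^{\ast}$ of $P$ with parameter range $\bigl[(\Left(I))_x,\, (\rp_m(I))_x\bigr]$ satisfies $\distF(R^{\ast},Q) \lee \eps$. Since $(\Left(I))_x \lee a$ and $(\rp_m(I))_x \gee b$, we have $R \subseteq R^{\ast}$, so $R^{\ast}$ is at least as long as $R$; as $R$ is maximum-length, so is $R^{\ast}$, and the two have equal length, which (up to degenerate segments of $P$) identifies the start and end of $R$ with $\Left(I)$ and $\rp_m(I)$. For the algorithm it is in any case enough that \emph{some} maximum-length walk has this form.

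I do not anticipate a genuine obstacle: the argument only uses the partial-matching characterization, transitivity of reachability, and the fact that a feasible interval lies within a single cell (so the dog moves freely along it on $\FD_0$). The one point that needs a little care is performing the two snapping moves \emph{simultaneously} --- pushing the start left to $\Left(I)$ while pushing the end right all the way to $\rp_m(I)$ --- which is exactly why I first establish $\RE_m(I)=\RE_m(\Left(I))$, so that $\rp_m(I)$ is reachable from $\Left(I)$; the harmless ambiguity of ``corresponds to'' under ties is dealt with by the last remark above.
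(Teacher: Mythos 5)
Your proof is correct, and fills in an argument that the paper treats as self-evident (it records the statement as an Observation without supplying a proof). The two ``snapping'' moves are exactly the right mechanism, and you correctly anticipate the one subtle point---that $\rp_m(I)$ must be reachable from $\Left(I)$, not merely from some interior point of $I$, before the end can be pushed all the way to $\rp_m(I)$; establishing $\RE_m(I)=\RE_m(\Left(I))$ via transitivity of $\reach$ and the convexity of free space within a cell handles this cleanly. The degeneracy caveat (zero-length segments of $P$) is a genuine, if pedantic, gap in the observation as stated, and you are right that the algorithmic consequence in Theorem~\ref{thm:max} survives it unchanged since only the existence of \emph{some} optimum of the claimed form is needed.
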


By Observation~\ref{obs:max}, we only need to test $n$
feasible intervals on $\FD_0$, and their right pointer on $\FD_m$
to find the best subcurve $R$. 
If we keep the length of $P$ from its beginning to each of its $n$ segments in a table,
we can compute the length of each subcurve $R$ of $P$ 
in $O(1)$ time using two table lookups as 
it is explained in Chapter \ref{ch:speedFD}.
Computing the maximum-length subcurve $R$ will
therefore take $O(n)$ time for computing the lengths, 
plus $O(mn)$ time for constructing the \fs map.

\begin{theorem} \label{thm:max}
	Given two polygonal curves $P$ and $Q$ of size $n$ and $m$, respectively,
	and a parameter $\eps \gee 0$,
	we can find in $O(nm)$ time a maximum-length subcurve $R \subseteq P$ such that 
	$\distF(R,Q) \lee \eps$.
\end{theorem}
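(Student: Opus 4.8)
The plan is to reduce the maximum walk problem to a collection of $n$ reachability queries answered by the \fs map, then extract the best answer in linear additional time. First I would invoke Theorem~\ref{thm:mainFreeSpaceMap} (and Lemma~\ref{lemma:map}) to build the \fs map of $P$ and $Q$ in $O(nm)$ time and $O(nm)$ space. The key structural fact, which I would state and justify as Observation~\ref{obs:max}, is that an optimal subcurve $R$ of $P$ need not start in the middle of a feasible interval on $\FD_0$: if $R$ starts at an interior point $u$ of a feasible interval $I$, we may slide the starting point leftwards to $\Left(I)$ without shrinking $\RE_m(\cdot)$, since by Observation~\ref{obs:sorted} $\rp_m(\Left(I)) \gex \rp_m(u)$; and once the start is fixed at $\Left(I)$, the farthest the dog on $P$ can reach while the person finishes $Q$ is exactly $\rp_m(I)$, because $\RE_m(\Left(I))$ is a set whose rightmost point (by Lemma~\ref{lemma:reach}) is $\rp_m(I)$. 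So the candidate subcurves are the $O(n)$ pairs $(\Left(I), \rp_m(I))$ over feasible intervals $I$ on $\FD_0$.

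Next I would describe the extraction step. For each of the $O(n)$ feasible intervals $I$ on $\FD_0$, Theorem~\ref{thm:mainFreeSpaceMap} gives a compact representation of $\RE_m(\Left(I))$ — in particular the pointer $\rp_m(I)$ — in $O(1)$ time per interval once the map is built (here $\Left(I)$ is an interval endpoint so the enclosing subinterval is known, hence the $O(1)$ rather than $O(\log m)$ bound applies). Each such pointer $\rp_m(I)$ corresponds to a point on $P$; combined with the coordinate of $\Left(I)$ on $P$, it delimits a subcurve $R \subseteq P$. To evaluate the length of $R$ in $O(1)$ time, I would precompute, in a table of size $n$, the arclength of $P$ from $P(0)$ to each vertex $P(i)$; then the length of any subcurve $P|_{[s,t]}$ is obtained by two table lookups plus the fractional edge contributions, exactly as done in Chapter~\ref{ch:speedFD} for the speed-limits setting. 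Taking the maximum over all $O(n)$ candidates yields the optimal $R$.

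For the running-time accounting: building the \fs map is $O(nm)$; the arclength table is $O(n)$; the scan over the $O(n)$ feasible intervals on $\FD_0$, each contributing $O(1)$ work (one pointer lookup, two table lookups, one comparison), is $O(n)$. The total is therefore $O(nm)$, matching the claim of Theorem~\ref{thm:max}.

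I expect the only genuinely delicate point to be the correctness of Observation~\ref{obs:max} — specifically arguing that restricting the starting point of $R$ to left endpoints of feasible intervals on $\FD_0$ loses no generality, and that the endpoint of an optimal $R$ is precisely $\rp_m(I)$ rather than some interior reachable point. Both follow from transitivity of reachability together with Lemma~\ref{lemma:cross} (monotone feasible paths cannot be "overtaken"), so the obstacle is more about stating it cleanly than about any real difficulty; the rest is bookkeeping on top of the already-established \fs map machinery.
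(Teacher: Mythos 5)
Your proposal matches the paper's proof step for step: Observation~\ref{obs:max} reduces the search to $O(n)$ candidates indexed by feasible intervals on $\FD_0$, the \fs map built via Lemma~\ref{lemma:map} supplies each $\rp_m(I)$ in $O(1)$, and an arclength prefix table scores each candidate in $O(1)$, giving $O(nm)$ total. The one thing to fix is in your justification of Observation~\ref{obs:max} (which the paper itself states without argument): you cite Observation~\ref{obs:sorted} for $\rp_m(\Left(I)) \gee \rp_m(u)$, but that observation gives the opposite direction, namely $\rp_m(\Left(I)) \lee \rp_m(u)$ when $\Left(I) \lee u$. The inequality you want follows instead from transitivity: the horizontal segment from $\Left(I)$ to $u$ inside the feasible interval $I$ gives $\Left(I) \reach u$, hence $\RE_m(\Left(I)) \supseteq \RE_m(u)$, hence $\rp_m(\Left(I)) \gee \rp_m(u)$. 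Combining that with Observation~\ref{obs:sorted} yields $\rp_m(u) = \rp_m(\Left(I)) = \rp_m(I)$ for every $u \in I$ (the same fact the paper uses inside the proof of Theorem~\ref{thm:mainFreeSpaceMap}), which is what licenses restricting the starting point to left endpoints. Since you already flag transitivity and the crossing lemma as the underlying tools at the end of your write-up, this is a citation slip rather than a conceptual gap, but as written the inequality does not follow from the observation you invoke, so the derivation should be corrected.
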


\subsection{Minimum Walk}

Given two curves $P$ and $Q$ and a fixed $\eps \gee 0$, 
the \emph{minimum walk} problem asks for the minimum-length continuous subcurve of
$P$ that a person can walk while his/her dog walks the whole curve $Q$
without exceeding a leash of length~$\eps$. 
This optimization problem can be again solved efficiently 
using our \fs map. 

\begin{theorem} \label{thm:min}
	Given two polygonal curves $P$ and $Q$ of size $n$ and $m$, respectively,
	and a parameter $\eps \gee 0$,
	we can find in $O(nm)$ time a minimum-length subcurve $R \subseteq P$ such that 
	$\distF(R,Q) \lee \eps$.
\end{theorem}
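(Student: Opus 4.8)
The plan is to reduce the minimum walk problem to a query against the \fs map, in a manner closely parallel to the maximum walk case (Theorem~\ref{thm:max}), but propagating reachability information in the opposite direction. The key structural observation is the following analogue of Observation~\ref{obs:max}: if $R \subseteq P$ is a minimum-length subcurve with $\distF(R,Q) \lee \eps$, then there is a monotone feasible path in $\FD_\eps(P,Q)$ from a point $(s,0)$ on $\FD_0$ to a point $(t,m)$ on $\FD_m$, and we may assume the endpoint $(t,m)$ is as far left as possible given the starting point, i.e.\ $t$ is the $x$-coordinate of $\lp_m$ of the feasible interval (or subinterval) containing $(s,0)$, and the starting point $(s,0)$ is as far right as possible. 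So the length of $R$ is $\text{length}(P \text{ from } s \text{ to } t)$, and minimizing this over all valid $(s,t)$ pairs gives the answer.

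First I would make precise which $(s,t)$ pairs need to be considered. For a fixed starting point $s$, the set of reachable points on $\FD_m$ is the interval $[\lp_m(u),\rp_m(u)]$ with $u=(s,0)$, by Lemma~\ref{lemma:reach}; since $P$'s length from $s$ to $t$ is increasing in $t$ (for $t \gee s$) and the path must be $x$-monotone so $t \gee s$, the best choice of $t$ is $(\lp_m(u))_x$. Then I need to minimize, over $s$, the quantity $(\lp_m((s,0)))_x - s$ in terms of arc length along $P$. Here the monotonicity of $\lp_m$ in $s$ (Observation~\ref{obs:sorted}) is what controls the structure: as $s$ moves right, $(\lp_m((s,0)))_x$ is non-decreasing, so the function we are minimizing is piecewise of a simple form. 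Within a single subinterval $J$ of $\FD_0$ (the subintervals of Theorem~\ref{thm:mainFreeSpaceMap}, on which $\lp_m$ is constant), $(\lp_m(J))_x$ is a fixed value, so $(\lp_m((s,0)))_x - s$ is minimized by taking $s$ at the right endpoint of $J$. Hence it suffices to evaluate the candidate length at the right endpoint of each of the $O(nm)$ subintervals on $\FD_0$, take the minimum, and report the corresponding subcurve.

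The algorithm is then: build the \fs map in $O(nm)$ time (Lemma~\ref{lemma:map}); compute the $O(nm)$ subintervals on $\FD_0$ together with $\lp_m$ of each, in $O(nm)$ time (Theorem~\ref{thm:mainFreeSpaceMap}, using that each subinterval's compact representation of $\RE_m$ is accessible in $O(1)$ time once the subinterval is identified); precompute a prefix-length table for $P$ so arc lengths between two parameter values are obtained in $O(1)$ via two lookups, as in Chapter~\ref{ch:speedFD}; then scan all subintervals, and for each one whose $\lp_m$ is non-NIL evaluate the length of the subcurve of $P$ from the right endpoint of the subinterval to $(\lp_m(\cdot))_x$, keeping the minimum. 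The total running time is dominated by $O(nm)$. Correctness follows from the observation above together with Lemma~\ref{lemma:reach} and Observation~\ref{obs:sorted}: every valid $(s,t)$ pair has length at least that of some candidate we examine, and every candidate corresponds to an actual monotone feasible path, so it corresponds to a genuine subcurve $R$ with $\distF(R,Q)\lee\eps$.

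The main obstacle I anticipate is the careful case analysis needed to confirm that restricting attention to right endpoints of subintervals of $\FD_0$ (and to the $\lp_m$ endpoint on $\FD_m$) does not miss the true optimum — in particular handling degenerate situations where a feasible interval on $\FD_0$ consists of a single point, where $\lp_m$ lands on the left endpoint rather than the interior of a feasible interval on $\FD_m$, or where the optimal $R$ has its endpoint strictly inside a cell. One must also verify that the endpoint $(t,m)$ with $t=(\lp_m(u))_x$ is genuinely reachable from $(s,0)$ (it is, by definition of $\lp_m$ and Lemma~\ref{lemma:reach}) and that taking $t$ any smaller is impossible. None of this requires new machinery beyond what is already established; it is a matter of assembling Lemma~\ref{lemma:reach}, Observation~\ref{obs:sorted}, Theorem~\ref{thm:mainFreeSpaceMap}, and the prefix-length trick, so I expect the proof to be short once the reduction is stated cleanly.
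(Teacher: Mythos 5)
Your proposal is correct and takes essentially the same route as the paper's proof: both identify that the optimal $R$ starts at the right endpoint of a subinterval $J$ on $\F_0$ and ends at $\lp_m(J)$, then scan the $O(nm)$ subintervals in $O(nm)$ total time using the $O(1)$-access pointers of Theorem~\ref{thm:mainFreeSpaceMap} together with a prefix-length table. Your write-up is more explicit about why these candidates suffice (invoking Lemma~\ref{lemma:reach} and Observation~\ref{obs:sorted}), whereas the paper states the key observation without elaboration; note only that your opening phrase about ``propagating reachability in the opposite direction'' does not match what you actually do --- you use the forward pointers $\lp_m$, exactly as the paper does.
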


\begin{proof}
	Let $R$ be a minimum-length subcurve of $P$ such that $\distF(R,Q) \lee \eps$.
	Observe that the starting point of $R$ corresponds to 
	the right endpoint of a subinterval $J$ on $\F_0$,
	and its ending point corresponds to $\lp_m(J)$. 
	Therefore, to find the best subcurve $R$, 
	we only need to check the right endpoints of $O(nm)$
	subintervals on $\FD_0$ and their corresponding left pointers.
	By Theorem~\ref{thm:mainFreeSpaceMap}, this takes $O(1)$ time per subinterval.
	The total time needed is therefore $O(nm)$.
	\qed
\end{proof}


\section{Matching a Curve in a DAG} \label{sec:graph}

Let $P$ be a polygonal curve of size $n$, 
and $G$ be a connected geometric graph with $m$ straight-line edges.
Alt \etal~\cite{AltERW03a} presented an $O(nm \log m)$-time algorithm 
to decide whether
there is a path $\pi$ in $G$ with \Frechet distance at most $\eps$ to $P$, 
for a given $\eps \gee 0$.
In this section, 
we improve this result for the particular case when $G$ is a directed acyclic graph (DAG), 
by giving an algorithm that runs in only $O(nm)$ time.
The idea is to use a sequential reachability propagation approach 
similar to the one used in Section~\ref{sec:main}.
Our approach is structurally different from
the one used by Alt \etal~\cite{AltERW03a}.

We first borrow some notation from~\cite{AltERW03a}.
Let $G=(V,E)$ be a connected DAG with $m$ edges,
such that $V = \set{1, \ldots, \nu}$
corresponds to points $\set{v_1, \ldots, v_\nu} \subseteq \IR^d$, for $\nu \lee m+1$.
We assume, w.l.o.g., that the elements of $V$ are numbered
according to a topological ordering of the vertices of $G$.
Such a topological ordering can be computed in $O(m)$ time.
We embed each edge $(i,j) \in E$ as
an oriented line segment $s_{ij}$ from $v_i$ to $v_j$. 
Each $s_{ij}$ is continuously parametrized 
by values in $[0,1]$ according to its natural parametrization,
namely, $s_{ij}: [0,1] \rightarrow \IR^d$.

For each vertex $j \in V$, let $\FD_{j} := \FD_\eps(P, v_{j})$ 
be the one-dimensional \fs diagram corresponding 
to the path $P$ and the vertex $j$.
We denote by $L_j$ and $R_j$ the left endpoint and the right endpoint
of $\FD_j$, respectively.
Moreover, we denote by $\F_j$ the set of feasible points on $\FD_{j}$.
For each $(i,j) \in E$, let $\FD_{ij} := \FD_\eps(P, s_{ij})$
be a two-dimensional \fs diagram, which consists
of a row of $n$ cells.
We glue together the two-dimensional \fs diagrams
according to the adjacency information of $G$,
as shown in Figure~\ref{fig:graph}.
The resulting structure is called the \emph{\fs surface} of $P$ and $G$,
denoted by $\FS_\eps(P,G)$.
We denote the set of feasible points in $\FS_\eps(P,G)$ by $\F_\eps(P,G)$. 

\begin{figure}[t]
	\centering
	\includegraphics[width=0.5\columnwidth]{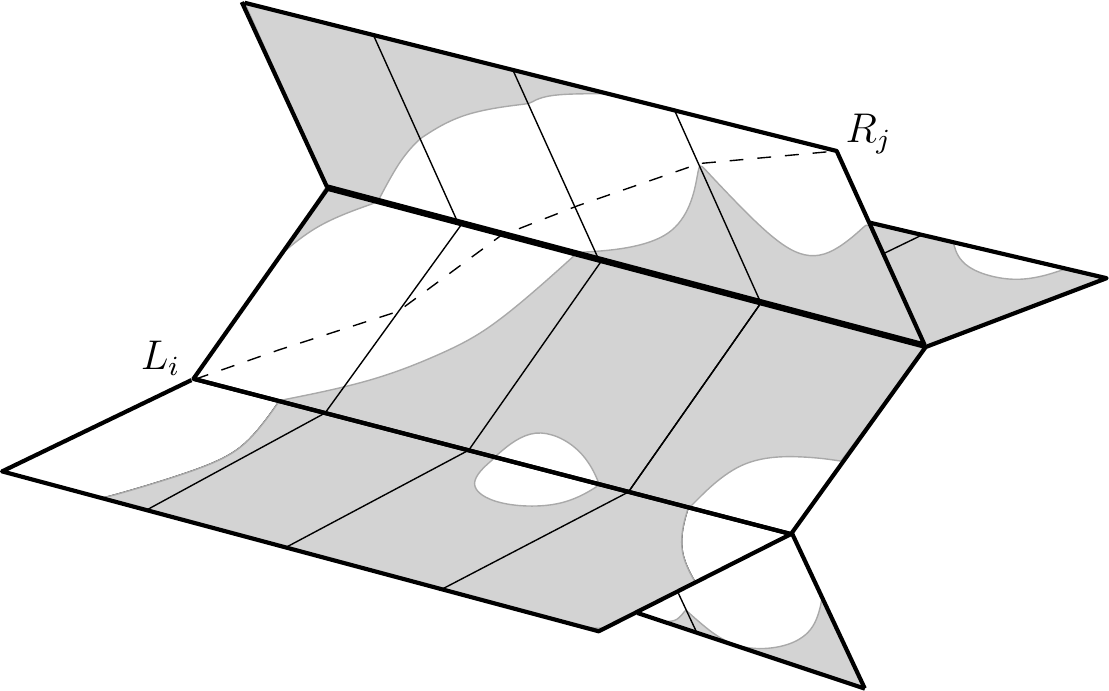}
	\caption{An example of a \fs surface.}
	\label{fig:graph}
\end{figure}

Given two points $u,v \in \F_\eps(P,G)$, 
we say that $v$ is \emph{reachable} from~$u$,
denoted by $u \reach v$,
if there is a monotone feasible curve from $u$ to~$v$ in $\F_\eps(P,G)$,
where monotonicity in each cell of the surface
is with respect to the orientation of the edges of $P$ and $G$
defining that cell.
Given a set of points $S \subseteq \F_\eps(P,G)$,
we define
$
	\RE_{j}(S) := \set{v \in \F_j \ | \ \exists u \in S \,\mbox{ s.t. }\,  u \reach v}.
$
Let $\CL = \cup_{j \in V} (L_j \cap \F_j)$.
For each $j \in V$, 
we define the \emph{reachable set} $\R(j) := \RE_{j}(\CL)$.
Observe that there is a path $\pi$ in $G$ with $\distF(P, \pi) \lee \eps$
if and only if there is a vertex $j \in V$ with $R_j \in \R(j)$.

\begin{algorithm} [h]
\caption {\sc DAG-Matching-Decision$(P, G, \eps)$} \label{alg:graph}
\algsetup{indent=1.5em}
\begin{algorithmic}[1]
	\vspace{0.5em}
	\baselineskip=1\baselineskip
	\FORALL {$j \in V$ in a topological order} 
		\STATE $\R(j) \eq \RE_j(L_j \cap \F_j) \cup (\cup_{\substack{(i,j) \in E}} \RE_j(\R(i)))$  \label{l:main}
	\ENDFOR 
	\STATE let $S = \cup_{\substack{j \in V}} (R_j \cap \R(j))$
	\STATE return {\sc true} if $S \not= \emptyset$, otherwise return {\sc false}
\end{algorithmic}
\end{algorithm}

\begin{theorem} \label{thm:graph}
	Given a polygonal curve $P$ of size $n$ and a directed acyclic graph $G$ of size $m$, 
	we can decide in $O(nm)$ time whether there is a path $\pi$ in $G$ 
	with $\distF(P, \pi) \lee \eps$, for a given $\eps \gee 0$.
	A path $\pi$ in $G$ minimizing $\distF(P, \pi)$ can be computed in $O(nm \log (nm))$ time.
\end{theorem}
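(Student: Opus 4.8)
The plan is to prove that Algorithm~\ref{alg:graph} decides the problem correctly in $O(nm)$ time, and then to derive the optimization bound by parametric search, following \cite{AltG95,AltERW03a}.

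For correctness I would show, by induction on the topological order of $V$, that after the loop of Algorithm~\ref{alg:graph} processes vertex $j$ we have $\R(j)=\RE_j(\CL)$. Since $G$ is acyclic and the vertices are handled topologically, for every edge $(i,j)\in E$ the set $\R(i)$ has already been computed and, by the inductive hypothesis, equals $\RE_i(\CL)$. The structural claim behind Line~\ref{l:main} is that a monotone feasible curve in $\F_\eps(P,G)$ whose right endpoint lies on $\F_j$ either (a) stays entirely inside $\FD_j$, so it begins at a feasible point of $L_j$ and its endpoint lies in $\RE_j(L_j\cap\F_j)$, or (b) enters $\F_j$ through the top boundary of exactly one strip $\FD_{ij}$ with $(i,j)\in E$; monotonicity with respect to the edge orientations forbids the curve from ever descending back below $\FD_i$, so its prefix up to the last instant it leaves $\F_i$ is a monotone feasible curve from $\CL$ to a point of $\R(i)=\RE_i(\CL)$, whence by transitivity of $\reach$ the endpoint belongs to $\RE_j(\R(i))$. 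The reverse inclusions are immediate from transitivity, so $\R(j)=\RE_j(\CL)$. Combined with the free-space-surface characterization of \cite{AltERW03a} noted before Algorithm~\ref{alg:graph} --- there is a path $\pi$ in $G$ with $\distF(P,\pi)\lee\eps$ iff some $R_j$ lies in $\R(j)$, the corresponding monotone feasible strip running from a point $L_i\in\CL$ to the point $R_j$ --- this shows the Boolean returned by Algorithm~\ref{alg:graph} is correct. (One should note that in a DAG every directed walk is a simple path, so there is no gap between ``walk'' and ``path'' here.)

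For the running time, the key point is that propagating $\R(i)$ across a single strip $\FD_{ij}$ to obtain $\RE_j(\R(i))$ is precisely an instance of Lemma~\ref{lemma:propagate}: its proof uses only a single row of $n$ cells together with a sorted family of reachable intervals on the bottom edge, so reading $\FD_i$ as the bottom row, $\FD_j$ as the top row, and $\FD_{ij}$ as the row of cells between them, the left and right pointers of all reachable intervals of $\R(i)$ onto $\FD_j$ are computed in $O(n)$ total time (via the horizontal-slab routine of Corollary~\ref{cor:rightmost} and the linear-scan next/prev pointers of Lemma~\ref{lemma:scan}), and $\RE_j(\R(i))$ is then read off as a sorted sequence of $O(n)$ pairwise-disjoint reachable intervals of $\F_j$ by one linear scan, using the surface analogue of Observation~\ref{obs:sorted} --- which follows from the same crossing argument as Lemma~\ref{lemma:cross} --- that these pointers respect the left-to-right order. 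To assemble $\R(j)$ from the $\deg^-(j)+1$ such families (one per incoming edge, plus $\RE_j(L_j\cap\F_j)$), I would fold them together by iterated two-way interval-merges; since each family and the running union are sorted lists of disjoint reachable intervals and $\F_j$ has only $O(n)$ feasible intervals, each merge costs $O(n)$ and vertex $j$ costs $O(n\deg^-(j))$ in total. Summing over $V$ and using $\sum_{j\in V}\deg^-(j)=m$ and $|V|\lee m+1$, the entire computation --- including building the $O(n)$-size diagrams $\FD_{ij}$ and $\FD_j$, the topological sort, and the final scan for $\cup_{j}(R_j\cap\R(j))$ --- runs in $O(nm)$ time.

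The step I expect to need the most care is exactly this running-time accounting: maintaining the invariant that each $\R(j)$ stays a sorted sequence of pairwise-disjoint reachable intervals of size $O(n)$ (so that outgoing propagation remains valid and cheap), and charging the unions so the total is $O(nm)$ rather than the $O(nm\log m)$ a $\deg^-(j)$-way heap-merge would give. Finally, for the optimization version I would apply parametric search as in \cite{AltG95,AltERW03a}: the decision procedure above runs in $O(nm)$ time, the pertinent critical values are of the familiar types (distances from vertices of $P$ to vertices and to edges of $G$, and the values of $\eps$ at which a horizontal or vertical passage opens in the free-space surface), and a sorting-based parametric search \cite{Cole87} over the $O(nm)$ relevant values yields a path minimizing $\distF(P,\pi)$ in $O(nm\log(nm))$ time.
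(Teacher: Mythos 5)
Your proposal is correct and follows essentially the same route as the paper: process the vertices of $G$ in topological order, propagate reachable sets across each strip $\FD_{ij}$ in $O(n)$ time via Lemma~\ref{lemma:propagate}, and apply parametric search for the optimization version. The one place you are slightly more explicit than the paper is in the per-vertex accounting --- you spell out that assembling $\R(j)$ from the $\deg^-(j)+1$ incoming families by iterated two-way merges of sorted disjoint interval lists costs $O(n\deg^-(j))$, and then charge $\sum_j \deg^-(j)=m$ --- whereas the paper simply asserts $O(n)$ per edge; this is the same bound, just argued more carefully.
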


\begin{proof}
	Algorithm~\ref{alg:graph} computes, for each vertex $j \in V$,
	the reachable set $\R(j)$ in a topological order.
	It then returns true only if there is a vertex $j \in V$ such that $R_j$ is reachable
	which indicates the existence of a path $\pi$ in $G$ with $\distF(P, \pi) \lee \eps$.
	To prove the correctness, we only need to show that 
	for every vertex $j \in V$, the algorithm computes $\R(j)$ correctly.
	We prove this by induction on $j$.
	Suppose by induction that the set $\R(i)$
	for all $i < j$ is computed correctly.
	Now consider a point $u \in \F_j$.
	If $u \in \R(j)$, then there exists a vertex $k < j$ such that
	$L_k$ is connected to $u$ by a monotone feasible curve $\CC$ in $\FS_\eps(P,G)$.
	If $k=j$, then $u \in \R(j)$ because
	$\RE_j(L_j \cap \F_j)$ is added to $\R(j)$ in line~\ref{l:main}.
	If $k<j$, then the curve $\CC$ must pass through a vertex $i$ with $(i,j) \in E$.
	Since 
	the vertices of $V$ are sorted in a topological order,
	we have $i < j$, and hence, $\R(i)$ is computed correctly by the induction hypothesis.
	Hence, letting $x = \CC \cap \F_i$, we have $x \in \R(i)$.
	Furthermore, we know that $x$ is connected to $u$ using the curve $\CC$.
	Therefore, the point $u$ is in $\RE_j(\R(i))$,
	and hence, is added to $\R(j)$ in Line~\ref{l:main}.
	Similarly, we can show that if $u \not\in \R(j)$,
	then $u$ is not added to $\R(j)$ by the algorithm.
	Suppose by contradiction that $u$ is added to $\R(j)$ in line~\ref{l:main}.
	Then either $u \in \RE_j(L_j \cap \F_j)$ or
	$u \in \RE_j(\R(i))$, for some $i < j$.
	But by the definition of reachability, both cases imply that 
	$u$ is reachable from a point in $\CL$,
	which is a contradiction.
	
	For the time complexity, 
	note that each $\RE_j(\R(i))$ in Line~\ref{l:main} can be computed in 
	$O(n)$ time using Lemma~\ref{lemma:propagate}. 
	Moreover, $\RE_j(L_j \cap \F_j)$, for each $j \in V$, can be computed
	by finding the largest feasible interval on $\F_j$ containing $L_j$
	in $O(n)$ time.
	Therefore, processing each edge $(i,j)$ takes $O(n)$ time,
	and hence, the whole computation takes $O(nm)$ time.
	Once the algorithm finds a reachable left endpoint $v$,
	we can construct a feasible monotone path connecting a right endpoint $u \in \CL$ to $v$
	by keeping, for each reachable interval $I$ on $R(j)$, 
	a back pointer to a reachable interval $J$ on $R(i)$, $(i, j) \in E$,
	from which $I$ is reachable. 
	The path $u \reach v$ can be constructed 
	by following the back pointers from $v$ to $u$, in $O(m)$ time.
	For the optimization problem, we use parametric search as in~\cite{AltERW03a,AltG95}
	to find the value of $\distF(P, \pi)$ by an extra $\log(nm)$-factor,
	namely, in $O(nm \log (nm))$ time.
	\qed
\end{proof}
Note that Algorithm~\ref{alg:graph} only works if the input graph is a DAG,
because it needs a topological ordering on the vertices in order to 
sequentially propagate reachability information.  
By the way, it is straight-forward 
to modify the algorithm to allow paths in $G$
to start and end anywhere inside edges of the graph, 
not necessarily at the vertices. 
This can be easily done by allowing 
the feasible path found by our algorithm
to start and end at any feasible point 
on the left and right boundary of $\FD_{ij}$,
for each edge $(i,j) \in E$.


\section{Conclusions} \label{sec:concl}

In this chapter, we presented improved algorithms
for several variants of the \Frechet distance problem.
Our improved results are based on a new data structure, called \fs map,
that might be applicable to other problems involving the \Frechet metric.
It remains open whether the same improvements obtained here
can be achieved for matching curves inside general graphs (see the next section where for 
complete graphs, we present some 
improvement).  
Proving a lower bound better than $\bigOmega(n \log n)$ 
is another major problem left open.

Preliminary results of this chapter 
are presented in the 
19th Annual European Symposium on Algorithms (ESA 2011)~\cite{oursESA2011}.
The full version of the paper is accepted 
for publication in Algorithmica~\cite{oursPartialAlgorithmica}.

\newcommand{\re}{\ell}

\chapter{Curve-Pointset Matching Problem (CPM) }
\label{ch:StayClose}

	Given a point set $S$ and a polygonal curve $P$ in $\IR^d$,
	we study the problem of finding a polygonal curve $Q$ whose vertices are from $S$
	and has minimum \Frechet distance to $P$. 
	Not all points in $\pset$ are required to be on 
	$Q$. Furthermore, a point in $\pset$ may be present multiple times on  $Q$. 
	We refer to this problem as  
	Curve-Pointset Matching (CPM) Problem.
	We present an efficient algorithm to solve the decision version 
	of this problem in $O(nk^2)$ time,
	where $n$ and $k$ represent the sizes of $P$ and $S$, respectively.
	Furthermore, if the answer to the decision problem is affirmative, 
	our algorithm can compute the curve with minimum number of segments 
	in $\eps$-~\Frechet distance to $P$.	
	In addition, we show that a curve minimizing the \Frechet distance can be computed in 
	$O(nk^2 \log(nk))$ time.
	As a by-product, we improve the map matching algorithm of Alt \etal\ 
	by an $O(\log k)$ factor for the case when the map is a complete graph.


\section{Introduction}

\REM{
Matching two geometric patterns 
is a fundamental problem in pattern recognition,
protein structure prediction, computer vision, geographic information systems, etc.
Usually these patterns consist of line segments and polygonal curves. 

One of the most popular ways to measure the similarity of two curves is
to use the \Frechet distance. 
An intuitive way to illustrate the \Frechet distance is as follows.
Imagine a person walking his/her dog, where the person and the dog, 
each travels a pre-specified curve, from beginning to the end, 
without ever letting go of the leash or backtracking.
The \Frechet distance between the two curves is the minimum length of a leash which is necessary.
The leash length determines how similar the two curves are to each other:
a short leash means the curves are similar,
and a long leash means that the curves are different from each other.

Two problem instances naturally arise:  decision and optimization.
In the {\em decision problem}, one wants to decide whether two polygonal curves $P$  and $Q$
are within~$\eps$ \Frechet distance to each other. 
In the {\em optimization problem}, one wishes to determine the minimum such~$\eps$.
Alt and Godau~\cite{AltG95} presented an $O(n^2)$-time algorithm for the decision problem,
where $n$ denotes the total number of segments in the curves.
They also solved the corresponding optimization problem in $O(n^2\log n)$ time.
}
In this chapter, we address the following variant of the \Frechet distance problem.
Given a point set $S$ and a polygonal curve $P$ in $\IR^d$ ($d \gee 2)$,
find a polygonal curve $Q$, with its vertices chosen from $S$, 
such that the \Frechet distance between $P$ and $Q$ is minimum.
Note that in our problem definition, 
not all points in $\pset$ need to be chosen as well as 
a point in $\pset$ can appear more than 
once as a vertex  in $Q$. 
In the decision version of the problem,
we want to decide if there is polygonal curve $Q$ through $S$
whose \Frechet distance to $P$ is at most $\eps$, for a given $\eps \gee 0$.
An instance of the decision problem is illustrated in Figure~\ref{fig:instance}.

One can use the map matching algorithm of Alt \etal~\cite{AltERW03a} (described in 
Section \ref{sec:RelatedMapMatching})
to solve the decision version of this problem 
by constructing a complete graph $G$ on top of $S$,
and then running Alt \etal's algorithm on $G$ and $P$.
If $n$ and $k$ represent the sizes of $P$ and $S$, respectively,
this leads to a running time of $O(nk^2 \log k)$ 
for solving the decision problem.

In this chapter, we present a simple algorithm to solve the decision version 
of the above problem in $O(nk^2)$ time.
This improves upon the algorithm of Alt \etal~\cite{AltERW03a}
by a $O(\log k)$ factor for the case when a curve is matched in a complete graph.
Our approach is different from and simpler
than the approach taken by Alt \etal\ which is a mixture of line sweep, dynamic
programming, and Dijkstra's algorithm.

\begin{figure}[t]
	\centering
	\includegraphics[height=0.5\columnwidth]{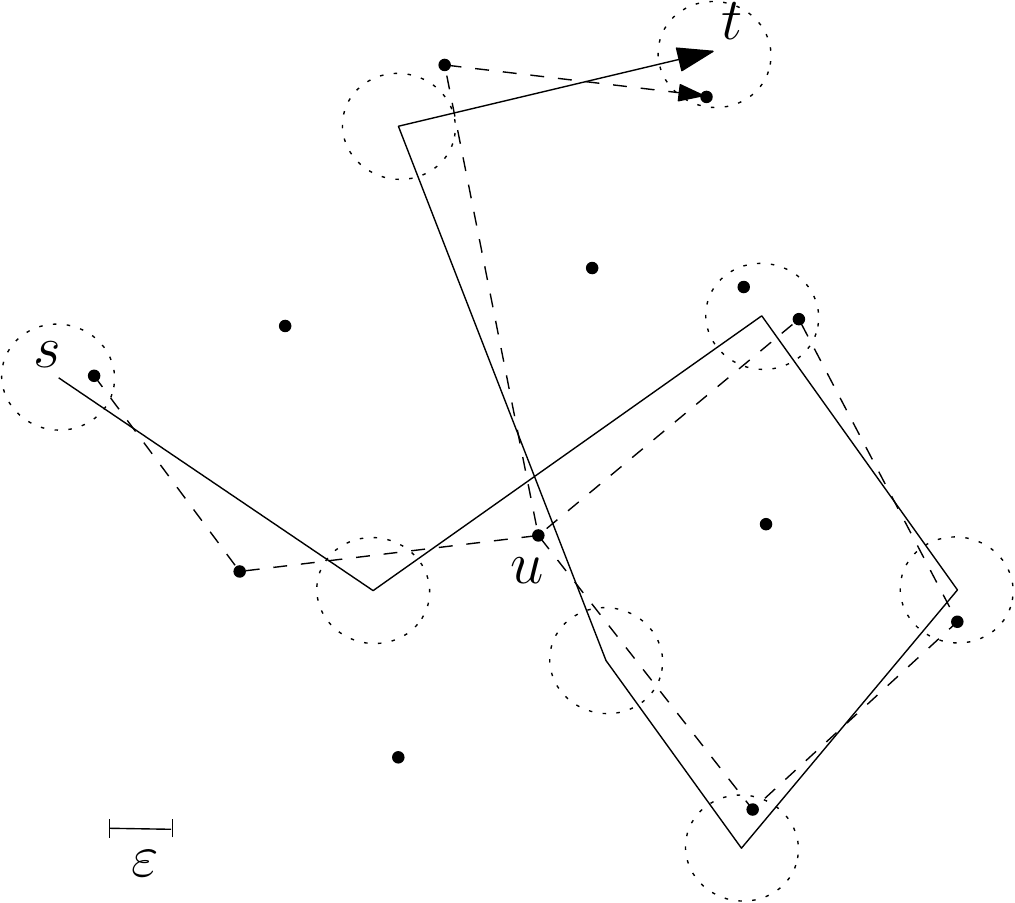}
	\caption{A problem instance. The dashed curve is in $\eps$-\Frechet distance to the solid curve. Point $u$ is used multiple times in the dashed curve.}
	\label{fig:instance}
\end{figure}


\section{Preliminaries}
\label{sec:preliminariesCPM}

Let $\eps \gee 0$ be a real number, and $d \gee 2$ be a fixed integer.
For any point $p \in \IR^d$,
we define $\CB(p,\eps) \equiv \{q \in \IR^d : \|pq\| \lee \eps\}$
to be a \emph{ball} of radius $\eps$ centered at $p$,
where $\|\cdot\|$ denotes the Euclidean distance.
Given a line segment $L \subset \IR^d$,
we define $\CC(L, \eps) \equiv \cup_{p\in L} \CB(p,\eps)$
to be a \emph{cylinder} of radius~$\eps$ around $L$
(see Figure~\ref{fig:cylinder}).

A curve  in $\IR^d$ can be represented as  a continuous function 
$P:[0,1] \rightarrow \IR^d$.
Given two points $u,v \in P$,  
we write $u \lei v$, if $u$ is located before $v$ on $P$.
The relation~$\lex$ is defined analogously.
For a subcurve $R \subseteq P$,
we denote by $\Left(R)$ and $\Right(R)$
the first and the last point of $R$ along $P$, respectively.

\begin{figure}[h]
	\centering
	\includegraphics[width=0.6\columnwidth]{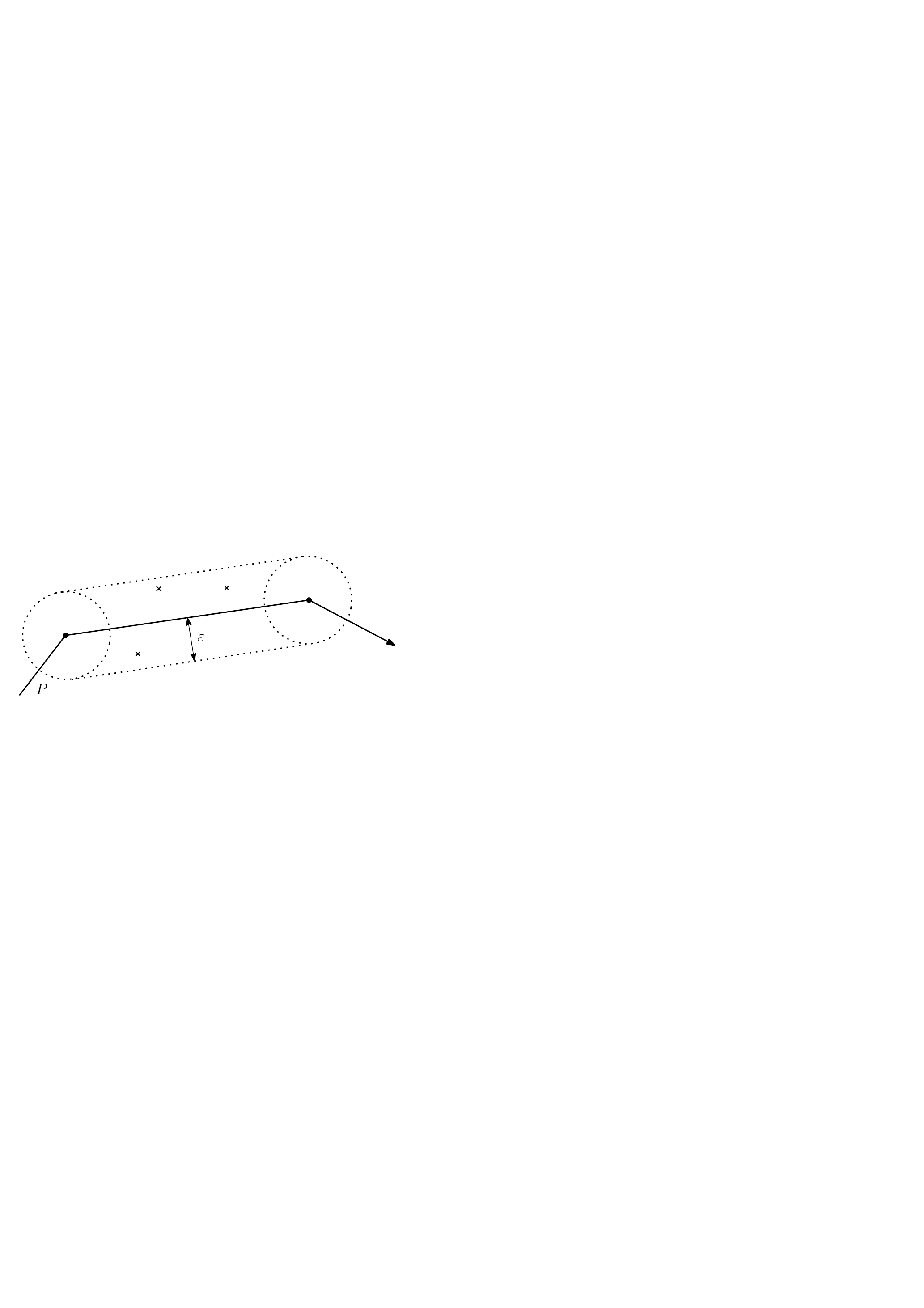}
	\caption{A cylinder of radius $\eps$ around segment $L$.}
	\label{fig:cylinder}
\end{figure}

Given two curves $\alpha, \beta: [0,1] \rightarrow \IR^d$,
the {\em \Frechet distance\/} between $\alpha$ and $\beta$ is defined as
$
	\distF(\alpha,\beta) = \inf_{\sigma, \tau} \max_{t \in [0,1]} \| \alpha(\sigma(t)), \beta(\tau(t)) \|,
$
where $\sigma, \tau: [0,1] \rightarrow [0,1]$ 
range over all continuous non-decreasing surjective functions.
The following two observations are immediate.

\begin{obs}\label{obs:simple}
	Given four points $a, b,c,d \in \IR^d$, if
	$\| ab\| \lee \eps$ and $\| cd\| \lee \eps$, then
	$\distF(\Dir{ac},\Dir{bd}) \lee \eps$. 
\end{obs}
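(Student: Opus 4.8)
The plan is to exhibit an explicit matching witnessing $\distF(\Dir{ac},\Dir{bd}) \lee \eps$, namely the linear (constant-speed) matching. First I would parametrize the two segments by the same parameter $t \in [0,1]$: set $\alpha(t) = (1-t)\,a + t\,c$ for $\Dir{ac}$ and $\beta(t) = (1-t)\,b + t\,d$ for $\Dir{bd}$. These are valid parametrizations of the segments in the sense required by the definition of $\distF$ recalled just above, so taking $\sigma = \tau = \mathrm{id}$ in that definition gives $\distF(\Dir{ac},\Dir{bd}) \lee \max_{t\in[0,1]} \|\alpha(t) - \beta(t)\|$.

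Next I would bound the pointwise distance. For every $t \in [0,1]$,
\[
	\alpha(t) - \beta(t) = (1-t)(a-b) + t(c-d),
\]
so by the triangle inequality in $\IR^d$ and nonnegativity of $1-t$ and $t$,
\[
	\|\alpha(t) - \beta(t)\| \lee (1-t)\,\|a-b\| + t\,\|c-d\| \lee (1-t)\eps + t\eps = \eps,
\]
using the hypotheses $\|ab\| \lee \eps$ and $\|cd\| \lee \eps$. Combining the two displays yields $\distF(\Dir{ac},\Dir{bd}) \lee \eps$, which is the claim.

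There is essentially no hard step here: the statement follows from convexity of the norm ball together with the elementary fact that the identity reparametrization is admissible in the infimum defining the \Frechet distance. The only thing to be slightly careful about is that $\Dir{ac}$ and $\Dir{bd}$ are \emph{directed} segments, so that pairing $a$ with $b$ (both at parameter $0$) and $c$ with $d$ (both at parameter $1$) respects the orientation, which is exactly why the constant-speed matching is legal; this is what makes Observation~\ref{obs:simple} the natural building block for later arguments about curves through point sets.
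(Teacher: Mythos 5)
Your proof is correct, and it spells out exactly the standard argument the paper treats as immediate (it states the observation without proof, under the heading ``The following two observations are immediate''). The constant-speed parametrization together with convexity of the Euclidean ball is the expected route, and your care about the orientation of the directed segments is the right thing to check.
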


\begin{obs}\label{obs:concat}
	Let $\alpha_1$, $\alpha_2$, $\beta_1$, and $\beta_2$ 
	be four curves 
	such that $\distF(\alpha_1,\beta_1) \lee \eps$ and
	$\distF(\alpha_2,\beta_2) \lee \eps$. 
	If the ending point of $\alpha_1$ (resp., $\beta_1$), 
	is the same as 
	the starting point of $\alpha_2$  (resp., $\beta_2$),
	then $\distF(\alpha_1 + \alpha_2, \beta_1 + \beta_2) \lee \eps$,
	where $+$ denotes the concatenation of two curves.
\end{obs}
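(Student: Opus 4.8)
The plan is to prove Observation~\ref{obs:concat} directly from the definition of the \Frechet distance, by gluing together near-optimal reparametrizations of the two pairs of curves. First I would fix an arbitrary $\delta > 0$. Since $\distF(\alpha_1,\beta_1) \lee \eps$ and $\distF(\alpha_2,\beta_2) \lee \eps$, the definition of the \Frechet distance as an infimum guarantees, for each $i \in \set{1,2}$, a pair of continuous non-decreasing surjective functions $\sigma_i, \tau_i : [0,1] \rightarrow [0,1]$ with $\max_{t \in [0,1]} \| \alpha_i(\sigma_i(t)), \beta_i(\tau_i(t)) \| \lee \eps + \delta$.

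Next I would parametrize the concatenations on $[0,1]$ in the natural way: write $\alpha = \alpha_1 + \alpha_2$ so that $\alpha(s) = \alpha_1(2s)$ for $s \in [0,\tfrac12]$ and $\alpha(s) = \alpha_2(2s-1)$ for $s \in [\tfrac12,1]$, and likewise $\beta = \beta_1 + \beta_2$. The hypothesis that the endpoint of $\alpha_1$ coincides with the starting point of $\alpha_2$ (and similarly for the $\beta$'s) is exactly what makes $\alpha$ and $\beta$ well-defined continuous curves. Now define $\sigma, \tau : [0,1] \rightarrow [0,1]$ by $\sigma(t) = \tfrac12 \sigma_1(2t)$ and $\tau(t) = \tfrac12 \tau_1(2t)$ for $t \in [0,\tfrac12]$, and $\sigma(t) = \tfrac12 + \tfrac12 \sigma_2(2t-1)$ and $\tau(t) = \tfrac12 + \tfrac12 \tau_2(2t-1)$ for $t \in [\tfrac12,1]$. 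I would then check the three required properties: continuity (the only issue is at $t = \tfrac12$, where both one-sided limits equal $\tfrac12$ because $\sigma_1(1) = 1$ and $\sigma_2(0) = 0$, and similarly for $\tau$); non-decreasing monotonicity (immediate on each half, and across $t = \tfrac12$ since the left half maps into $[0,\tfrac12]$ and the right half into $[\tfrac12,1]$); and surjectivity (each half covers the corresponding half of $[0,1]$).

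Finally I would verify the leash bound. For $t \in [0,\tfrac12]$ we have $\alpha(\sigma(t)) = \alpha_1(\sigma_1(2t))$ and $\beta(\tau(t)) = \beta_1(\tau_1(2t))$, so $\|\alpha(\sigma(t)), \beta(\tau(t))\| \lee \eps + \delta$; the case $t \in [\tfrac12,1]$ is symmetric, using the pair $(\sigma_2,\tau_2)$. Hence $\distF(\alpha_1+\alpha_2,\beta_1+\beta_2) \lee \eps + \delta$, and letting $\delta \to 0$ gives the claim. I expect the routine but slightly delicate part to be the bookkeeping around the gluing point $t = \tfrac12$ --- ensuring the pieced-together reparametrizations stay continuous, monotone, and surjective --- together with the need to carry $\eps + \delta$ rather than $\eps$, since the \Frechet infimum need not be attained for general curves. (If one restricts to curves for which the infimum is attained, the argument simplifies and $\delta$ can be dropped entirely.)
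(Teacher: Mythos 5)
Your proof is correct and complete. The paper itself states Observation~\ref{obs:concat} without proof (``The following two observations are immediate''), and your argument --- gluing near-optimal reparametrizations at $t=\tfrac12$, checking continuity/monotonicity/surjectivity of the pieced maps, carrying $\eps+\delta$ and letting $\delta\to 0$ because the infimum need not be attained --- is exactly the standard verification one would supply.
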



\section{The Decision Algorithm}

Let $P$ be a polygonal curve composed of $n$ line segments $P_1, \ldots, P_n$,
and let $S$ be a set of $k$ points in $\IR^d$.
In this section, we provide an algorithm to
decide whether there exists a polygonal curve $Q$ whose vertices are chosen from $S$,
such that $\distF(P,Q) \lee \eps$,
for a given $\eps \gee 0$. 

We denote by $s$ and $t$ the starting and the ending point of $P$, respectively.
For each segment $P_i$ of $P$,
we denote by $C_i$ the cylinder $\CC(P_i, \eps)$,
and by $S_i$ the set $S \cap C_i$.
Furthermore, for each point $u \in C_i$,
we denote by $P_i[u]$ the line segment $P_i \cap \CB(u,\eps)$.

We call a polygonal curve $Q$ \emph{feasible} if 
all its vertices are from $S$, and
$\distF(Q, P') \lee \eps$ for a subcurve $P' \subseteq P$ starting at $s$.
If $Q$ ends at a point $v \in S$ and $P'$ ends at a point $p \in P$, 
we call the pair $(v,p)$ a \emph{feasible pair}.
A point $v \in S_i$ is called \emph{reachable} (at cylinder $C_i$)
if there is a feasible curve ending at $v$ in $C_i$.

Consider a feasible curve $Q$ starting at a point $u \in S_1$ and
ending at a point $v \in S_i$.
Since no backtracking is allowed in the definition of \Frechet distance,
$Q$ traverses all cylinders $C_1$ to $C_i$ in order, 
until it reaches $v$.
Moreover, by our definition of reachability, 
each vertex of $Q$ 
is reachable at some cylinder $C_j$, $1 \lee j \lee i$.

Our approach for solving the decision problem is to process the cylinders
one by one from $C_1$ to $C_n$, and identify at each cylinder $C_i$
all points of $S$ which are reachable at $C_i$.
The decision problem will be then reduced (by Observation~\ref{obs:concat}) 
to checking whether there is a reachable point in the ball $\CB(t,\eps)$.

To propagate the reachability information through the cylinders, 
we need a primitive operation described below.
Let $u \in S_i$ be a point reachable at cylinder $C_i$, 
and let $Q$ be a feasible curve ending at $u$.
For each point $v \in S$, 
we denote by $\ri_i(u,v)$ the index of the furthest cylinder 
we can reach by the curve $Q+\Dir{uv}$.
In other words,
$\ri_i(u,v)$ is the largest index $\ell \gee i$ such that 
$v \in S_\ell$ is reachable via $u \in S_i$.
If $Q+\Dir{uv}$ is not feasible, we set $\ri_i(u,v) = 0$.
The following lemma 
is a direct corollary of a similar one proved in~\cite{AltERW03a}  (Lemma~3)
for computing the so-called right pointers.

\begin{lemma}[\cite{AltERW03a}] \label{lemma:linear}
	Given two points $u,v \in S$,
	we can compute $\ri_i(u,v)$ for all $1 \lee i \lee n$ in $O(n)$ total time.
\end{lemma}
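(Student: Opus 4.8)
The plan is to recognize $\ri_i(u,v)$ as a reachability pointer in the free-space diagram of the single segment $\Dir{uv}$ against the whole curve $P$, and then compute all $n$ such pointers by one linear-time sweep of that diagram. Concretely, parametrize $\Dir{uv}$ by $[0,1]$ and $P$ by $[0,n]$, and consider the free-space diagram of $\Dir{uv}$ against $P$ inside $[0,n]\times[0,1]$: it is a single horizontal row of $n$ cells $c_1,\dots,c_n$, where $c_i$ corresponds to the pair $(P_i,\Dir{uv})$ and, as in the classical setting, the free space inside each $c_i$ is convex and computable in $O(1)$ time. The bottom edge $[0,n]\times\{0\}$ is exactly the one-dimensional free space of the point $u$ against $P$, and the top edge $[0,n]\times\{1\}$ is that of $v$ against $P$; both are unions of $O(n)$ feasible intervals (at most one per $P_i$) and are built in $O(n)$ total time. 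Under this correspondence, $\ri_i(u,v)$ is precisely the index of the cell containing the rightmost point of the top edge that is reachable by a monotone feasible path originating from the feasible interval on the bottom edge of $c_i$ (and $0$ if no point of the top edge is reachable from there). To make this a well-defined function of $(i,u,v)$ --- rather than of a particular witness curve $Q$ --- one first observes, via the Crossing Lemma (Lemma~\ref{lemma:cross}), that reachability from any matched position of $u$ inside $P_i$ is dominated by reachability from the rightmost such position, so taking that rightmost position as the source loses nothing.

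The main step is to compute these $n$ pointers in $O(n)$ total time. The Crossing Lemma again yields monotonicity: ordering the bottom feasible intervals from left to right (equivalently, by cell index), their rightmost reachable points on the top edge appear in the same order, exactly as in Observation~\ref{obs:sorted}. This is the single-row special case of the reachability propagation of Lemma~\ref{lemma:propagate} and of the right-pointer computation of Lemma~3 of~\cite{AltERW03a}. I would sweep the row from left to right, maintaining at each vertical cell boundary the (single) interval of reachable points carried over from the cells to its left; the delicate point is that a monotone path starting at the bottom of $c_i$ may enter $c_{i+1}$ through its left edge rather than its bottom edge, so the answer for source $c_i$ is not obtained by naive accumulation from $c_1$. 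To resolve this without re-scanning cells, I would apply the deque-based ``topmost reachable segment'' mechanism of Lemma~\ref{lemma:top-segment} (or equivalently the vertical ray shooting of Section~\ref{sec:shooting}): treating the infeasible portions of the cell boundaries as the blocking segments, each cell's bottom interval is linked by an ``up''/``top'' pointer to the furthest cell whose top interval it can reach, and since every cell is pushed onto and popped from the deque at most once, the whole sweep runs in $O(n)$ time. Reading off the cell index of each resulting pointer gives all the values $\ri_i(u,v)$, with $\ri_i(u,v)=0$ whenever the pointer is $\nil$.

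The hard part is precisely this amortized-linear resolution of the bottom-edge-versus-left-edge entries, since it is what separates an $O(n)$ algorithm from the naive $O(n^2)$ one that restarts the propagation at each cell; everything else (building the two one-dimensional free spaces, per-cell convexity, and monotonicity of the pointers) is routine. Because this amortized argument is exactly the content of Lemma~3 of Alt \etal~\cite{AltERW03a}, the cleanest writeup simply instantiates that lemma with the graph consisting of the single edge $\Dir{uv}$ and the curve $P$, which immediately yields the $O(n)$ bound; the discussion above merely indicates why that instantiation is legitimate and how it would be carried out from scratch.
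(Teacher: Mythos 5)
The paper gives no proof of this lemma; it is asserted as a direct corollary of Lemma~3 of Alt~\etal~\cite{AltERW03a}, and your closing sentence --- instantiate that lemma on the one-edge ``graph'' $\Dir{uv}$ and the curve $P$ --- is precisely the paper's argument. What you add is a self-contained reconstruction via the thesis's own Chapter~5 machinery (the deque sweep of Lemma~\ref{lemma:top-segment} and Corollary~\ref{cor:rightmost}, equivalently the single-row propagation of Lemma~\ref{lemma:propagate}), and that alternative route is sound and arguably more informative to a reader than a bare pointer to the external reference. Your setup is right: $\FD_\eps(P,\Dir{uv})$ is a single horizontal row of $n$ convex cells whose bottom and top boundaries are the one-dimensional free spaces of $u$ and $v$ against $P$, $\ri_i(u,v)$ is the cell index of the right pointer of the $i$-th bottom feasible interval, monotonicity of these pointers comes from the Crossing Lemma, and the amortized deque sweep over the infeasible pieces of the $n+1$ vertical cell boundaries (each pushed and popped once) gives the $O(n)$ total bound. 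One small wrinkle in your well-definedness aside: rather than saying ``the rightmost matched position dominates,'' the cleaner invariant --- and the one Alt~\etal\ actually exploit --- is that the rightmost point of $\FD_v$ reachable from a bottom interval $I$ is the same for every point of $I$ that has a nonempty reach, so the right pointer is an attribute of the interval itself, independent of the particular witness curve $Q$ appearing in the paper's informal definition of $\ri_i(u,v)$.
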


We use the following lemma for our algorithm.

\begin{lemma} \label{lemma:cross}
	Let $\ri_i(u,v) = \ell$.
	For all $i \lee j \lee \ell$,  if $v \in S_j$, then $v$ is reachable at $C_j$.
\end{lemma}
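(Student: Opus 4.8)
The plan is to unpack the hypothesis $\ri_i(u,v)=\ell$ into an explicit feasible curve together with a concrete Fr\'echet matching, and then, for each admissible $j$, to shorten that matching so that the person's walk stops on segment $P_j$ rather than on $P_\ell$; the main geometric ingredients are convexity of Euclidean balls, monotonicity of the matching, and Observation~\ref{obs:concat}.

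First I would make the hypothesis concrete. Writing $P[a,b]$ for the subcurve of $P$ from $a$ to $b$, the equality $\ri_i(u,v)=\ell$ means there is a feasible curve $Q$ ending at $u$, a point $p\in P_i$, and a point $p'\in P_\ell$, such that $Q$ is matched (with Fr\'echet distance at most $\eps$) to $P[s,p]$ and this matching extends to a matching of $Q^{+}:=Q+\Dir{uv}$ with $P[s,p']$. Fix such a matching and read it off as a pair of monotone motions: at parameter $\theta\in[0,1]$ the dog is at $D(\theta)\in Q^{+}$ and the person is at $E(\theta)\in P[s,p']$, with $\|D(\theta)-E(\theta)\|\lee\eps$ for every $\theta$, $D(1)=v$, $E(1)=p'$, and $E(\theta_u)=p$, where $\theta_u$ is the parameter at which the dog leaves the prefix $Q$ and begins traversing the last segment $\Dir{uv}$. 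Now fix $j$ with $i\lee j\lee\ell$ and $v\in S_j$. If $j=\ell$ there is nothing to prove: $Q^{+}$ is itself a feasible curve ending at $v$ and matched to $P[s,p']$ with $p'\in P_\ell$, so $v$ is reachable at $C_\ell=C_j$. Hence assume $j<\ell$. Then the person's walk sweeps all of segment $P_j$, and since $v\in S_j=S\cap C_j$ we have $P_j\cap\CB(v,\eps)\neq\emptyset$.

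The next step handles the case $j>i$. Pick any $q\in P_j\cap\CB(v,\eps)$ and let $\theta_q$ be a parameter with $E(\theta_q)=q$. Since $q$ lies on $P_j$ and $p$ lies on $P_i$ with $j>i$, we have $p\lei q$, and monotonicity of $E$ forces $\theta_q\gee\theta_u$; monotonicity of $D$ then gives $u\lex D(\theta_q)\lex v$ on $Q^{+}$, so $w:=D(\theta_q)$ lies on the last segment $\Dir{uv}$, and $\|w-q\|\lee\eps$. Restricting the matching to $[0,\theta_q]$ shows $\distF(Q+\Dir{uw},\,P[s,q])\lee\eps$. Now walk the dog from $w$ to $v$ along $\Dir{wv}$ while the person waits at $q$: because $w,v\in\CB(q,\eps)$ and the ball $\CB(q,\eps)$ is convex, the whole sub-segment $\Dir{wv}$ lies inside $\CB(q,\eps)$, so the leash never exceeds $\eps$, that is, $\distF(\Dir{wv},\{q\})\lee\eps$. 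Concatenating these two pieces (Observation~\ref{obs:concat}) yields $\distF(Q^{+},P[s,q])\lee\eps$; since $Q^{+}$ has all its vertices in $S$, ends at $v$, and $P[s,q]$ runs from $s$ to $q\in P_j$, this exhibits $v$ as reachable at $C_j$.

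The remaining case, $j=i$ together with $\ell>i$, is the main obstacle. To reuse the argument just given we would need a point $q\in P_i\cap\CB(v,\eps)$ that the person reaches at some parameter $\gee\theta_u$, i.e.\ a point of $P_i\cap\CB(v,\eps)$ lying at or after $p$ along $P_i$; equivalently, $\Right(P_i\cap\CB(v,\eps))\gex p$. I would derive this from $\ell>i$: the person still has to advance from $p$ through the endpoint $P(i)$ of $P_i$ and on into later segments, while the dog advances from $u$ to $v$ keeping the leash at most $\eps$. If instead every point of $P_i\cap\CB(v,\eps)$ lay strictly before $p$, then $v$ --- the dog's terminal position, which must stay within $\eps$ of the person as she moves forward past $P(i)$ --- would be ``behind'' the person along $P_i$, and by tracking the components of $D(\theta)$ and $E(\theta)$ along the direction of $P_i$ one produces a parameter $\theta$ with $\|D(\theta)-E(\theta)\|>\eps$, contradicting the matching. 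Converting this monotonicity-versus-leash-length observation into a clean inequality is the one genuinely technical point; once $\Right(P_i\cap\CB(v,\eps))\gex p$ is established, the case concludes exactly as in the previous paragraph.
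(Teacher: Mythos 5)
Your handling of the cases $i<j\lee\ell$ is correct and is essentially the paper's argument: take the matching of $Q+\Dir{uv}$ to the prefix $P[s,p']$, truncate it at the moment the person first reaches a point $q$ of $P_j[v]$, and then let the dog finish $\Seg{uv}$ while the person waits at $q$, using convexity of $\CB(q,\eps)$ (which is exactly what the paper extracts from Observation~\ref{obs:simple}), followed by Observation~\ref{obs:concat}. The only cosmetic differences are that the paper fixes $q=\Left(P_j[v])$ whereas you allow any $q\in P_j\cap\CB(v,\eps)$ --- both work, since for $j<\ell$ any such $q$ strictly precedes $p'$ and hence is reached at some $\theta_q\gee\theta_u$ --- and that you split off $j=\ell$ as a trivial base case while the paper treats $i<j\lee\ell$ uniformly.

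The ``main obstacle'' you flag for $j=i$ with $\ell>i$ is not a real obligation, and you should not have tried to close it. The paper's own proof opens with ``Fix a cylinder $C_j$, $i<j\lee\ell$'' and never addresses $j=i$, so the lemma's hypothesis $i\lee j\lee\ell$ is stated slightly more broadly than what the proof (or yours) establishes. More to the point, every application of this lemma in the paper is with strict inequality: Lemma~\ref{lemma:left-point} is stated explicitly for $i<j\lee\ell$, and in the proof of Lemma~\ref{lemma:main} the lemma is invoked for a point $u\in\R_j$ with $j$ strictly smaller than the target cylinder index $i$ to conclude that $v$ is reachable at $C_i$. The same-cylinder analogue of the $j=i$ case is handled separately inside Algorithm~\ref{alg:decCCCG11} by the entry-point loop (computing $q=\min_{v\in\R_i}\Left(P_i[v])$ and testing $q\lex\Right(P_i[v])$), not by this lemma. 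So your suspicion that deriving $\Right(P_i[v])\gex p$ from $\ell>i$ is delicate is well founded --- but it is an inequality you never needed to prove, and the hand-wave you gave for it should not be trusted as written.
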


\begin{proof}
	Let $Q$ be a feasible curve starting at a point $w \in S \cap \CB(s,\eps)$ and ending at $u$,
	and let $Q' = Q+\Dir{uv}$.
	Since $v$ is reachable at $C_\ell$ via $Q'$,
	there is a subcurve $P'$ of $P$ starting at $s$ and ending at a point $p \in P_\ell[v]$
	(see Figure~\ref{fig:cross}).
	Consider two point objects $\CO_P$ and $\CO_Q$ 
	traversing $P'$ and $Q'$, respectively, from beginning to end,
	while keeping $\eps$ distance to each other.
	Since $v$ is reachable via $u \in S_i$,
	$\CO_P$ is at a point $a \in P_i$ when $\CO_Q$ is at $u$.
	Fix a cylinder $C_j$,  $i < j \lee \ell$, such that  $v \in C_j$.
	When $\CO_P$ reaches the point $b=\Left(P_j[v])$, 
	$\CO_Q$ is at a point $x \in \Seg{uv}$ such that $\|bx\| \lee \eps$.
	The subcurve of $Q'$ from $w$ to $x$ has \Frechet distance at most $\eps$
	to the subcurve of $P$ from $s$ to $b$,
	and the segment $\Seg{xv}$ has \Frechet distance at most $\eps$ to the point $b$
	by Observation~\ref{obs:simple}.
	Therefore, by Observation~\ref{obs:concat}, 
	the whole curve $Q'$ has \Frechet distance at most $\eps$
	to the subcurve $P'$ from $s$ to $b$,
	meaning that $v$ is reachable at $C_j$.
\end{proof}

\begin{figure}[t]
	\centering
	\includegraphics[width=0.6\columnwidth]{figs/cross}
	\vspace{0.5em}
	\caption{Proof of Lemma~\ref{lemma:cross}}
	\label{fig:cross}
\end{figure}

The above proof, not only shows that 
$v$ is reachable at $C_j$, 
but also that 
the pair $(v, \Left(P_j[v]))$ is feasible.
The following lemma is therefore immediate.

\begin{lemma} \label{lemma:left-point}
	If $\ri_i(u,v) = \ell$ and $v \in S_j$, $i < j \lee \ell$,
	then $(v, \Left(P_j[v]))$ is a feasible pair.
\end{lemma}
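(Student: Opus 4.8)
The plan is to obtain Lemma~\ref{lemma:left-point} essentially for free by re-reading the construction carried out inside the proof of Lemma~\ref{lemma:cross}. Recall that a pair $(v,p)$ is \emph{feasible} precisely when there is a feasible polygonal curve $Q'$ ending at $v$ (all of whose vertices lie in $\pset$) together with a subcurve $P' \subseteq P$ starting at $s$ and ending at $p$ with $\distF(Q',P') \lee \eps$. So all I need is to exhibit such $Q'$ and $P'$ for the specific point $p = \Left(P_j[v])$, and the proof of Lemma~\ref{lemma:cross} already does exactly this.

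Concretely, I would set up the same witnesses as there. Since $\ri_i(u,v) = \ell \gee i$, the point $u \in S_i$ is reachable at $C_i$; let $Q$ be a feasible curve ending at $u$, starting at some $w \in \pset \cap \CB(s,\eps)$, with a corresponding subcurve of $P$ from $s$ to a point $a \in P_i$, and put $Q' = Q + \Dir{uv}$, which is feasible because $u,v \in \pset$. Next I would note that $b := \Left(P_j[v])$ is well defined: $v \in S_j = \pset \cap C_j$ forces $P_j \cap \CB(v,\eps) = P_j[v]$ to be nonempty, and every point of $P_j[v]$, in particular $b$, lies within distance $\eps$ of $v$. Running the two point objects $\CO_P$ and $\CO_Q$ of the proof of Lemma~\ref{lemma:cross} up to the moment $\CO_P$ reaches $b$ (which happens with $i < j \lee \ell$), the object $\CO_Q$ sits at a point $x \in \Seg{uv}$ with $\|bx\| \lee \eps$; the portion of $Q'$ from $w$ to $x$ is within Fr\'echet distance $\eps$ of the subcurve $P'$ of $P$ from $s$ to $b$, and $\Seg{xv}$ is within Fr\'echet distance $\eps$ of the single point $b$ (using $\|bx\| \lee \eps$, $\|bv\| \lee \eps$, and Observation~\ref{obs:simple}). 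Concatenating via Observation~\ref{obs:concat} gives $\distF(Q',P') \lee \eps$; since $Q'$ ends at $v$ and $P'$ ends at $b = \Left(P_j[v])$, the pair $(v,\Left(P_j[v]))$ is feasible.

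There is no real obstacle here: the only points that deserve an explicit sentence are that $\Left(P_j[v])$ is well defined (nonemptiness of $P_j[v]$) and that $b$ lies within $\eps$ of $v$, both forced by $v \in S_j$; everything else is literally contained in the last lines of the proof of Lemma~\ref{lemma:cross}. Conceptually, the point of the lemma is that it upgrades ``$v$ is reachable at $C_j$'' to a concrete feasible pair whose second coordinate is the canonical leftmost admissible point $\Left(P_j[v])$ on $P_j$; this canonical choice is what later lets the decision algorithm propagate reachability cylinder by cylinder without tracking whole feasible intervals, and is exactly the ``leftmost'' data the $O(nk^2)$ running-time bound relies on.
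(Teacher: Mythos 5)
Your proposal is correct and is exactly the paper's own argument: the paper obtains Lemma~\ref{lemma:left-point} by noting that the proof of Lemma~\ref{lemma:cross} already exhibits a feasible curve $Q' = Q + \Dir{uv}$ ending at $v$ and a subcurve of $P$ ending at $b = \Left(P_j[v])$ within Fr\'echet distance $\eps$ of $Q'$, which is precisely the witness for feasibility of the pair $(v,\Left(P_j[v]))$. One small slip worth flagging: you justify that $Q'$ is feasible ``because $u,v \in \pset$,'' but vertex membership is only one of the two conditions in the definition of feasibility; the Fr\'echet condition against some prefix of $P$ is what matters, and it follows because $\ri_i(u,v) = \ell \gee j > i \gee 1 > 0$ rules out the case $\ri_i(u,v) = 0$, which is exactly the case where $Q + \Dir{uv}$ is infeasible --- a fact your concatenation argument then re-derives in any case.
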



\begin{algorithm} [h]
\caption {{\sc Decision}$(S, P, \eps)$} 
\label{alg:decCCCG11}
\algsetup{indent=1.5em}
\begin{algorithmic}[1]
	\vspace{0.5em}
	\baselineskip=1\baselineskip
	\STATE {\bf Initialize:}  \label{l:init}
		\STATE \hspace{1.5em}compute $\ri_i(u,v)$ for all $u,v \in S$ and $1 \lee i \lee n$ \label{l:init-1} 
		\STATE \hspace{1.5em}set $\re_{v} = 0$ for all $v \in S$ \label{l:init-2} 
		\STATE \hspace{1.5em}let $\R_0 = S \cap \CB(s,\eps)$ \label{l:R0} \label{l:init-3} 
		\STATE \hspace{1.5em}set $\re_v = 1$ for all $v \in \R_0$  \label{l:init-4} 
	\FOR {$i = 1$ to $n$}   \label{l:mainstart}
		\STATE let $\R_i = \set{v \in S_{i} : \re_{v}\gee i}$  \label{l:direct}
		\STATE let $q = \min_{v \in \R_i}{\Left(P_i[v])}$  \label{l:min}
		\FORALL {$v \in S_i \setminus \R_i$} \label{l:loop1start}
			\IF {$q \lex \Right(P_i[v])$} \label{l:cond}
				\STATE add $v$ to $\R_{i}$ \label{l:reach}
			\ENDIF
		\ENDFOR \label{l:loop1end}
		\FORALL {$(u,v) \in \R_{i} \times S$}  \label{l:rel-1}
				\STATE $\re_{v} \eq \max\set{\re_v, \ri_i(u,v)}$  \label{l:rel-2}
		\ENDFOR
	\ENDFOR 
	\STATE {\bf return} {\sc yes} if $\R_n \cap \CB(t,\eps) \not= \emptyset$ 
	\label{l:final}

\end{algorithmic}
\end{algorithm}

\paragraph{The Algorithm}
Our algorithm for solving the decision problem is provided in Algorithm~\ref{alg:decCCCG11}.
It maintains,
for each cylinder $C_i$, 
a set $\R_i$ of all points in $S_i$ which are reachable at $C_i$.
To handle the base case more easily,
we assume, w.l.o.g., that the curve $P$ starts with a segment $P_0$  
consisting of a single point $\set{s}$.
Every point of $S$ inside the cylinder $C_0 = \CB(s,\eps)$ is reachable by definition.
Therefore, we initially set $\R_0 = S \cap \CB(s,\eps)$ (in line~\ref{l:R0}).

For each point $v \in S$, the algorithm maintains an index $\re_v$, 
whose value at the beginning of each iteration $i$ is the following:
$\re_v = \max_{0 \lee j < i, u \in \R_j} \ri_j(u,v)$.
In other words, $\re_v$ 
points to the largest index $\ell$ for which $v$ is reachable at $C_\ell$
via a reachable point $u$ in some earlier cylinder $C_j$, $j < i$.
Initially, we set $\re_v  = 1$ for all points in $\R_0$,
because all points in $\R_0$ are also reachable in $C_1$, as $C_0 \subseteq C_1$.
For all other points, $\re_v$ is set to 0 in the initialization step.
The following invariant holds during the execution of the algorithm.

\begin{lemma}\label{lemma:main}
	After the $i$-th iteration of Algorithm~\ref{alg:decCCCG11}, 
	the set $\R_{i}$ consists of all points in $S_i$ which are reachable at cylinder $C_i$.
\end{lemma}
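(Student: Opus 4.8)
The plan is to prove the lemma by induction on $i$, strengthening the statement so that the inductive hypothesis also carries the invariant on the $\re$-values: at the start of iteration $i$ one has $\re_v = \max_{0 \lee j < i,\ u \in \R_j} \ri_j(u,v)$ for every $v \in S$, and $\R_j$ equals the set of points of $S_j$ reachable at $C_j$ for every $j < i$. The base case is handled by the initialization in lines~\ref{l:init-1}--\ref{l:init-4} together with the convention $P_0 = \set{s}$: every point of $\CB(s,\eps)$ is reachable by the single-vertex curve consisting of that point, and because $s = \Left(P_1)$ lies in $P_1[w]$ for any $w \in \R_0$, the corresponding feasible pair has its $P$-endpoint at $s$, which matches the value $q$ computed in iteration~$1$.

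For the inductive step I would establish the two directions separately. \textbf{Soundness} (every point placed in $\R_i$ is reachable at $C_i$): a point $v$ enters $\R_i$ either in line~\ref{l:direct} because $\re_v \gee i$, or in line~\ref{l:reach} because $q \lex \Right(P_i[v])$. In the first case the $\re$-invariant gives some $j < i$ and $u \in \R_j$ with $\ri_j(u,v) \gee i$; by the induction hypothesis $u$ is reachable at $C_j$, and Lemma~\ref{lemma:cross} (applied with $j \lee i \lee \ri_j(u,v)$ and $v \in S_i$) yields that $v$ is reachable at $C_i$. In the second case let $w \in \R_i$ be the point from the first group attaining $q = \Left(P_i[w])$; by the first case $w$ is reachable at $C_i$, and Lemma~\ref{lemma:left-point} shows that $(w,q)$ is a feasible pair. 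Since $\|wq\| \lee \eps$ and $\Left(P_i[v]) \in \CB(v,\eps)$, Observations~\ref{obs:simple} and~\ref{obs:concat} let me extend a feasible curve realizing $(w,q)$ by the segment $\Dir{wv}$ while the person walks forward on $P_i$ from $q$ to a point of $P_i[v]$ — possible precisely because $q \lex \Right(P_i[v])$ — so $v$ is reachable at $C_i$. The $\re$-invariant for the next iteration then follows immediately from lines~\ref{l:rel-1}--\ref{l:rel-2} and the reachability characterization of $\ri$.

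\textbf{Completeness} (every $v \in S_i$ reachable at $C_i$ lies in $\R_i$ after iteration $i$): fix a witnessing feasible curve $Q = w_0 w_1 \cdots w_r$ with $w_0 \in \R_0$, $w_r = v$, matched to a prefix of $P$ ending at a point of $P_i[v]$. For $0 \lee \ell \lee r$ let $a_\ell \in P$ be the person's position when the dog sits at $w_\ell$, and let $j_\ell$ be the index of the $P$-segment containing $a_\ell$; then $j_0 = 0$, the $j_\ell$ are non-decreasing, and $j_r = i$. Let $m$ be the largest index with $j_m < i$, so that $j_{m+1} = i$ and $w_m$ is reachable at $C_{j_m}$ with $j_m < i$. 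By the induction hypothesis $w_m \in \R_{j_m}$, so during iteration $j_m$ line~\ref{l:rel-2} forces $\re_{w_{m+1}} \gee \ri_{j_m}(w_m, w_{m+1}) \gee i$; hence $w_{m+1} \in \R_i$ is placed there by line~\ref{l:direct}, and it belongs to the first group over which $q$ is computed, giving $q \lei \Left(P_i[w_{m+1}])$. Finally, monotonicity of the matching gives $q \lei \Left(P_i[w_{m+1}]) \lei a_{m+1} \lei a_\ell \lei \Right(P_i[w_\ell])$ for every $m+1 \lee \ell \lee r$, so each of $w_{m+2}, \ldots, w_r = v$ passes the test in line~\ref{l:cond} and is added by line~\ref{l:reach}; any of these vertices that already lies in the first group is of course already in $\R_i$.

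The main obstacle I expect is the completeness direction, and within it the argument that the single threshold $q$ — computed only from the points already known to be reachable at the start of iteration $i$ — suffices to recover, in one linear pass, every vertex of $Q$ that lies in cylinder $C_i$. This rests on exploiting the no-backtracking property of the \Frechet matching (so the person's position only advances along $P_i$) and on the fact that the first $C_i$-vertex $w_{m+1}$ is caught by the $\re$-pointer mechanism rather than the in-cylinder scan; making the interplay of these two mechanisms precise, including the boundary situations around $\CB(s,\eps)$ and the strict-versus-nonstrict comparisons in lines~\ref{l:min} and~\ref{l:cond}, is the delicate part of the proof.
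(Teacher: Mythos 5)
Your proof is correct and follows essentially the same inductive structure as the paper: soundness via the $\re$-invariant together with Lemmas~\ref{lemma:cross} and~\ref{lemma:left-point}, and completeness by pinpointing the first vertex of the witness curve that enters cylinder~$C_i$, showing it is caught by line~\ref{l:direct}, and then propagating $q \lex \Right(P_i[\cdot])$ along the monotone matching. Your explicit chain $q \lei \Left(P_i[w_{m+1}]) \lei a_{m+1} \lei a_\ell \lei \Right(P_i[w_\ell])$, driven by the non-decreasing cylinder indices $j_\ell$, is a slightly more careful identification of the entry vertex than the paper's phrasing ``first point on $Q$ reachable at $C_i$,'' but the argument is the same in substance.
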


\begin{proof}
	We prove the lemma by induction on $i$.
	The base case $i=0$ trivially holds.
	Suppose by induction that, for each $0 \lee j < i$,
	the set $\R_i$ is computed correctly.
	In the $i$-th iteration,
	we first add to $R_i$ (in Line~\ref{l:direct}) all points in $S_i$
	which are reachable through a point in a set $\R_j$, for $1 \lee j < i$.
	We call these points \emph{entry points} of cylinder $C_i$.
	We then add to $\R_i$ in lines~\ref{l:min}--\ref{l:loop1end} all points in $S_i$
	which are reachable through the entry points of $C_i$
	(see Figure~\ref{fig:reach} for an example).

	We first show that all points added to $R_i$ are reachable at $C_i$.
	For each point $v \in S_i$ added to $R_i$ in Line~\ref{l:direct},
	we have $\re_v \gee i$. 
	It means that there is a point $u \in R_j$, for some $j < i$, such that $\ri_j(u,v) \gee i$. 
	Therefore, Lemma~\ref{lemma:cross} implies that $v$ is reachable at $C_i$.
	Now, consider a point $v$ added to $\R_i$ in line~\ref{l:reach}.
	According to the condition in line~\ref{l:cond}, 
	there is an entry point $w$ in $C_i$ such that
	$\Left(P_i[w]) \lex \Right(P_i[v])$.
	By Observation~\ref{obs:simple}, the segment $\Dir{wv}$ is 
	within $\eps$ \Frechet distance to the line segment
	from $\Left(P_i[w])$ to $\Right(P_i[v])$. 
	Moreover, by Lemma~\ref{lemma:left-point}, $(w, \Left(P_i[w])$ is a feasible pair.
	Therefore, by Observation~\ref{obs:concat},
	$v$ is reachable.
	
	Next, we show that
	any reachable point at $C_i$ is added to $\R_i$ by the algorithm.
	Suppose that there is 
	a point $v \in S_i$ which is reachable at $C_i$,
	but is not added to $\R_i$.
	Let $Q$ be a feasible curve ending at $v$,
	and $w$ be the first point on $Q$ which is reachable at $C_i$.
	By our definition, $w$ is an entry point of $C_i$.
	If $w = v$, then $v$ must be added to $R_i$ in Line~\ref{l:direct},
	which is a contradiction.
	If $w$ is before $v$ on $Q$,
	then we have $\Left(P_i[w]) \lex \Right(P_i[v])$.
	Now, by our selection of $q$ in Line~\ref{l:min},
	we have $q \lex \Left(P_i[w]) \lex \Right(P_i[v])$,
	and hence, $v$ is added to $R_i$ in line~\ref{l:reach},
	which is again a contradiction.
\end{proof}

\begin{figure}[t]
	\centering
	\includegraphics[width=0.9\columnwidth]{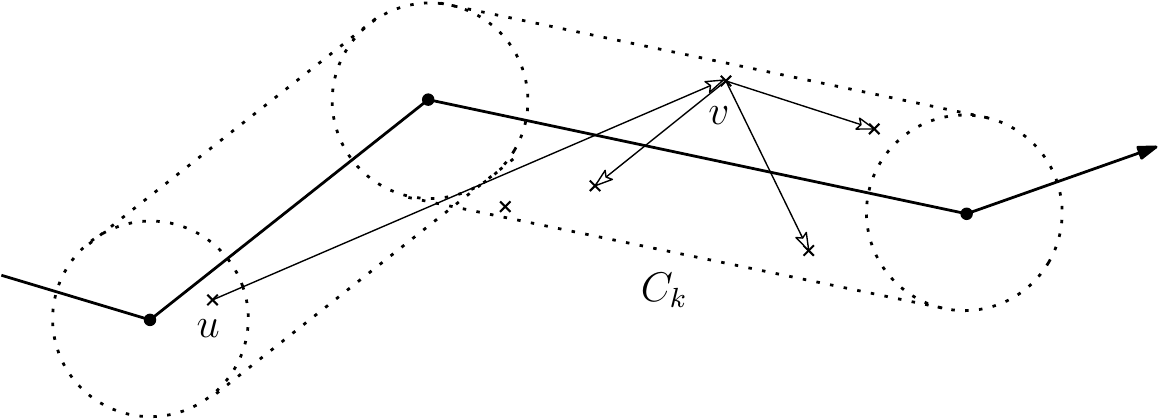}
	\caption{Point $v$ is an entry point of $C_i$.}
	\label{fig:reach}
\end{figure}

\begin{theorem} \label{thm:main}
	Given a polygonal curve $P$ of $n$ segments and a set $S$ of $k$ points in $\IR^d$, 
	we can decide in $O(nk^2)$ time whether there is a polygonal curve $Q$ through $S$ 
	such that $\distF(P, Q) \lee \eps$, for a given $\eps \gee 0$.
	A polygonal curve $Q$ through $S$ of size $O(\min\set{n,k})$ minimizing $\distF(P, Q)$ 
	can be computed in $O(nk^2 \log (nk))$ time.
\end{theorem}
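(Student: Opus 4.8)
The plan is to derive the decision claim from the invariant already stated, then bootstrap to the optimization claim along the lines of \cite{AltG95,AltERW03a}, and finally read the size bound off a back‑pointer reconstruction.

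\emph{Decision problem.} Correctness is essentially Lemma~\ref{lemma:main}: after the $i$‑th iteration of Algorithm~\ref{alg:decCCCG11}, $\R_i$ is exactly the set of points of $S_i$ reachable at cylinder $C_i$. It then suffices to observe that a feasible $Q$ with $\distF(Q,P)\lee\eps$ exists iff $\R_n\cap\CB(t,\eps)\neq\emptyset$. For the ``if'' direction, take $v\in\R_n\cap\CB(t,\eps)$; by Lemma~\ref{lemma:left-point} the pair $(v,\Left(P_n[v]))$ is feasible, and since $t\in\CB(v,\eps)$ the subcurve of $P$ from $\Left(P_n[v])$ to $t$ lies inside $\CB(v,\eps)$, so concatenating (via Observations~\ref{obs:simple} and~\ref{obs:concat}) with the degenerate ``stay at $v$'' curve yields a feasible $Q$ with $\distF(Q,P)\lee\eps$ whose vertices lie in $S$. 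The ``only if'' direction is symmetric: the last vertex of such a $Q$ lies in $S_n\cap\CB(t,\eps)$ and is reachable at $C_n$, hence lies in $\R_n$. For the running time, computing $\ri_i(u,v)$ for all $k^2$ pairs and all $i$ costs $O(nk^2)$ by Lemma~\ref{lemma:linear}; each of the $n$ iterations of the main loop spends $O(k)$ on Lines~\ref{l:direct}--\ref{l:loop1end} and $O(|\R_i|\cdot k)=O(k^2)$ on Lines~\ref{l:rel-1}--\ref{l:rel-2}, for $O(nk^2)$ overall.

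\emph{Size of the output curve.} I would augment Algorithm~\ref{alg:decCCCG11} with one back pointer per point of $S$, set the first time the point is placed in some $\R_i$: to the witness $u$ if it entered as an entry point (Line~\ref{l:direct}) via a pair with $\ri_j(u,v)\gee i$, and to the entry point realizing the minimum $q$ if it entered via the within‑cylinder shortcut (Line~\ref{l:reach}). Following back pointers from a point of $\R_n\cap\CB(t,\eps)$ produces the vertex sequence of a feasible $Q$. Since each point carries a single back pointer that points to a point placed at a strictly earlier iteration, the chain is simple, so $|Q|\lee k$. Moreover each step either strictly decreases the cylinder index or is a within‑cylinder shortcut, and the back pointer of an entry point always leaves its cylinder, so no two consecutive steps are within‑cylinder shortcuts; hence the chain also has length $O(n)$, giving $|Q|=O(\min\{n,k\})$.

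\emph{Optimization problem.} Here I would follow the Alt--Godau template. Viewing $\eps$ as a parameter, the quantities driving the decision algorithm --- the endpoints $\Left(P_i[v])$ and $\Right(P_i[v])$ of the segments $P_i\cap\CB(v,\eps)$, the distances $\|su\|,\|tu\|$ for $u\in S$, and the distances $d(u,P_i)$ --- vary continuously and piecewise‑algebraically in $\eps$, and the combinatorial outcome of the algorithm changes only at \emph{critical values}: the $O(k)$ ``type~(A)'' values $\|su\|,\|tu\|$; the $O(nk)$ ``type~(B)'' values $d(u,P_i)$ at which points enter or leave cylinders; and ``type~(C)'' values at which two of the $O(nk)$ interval endpoints coincide. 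I would handle the type~(A) and type~(B) values by explicit enumeration, sorting ($O(nk\log(nk))$) and binary search driven by the $O(nk^2)$‑time decision algorithm, costing $O(nk^2\log(nk))$; and the type~(C) values by Cole's parametric‑search method~\cite{Cole87} applied to a sorting network on the $O(nk)$ interval endpoints with the decision algorithm as oracle, costing $O\big((nk+nk^2)\log(nk)\big)=O(nk^2\log(nk))$. Running the augmented decision algorithm once at the resulting optimal $\eps$ then outputs a minimizing $Q$ of size $O(\min\{n,k\})$. The main obstacle is the correctness of this last step: one must verify carefully that the comparisons made by the decision algorithm are monotone/algebraic functions of $\eps$ to which parametric search applies, and, crucially, that the type~(A)/(B)/(C) list above is exhaustive --- i.e.\ that sorting the interval endpoints really captures every value of $\eps$ at which $\min_Q\distF(P,Q)$ can be attained, which is the analogue of the type~(C) analysis of \cite{AltG95} and is where I expect the delicate argument to lie.
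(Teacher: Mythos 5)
Your decision-problem argument and running-time budget match the paper's (Lemma~\ref{lemma:main}, Lemma~\ref{lemma:linear}, $O(k^2)$ per iteration), and your explicit ``iff'' at Line~\ref{l:final} is a correct fleshing-out of what the paper leaves implicit; the only slip is that when $v$ enters $\R_n$ via the shortcut of Line~\ref{l:reach} one cannot invoke Lemma~\ref{lemma:left-point} verbatim, but some feasible pair $(v,p)$ with $p\in P_n[v]$ still exists and your wait-at-$v$ extension goes through. Your optimization sketch mirrors the paper's terse parametric-search citation and spells out the critical-value typology, which is a reasonable elaboration.

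The genuine gap is in the back-pointer reconstruction. You keep a single back pointer per point of $S$, frozen the first time the point is placed in some $\R_{i_0}$. That witness $u\in\R_j$ only certifies $\ri_j(u,v)\gee i_0$; nothing forces $\ri_j(u,v)\gee n$. So when you later find $v\in\R_n\cap\CB(t,\eps)$ and unwind pointers, the last edge $\Seg{uv}$ of the reconstructed $Q$ may only advance the traversal of $P$ into some $C_\ell$ with $\ell<n$, and $Q$ then certifies $\distF(Q,P')\lee\eps$ only for a proper prefix $P'\subsetneq P$, which is not what is required. Concretely, $v$ can first become reachable at $C_2$ via a witness $a\in\R_0$ with $\ri_0(a,v)=2$ and only much later become reachable at $C_n$ through an unrelated witness or a within-cylinder shortcut; your frozen pointer still routes through $a$ and the reconstructed curve stops short of $t$. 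The paper avoids this by keeping a back pointer per \emph{(point, cylinder)} pair --- ``for each reachable point $u$ at a cylinder $C_i$, a back-pointer to a reachable point $w$ at $C_j$'' --- so the pointer consulted at $(v,n)$ is one that is actually valid at cylinder $n$. With cylinder-indexed pointers your alternating-step counting still gives the $O(n)$ part of the size bound, but your $O(k)$ simplicity argument was specific to the one-pointer-per-point scheme and does not carry over (the $(point,cylinder)$ chain may revisit a point at different cylinders); it would have to be re-argued, or one settles for the $O(n)$ bound, which is what the paper's ``at most two points per cylinder'' observation actually delivers.
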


\begin{proof}
	The correctness of the decision algorithm (Algorithm~\ref{alg:decCCCG11})
	directly follows from Lemma~\ref{lemma:main}.
	Line~\ref{l:init-1} of the algorithm takes $O(nk^2)$ time by Lemma~\ref{lemma:linear}.
	The other three lines in the initialization step (lines~\ref{l:init-2}--\ref{l:init-4}) take only $O(k)$ time.
	In the main loop, lines~\ref{l:direct}--\ref{l:reach} take $O(k)$ time, and
	lines~\ref{l:rel-1}--\ref{l:rel-2} require $O(k^2)$ time.
	Therefore, the whole loop takes $O(nk^2)$ time in total.

	Once the algorithm finds a reachable point $v \in S_n \cap \CB(t,\eps)$,
	we can construct a feasible curve $Q$ ending at $v$ by keeping, 
	for each reachable point $u$ at a cylinder $C_i$, 
	a back-pointer to a reachable point $w$ at $C_j$, $j \lee i$, 
	from which $u$ is reachable. 
	The feasible curve $Q$ can be then constructed 
	by following the back pointers from $v$ to a point in $S_1 \cap \CB(s,\eps)$. 
	Since at most two points from each cylinder are selected in this process,
	the curve $Q$ has $O(\min\set{n,k})$ segments.
	For the optimization problem, we use parametric search as in~\cite{AltERW03a,AltG95}, 
	to find a curve minimizing $\distF(P, Q)$ by an extra $\log(nk)$-factor
	in $O(nk^2 \log (nk))$ time.
\end{proof}


\section{Conclusions}

In this chapter, we presented a simple efficient algorithm 
for finding a polygonal curve through a given point set $S$ in $\IR^d$
such that its \Frechet distance to a given polygonal curve $P$ is minimized.
Several interesting problems remain open.
For a fixed $\eps$, one can easily modify the algorithm provided
here to find a curve with a minimum number of segments, 
having \Frechet distance at most $\eps$ to $P$.
It can be done by keeping reachable points in a priority queue,
and propagating the reachability information in a Dijkstra-like manner.
However, we cannot see any easy adaptation of our algorithm to 
find a curve passing through a maximum number of points for a fixed $\eps$.

The algorithm presented in this chapter improves
the map matching algorithm of Alt \etal~\cite{AltERW03a} 
for the case of matching a curve in a complete graph.
The current lower bound available for the problem is $\bigOmega((n+k) \log (n+k))$
due to Buchin \etal~\cite{LowerBound-FD}.
It is therefore open whether a better algorithm is available,
or whether the algorithm obtained in this chapter is optimal.

Results of this chapter 
are presented in 23rd Canadian Conference on 
Computational Geometry~\cite{oursCCCG2011}.


\chapter{All-Points CPM Problem is NP-complete}
\label{ch:NP-Complete}

\section{Introduction}
In this chapter, we study 
a variant of the problem discussed in the previous chapter. 
We refer to this variant as the All-Points CPM problem. We address the following:
Consider a pointset $S \subseteq \IR^d$ and a polygonal curve $P$ in $\IR^d$, 
for $d \gee 2$ being a fixed dimension.
The objective is to decide whether there exists a polygonal curve $Q$ in  $\eps$-\Frechet
distance to $P$ such that the vertices of $Q$ are all chosen from the 
pointset $\pset$. Moreover, curve $Q$ has to visit every point of $\pset$
and it can visit a point multiple times. 
We  prove  that  this problem  is  NP-complete  by  reducing from 3CNF-SAT problem.
In an independent work~\cite{NPComplete-Pointset} (which is done after   
our NP-completeness result), the authors have shown that
the version of this problem where points in $\pset$ has to 
be visited only once, is NP-complete too. 
Their proof is obtained via reduction
from a restricted version of the 3SAT problem, 
called (3,B2)-SAT problem,
where the input to formulas is restricted
in which each literal occurs exactly twice.
In \cite{DiscretelyFollowing},  
Wylie and Zhu studied 
All-points CPM problem from the 
perspective of discrete \Frechet distance
and they showed that it is solvable in 
$O(nk)$ time ($n$ is the size of curve $P$ and $k$ is the size of  pointset $\pset$). 
Furthermore, they showed that the version of the problem in which 
each point of $\pset$ can only used once in $Q$ is  NP-complete.

\section{General Case is NP-complete}
\label{sec:NPComp}

\subsection{Preliminaries}
\noindent{\bf Notation.}
We denote by $P = \langle p_1p_2p_3...p_n \rangle$, a polygonal curve $P$
with vertices $p_1 p_2 \dots p_n$ in order 
and by $start(P)$ and $end(P)$, we denote 
the starting and ending point of $P$, respectively.
For a curve $P$ and a point $x$, by $P \ap x$, 
we mean connecting $end(P)$ to point $x$.
We use the same notation $P \ap Q$ 
to show the concatenation of 
two curves $P$ and $Q$ (which means connecting $end(P)$ to $start(Q)$).
Let $M(\Seg{ab})$ denote the  midpoint of line segment $\Seg{ab}$. 
For a point $q$ in the plane, let $x(q)$ and $y(q)$
denote the $x$ and $y$ coordinate of $q$, respectively.

For two intersecting line segments $\Seg{ab}$ and $\Seg{cd}$, let $ \Seg{ab} \dashv	\Seg{cd}$ denote the intersection point of them.
Let $\overleftrightarrow{bc}$ denote the line as a result of 
extending line segment $\Seg{bc}$.
For a point $p$ and a line segment $\Seg{bc}$,
let $p \perp \Seg{bc}$ denote the point on
line $\overleftrightarrow{bc}$,
located on the perpendicular from $p$ to $\overleftrightarrow{bc}$.

\begin{definition} \label{def:curves}
Given a pointset $\pset$ in the plane, let $Curves(\pset)$
be a set of polygonal curves $Q = \langle q_1 q_2 \dots q_n \rangle$ where: 
$$  \forall{q_i} : q_i \in \pset \mbox{  and  } 
  \forall{a} \in \pset: \exists{q_i} \mbox{  s.t. }  q_i = a. $$
\end{definition}

\begin{definition} \label{def:feasibleNPC}
Given a pointset $\pset$, a polygonal curve $P$ and a distance $\eps$, 
a polygonal curve $Q$ is called {\em feasible} if: 
$Q \in Curves(S)$ and $  \distF(P,Q) \le \eps$.
\end{definition}

We show that the problem of deciding whether a 
feasible curve exists or not  is NP-complete.
It is easy to see that this problem is in NP, since 
one can polynomially check whether $Q \in Curves(S)$
and also $\distF(P,Q) \le \eps$, using the algorithm in \cite{AltG95} 
(explained in Section \ref{sec:classicalFD}).

%

\subsection{Reduction Algorithm}

We reduce in Algorithm~\ref{alg:reduction},
an instance of 3CNF-SAT formula $\phi$ 
to an instance of our problem.
The input is a boolean formula 
$\phi$ with $k$ clauses $C_1, C_2, \dots, C_k$ and $n$ variables $x_1, x_2, \dots,x_n$ 
and the output is a pointset $\pset$,
a polygonal curve $P$ in the plane and 
a distance $\eps = 1$.

We construct the pointset $S$ as follows.
For each clause $C_j$, $1 \le j \le k$, in the formula $\phi$, 
we place three points $\{s_j, g_j,c_j\}$ in the plane, which are computed
in the $j$-th iteration of Algorithm \ref{alg:reduction} (from Line \ref{l:makeSLoop} to Line \ref{l:EndLoopPointSet}).
We define $o_j$ to be $M(\Seg{\sma_j\gre_j})$.
By $\sq_j$, $1\le j \le k$, we denote
a square in the plane, centered at $o_j$, 
with diagonal $\Seg{\sma_j \gre_j}$. 
We refer to $\sq_j$, $1\le j \le k$, as
{\em c-squares}. 
For an example of a pointset $S$ corresponding to a formula, 
see Figure \ref{fig:pathAExample}a.

Our reduction algorithm constructs the polygonal curve $P$ 
 through $n$ iterations. In  the $i$-th iteration, $1 \le i \le n$,
it builds a subcurve $\cfev_i$ corresponding to  a variable 
$x_i$ in the formula $\phi$ and appends that curve to $P$.
In addition to those $n$ subcurves, two curves 
$\cfev_{n+1}$ and $\cfev_{n+2}$ are appended to $P$. 
We will later discus the reason we add those 
two curves. Every subcurve  $\cfev_i$ of $P$ starts at point $u$ 
and ends at point $v$.
Furthermore, each  $\cfev_i$ goes through
$\sq_1$ to $\sq_k$ in order, enters each c-square $\sq_j$ from
the side $\Seg{c_j s_j}$  and exists that square from 
the side $\Seg{c_j g_j}$  (for an illustration, see Figure \ref{fig:pathAExample}a). 
Curve $\cfev_i$ itself is built incrementally  
through iterations of the loop at line \ref{l:looptoMakeL} 
of Algorithm \ref{alg:reduction}. 
In the $j$-th iteration, when $\cfev_i$ goes through $\sq_j$,
three points, which are within  $\sq_j$, are added to  $\cfev_i$
(these three points are computed through Lines 
\ref{l:makeclausestart} to \ref{l:makeclauseend}).
Next, before $\cfev_i$ reaches  $\sq_{j+1}$,
two points,  denoted by $\alpha_j$ and $\beta_j$, are added to that curve 
(these two points are computed in Lines \ref{l:alpha} 
and \ref{l:beta}).

Each $\cfev_i$ corresponds to variable $x_i$ in our approach.
We simulate $1$ or $0$ values of $x_i$
as follows.
Consider a point object $\CO_L$ 
traversing $\cfev_i$, from starting point $u$ to ending point $v$. 
Consider 
another point object $\CO_2$ which wants to 
walk from $u$ to $v$
on a path whose vertices are from points in $S$ and it wants to stay in distance one 
to $\CO_L$. We will show that 
by our construction, object $\CO_2$ has two options, either taking 
the path $A = \langle us_1g_2s_3 \dots v \rangle$ or the path $B = \langle ug_1s_2g_3 \dots v \rangle$ 
(See Figure \ref{fig:pathAExample}a and \ref{fig:pathAExample}b for an illustration). 
Choosing path $A$ by $\CO_2$ means $x_i = 1$ and choosing path $B$ means $x_i = 0$.
We first prove in Lemma \ref{lemma:PathA} that $\distF(\cfev_i,A) \le 1 $ and 
in Lemma  \ref{lemma:PathB} that $\distF(\cfev_i,B) \le 1 $.
Furthermore, in
Lemma \ref{lemma:NoSwitchFromAtoB}, we prove that as soon as $\CO_2$ chooses 
path $A$ at point $u$ to walk towards $v$, 
it can not switch to any vertex on path $B$.
Analogously, we show that as soon as $\CO_2$ chooses path $B$ at point $u$ to walk towards $v$, 
it can not switch to any vertex on path $A$.
In addition, in Lemmas \ref{lemma:ABCanSeeC} and \ref{lemma:NOTABCanSeeC}, we prove that 
if $x_i$ appears in clause $C_j$,
$\CO_2$ could visit point $c_j$ via the path $A$ and not $B$. In contrast, 
when $\neg x_i$ appears in the clause $C_j$,
$\CO_2$ could visit point $c_j$ via the path $B$ and not $A$.
However, when none of  $x_i$ or $\neg x_i$ appear 
in $C_j$, $\CO_2$ can take neither $A$ nor $B$ to visit $c_j$.
Thus, $c_j$ can be visited, 
if and only if there is 
an $i$ such that 
either $x_i$ or 
$\neg x_i$ are in clause $C_j$.

\begin{algorithm} 
\caption {{\sc Reduction Algorithm}} 
\label{alg:reduction}
\algsetup{indent=1.5em}
\begin{algorithmic}[1]
	\baselineskip=0.5\baselineskip
	\REQUIRE  3SAT formula $\phi$ with $k$ clauses $C_1 \dots C_k$ and $n$ variables $x_1 \dots x_n$

	\vspace{0.1in}
		
	\hspace{-0.2in} {\bf Construct pointset $S$:}  

	\STATE $\pset \leftarrow \emptyset$ \label{l:init}


	\STATE $\gre_1 = (1,1) $ \label{l:makeSStart}

	\FOR {$j = 1$ to $k$}   \label{l:makeSLoop}

	\STATE $\sma_j \leftarrow \big(x(\gre_j)-2,y(\gre_j)-2\big) $
		 \STATE $o_j   \eq  M(\Seg{\sma_j\gre_j})$

		\IF {($j$ is odd)	}

	\STATE $c_j \leftarrow \big(x(s_j), y(\gre_j) \big)$, $w_j  \eq \big(x(o_j)+\frac{1}{4},  y(o_j)-\frac{1}{4} \big)$

	\STATE  $\gre_{j+1} \eq \big (   x(s_{j}) + \frac{1}{4} + 8, y(s_{j}) + \frac{7}{4} +15 \big)$     		\label{l:ComputeNextEven}
    
		\ELSE
	\STATE $c_j \leftarrow \big(x(g_j), y(s_j) \big)$, $w_j  \eq \big(x(o_j)-\frac{1}{4},  y(o_j)+\frac{1}{4} \big)$ 
    
	\STATE  $\gre_{j+1} \eq \big ( x(s_{j}) + \frac{7}{4} + 15, y(s_{j}) + \frac{1}{4} + 8 \big)$     		\label{l:ComputeNextOdd}


	\ENDIF


\STATE $z_j = M(\Seg{c_jw_j})$

    \STATE $S = S \cup \{\sma_j,\gre_j, c_j\}$   \label{l:EndLoopPointSet}

	\ENDFOR

	\IF {($k$ is odd)}  \label{l:ComputeV}
	  \STATE $\eta \eq \big( x(o_k)+1, y(o_k)+4\big)$  
      \STATE $v \eq \big( x(o_k)+1, y(o_k)+9\big)$  
		\ELSE
	   \STATE $\eta \eq \big( x(o_k)+4, y(o_k)+1\big)$  
      \STATE $v \eq \big( x(o_k)+9, y(o_k)+1\big)$  
	\ENDIF

	\STATE $u = (-9,-1)$

	\STATE $t \eq \big(x(v),y(u) -20\big)$

    \STATE $\pset = \pset \cup \{u, v, t\}$ \label{l:makeSEnd}

\vspace{0.15in}

	\hspace{-0.25in} {\bf Construct polygonal curve $P$:}

	\STATE $P \eq \emptyset$ \label{l:makeP}

	\STATE $P \eq P \ap t$

	\FOR { $i = 1$ to $n+2$  }   \label{l:mainstart}

		\STATE  $\cfev_i \eq \emptyset$ \label{l:startofL}

		\STATE  $\cfev_{i} \eq \cfev_{i} \ap u$ 
		\STATE $\cfev_{i} \eq \cfev_{i} \ap (-4,-1) $  \label{l:Adduh1toell}
		\FOR {$j = 1$ to $k$} \label{l:looptoMakeL}

			\IF { ($x_i \in C_j$ and $j$ is odd ) or ($\neg x_i \in C_j$ and $j$ is even ) } \label{l:makeclausestart}	
			\STATE $\cfev_{i} \eq \cfev_{i} \ap M(\Seg{s_jc_j}) \ap c_j \ap w_j  $
			\ELSIF {{ ($\neg x_i \in C_j$ and $j$ is odd ) or ($ x_i \in C_j$ and $j$ is even ) }}
			\STATE $\cfev_{i} \eq \cfev_{i} \ap w_j \ap c_j \ap M(\Seg{g_jc_j} ) $
			\ELSE		
			\STATE $\cfev_i \eq  \cfev_i \ap w_j \ap c_j \ap w_j$
			\ENDIF \label{l:makeclauseend}

			\IF {$j \neq k$}

			\STATE $\alpha_j =  \frac{4}{5} g_j + \frac{1}{5} g_{j+1}$ \label{l:alpha}

			\STATE $\beta_j = \frac{1}{5} s_j + \frac{4}{5} s_{j+1} $ \label{l:beta}
			
			\STATE $\cfev_i \eq \cfev_i \ap \alpha_j \ap \beta_j$

			\ENDIF

		\ENDFOR
	
		\STATE $\cfev_i \eq \cfev_i \ap \eta \ap v$ \label{l:subcurve}

		\STATE  $P \leftarrow P \ap \cfev_i$ 
		\STATE  $P \eq P \ap t$

	\ENDFOR

	\vspace{0.05in}
\RETURN  pointset $\pset$, polygonal curve $P$ and distance $\eps = 1$

\end{algorithmic}
\end{algorithm}

\begin{table}[h]
\centering
\begin{tabular}{ r | l | l  }
if $x_i \in C_1$   & location of $\CO_A$ & location of $\CO_L$  
 \\
\hline
    
&  $u$ & $u$  \\
&  				 $h_1$ s.t.  $\| h_1\mu_1 \| \le \eps$ & $\mu_1 = (-4,-1)$\\

& $s_1 $  & $M(\Seg{s_1c_1})$	      \\

\hline
if $\neg x_i \in C_1$ 
&  $u$ & $u$  \\
&  				 $h_1$ s.t.  $\| h_1\mu_1 \| \le \eps$ & $\mu_1 = (-4,-1)$\\

& $s_1$ &  $ \Dir{\mu_1w_1} \dashv	 \Seg{s_1c_1} $\\

\hline
if $x_i \notin C_{1} \& \neg x_i \notin C_{1}$ & $u$ & $u$  \\
&  				 $h_1$ s.t.  $\| h_1\mu_1 \| \le \eps$ & $\mu_1 = (-4,-1)$\\
& $s_1$ &  $ \Dir{\mu_1w_1} \dashv	 \Seg{s_1c_1} $\\

\end{tabular}
\vspace{0.2 in}
\caption{Proof of Lemma \ref{lemma:PathA}, the base case of induction}
\label{tab:BaseCasePathA}
\end{table}

\begin{lemma}\label{lemma:PathA}
Consider any subcurve $\cfev_i$, $1\le i \le n+2$,  
which is built through Lines \ref{l:mainstart} to \ref{l:subcurve} 
of Algorithm \ref{alg:reduction}. Let $A$ be the polygonal curve  $\langle u\sma_1\gre_2\sma_3\gre_4..v\rangle$. 
Then, $\distF(\cfev_i,A) \le 1$.
\end{lemma}

\begin{proof}

We prove the lemma by induction on the number of segments along $A$. 
Consider two point objects $\CO_L$ and $\CO_A$ 
traversing $\cfev_i$ and $A$, respectively (Figure \ref{fig:pathAExample}a depicts an instance of $\cfev_i$ and $A$).
We show that $\CO_L$ and $\CO_A$ can walk
their respective curve, from the beginning to
 end, while keeping distance $1$ to each other. 

The base case of induction trivially holds as follows 
(see Figure \ref{fig:PathAClause1} for an illustration).
Table \ref{tab:BaseCasePathA} lists  pairwise locations of 
$\CO_L$ and $\CO_A$, where the distance of each pair is at most $1$.
Hence, $\CO_A$ can walk from $u$ to $s_1$ on the 
first segment of $A$ (segment $\Dir{us_1}$), 
while keeping distance $\le 1$ to $\CO_L$.

Assume inductively that $\CO_L$ and $\CO_A$ have feasibly walked along 
their respective curves, until $\CO_A$ reached $s_j$.
Then, as the induction step, 
we 
show that
$\CO_A$ can walk to $g_{j+1}$ and then to $s_{j+2}$, while keeping distance $1$ to $\CO_L$.
Table \ref{tab:PathA} lists pairwise locations
of $\CO_A$ and $\CO_L$ such that $\CO_A$ could reach  $s_{j+2}$.
One can easily check that the distance between the pair of points 
in that table is at most one.
 (For an illustration, see Figure \ref{fig:PathA}).

\begin{table}[t]
\centering
\begin{tabular}{ r | l | l  }
  & location of $\CO_A$ & location of $\CO_L$  
 \\
\hline
   if $x_i \in C_j$  & $s_j$ & $M(\Seg{c_js_j})$\\
	& $z_j$ & $c_j$\\ 
	&  & $w_j$\\ 

	& $\Seg{c_jg_j} \dashv \Seg{s_jg_{j+1}} $ & $\Seg{w_{j}\alpha_{j}} \dashv \Seg{c_{j}g_{j}}$ \\

   if $\neg x_i \in C_j$  & $s_j$ & $\Seg{\beta_{j-1}w_j}	\dashv \Seg{c_js_j}$ \\
	& $w_j \perp \Seg{s_jg_{j+1}}$ &$w_j$\\
	& $z_j$ &$z_j$\\
	& &$c_j$\\
	& $\Seg{c_jg_j} \dashv \Seg{s_jg_{j+1}} $ &$ M(\Seg{c_jg_j}) $\\

   if $x_i \notin C_j \& \neg x_i \notin C_j$  & $s_j$ & $\Seg{\beta_{j-1}w_j}	\dashv \Seg{c_js_j}$\\
	&$w_j \perp \Seg{s_jg_{j+1}}$  & $w_j$\\
	&$z_j$  & $z_j$\\

	& &$c_j$\\
	& &$w_j$\\
& $\Seg{c_jg_j} \dashv \Seg{s_jg_{j+1}} $ & $\Seg{w_{j}\alpha_{j}} \dashv \Seg{c_{j}g_{j}}$ \\

\hline
	&  $h_1$ s.t.  $\| h_1\alpha_j \| \le \eps$ & $\alpha_j$\\
	&  	$h_2$ s.t.  $\| h_2 \beta_j \| \le \eps$ & $\beta_j$\\

\hline
if $x_i \in C_{j+1}$ &  $\Seg{s_{j+1}c_{j+1}}	\dashv \Seg{s_{j}g_{j+1}}$    & $\Seg{\beta_jw_{j+1}}	\dashv \Seg{c_{j+1}s_{j+1}}$\\
& $z_{j+1}$ & $w_{j+1}$\\
&  & $z_{j+1}$\\
&  & $c_{j+1}$ \\
&  $g_{j+1}$ & $M(\Seg{c_{j+1}g_{j+1}})$\\

if $\neg x_i \in C_{j+1}$ &  			$\Seg{s_{j+1}c_{j+1}}	\dashv \Seg{s_{j}g_{j+1}}$  & $M(\Seg{s_{j+1}c_{j+1}})$\\
 & $z_{j+1}$ & $c_{j+1}$ \\
 &  & $w_{j+1}$ \\
 &  $g_{j+1}$ & $\Seg{g_{j+1}c_{j+1}}	\dashv \Seg{w_{j+1}\alpha_{j+1}}$ \\

if $x_i \notin C_{j+1} \& \neg x_i \notin C_{j+1}$ &  	$\Seg{s_{j+1}c_{j+1}}	\dashv \Seg{s_{j}g_{j+1}}$  & $\Seg{\beta_jw_{j+1}}	\dashv \Seg{c_{j+1}s_{j+1}}$\\
& $z_{j+1}$ & $w_{j+1}$\\
 & & $c_{j+1}$ \\
 & & $w_{j+1}$ \\
&  $g_{j+1}$ & $\Seg{g_{j+1}c_{j+1}}	\dashv \Seg{w_{j+1}\alpha_{j+1}}$ \\

\hline

	&  $h_3$ s.t.  $\| h_3\alpha_{j+1} \| \le \eps$ & $\alpha_{j+1}$\\
	&  	$h_4$ s.t.  $\| h_4 \beta_{j+1} \| \le \eps$ & $\beta_{j+1}$\\
\hline

 if $\neg x_i \in C_{j+2}$  & $s_{j+2}$ & $\Dir{\alpha_{j+1}w_{j+2}} 	\dashv \Seg{c_{j+2}s_{j+2}}$ \\
   if $ x_i \in C_{j+2}$  & $s_{j+2}$ & $M(\Seg{c_{j+2}s_{j+2}})$\\
   if $x_i \notin C_{j+2} \& \neg x_i \notin C_{j+2}$  & $s_{j+2}$ & $\Dir{\alpha_{j+1}w_{j+2}}	\dashv \Seg{c_{j+2}s_{j+2}}$\\

\end{tabular}
\caption{Distance between pair of points is less or equal to one}
\label{tab:PathA}
\end{table}

Finally, if $k$ is an odd number, then  
$\Dir{s_kv}$ is the last segment along $B$, otherwise, 
$\Dir{g_kv}$ is the last one. In 
either case, 
that edge crosses  the circle $\CB(\eta,1)$, where $\eta$ is the last vertex of 
$\cfev_i$ before $v$ (point $\eta$ is computed in line \ref{l:ComputeV} of 
Algorithm \ref{alg:reduction}). Therefore, 
 $\CO_A$ can walk to $v$, while keeping distance $1$ to $\CO_L$.

\qed
\end{proof}

\begin{figure}[t]
	\centering
	\includegraphics[width=1\columnwidth]{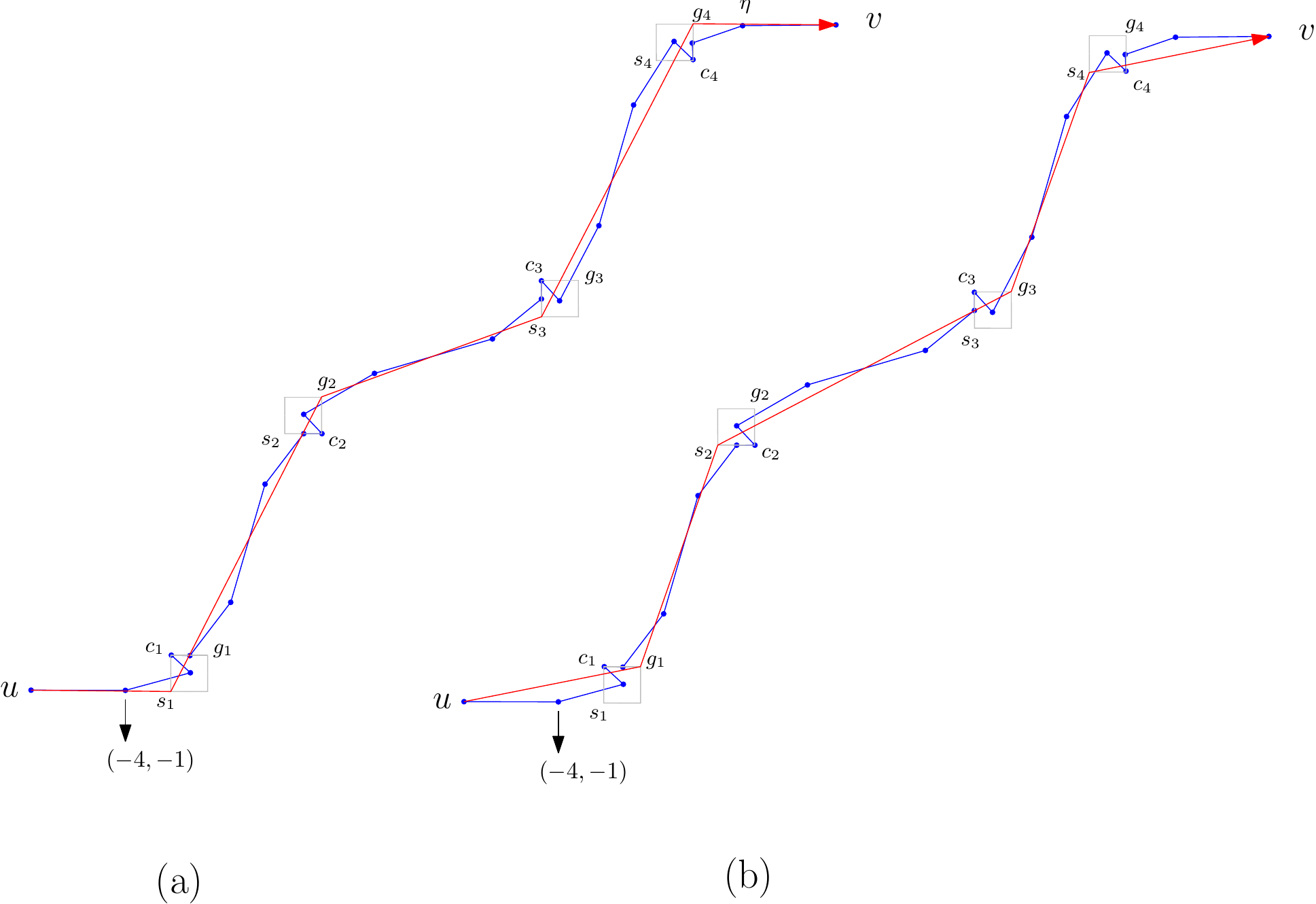}
	\caption{
Blue curve is an example of curve $\cfev_i$ which corresponds to variable $x_i$ in  formula $\phi$. The formula has four clauses $C_1, C_2, C_3$ and $C_4$,	 where the occurrence of variable $x_i$  in those clauses is:
$\neg x_i \in C_1$, $ \neg x_i \in C_2$, $x_i \in C_3$ and $x_i \in C_4$.  
For each clause $C_i$, the reduction algorithm places three point $s_i,g_i$ and $c_i$ in the plane. (a) Red curve is curve $A$. 
 (b) Red curve is curve $B$.  }
	\label{fig:pathAExample}
\end{figure}

\begin{figure}[h]
	\centering
	\includegraphics[width=0.9\columnwidth]{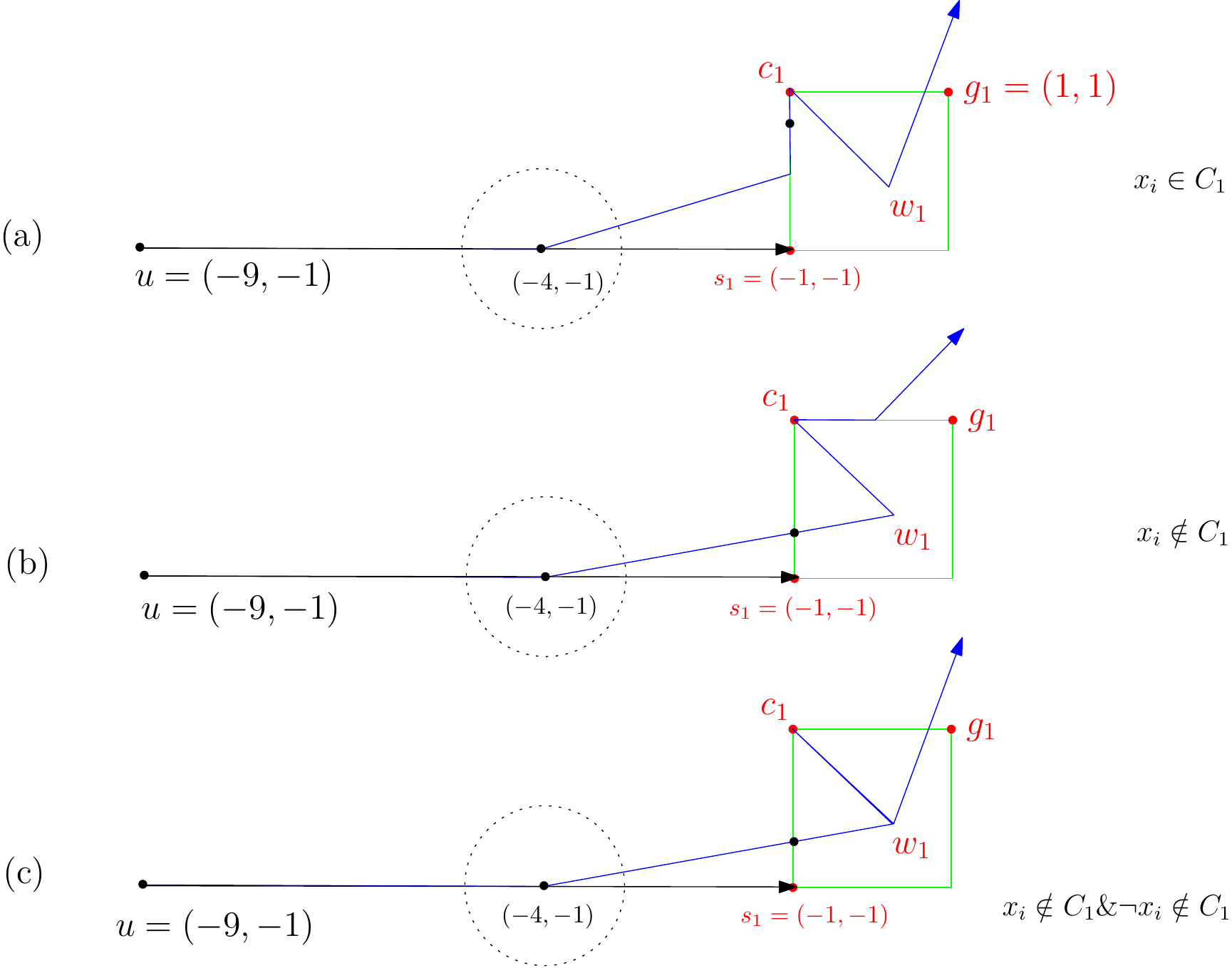}
	\caption{Base case of induction in the proof of Lemma \ref{lemma:PathA}}
	\label{fig:PathAClause1}
\end{figure}

\begin{figure}
	\centering
	\includegraphics[width=0.9\columnwidth]{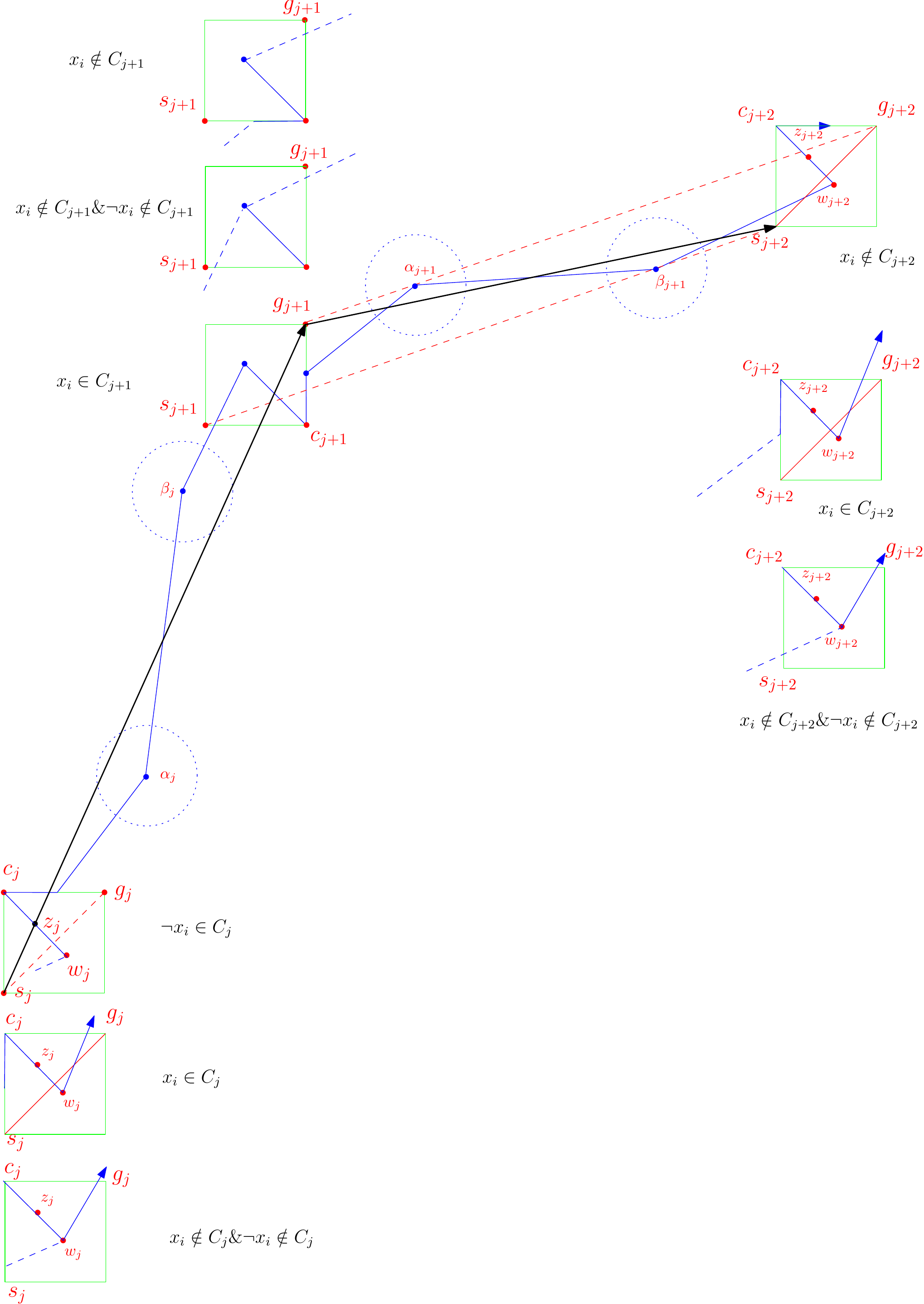}
	\caption{Proof of Lemma \ref{lemma:PathA}}
	\label{fig:PathA}
\end{figure}

\begin{figure}[h]
	\centering
	\includegraphics[width=0.9\columnwidth]{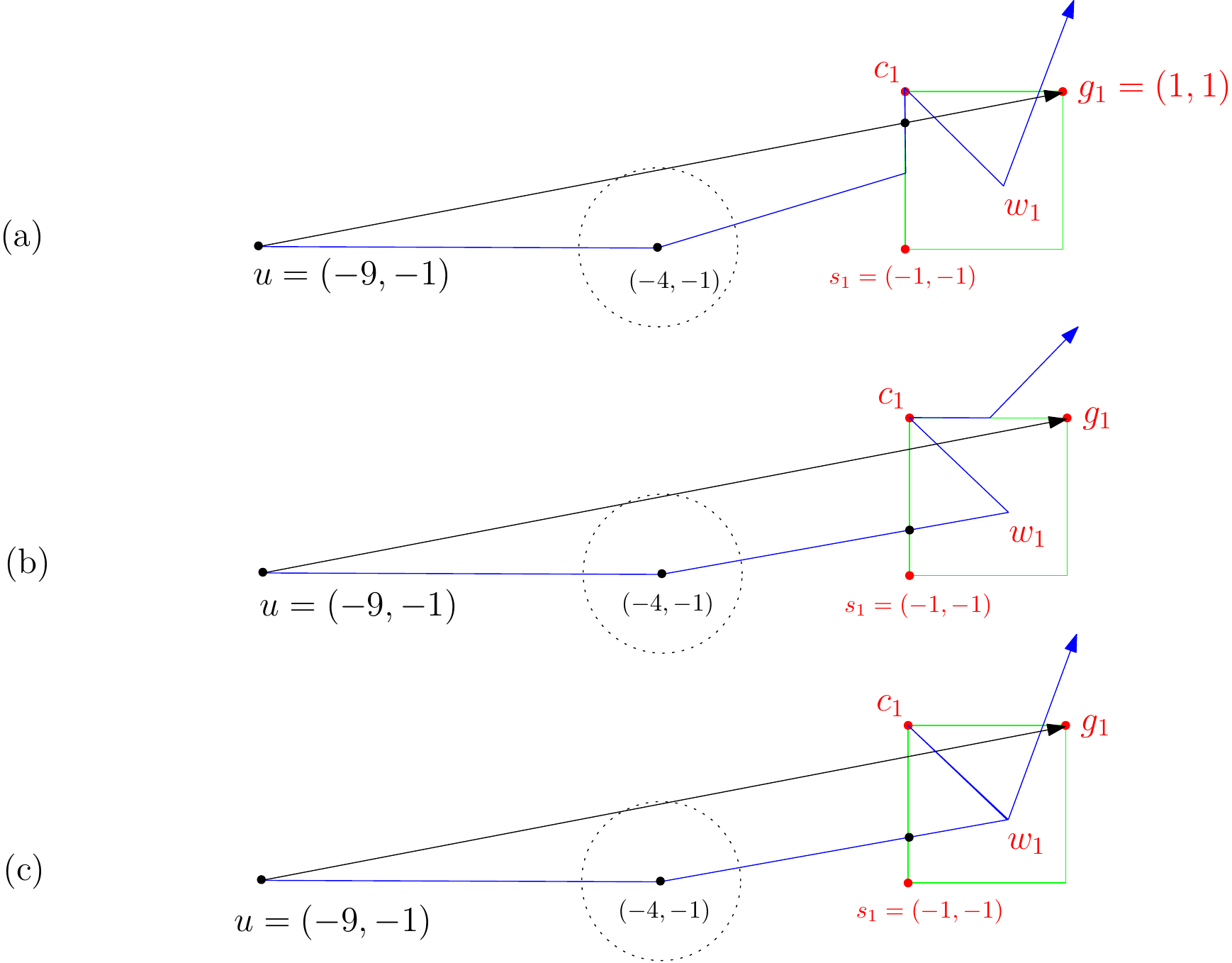}
	\caption{Base case of induction in the proof of Lemma \ref{lemma:PathB} }
	\label{fig:PathBBaseCase}
\end{figure}

\begin{figure}[h]

	\centering
	\includegraphics[width=0.9\columnwidth]{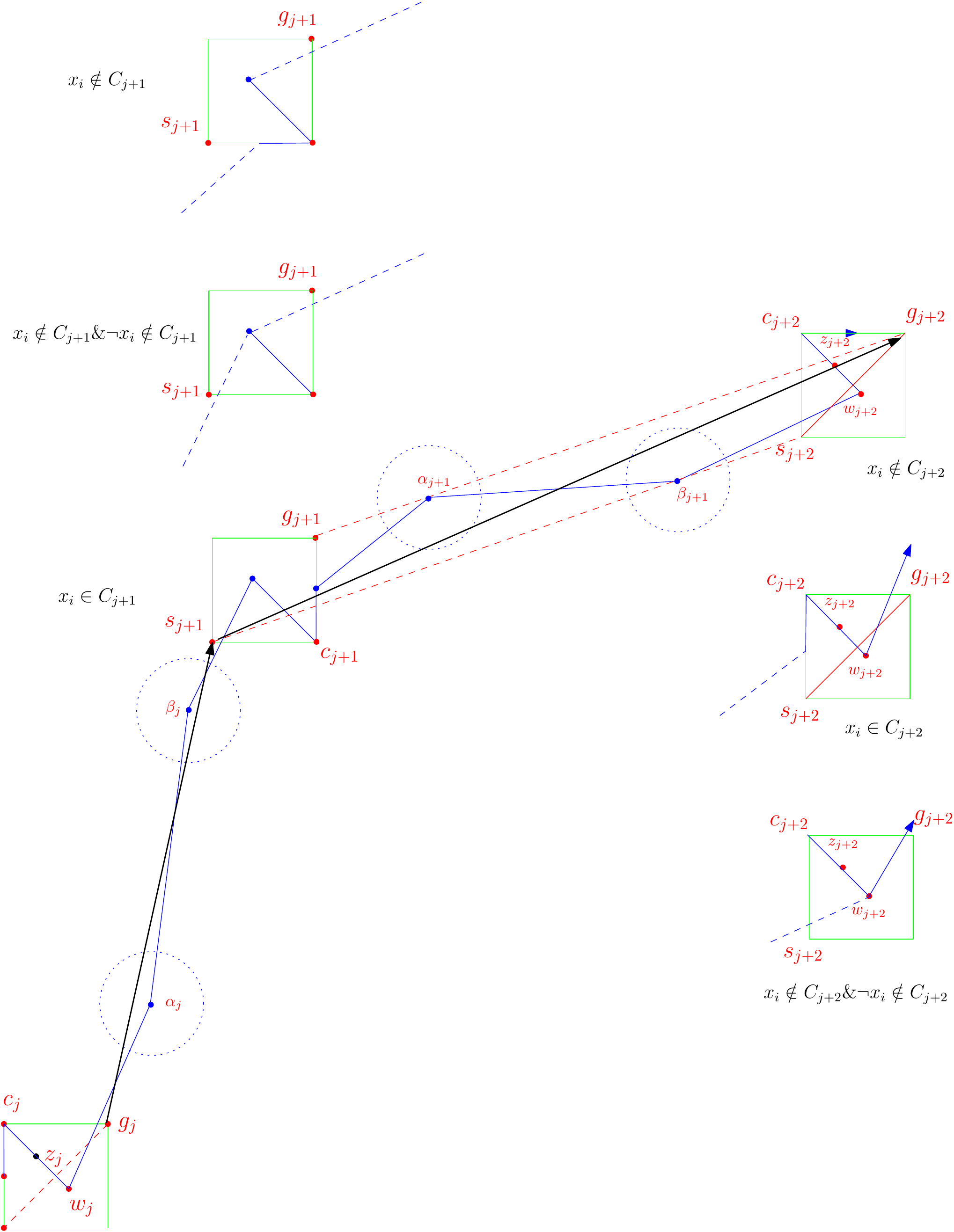}
	\caption{Proof of Lemma \ref{lemma:PathB}}
	\label{fig:PathB}
\end{figure}

\begin{lemma}\label{lemma:PathB}
Consider any subcurve $\cfev_i$, $1\le i \le n+2$,  
constructed through Lines \ref{l:mainstart} to \ref{l:subcurve} 
of Algorithm \ref{alg:reduction}. Let $B$ be the polygonal curve  $\langle u\gre_1\sma_2\gre_3\sma_4..v\rangle$. Then, $\distF(\cfev_i,B) \le 1$.
\end{lemma}

\begin{proof}

Consider two point objects $\CO_L$ and $\CO_B$ 
traversing $\cfev_i$ and $B$, respectively (Figure \ref{fig:pathAExample}b depicts an instance of $\cfev_i$ and $B$).
To prove the lemma, we show that $\CO_L$ and $\CO_B$ can walk along
their respective curves, from beginning to
the end, while keeping distance $1$ to each other. 

The base case of induction holds as follows 
(see Figure \ref{fig:PathBBaseCase} for an illustration).
Table \ref{tab:BaseCasePathB} lists  pairwise locations of 
$\CO_L$ and $\CO_B$, where the distance of each pair is less or equal to $1$.
Therefore, $\CO_B$ can walk from $u$ to $g_1$ while
keep distance one to $\CO_L$.

\begin{table}[h]
\centering
\begin{tabular}{ r | l | l  }
if $x_i \in C_1$   & location of $\CO_B$ & location of $\CO_L$  
 \\
\hline
    
&  $u$ & $u$  \\
&  				 $h_1$ s.t.  $\| h_1\mu_1 \| \le \eps$ & $\mu_1 = (-4,-1)$\\

				&    $h_2 = \Dir{ug_1} \dashv	 \Seg{s_1c_1} $ & $\mu_2 = M(s_1c_1)$   \\
&  				 			      $h_2 $ & $c_1$  \\
&    $\Seg{ug_1} \dashv \Seg{c_1w_1}$ &  $\Seg{ug_1} \dashv \Seg{c_1w_1}$ \\
&  				 		     $w_1 \perp ug_1 $ &   $w_1$  \\
& $g_1 $  & $\Seg{w_1\alpha_1}	\dashv \Seg{c_1g_1}$	      \\

\hline
if $\neg x_i \in C_1$ 
&  $u$ & $u$  \\
&  				 $h_1$ s.t.  $\| h_1\mu \| \le \eps$ & $\mu_1 = (-4,-1)$\\
& $h_2 = \Dir{ug_1} \dashv	 \Seg{s_1c_1} $ &  $\mu_2 = \Dir{\mu_1w_1} \dashv	 \Seg{s_1c_1} $\\
& $\Seg{ug_1} \dashv \Seg{c_1w_1}$ & $w_1$ \\
&  &  $c_1$\\
& $g_1$ &  $M(c_1g_1)$\\

\hline
if $x_i \notin C_{1} \& \neg x_i \notin C_{1}$ & $u$ & $u$  \\
&  				 $h_1$ s.t.  $\| h_1\mu \| \le \eps$ & $\mu_1 = (-4,-1)$\\
& $h_2 = \Dir{ug_1} \dashv	 \Seg{s_1c_1} $ &  $\mu_2 = \Dir{\mu_1w_1} \dashv	 \Seg{s_1c_1} $\\
& $\Seg{ug_1} \dashv \Seg{c_1w_1}$ & $w_1$ \\
&  &  $c_1$\\
&  &  $w_1$\\
& $g_1 $  & $\Seg{w_1\alpha_1}	\dashv \Seg{c_1g_1}$	      \\

\end{tabular}
\vspace{0.2 in}
\caption{Pairwise location of $\CO_B$ and $\CO_L$, to prove the base case of induction in Lemma \ref{lemma:PathB} }
\label{tab:BaseCasePathB}
\end{table}

Assume inductively that $\CO_L$ and $\CO_B$ have feasibly walked along 
their respective curves, until $\CO_B$ reached $g_j$.
Then, as the induction step, 
we 
show that
$\CO_B$ can walk to $s_{j+1}$ and then to $g_{j+2}$ 
, while keeping distance $1$ to $\CO_L$.
This is shown in Table \ref{tab:PathB}
 (see Figure \ref{fig:PathB} for an illustration).

\begin{table}[h]
\centering
\begin{tabular}{ r | l | l  }
  & location of $\CO_B$ & location of $\CO_L$  
 \\
\hline
   if $x_i \in C_j$  & $g_j$ & $\Dir{\alpha_{j-1}w_j} 	\dashv \Seg{c_jg_j}$ \\
   if $\neg x_i \in C_j$  & $g_j$ & $M(\Seg{c_jg_j})$\\
   if $x_i \notin C_j \& \neg x_i \notin C_j$  & $g_j$ & $\Dir{\alpha_{j-1}w_j}	\dashv \Seg{c_jg_j}$\\

\hline
	&  $h_3$ s.t.  $\| h_3\alpha_j \| \le \eps$ & $\alpha_j$\\
	&  	$h_4$ s.t.  $\| h_4 \beta_j \| \le \eps$ & $\beta_j$\\

\hline
if $x_i \in C_{j+1}$ &  				 $s_{j+1}$  & $\Seg{\beta_jw_{j+1}}	\dashv \Seg{c_{j+1}s_{j+1}}$\\
& $w_{j+1} \perp \Seg{s_{j+1}g_{j+2}}$ & $w_{j+1}$\\
& $z_{j+1}$ & $z_{j+1}$\\

&  $\Seg{g_{j+1}c_{j+1}}	\dashv \Seg{s_{j+1}g_{j+2}}$ & $c_{j+1}$ \\
&  & $M(\Seg{c_{j+1}g_{j+1}})$
 \\

if $\neg x_i \in C_{j+1}$ &  				 $s_{j+1}$  & $M(\Seg{s_{j+1}c_{j+1}})$\\

 & $z_{j+1}$ & $c_{j+1}$ \\
 &  & $w_{j+1}$ \\

 &  $\Seg{g_{j+1}c_{j+1}}	\dashv \Seg{s_{j+1}g_{j+2}}$ & $\Seg{g_{j+1}c_{j+1}}	\dashv \Seg{w_{j+1}\alpha_{j+1}}$ \\

if $x_i \notin C_{j+1} \& \neg x_i \notin C_{j+1}$ &  				 $s_{j+1}$  & $\Seg{\beta_jw_{j+1}}	\dashv \Seg{c_{j+1}s_{j+1}}$\\
& $z_{j+1}$ & $w_{j+1}$\\
 & & $c_{j+1}$ \\
 & & $w_{j+1}$ \\
&  $\Seg{g_{j+1}c_{j+1}}	\dashv \Seg{s_{j+1}g_{j+2}}$ & $\Seg{g_{j+1}c_{j+1}}	\dashv \Seg{w_{j+1}\alpha_{j+1}}$ \\

\hline

	&  $h_5$ s.t.  $\| h_5\alpha_{j+1} \| \le \eps$ & $\alpha_{j+1}$\\
	&  	$h_6$ s.t.  $\| h_6 \beta_{j+1} \| \le \eps$ & $\beta_{j+1}$\\
\hline

if $ x_i \in C_{j+2}$ &  		$\Seg{s_{j+2}c_{j+2}}	\dashv \Seg{s_{j+1}g_{j+2}}$ 		   & $M(\Seg{s_{j+2}c_{j+2}})$\\

 & $z_{j+2}$ & $c_{j+2}$ \\
 &  & $w_{j+2}$ \\

 &  $g_{j+2}$ & $\Seg{g_{j+2}c_{j+2}}	\dashv \Seg{w_{j+2}\alpha_{j+2}}$ \\

if $\neg x_i \in C_{j+2}$ &  				$\Seg{s_{j+2}c_{j+2}}	\dashv \Seg{s_{j+1}g_{j+2}}$ 		   &   $\Seg{s_{j+2}c_{j+2}}	\dashv \Seg{\beta_{j+1}w_{j+2}}$   \\
 & $z_{j+2}$ & $w_{j+2}$ \\
 &  & $c_{j+2}$ \\

 &   $g_{j+2}$ & $M(\Seg{c_{j+2}g_{j+2}})$ \\

if $x_i \notin C_{j+2} \& \neg x_i \notin C_{j+2}$ &  				
$\Seg{s_{j+2}c_{j+2}}	\dashv \Seg{s_{j+1}g_{j+2}}$ 		   &   $\Seg{s_{j+2}c_{j+2}}	\dashv \Seg{\beta_{j+1}w_{j+2}}$  \\
& $z_{j+2}$ & $w_{j+2}$\\
 & & $c_{j+2}$ \\
 & & $w_{j+2}$ \\
&  $g_{j+2}$ & $\Seg{g_{j+2}c_{j+2}}	\dashv \Seg{w_{j+2}\alpha_{j+2}}$ \\

\end{tabular}
\caption{Distance between pair of points is less or equal to one}
\label{tab:PathB}
\end{table}

Finally, if $k$ is an odd number, then  
$\Dir{g_kv}$ is the last segment along $B$, otherwise, 
$\Dir{s_kv}$ is the last one. In any case, 
that edge crosses  circle $\CB(\eta,1)$, where $\eta$ is the last vertex of 
$\cfev_i$ before $v$ (point $\eta$ is computed after the condition checking 
in line \ref{l:ComputeV} of 
Algorithm \ref{alg:reduction}). Therefore, 
 $\CO_B$ can walk to $v$, while keeping distance $1$ to $\CO_L$.

\qed

\end{proof}

\begin{figure}
	\centering
	\includegraphics[width=0.9\columnwidth]{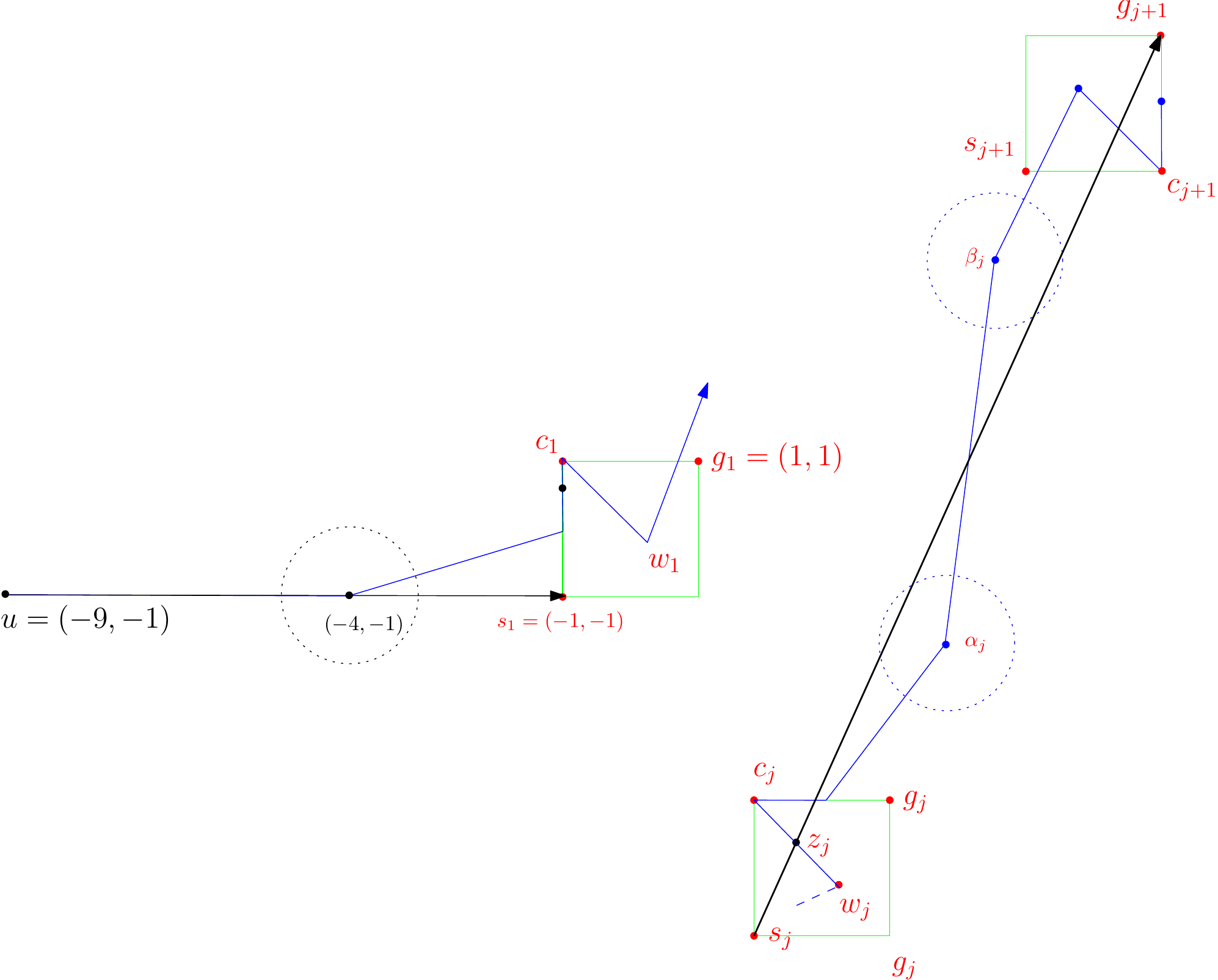}
	\caption{Proof of Lemma \ref{lemma:NoSwitchFromAtoB}}
	\label{fig:noswitch}
\end{figure}

\begin{lemma}\label{lemma:NoSwitchFromAtoB}
Consider any curve $\cfev_i \subset P $, $1\le i \le n+2$. Imagine that  
a point object $\CO_L$ is walking from $u$ to $v$ on $\cfev_i$. Furthermore, imagine
two point objects $\CO_A$ and $\CO_B$ which are walking on curves $A$ and $B$ 
(from Lemmas \ref{lemma:PathA} and \ref{lemma:PathB}), respectively,
while keeping distance $1$ to $\CO_L$. 
If $\CO_A$ goes to any vertex of $B$ or $\CO_B$ goes to any vertex of $A$, 
then they loose distance $\le 1$ to $\CO_L$.

\end{lemma}

\begin{proof}
Let 
$cl_i$ refer to 
points $\{s_i,g_i,c_i\}$.
Notice that we have placed the $cl_{i+1}$ points far enough from 
the $cl_{i}$ points so that 
no curve can go to $cl_{i+1}$
and come back to $cl_i$ and stay 
in \Frechet distance 1 to $\cfev_i$.
Therefore, to prove the lemma, 
we only focus on two consecutive c-squares.
We show that no subcurve $l' \subseteq \cfev_i$ exists such 
that (for an illustration, see Figure \ref{fig:noswitch}) :

\begin{itemize}

\item $\distF(l',\Dir{\sma_j\gre_j}) \le 1$ because:

for all $j$, $1 \le j \le k$, point $c_j$ is always a vertex of $\cfev_i$. 
A point on $\cfev_i$ at distance 1 
to $\sma_j$ lies before $c_j$ 
in direction $\Dir{\cfev_i}$, 
while a point on $\cfev_i$ at distance 1 
to point $\gre_j$ lies after $c_j$ in direction $\Dir{\cfev_i}$.
Since $dist(c_j, \Seg{\sma_j\gre_j}) >1$, 
no subcurve $l' \subseteq \cfev_i$ exists such that 
$\distF(l',\Dir{\sma_j\gre_j}) \le 1$.

\item $\distF(l',\langle \sma_jc_j\gre_{j}\rangle) \le 1$ or $\distF(l',\langle\gre_jc_j\sma_{j}\rangle) \le 1$, because:

For all $j$, $1 \le j \le k$, $w_j$
is a vertex of $\cfev_i$. 
A point on $\cfev_i$ at distance 1 
to $\sma_j$ lies before $w_j$ 
in direction $\Dir{\cfev_i}$, 
while a point on $\cfev_i$ at distance 1 
to point $\gre_j$ lies after $w_j$ in direction $\Dir{\cfev_i}$.
Since $dist(w_j, \Seg{\sma_jc_j}) >1$ and 
$dist(w_j, \Seg{\gre_j\gre_j}) >1$,
no subcurve $l' \subseteq \cfev_i$ exists such that 
$\distF(l', \langle \sma_jc_j\gre_{j}\rangle ) \le 1$.
Similarly,  no subcurve $l' \subseteq \cfev_i$ exists such that 
$\distF(l', \langle\gre_jc_j\sma_j\rangle ) \le 1$.

\item $\distF(l',\langle \sma_j\sma_{j+1} \rangle) \le 1$ or  $\distF(l',\langle\gre_j\gre_{j+1} \rangle) \le 1$ because:

Vertex $\alpha_{i}$ of $\cfev_{i}$
guarantees the first part as $dist( \alpha_{i},\Seg{\sma_j\sma_{j+1}}) > 1 $, 
and vertex $\beta_{i}$ of $\cfev_{i}$
guarantees the second part, 
as $dist( \beta_{i},\Seg{\gre_j\gre_{j+1}}) > 1$.

\item $\distF(l',\langle c_jc_{j+1} \rangle ) \le 1$,  because $dist( \alpha_{i},\Seg{c_jc_{j+1}}) > 1$

\item $\distF(l',\langle uc_1 \rangle) \le 1$, because $dist( (-4,-1),\Seg{uc_1}) >1$

\item  $\distF(l',\langle c_j\gre_{j+1} \rangle) \le 1$, because $dist( \alpha_i,\Seg{c_j\gre_{j+1}}) >1$

\item  $\distF(l',\langle c_j\sma_{j+1} \rangle ) \le 1$, because $dist( \alpha_i,\Seg{c_j\sma_{j+1}}) >1$

\item  $\distF(l',\langle c_kv \rangle) \le 1$, because $dist( \eta,\Seg{c_kv}) >1$ 
\end{itemize}

\end{proof}

\noindent To establish the correctness of
our reduction algorithm, 
from now on, we define:
$(a_i = \sma_i, b_i = \gre_i)$,  
 when $i$ is an odd number, 
and
$(a_i = \gre_i, b_i = \sma_i)$, 
when $i$ is an even number, for $1 \le i \le k$.

\begin{lemma}\label{lemma:ABCanSeeC}
Consider the curve $A = \langle ua_1a_2a_3 \dots a_k v \rangle$ from Lemma \ref{lemma:PathA}. Let 
$A_1$ be a subcurve of $A$ which starts at $u$ and ends at $a_j$, $1 \le j \le k$.
Furthermore, let $A_2$ be a subcurve of $A$ which starts at $a_j$ and ends at $v$.
For any curve $\cfev_i$ , $1\le i \le n+2$,
if $x_i \in C_j$, 
$\distF(A_1 \ap c_j \ap A_2, \cfev_i) \le \eps$. 
Similarly, consider the curve $B = \langle ub_1b_2b_3 \dots b_k v \rangle$ from Lemma \ref{lemma:PathB}. Let 
$B_1$ be a subcurve of $B$ which starts at $u$ and ends at $b_j$, $1 \le j \le k$.
Furthermore, let $B_2$ be a subcurve of $B$ which starts at $b_j$ and ends at $v$.
For any curve $\cfev_i$ , $1\le i \le n+2$,
if $\neg x_i \in C_j$,   
$\distF(B_1 \ap c_j \ap B_2, \cfev_i) \le \eps$. 
\end{lemma}

\begin{proof}
When $x_i$ appears in clause $C_j$, point $z = M(\Seg{c_ja_j})$ is 
a vertex of $\cfev_i$. 
Since $\| c_ja_j \| = 2$ and $z$ is the  midpoint of 
 $\Seg{c_ja_j}$,
$\CO_L$ can wait at $z$ while $\CO_A$ visits $c_j$.
Therefore, as the lemma states, 
we can cut curve $A$ at vertex $a_j$,
add two edges $\Dir{a_jc_j}$
and then $\Dir{c_ja_j}$ to $A$,
and continue with the same 
curve $A$
from $a_j$ to $A$'s endpoint. 
For the 
modified $A$, still $\distF(A, \cfev_i) \le \epsilon$ holds. 

When $\neg x_i$ appears in clause $C_j$, point $z = M(\Seg{c_jb_j})$ is 
a vertex of $\cfev_i$. 
Since $\| c_jb_j \| = 2$ and $z$ is the  midpoint of 
 $\Seg{c_jb_j}$,
$\CO_L$ can wait at $z$ while $\CO_A$ visits $c_j$
and comes back to $b_j$.
Therefore, as the lemma says, 
we can cut curve $B$ at vertex $b_j$,
add two edges $\Dir{b_jc_j}$
and then $\Dir{c_jb_j}$ to $B$,
and continue with the same 
curve $B$
from $b_j$ to $B$'s endpoint. 
For the 
modified $B$, still $\distF(B, \cfev_i) \le \epsilon$ holds.

\end{proof}

\begin{lemma}\label{lemma:NOTABCanSeeC}
Consider curve $A$ (respectively, $B$) from previous lemma.
For any curve $\cfev_i$, $1\le i \le n+2$,
when $ x_i \notin C_j$ and $\neg x_i \notin C_j$,   
curve $A$ (resp., $B$) can not be modified to visit $c_j$.
\end{lemma}
\begin{proof}
This holds because $dist(w_j,\Seg{a_jc_j}) >1$
and $dist(w_j,\Seg{b_jc_j}) >1$.

\end{proof}


\vspace{0.1 in}

\begin{theorem}
Given a formula $\phi$ with $k$ clauses $C_1, C_2, \dots, C_k$ and $n$ variables $x_1, x_2 \dots, x_n$,
as input, let curve $P$ and pointset $\pset$ be the output of Algorithm \ref{alg:reduction}. 
Then, $\phi$ is satisfiable iff a 
curve $Q \in Curves(S)$ exists such that 
$\distF(P,Q) \le 1$.
\end{theorem}

\begin{proof}


For $(\Rightarrow)$: 
Assume that  formula $\phi$ is satisfiable. 
In Algorithm \ref{alg:buildQ}, we show that 
knowing the truth value of the literals in $\phi$, 
we can build a curve $Q$ which 
visits every point in $\pset$ and $\distF(P,Q) \le 1$.

\begin{algorithm} [h]
\caption {{\sc Build a feasible curve $Q$ }} 
\label{alg:buildQ}
\algsetup{indent=1.5em}
\begin{algorithmic}[1]	
		\baselineskip=0.9\baselineskip
	\REQUIRE  Truth table of variables $x_1, x_2, \dots, x_n$ in $\phi$

	\STATE $Q \eq \emptyset$
	\STATE $Q \eq Q \ap t$ \label{l:startPoint}
	 
	\FOR {$i=1$ to $n$}   
	\IF {$(x_i = 1)$}
	\STATE $\pi \eq \langle ua_1a_2a_3 \dots a_kv \rangle$
	\FORALL {$C_j$ clauses, if $x_i \in C_j$ }
	\STATE let $\pi_1 $ be  subcurve of $\pi$ from $u$	 to $a_j$
	\STATE let $\pi_2 $ be  subcurve of $\pi$ from $a_j$	 to $v$
	\STATE $\pi \eq \pi_1 \ap c_j \ap \pi_2$  \label{l:visitCone}
	\ENDFOR
	\STATE $Q \eq Q \ap \pi$ \label{l:x1}
	\ELSE 	
	\STATE $\pi \eq \langle ub_1b_2b_3 \dots b_kv \rangle$
	\FORALL {$C_j$ clauses, if $\neg x_i \in C_j$ }
	\STATE let $\pi_1 $ be  subcurve of $\pi$ from $u$	 to $b_j$
	\STATE let $\pi_2 $ be  subcurve of $\pi$ from $b_j$	 to $v$
	\STATE $\pi \eq \pi_1 \ap c_j \ap \pi_2$ \label{l:visitCzero}
	\ENDFOR
	\STATE $Q \eq Q \ap \pi$ \label{l:x0}

	\ENDIF
	\STATE $Q \eq Q \ap t$
	\ENDFOR
	\STATE $Q \eq Q \ap \langle ua_1a_2a_3 \dots a_kv \rangle$\label{l:nplusone}
	\STATE $Q \eq Q \ap t$

	\STATE $Q \eq Q \ap \langle ub_1b_2b_3 \dots b_kv \rangle$\label{l:nplustwo}
	\STATE $Q \eq Q \ap t$  \label{l:endPoint}

	\STATE {\bf return} {\sc Q}  
\end{algorithmic}
\end{algorithm}


First, we show $\distF(P,Q) \le 1$, where $Q$
is the output curve of Algorithm \ref{alg:buildQ}.
Recall that by Algorithm \ref{alg:reduction}, 
curve $P$ includes $n$ subcurves $\cfev_{i}$ each corresponds 
to a variable $x_i$. 
Both curves $P$ and $Q$ start and end at a same point $t$.
For each curve $\pi$ which is appended to $Q$ 
in the $i$-th iteration of Algorithm \ref{alg:buildQ} 
(Line \ref{l:x1} or Line \ref{l:x0}), 
$\distF(\pi,\cfev_i) \le 1$  by Lemma \ref{lemma:ABCanSeeC}. 
Notice that $P$ also includes two additional subcurves $\cfev_{n+1}$ and $\cfev_{n+2}$ 
whereas there is no variable $x_{n+1}$ and $x_{n+2}$ in formula $\phi$. 
These two curves are to resolve two special cases: 
when all variables $x_i$ are 1,  no $\neg x_i$ appears in $\phi$,
and when all variables $x_i$ are 0,  
no $x_i$ appears in $\phi$.
Because of these two curves, 
we added two additional curves in line \ref{l:nplusone}
and \ref{l:nplustwo} to $Q$. Finally, by  Observation 
\ref{obs:concat}, $\distF(P,Q) \le 1$.

Next, we show that curve $Q$ visits every point in $S$. First of 
all, by the curves added to $Q$ 
in Line \ref{l:nplusone} and \ref{l:nplustwo}, 
all $a_j$ and $b_j$, $1\le j \le k$, in $S$ will be visited. 
It is sufficient to show that $Q$ will visit all $c_j$ points in $S$  as well.
Since  formula $\phi$ is satisfied, every clause $C_i$ in $\phi$ must be satisfied 
too. Fix clause $C_j$. At least one of the literals in $C_j$
must have a truth value $1$. If $x_i \in C_j$ and $x_i = 1$, 
then by line \ref{l:visitCone}, curve $Q$ visits $c_j$.
On the other hand, if $\neg x_i \in C_j$ and $x_i = 0$, 
by Line \ref{l:visitCzero}, curve $Q$ visits $c_j$. We conclude that 
curve $Q$ is feasible.

Now $(\Leftarrow)$ part:

Let $Q$ be a feasible curve with respect to $P$ and pointset $\pset$.
Notice that curve $P$ consists of $n$ subcurves $\cfev_i$, 
$1 \le i \le n $, where each corresponds to one variable $x_i$. 
From the configuration of each $\cfev_i$ in c-squares, 
one can easily construct formula $\phi$ with 
all of its clauses and literals. 


Imagine two point objects $\CO_Q$ 
and $\CO_P$ walk on $P$ and $Q$, respectively. 
We find the truth value of variable $x_i$ in the formula
by looking at the path that $\CO_Q$ takes to stay in \Frechet distance 1 to $\CO_P$, 
when $\CO_P$ walks on curve $\cfev_i$ corresponding to $x_i$.
If $\CO_Q$ takes path $A$ from Lemma \ref{lemma:PathA} 
while $\CO_P$ is walking on  $\cfev_i$, then $x_i = 1$. But if $\CO_Q$ takes path $B$ from Lemma \ref{lemma:PathB} 
while $\CO_P$ is walking on  $\cfev_i$, then $x_i = 0$. 
Object $\CO_Q$ decides between path $A$ or $B$,  when both $\CO_Q$ and $\CO_P$ are at point $u$. 
Lemma \ref{lemma:NoSwitchFromAtoB} ensures that  
once they start walking, 
$\CO_Q$ can not change its path from $A$ to $B$ 
or from $B$ to $A$. 
Therefore, the truth value of a variable $x_i$ is consistent.

The only thing left to show is the reason that formula $\phi$ is satisfiable. 
It is sufficient to show every clause of $\phi$ is satisfiable. 
Consider any clause $C_j$.
Since curve $Q$ is feasible, 
it uses every point in $\pset$.  
Assume w.l.o.g. that $\CO_Q$ visits $c_j$ 
when $\CO_P$ is walking along curve $\cfev_i$.  
By Lemmas 
\ref{lemma:NoSwitchFromAtoB} and \ref{lemma:ABCanSeeC},
this only happens when either ($x_i$ appears in $C_i$ and $x_i = 1$)
or ($\neg x_i$ appears in $C_i$ and $x_i =0$). 
Therefore, $C_j$ is satisfiable.


%
%

%
%
%
%

The last ingredient of the NP-completeness proof is
to show that the reduction takes polynomial time.  
One can easily see that Algorithm \ref{alg:reduction}
has running time $O(nk)$, 
where $n$ is the number of variables in 
the input formula with $k$ clauses.

\end{proof}

\subsection{Implementation Results}
To show the simplicity of our reduction algorithm, we have implemented it
in Java.  The figures in this chapter are all generated by our program. 
Our test case, as 
an input to the program, is a formula $\phi$ with four clauses. The output is 
three sets $S$, $L$ and $C$ as follows.

Set $S$ is a pointset computed by 
Algorithm \ref{alg:reduction}.
Since $\phi$ has four clauses, 
$S$
contains the following points:

$S = \{ s_1,g_1,c_1, s_2,g_2,c_2, s_3,g_3,c_3, s_4,g_4,c_4 , u, v, t \}$.

Set $L$ is a set of curves, 
where each of it is a configuration of 
$\cfev_i$ in the reduction algorithm.
Choosing  a formula with four clauses as an input, 
enables us to check all possible configurations of curve $\cfev_i$ built by Algorithm \ref{alg:reduction}. 
Let  $x_i$ be a variable in  formula
$\phi$.
Since $x_i$ or $\neg x_i$ or none could appear in a clause, and the formula has four clauses,  set $L$ contains 81 curves $\cfev_i$. 

Set $C$ contains all possible curves $\alpha$, 
each built in this way: 
$\alpha$ starts from point $u$, 
goes through arbitrary points from $\{g_1,c_1,s_1\}$, 
then to arbitrary points in $\{g_2,c_2,s_2\}$, next
to arbitrary points from $\{g_3,c_3,s_3\}$, and lastly 
from $\{g_4,c_4,s_4\}$
and at the end, 
$\alpha$ ends at $v$.
Therefore, set $C$ has almost 1,000,000,000 polygonal curves.

Let $\alpha$ be any curve in $C$ 
and $\ell$ be any curve in $L$. 
We compute in our program, the \Frechet distance between every curve $\alpha$ and $\ell$.
Notice that $C$ has huge amounts of curve data. We implemented our 
program in an efficient way so 
that we could do this computation in 
a fair amount of time. First, 
all 81 curves $\ell$ are computed and then, 
by computing each $\alpha$ in $C$, 
we compute 81 \Frechet distances 
$\distF(\alpha,\ell)$. Therefore, in total, 
almost $81 \times 1,000,000,000$ 
\Frechet distances have been computed by our program. 
The experiment is 
performed on four machines in parallel, each 
has an Intel(R) Core(TM) i7 CPU 2.67GHz and 12GB RAM.

The results show that in all cases,
$\distF(\alpha,\ell) > 1$ 
except for the following cases:
\vspace{0.1 in}

Case I:  $\alpha = \langle u,s_1,g_2,s_3,g_4,v \rangle$, then 
$\distF(\alpha,\ell) \le  1$, for any curve $\ell$ in $L$.

Case II:  $\alpha = \langle u,g_1,s_2,g_3,s_4,v \rangle$, then 
$\distF(\alpha,\ell) \le  1$, for any curve $\ell$ in $L$.
 
Case III:  $\alpha = \langle u,g_1,c_1,g_1,s_2,g_3,s_4,v \rangle$, then 
$\distF(\alpha,\ell) \le  1$, for $\ell$ corresponding to the case 
where $\neg x_i$ appeared in the 
first clause.

Case IV: $\alpha = \langle u,s_1,c_1,s_1,g_2,s_3,g_4,v \rangle$, 
then 
$\distF(\alpha,\ell) \le  1$, for $\ell$ corresponding to the case 
where $x_i$ appeared in the 
first clause 
and so on for other occurrence of 
variable $x_i$ in other clauses.

\vspace{0.1 in}
Case I confirms Lemma \ref{lemma:PathA},
case II confirms Lemma \ref{lemma:PathB},
cases III and IV confirm Lemma \ref{lemma:ABCanSeeC},
and all together confirm Lemma \ref{lemma:NoSwitchFromAtoB}.

\section{Conclusions}
\label{sec:conc}
In this chapter, we investigated the problem of deciding whether a polygonal curve through a given pointset $\pset$ exists, which 
visits every point in $\pset$ and
 is in $\eps$-\Frechet distance 
to a curve $P$. We showed that this problem is NP-complete. 


\REM{
\chapter{All-Points CPM Problem is NP-complete}
\label{ch:NP-Complete}

\section{Introduction}
In this chapter, we study 
a variant of the problem discussed in the previous chapter. 
We refer to this variant as the All-Points CPM problem. We address the following:
Consider a pointset $S \subseteq \IR^d$ and a polygonal curve $P$ in $\IR^d$, 
for $d \gee 2$ being a fixed dimension.
The objective is to decide whether there exists a polygonal curve $Q$ in  $\eps$-\Frechet
distance to $P$ such that the vertices of $Q$ are all chosen from the 
pointset $\pset$. Moreover, curve $Q$ has to visit every point of $\pset$
and it can visit a point multiple times. 
We  prove  that  this problem  is  NP-complete  by  reducing from 3CNF-SAT problem.
In an independent work~\cite{NPComplete-Pointset} (which is done after   
our NP-completeness result), the authors have shown that
the version of this problem where points in $\pset$ has to 
be visited only once, is NP-complete too. 
Their proof is obtained via reduction
from a restricted version of the 3SAT problem, 
called (3,B2)-SAT problem,
where the input to formulas is restricted
in which each literal occurs exactly twice.
In \cite{DiscretelyFollowing},  
Wylie and Zhu studied 
All-points CPM problem from the 
perspective of discrete \Frechet distance
and they showed that it is solvable in 
$O(nk)$ time ($n$ is the size of curve $P$ and $k$ is the size of  pointset $\pset$). 
Furthermore, they showed that the version of the problem in which 
each point of $\pset$ can only used once in $Q$ is  NP-complete.

\section{General Case is NP-complete}
\label{sec:NPComp}

\subsection{Preliminaries}
\noindent{\bf Notation.}
We denote by $P = \langle p_1p_2p_3...p_n \rangle$, a polygonal curve $P$
with vertices $p_1 p_2 \dots p_n$ in order 
and by $start(P)$ and $end(P)$, we denote 
the starting and ending point of $P$, respectively.
For a curve $P$ and a point $x$, by $P \ap x$, 
we mean connecting $end(P)$ to point $x$.
We use the same notation $P \ap Q$ 
to show the concatenation of 
two curves $P$ and $Q$ (which means connecting $end(P)$ to $start(Q)$).
Let $M(\Seg{ab})$ denote the  midpoint of line segment $\Seg{ab}$. 
For a point $q$ in the plane, let $x(q)$ and $y(q)$
denote the $x$ and $y$ coordinate of $q$, respectively.

For two intersecting line segments $\Seg{ab}$ and $\Seg{cd}$, let $ \Seg{ab} \dashv	\Seg{cd}$ denote the intersection point of them.
Let $\overleftrightarrow{bc}$ denote the line as a result of 
extending line segment $\Seg{bc}$.
For a point $p$ and a line segment $\Seg{bc}$,
let $p \perp \Seg{bc}$ denote the point on
line $\overleftrightarrow{bc}$,
located on the perpendicular from $p$ to $\overleftrightarrow{bc}$.

\begin{definition} \label{def:curves}
Given a pointset $\pset$ in the plane, let $Curves(\pset)$
be a set of polygonal curves $Q = \langle q_1 q_2 \dots q_n \rangle$ where: 
$$  \forall{q_i} : q_i \in \pset \mbox{  and  } 
  \forall{a} \in \pset: \exists{q_i} \mbox{  s.t. }  q_i = a. $$
\end{definition}

\begin{definition} \label{def:feasibleNPC}
Given a pointset $\pset$, a polygonal curve $P$ and a distance $\eps$, 
a polygonal curve $Q$ is called {\em feasible} if: 
$Q \in Curves(S)$ and $  \distF(P,Q) \le \eps$.
\end{definition}

We show that the problem of deciding whether a 
feasible curve exists or not  is NP-complete.
It is easy to see that this problem is in NP, since 
one can polynomially check whether $Q \in Curves(S)$
and also $\distF(P,Q) \le \eps$, using the algorithm in \cite{AltG95} 
(explained in Section \ref{sec:classicalFD}).

%

\subsection{Reduction Algorithm}

We reduce in Algorithm~\ref{alg:reduction},
an instance of 3CNF-SAT formula $\phi$ 
to an instance of our problem.
The input is a boolean formula 
$\phi$ with $k$ clauses $C_1, C_2, \dots, C_k$ and $n$ variables $x_1, x_2, \dots,x_n$ 
and the output is a pointset $\pset$,
a polygonal curve $P$ in the plane and 
a distance $\eps = 1$.

We construct the pointset $S$ as follows.
For each clause $C_j$, $1 \le j \le k$, in the formula $\phi$, 
we place three points $\{s_j, g_j,c_j\}$ in the plane, which are computed
in the $j$-th iteration of Algorithm \ref{alg:reduction} (from Line \ref{l:makeSLoop} to Line \ref{l:EndLoopPointSet}).
We define $o_j$ to be $M(\Seg{\sma_j\gre_j})$.
By $\sq_j$, $1\le j \le k$, we denote
a square in the plane, centered at $o_j$, 
with diagonal $\Seg{\sma_j \gre_j}$. 
We refer to $\sq_j$, $1\le j \le k$, as
{\em c-squares}. 
For an example of a pointset $S$ corresponding to a formula, 
see Figure \ref{fig:pathAExample}a.

Our reduction algorithm constructs the polygonal curve $P$ 
 through $n$ iterations. In  the $i$-th iteration, $1 \le i \le n$,
it builds a subcurve $\cfev_i$ corresponding to  a variable 
$x_i$ in the formula $\phi$ and appends that curve to $P$.
In addition to those $n$ subcurves, two curves 
$\cfev_{n+1}$ and $\cfev_{n+2}$ are appended to $P$. 
We will later discus the reason we add those 
two curves. Every subcurve  $\cfev_i$ of $P$ starts at point $u$ 
and ends at point $v$.
Furthermore, each  $\cfev_i$ goes through
$\sq_1$ to $\sq_k$ in order, enters each c-square $\sq_j$ from
the side $\Seg{c_j s_j}$  and exists that square from 
the side $\Seg{c_j g_j}$  (for an illustration, see Figure \ref{fig:pathAExample}a). 
Curve $\cfev_i$ itself is built incrementally  
through iterations of the loop at line \ref{l:looptoMakeL} 
of Algorithm \ref{alg:reduction}. 
In the $j$-th iteration, when $\cfev_i$ goes through $\sq_j$,
three points, which are within  $\sq_j$, are added to  $\cfev_i$
(these three points are computed through Lines 
\ref{l:makeclausestart} to \ref{l:makeclauseend}).
Next, before $\cfev_i$ reaches  $\sq_{j+1}$,
two points,  denoted by $\alpha_j$ and $\beta_j$, are added to that curve 
(these two points are computed in Lines \ref{l:alpha} 
and \ref{l:beta}).

Each $\cfev_i$ corresponds to variable $x_i$ in our approach.
We simulate $1$ or $0$ values of $x_i$
as follows.
Consider a point object $\CO_L$ 
traversing $\cfev_i$, from starting point $u$ to ending point $v$. 
Consider 
another point object $\CO_2$ which wants to 
walk from $u$ to $v$
on a path whose vertices are from points in $S$ and it wants to stay in distance one 
to $\CO_L$. We will show that 
by our construction, object $\CO_2$ has two options, either taking 
the path $A = \langle us_1g_2s_3 \dots v \rangle$ or the path $B = \langle ug_1s_2g_3 \dots v \rangle$ 
(See Figure \ref{fig:pathAExample}a and \ref{fig:pathAExample}b for an illustration). 
Choosing path $A$ by $\CO_2$ means $x_i = 1$ and choosing path $B$ means $x_i = 0$.
We first prove in Lemma \ref{lemma:PathA} that $\distF(\cfev_i,A) \le 1 $ and 
in Lemma  \ref{lemma:PathB} that $\distF(\cfev_i,B) \le 1 $.
Furthermore, in
Lemma \ref{lemma:NoSwitchFromAtoB}, we prove that as soon as $\CO_2$ chooses 
path $A$ at point $u$ to walk towards $v$, 
it can not switch to any vertex on path $B$.
Analogously, we show that as soon as $\CO_2$ chooses path $B$ at point $u$ to walk towards $v$, 
it can not switch to any vertex on path $A$.
In addition, in Lemmas \ref{lemma:ABCanSeeC} and \ref{lemma:NOTABCanSeeC}, we prove that 
if $x_i$ appears in clause $C_j$,
$\CO_2$ could visit point $c_j$ via the path $A$ and not $B$. In contrast, 
when $\neg x_i$ appears in the clause $C_j$,
$\CO_2$ could visit point $c_j$ via the path $B$ and not $A$.
However, when none of  $x_i$ or $\neg x_i$ appear 
in $C_j$, $\CO_2$ can take neither $A$ nor $B$ to visit $c_j$.
Thus, $c_j$ can be visited, 
if and only if there is 
an $i$ such that 
either $x_i$ or 
$\neg x_i$ are in clause $C_j$.

\begin{algorithm} 
\caption {{\sc Reduction Algorithm}} 
\label{alg:reduction}
\algsetup{indent=1.5em}
\begin{algorithmic}[1]
	\baselineskip=0.5\baselineskip
	\REQUIRE  3SAT formula $\phi$ with $k$ clauses $C_1 \dots C_k$ and $n$ variables $x_1 \dots x_n$

	\vspace{0.1in}
		
	\hspace{-0.2in} {\bf Construct pointset $S$:}  

	\STATE $\pset \leftarrow \emptyset$ \label{l:init}


	\STATE $\gre_1 = (1,1) $ \label{l:makeSStart}

	\FOR {$j = 1$ to $k$}   \label{l:makeSLoop}

	\STATE $\sma_j \leftarrow \big(x(\gre_j)-2,y(\gre_j)-2\big) $
		 \STATE $o_j   \eq  M(\Seg{\sma_j\gre_j})$

		\IF {($j$ is odd)	}

	\STATE $c_j \leftarrow \big(x(s_j), y(\gre_j) \big)$, $w_j  \eq \big(x(o_j)+\frac{1}{4},  y(o_j)-\frac{1}{4} \big)$

	\STATE  $\gre_{j+1} \eq \big (   x(s_{j}) + \frac{1}{4} + 8, y(s_{j}) + \frac{7}{4} +15 \big)$     		\label{l:ComputeNextEven}
    
		\ELSE
	\STATE $c_j \leftarrow \big(x(g_j), y(s_j) \big)$, $w_j  \eq \big(x(o_j)-\frac{1}{4},  y(o_j)+\frac{1}{4} \big)$ 
    
	\STATE  $\gre_{j+1} \eq \big ( x(s_{j}) + \frac{7}{4} + 15, y(s_{j}) + \frac{1}{4} + 8 \big)$     		\label{l:ComputeNextOdd}


	\ENDIF


\STATE $z_j = M(\Seg{c_jw_j})$

    \STATE $S = S \cup \{\sma_j,\gre_j, c_j\}$   \label{l:EndLoopPointSet}

	\ENDFOR

	\IF {($k$ is odd)}  \label{l:ComputeV}
	  \STATE $\eta \eq \big( x(o_k)+1, y(o_k)+4\big)$  
      \STATE $v \eq \big( x(o_k)+1, y(o_k)+9\big)$  
		\ELSE
	   \STATE $\eta \eq \big( x(o_k)+4, y(o_k)+1\big)$  
      \STATE $v \eq \big( x(o_k)+9, y(o_k)+1\big)$  
	\ENDIF

	\STATE $u = (-9,-1)$

	\STATE $t \eq \big(x(v),y(u) -20\big)$

    \STATE $\pset = \pset \cup \{u, v, t\}$ \label{l:makeSEnd}

\vspace{0.15in}

	\hspace{-0.25in} {\bf Construct polygonal curve $P$:}

	\STATE $P \eq \emptyset$ \label{l:makeP}

	\STATE $P \eq P \ap t$

	\FOR { $i = 1$ to $n+2$  }   \label{l:mainstart}

		\STATE  $\cfev_i \eq \emptyset$ \label{l:startofL}

		\STATE  $\cfev_{i} \eq \cfev_{i} \ap u$ 
		\STATE $\cfev_{i} \eq \cfev_{i} \ap (-4,-1) $  \label{l:Adduh1toell}
		\FOR {$j = 1$ to $k$} \label{l:looptoMakeL}

			\IF { ($x_i \in C_j$ and $j$ is odd ) or ($\neg x_i \in C_j$ and $j$ is even ) } \label{l:makeclausestart}	
			\STATE $\cfev_{i} \eq \cfev_{i} \ap M(\Seg{s_jc_j}) \ap c_j \ap w_j  $
			\ELSIF {{ ($\neg x_i \in C_j$ and $j$ is odd ) or ($ x_i \in C_j$ and $j$ is even ) }}
			\STATE $\cfev_{i} \eq \cfev_{i} \ap w_j \ap c_j \ap M(\Seg{g_jc_j} ) $
			\ELSE		
			\STATE $\cfev_i \eq  \cfev_i \ap w_j \ap c_j \ap w_j$
			\ENDIF \label{l:makeclauseend}

			\IF {$j \neq k$}

			\STATE $\alpha_j =  \frac{4}{5} g_j + \frac{1}{5} g_{j+1}$ \label{l:alpha}

			\STATE $\beta_j = \frac{1}{5} s_j + \frac{4}{5} s_{j+1} $ \label{l:beta}
			
			\STATE $\cfev_i \eq \cfev_i \ap \alpha_j \ap \beta_j$

			\ENDIF

		\ENDFOR
	
		\STATE $\cfev_i \eq \cfev_i \ap \eta \ap v$ \label{l:subcurve}

		\STATE  $P \leftarrow P \ap \cfev_i$ 
		\STATE  $P \eq P \ap t$

	\ENDFOR

	\vspace{0.05in}
\RETURN  pointset $\pset$, polygonal curve $P$ and distance $\eps = 1$

\end{algorithmic}
\end{algorithm}

\begin{table}[h]
\centering
\begin{tabular}{ r | l | l  }
if $x_i \in C_1$   & location of $\CO_A$ & location of $\CO_L$  
 \\
\hline
    
&  $u$ & $u$  \\
&  				 $h_1$ s.t.  $\| h_1\mu_1 \| \le \eps$ & $\mu_1 = (-4,-1)$\\

& $s_1 $  & $M(\Seg{s_1c_1})$	      \\

\hline
if $\neg x_i \in C_1$ 
&  $u$ & $u$  \\
&  				 $h_1$ s.t.  $\| h_1\mu_1 \| \le \eps$ & $\mu_1 = (-4,-1)$\\

& $s_1$ &  $ \Dir{\mu_1w_1} \dashv	 \Seg{s_1c_1} $\\

\hline
if $x_i \notin C_{1} \& \neg x_i \notin C_{1}$ & $u$ & $u$  \\
&  				 $h_1$ s.t.  $\| h_1\mu_1 \| \le \eps$ & $\mu_1 = (-4,-1)$\\
& $s_1$ &  $ \Dir{\mu_1w_1} \dashv	 \Seg{s_1c_1} $\\

\end{tabular}
\vspace{0.2 in}
\caption{Proof of Lemma \ref{lemma:PathA}, the base case of induction}
\label{tab:BaseCasePathA}
\end{table}

\begin{lemma}\label{lemma:PathA}
Consider any subcurve $\cfev_i$, $1\le i \le n+2$,  
which is built through Lines \ref{l:mainstart} to \ref{l:subcurve} 
of Algorithm \ref{alg:reduction}. Let $A$ be the polygonal curve  $\langle u\sma_1\gre_2\sma_3\gre_4..v\rangle$. 
Then, $\distF(\cfev_i,A) \le 1$.
\end{lemma}

\begin{proof}

We prove the lemma by induction on the number of segments along $A$. 
Consider two point objects $\CO_L$ and $\CO_A$ 
traversing $\cfev_i$ and $A$, respectively (Figure \ref{fig:pathAExample}a depicts an instance of $\cfev_i$ and $A$).
We show that $\CO_L$ and $\CO_A$ can walk
their respective curve, from the beginning to
 end, while keeping distance $1$ to each other. 

The base case of induction trivially holds as follows 
(see Figure \ref{fig:PathAClause1} for an illustration).
Table \ref{tab:BaseCasePathA} lists  pairwise locations of 
$\CO_L$ and $\CO_A$, where the distance of each pair is at most $1$.
Hence, $\CO_A$ can walk from $u$ to $s_1$ on the 
first segment of $A$ (segment $\Dir{us_1}$), 
while keeping distance $\le 1$ to $\CO_L$.

Assume inductively that $\CO_L$ and $\CO_A$ have feasibly walked along 
their respective curves, until $\CO_A$ reached $s_j$.
Then, as the induction step, 
we 
show that
$\CO_A$ can walk to $g_{j+1}$ and then to $s_{j+2}$, while keeping distance $1$ to $\CO_L$.
Table \ref{tab:PathA} lists pairwise locations
of $\CO_A$ and $\CO_L$ such that $\CO_A$ could reach  $s_{j+2}$.
One can easily check that the distance between the pair of points 
in that table is at most one.
 (For an illustration, see Figure \ref{fig:PathA}).

\begin{table}[t]
\centering
\begin{tabular}{ r | l | l  }
  & location of $\CO_A$ & location of $\CO_L$  
 \\
\hline
   if $x_i \in C_j$  & $s_j$ & $M(\Seg{c_js_j})$\\
	& $z_j$ & $c_j$\\ 
	&  & $w_j$\\ 

	& $\Seg{c_jg_j} \dashv \Seg{s_jg_{j+1}} $ & $\Seg{w_{j}\alpha_{j}} \dashv \Seg{c_{j}g_{j}}$ \\

   if $\neg x_i \in C_j$  & $s_j$ & $\Seg{\beta_{j-1}w_j}	\dashv \Seg{c_js_j}$ \\
	& $w_j \perp \Seg{s_jg_{j+1}}$ &$w_j$\\
	& $z_j$ &$z_j$\\
	& &$c_j$\\
	& $\Seg{c_jg_j} \dashv \Seg{s_jg_{j+1}} $ &$ M(\Seg{c_jg_j}) $\\

   if $x_i \notin C_j \& \neg x_i \notin C_j$  & $s_j$ & $\Seg{\beta_{j-1}w_j}	\dashv \Seg{c_js_j}$\\
	&$w_j \perp \Seg{s_jg_{j+1}}$  & $w_j$\\
	&$z_j$  & $z_j$\\

	& &$c_j$\\
	& &$w_j$\\
& $\Seg{c_jg_j} \dashv \Seg{s_jg_{j+1}} $ & $\Seg{w_{j}\alpha_{j}} \dashv \Seg{c_{j}g_{j}}$ \\

\hline
	&  $h_1$ s.t.  $\| h_1\alpha_j \| \le \eps$ & $\alpha_j$\\
	&  	$h_2$ s.t.  $\| h_2 \beta_j \| \le \eps$ & $\beta_j$\\

\hline
if $x_i \in C_{j+1}$ &  $\Seg{s_{j+1}c_{j+1}}	\dashv \Seg{s_{j}g_{j+1}}$    & $\Seg{\beta_jw_{j+1}}	\dashv \Seg{c_{j+1}s_{j+1}}$\\
& $z_{j+1}$ & $w_{j+1}$\\
&  & $z_{j+1}$\\
&  & $c_{j+1}$ \\
&  $g_{j+1}$ & $M(\Seg{c_{j+1}g_{j+1}})$\\

if $\neg x_i \in C_{j+1}$ &  			$\Seg{s_{j+1}c_{j+1}}	\dashv \Seg{s_{j}g_{j+1}}$  & $M(\Seg{s_{j+1}c_{j+1}})$\\
 & $z_{j+1}$ & $c_{j+1}$ \\
 &  & $w_{j+1}$ \\
 &  $g_{j+1}$ & $\Seg{g_{j+1}c_{j+1}}	\dashv \Seg{w_{j+1}\alpha_{j+1}}$ \\

if $x_i \notin C_{j+1} \& \neg x_i \notin C_{j+1}$ &  	$\Seg{s_{j+1}c_{j+1}}	\dashv \Seg{s_{j}g_{j+1}}$  & $\Seg{\beta_jw_{j+1}}	\dashv \Seg{c_{j+1}s_{j+1}}$\\
& $z_{j+1}$ & $w_{j+1}$\\
 & & $c_{j+1}$ \\
 & & $w_{j+1}$ \\
&  $g_{j+1}$ & $\Seg{g_{j+1}c_{j+1}}	\dashv \Seg{w_{j+1}\alpha_{j+1}}$ \\

\hline

	&  $h_3$ s.t.  $\| h_3\alpha_{j+1} \| \le \eps$ & $\alpha_{j+1}$\\
	&  	$h_4$ s.t.  $\| h_4 \beta_{j+1} \| \le \eps$ & $\beta_{j+1}$\\
\hline

 if $\neg x_i \in C_{j+2}$  & $s_{j+2}$ & $\Dir{\alpha_{j+1}w_{j+2}} 	\dashv \Seg{c_{j+2}s_{j+2}}$ \\
   if $ x_i \in C_{j+2}$  & $s_{j+2}$ & $M(\Seg{c_{j+2}s_{j+2}})$\\
   if $x_i \notin C_{j+2} \& \neg x_i \notin C_{j+2}$  & $s_{j+2}$ & $\Dir{\alpha_{j+1}w_{j+2}}	\dashv \Seg{c_{j+2}s_{j+2}}$\\

\end{tabular}
\caption{Distance between pair of points is less or equal to one}
\label{tab:PathA}
\end{table}

Finally, if $k$ is an odd number, then  
$\Dir{s_kv}$ is the last segment along $B$, otherwise, 
$\Dir{g_kv}$ is the last one. In 
either case, 
that edge crosses  the circle $\CB(\eta,1)$, where $\eta$ is the last vertex of 
$\cfev_i$ before $v$ (point $\eta$ is computed in line \ref{l:ComputeV} of 
Algorithm \ref{alg:reduction}). Therefore, 
 $\CO_A$ can walk to $v$, while keeping distance $1$ to $\CO_L$.

\qed
\end{proof}

\begin{figure}[t]
	\centering
	\includegraphics[width=1\columnwidth]{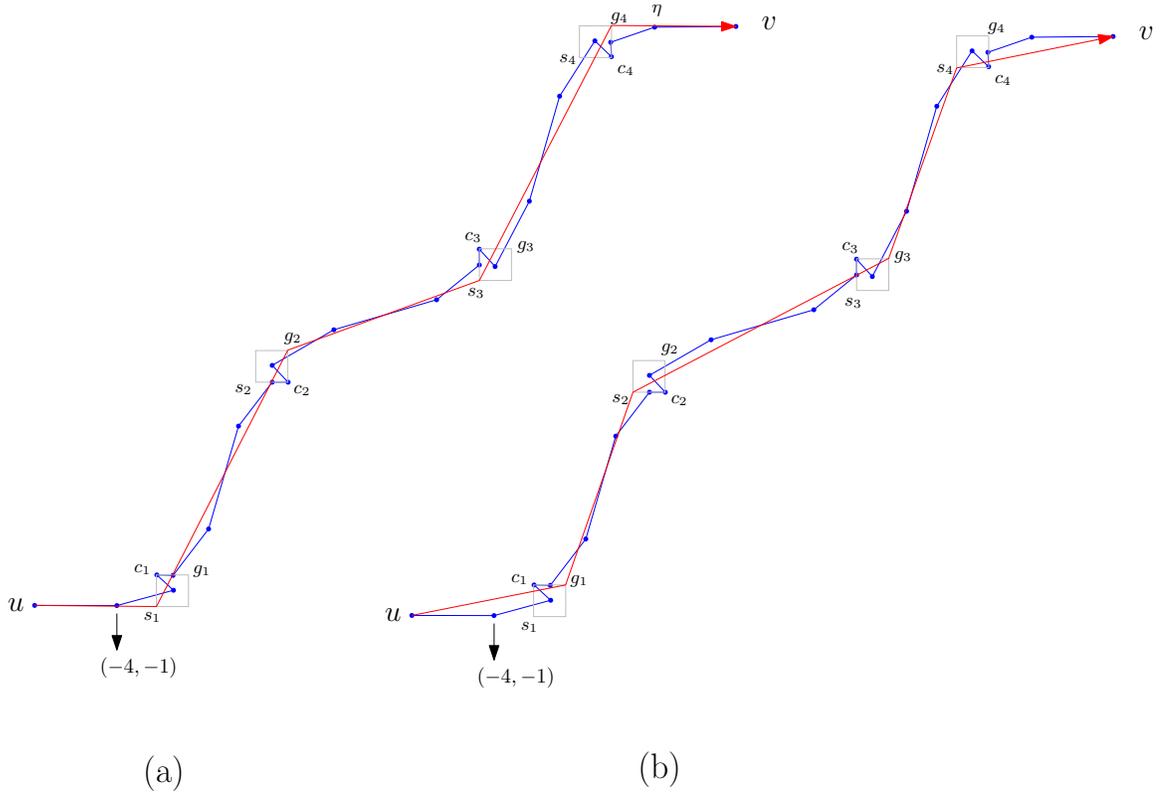}
	\caption{
Blue curve is an example of curve $\cfev_i$ which corresponds to variable $x_i$ in  formula $\phi$. The formula has four clauses $C_1, C_2, C_3$ and $C_4$,	 where the occurrence of variable $x_i$  in those clauses is:
$\neg x_i \in C_1$, $ \neg x_i \in C_2$, $x_i \in C_3$ and $x_i \in C_4$.  
For each clause $C_i$, the reduction algorithm places three point $s_i,g_i$ and $c_i$ in the plane. (a) Red curve is curve $A$. 
 (b) Red curve is curve $B$.  }
	\label{fig:pathAExample}
\end{figure}

\begin{figure}[h]
	\centering
	\includegraphics[width=0.9\columnwidth]{figs/PathAC1}
	\caption{Base case of induction in the proof of Lemma \ref{lemma:PathA}}
	\label{fig:PathAClause1}
\end{figure}

\begin{figure}
	\centering
	\includegraphics[width=0.9\columnwidth]{figs/parallelogramForPathA}
	\caption{Proof of Lemma \ref{lemma:PathA}}
	\label{fig:PathA}
\end{figure}

\begin{figure}[h]
	\centering
	\includegraphics[width=0.9\columnwidth]{figs/PathBC1}
	\caption{Base case of induction in the proof of Lemma \ref{lemma:PathB} }
	\label{fig:PathBBaseCase}
\end{figure}

\begin{figure}[h]

	\centering
	\includegraphics[width=0.9\columnwidth]{figs/parallelogramForPathB}
	\caption{Proof of Lemma \ref{lemma:PathB}}
	\label{fig:PathB}
\end{figure}

\begin{lemma}\label{lemma:PathB}
Consider any subcurve $\cfev_i$, $1\le i \le n+2$,  
constructed through Lines \ref{l:mainstart} to \ref{l:subcurve} 
of Algorithm \ref{alg:reduction}. Let $B$ be the polygonal curve  $\langle u\gre_1\sma_2\gre_3\sma_4..v\rangle$. Then, $\distF(\cfev_i,B) \le 1$.
\end{lemma}

\begin{proof}

Consider two point objects $\CO_L$ and $\CO_B$ 
traversing $\cfev_i$ and $B$, respectively (Figure \ref{fig:pathAExample}b depicts an instance of $\cfev_i$ and $B$).
To prove the lemma, we show that $\CO_L$ and $\CO_B$ can walk along
their respective curves, from beginning to
the end, while keeping distance $1$ to each other. 

The base case of induction holds as follows 
(see Figure \ref{fig:PathBBaseCase} for an illustration).
Table \ref{tab:BaseCasePathB} lists  pairwise locations of 
$\CO_L$ and $\CO_B$, where the distance of each pair is less or equal to $1$.
Therefore, $\CO_B$ can walk from $u$ to $g_1$ while
keep distance one to $\CO_L$.

\begin{table}[h]
\centering
\begin{tabular}{ r | l | l  }
if $x_i \in C_1$   & location of $\CO_B$ & location of $\CO_L$  
 \\
\hline
    
&  $u$ & $u$  \\
&  				 $h_1$ s.t.  $\| h_1\mu_1 \| \le \eps$ & $\mu_1 = (-4,-1)$\\

				&    $h_2 = \Dir{ug_1} \dashv	 \Seg{s_1c_1} $ & $\mu_2 = M(s_1c_1)$   \\
&  				 			      $h_2 $ & $c_1$  \\
&    $\Seg{ug_1} \dashv \Seg{c_1w_1}$ &  $\Seg{ug_1} \dashv \Seg{c_1w_1}$ \\
&  				 		     $w_1 \perp ug_1 $ &   $w_1$  \\
& $g_1 $  & $\Seg{w_1\alpha_1}	\dashv \Seg{c_1g_1}$	      \\

\hline
if $\neg x_i \in C_1$ 
&  $u$ & $u$  \\
&  				 $h_1$ s.t.  $\| h_1\mu \| \le \eps$ & $\mu_1 = (-4,-1)$\\
& $h_2 = \Dir{ug_1} \dashv	 \Seg{s_1c_1} $ &  $\mu_2 = \Dir{\mu_1w_1} \dashv	 \Seg{s_1c_1} $\\
& $\Seg{ug_1} \dashv \Seg{c_1w_1}$ & $w_1$ \\
&  &  $c_1$\\
& $g_1$ &  $M(c_1g_1)$\\

\hline
if $x_i \notin C_{1} \& \neg x_i \notin C_{1}$ & $u$ & $u$  \\
&  				 $h_1$ s.t.  $\| h_1\mu \| \le \eps$ & $\mu_1 = (-4,-1)$\\
& $h_2 = \Dir{ug_1} \dashv	 \Seg{s_1c_1} $ &  $\mu_2 = \Dir{\mu_1w_1} \dashv	 \Seg{s_1c_1} $\\
& $\Seg{ug_1} \dashv \Seg{c_1w_1}$ & $w_1$ \\
&  &  $c_1$\\
&  &  $w_1$\\
& $g_1 $  & $\Seg{w_1\alpha_1}	\dashv \Seg{c_1g_1}$	      \\

\end{tabular}
\vspace{0.2 in}
\caption{Pairwise location of $\CO_B$ and $\CO_L$, to prove the base case of induction in Lemma \ref{lemma:PathB} }
\label{tab:BaseCasePathB}
\end{table}

Assume inductively that $\CO_L$ and $\CO_B$ have feasibly walked along 
their respective curves, until $\CO_B$ reached $g_j$.
Then, as the induction step, 
we 
show that
$\CO_B$ can walk to $s_{j+1}$ and then to $g_{j+2}$ 
, while keeping distance $1$ to $\CO_L$.
This is shown in Table \ref{tab:PathB}
 (see Figure \ref{fig:PathB} for an illustration).

\begin{table}[h]
\centering
\begin{tabular}{ r | l | l  }
  & location of $\CO_B$ & location of $\CO_L$  
 \\
\hline
   if $x_i \in C_j$  & $g_j$ & $\Dir{\alpha_{j-1}w_j} 	\dashv \Seg{c_jg_j}$ \\
   if $\neg x_i \in C_j$  & $g_j$ & $M(\Seg{c_jg_j})$\\
   if $x_i \notin C_j \& \neg x_i \notin C_j$  & $g_j$ & $\Dir{\alpha_{j-1}w_j}	\dashv \Seg{c_jg_j}$\\

\hline
	&  $h_3$ s.t.  $\| h_3\alpha_j \| \le \eps$ & $\alpha_j$\\
	&  	$h_4$ s.t.  $\| h_4 \beta_j \| \le \eps$ & $\beta_j$\\

\hline
if $x_i \in C_{j+1}$ &  				 $s_{j+1}$  & $\Seg{\beta_jw_{j+1}}	\dashv \Seg{c_{j+1}s_{j+1}}$\\
& $w_{j+1} \perp \Seg{s_{j+1}g_{j+2}}$ & $w_{j+1}$\\
& $z_{j+1}$ & $z_{j+1}$\\

&  $\Seg{g_{j+1}c_{j+1}}	\dashv \Seg{s_{j+1}g_{j+2}}$ & $c_{j+1}$ \\
&  & $M(\Seg{c_{j+1}g_{j+1}})$
 \\

if $\neg x_i \in C_{j+1}$ &  				 $s_{j+1}$  & $M(\Seg{s_{j+1}c_{j+1}})$\\

 & $z_{j+1}$ & $c_{j+1}$ \\
 &  & $w_{j+1}$ \\

 &  $\Seg{g_{j+1}c_{j+1}}	\dashv \Seg{s_{j+1}g_{j+2}}$ & $\Seg{g_{j+1}c_{j+1}}	\dashv \Seg{w_{j+1}\alpha_{j+1}}$ \\

if $x_i \notin C_{j+1} \& \neg x_i \notin C_{j+1}$ &  				 $s_{j+1}$  & $\Seg{\beta_jw_{j+1}}	\dashv \Seg{c_{j+1}s_{j+1}}$\\
& $z_{j+1}$ & $w_{j+1}$\\
 & & $c_{j+1}$ \\
 & & $w_{j+1}$ \\
&  $\Seg{g_{j+1}c_{j+1}}	\dashv \Seg{s_{j+1}g_{j+2}}$ & $\Seg{g_{j+1}c_{j+1}}	\dashv \Seg{w_{j+1}\alpha_{j+1}}$ \\

\hline

	&  $h_5$ s.t.  $\| h_5\alpha_{j+1} \| \le \eps$ & $\alpha_{j+1}$\\
	&  	$h_6$ s.t.  $\| h_6 \beta_{j+1} \| \le \eps$ & $\beta_{j+1}$\\
\hline

if $ x_i \in C_{j+2}$ &  		$\Seg{s_{j+2}c_{j+2}}	\dashv \Seg{s_{j+1}g_{j+2}}$ 		   & $M(\Seg{s_{j+2}c_{j+2}})$\\

 & $z_{j+2}$ & $c_{j+2}$ \\
 &  & $w_{j+2}$ \\

 &  $g_{j+2}$ & $\Seg{g_{j+2}c_{j+2}}	\dashv \Seg{w_{j+2}\alpha_{j+2}}$ \\

if $\neg x_i \in C_{j+2}$ &  				$\Seg{s_{j+2}c_{j+2}}	\dashv \Seg{s_{j+1}g_{j+2}}$ 		   &   $\Seg{s_{j+2}c_{j+2}}	\dashv \Seg{\beta_{j+1}w_{j+2}}$   \\
 & $z_{j+2}$ & $w_{j+2}$ \\
 &  & $c_{j+2}$ \\

 &   $g_{j+2}$ & $M(\Seg{c_{j+2}g_{j+2}})$ \\

if $x_i \notin C_{j+2} \& \neg x_i \notin C_{j+2}$ &  				
$\Seg{s_{j+2}c_{j+2}}	\dashv \Seg{s_{j+1}g_{j+2}}$ 		   &   $\Seg{s_{j+2}c_{j+2}}	\dashv \Seg{\beta_{j+1}w_{j+2}}$  \\
& $z_{j+2}$ & $w_{j+2}$\\
 & & $c_{j+2}$ \\
 & & $w_{j+2}$ \\
&  $g_{j+2}$ & $\Seg{g_{j+2}c_{j+2}}	\dashv \Seg{w_{j+2}\alpha_{j+2}}$ \\

\end{tabular}
\caption{Distance between pair of points is less or equal to one}
\label{tab:PathB}
\end{table}

Finally, if $k$ is an odd number, then  
$\Dir{g_kv}$ is the last segment along $B$, otherwise, 
$\Dir{s_kv}$ is the last one. In any case, 
that edge crosses  circle $\CB(\eta,1)$, where $\eta$ is the last vertex of 
$\cfev_i$ before $v$ (point $\eta$ is computed after the condition checking 
in line \ref{l:ComputeV} of 
Algorithm \ref{alg:reduction}). Therefore, 
 $\CO_B$ can walk to $v$, while keeping distance $1$ to $\CO_L$.

\qed

\end{proof}

\begin{figure}
	\centering
	\includegraphics[width=0.9\columnwidth]{figs/NoSwitch}
	\caption{Proof of Lemma \ref{lemma:NoSwitchFromAtoB}}
	\label{fig:noswitch}
\end{figure}

\begin{lemma}\label{lemma:NoSwitchFromAtoB}
Consider any curve $\cfev_i \subset P $, $1\le i \le n+2$. Imagine that  
a point object $\CO_L$ is walking from $u$ to $v$ on $\cfev_i$. Furthermore, imagine
two point objects $\CO_A$ and $\CO_B$ which are walking on curves $A$ and $B$ 
(from Lemmas \ref{lemma:PathA} and \ref{lemma:PathB}), respectively,
while keeping distance $1$ to $\CO_L$. 
If $\CO_A$ goes to any vertex of $B$ or $\CO_B$ goes to any vertex of $A$, 
then they loose distance $\le 1$ to $\CO_L$.

\end{lemma}

\begin{proof}
Let 
$cl_i$ refer to 
points $\{s_i,g_i,c_i\}$.
Notice that we have placed the $cl_{i+1}$ points far enough from 
the $cl_{i}$ points so that 
no curve can go to $cl_{i+1}$
and come back to $cl_i$ and stay 
in \Frechet distance 1 to $\cfev_i$.
Therefore, to prove the lemma, 
we only focus on two consecutive c-squares.
We show that no subcurve $l' \subseteq \cfev_i$ exists such 
that (for an illustration, see Figure \ref{fig:noswitch}) :

\begin{itemize}

\item $\distF(l',\Dir{\sma_j\gre_j}) \le 1$ because:

for all $j$, $1 \le j \le k$, point $c_j$ is always a vertex of $\cfev_i$. 
A point on $\cfev_i$ at distance 1 
to $\sma_j$ lies before $c_j$ 
in direction $\Dir{\cfev_i}$, 
while a point on $\cfev_i$ at distance 1 
to point $\gre_j$ lies after $c_j$ in direction $\Dir{\cfev_i}$.
Since $dist(c_j, \Seg{\sma_j\gre_j}) >1$, 
no subcurve $l' \subseteq \cfev_i$ exists such that 
$\distF(l',\Dir{\sma_j\gre_j}) \le 1$.

\item $\distF(l',\langle \sma_jc_j\gre_{j}\rangle) \le 1$ or $\distF(l',\langle\gre_jc_j\sma_{j}\rangle) \le 1$, because:

For all $j$, $1 \le j \le k$, $w_j$
is a vertex of $\cfev_i$. 
A point on $\cfev_i$ at distance 1 
to $\sma_j$ lies before $w_j$ 
in direction $\Dir{\cfev_i}$, 
while a point on $\cfev_i$ at distance 1 
to point $\gre_j$ lies after $w_j$ in direction $\Dir{\cfev_i}$.
Since $dist(w_j, \Seg{\sma_jc_j}) >1$ and 
$dist(w_j, \Seg{\gre_j\gre_j}) >1$,
no subcurve $l' \subseteq \cfev_i$ exists such that 
$\distF(l', \langle \sma_jc_j\gre_{j}\rangle ) \le 1$.
Similarly,  no subcurve $l' \subseteq \cfev_i$ exists such that 
$\distF(l', \langle\gre_jc_j\sma_j\rangle ) \le 1$.

\item $\distF(l',\langle \sma_j\sma_{j+1} \rangle) \le 1$ or  $\distF(l',\langle\gre_j\gre_{j+1} \rangle) \le 1$ because:

Vertex $\alpha_{i}$ of $\cfev_{i}$
guarantees the first part as $dist( \alpha_{i},\Seg{\sma_j\sma_{j+1}}) > 1 $, 
and vertex $\beta_{i}$ of $\cfev_{i}$
guarantees the second part, 
as $dist( \beta_{i},\Seg{\gre_j\gre_{j+1}}) > 1$.

\item $\distF(l',\langle c_jc_{j+1} \rangle ) \le 1$,  because $dist( \alpha_{i},\Seg{c_jc_{j+1}}) > 1$

\item $\distF(l',\langle uc_1 \rangle) \le 1$, because $dist( (-4,-1),\Seg{uc_1}) >1$

\item  $\distF(l',\langle c_j\gre_{j+1} \rangle) \le 1$, because $dist( \alpha_i,\Seg{c_j\gre_{j+1}}) >1$

\item  $\distF(l',\langle c_j\sma_{j+1} \rangle ) \le 1$, because $dist( \alpha_i,\Seg{c_j\sma_{j+1}}) >1$

\item  $\distF(l',\langle c_kv \rangle) \le 1$, because $dist( \eta,\Seg{c_kv}) >1$ 
\end{itemize}

\end{proof}

\noindent To establish the correctness of
our reduction algorithm, 
from now on, we define:
$(a_i = \sma_i, b_i = \gre_i)$,  
 when $i$ is an odd number, 
and
$(a_i = \gre_i, b_i = \sma_i)$, 
when $i$ is an even number, for $1 \le i \le k$.

\begin{lemma}\label{lemma:ABCanSeeC}
Consider the curve $A = \langle ua_1a_2a_3 \dots a_k v \rangle$ from Lemma \ref{lemma:PathA}. Let 
$A_1$ be a subcurve of $A$ which starts at $u$ and ends at $a_j$, $1 \le j \le k$.
Furthermore, let $A_2$ be a subcurve of $A$ which starts at $a_j$ and ends at $v$.
For any curve $\cfev_i$ , $1\le i \le n+2$,
if $x_i \in C_j$, 
$\distF(A_1 \ap c_j \ap A_2, \cfev_i) \le \eps$. 
Similarly, consider the curve $B = \langle ub_1b_2b_3 \dots b_k v \rangle$ from Lemma \ref{lemma:PathB}. Let 
$B_1$ be a subcurve of $B$ which starts at $u$ and ends at $b_j$, $1 \le j \le k$.
Furthermore, let $B_2$ be a subcurve of $B$ which starts at $b_j$ and ends at $v$.
For any curve $\cfev_i$ , $1\le i \le n+2$,
if $\neg x_i \in C_j$,   
$\distF(B_1 \ap c_j \ap B_2, \cfev_i) \le \eps$. 
\end{lemma}

\begin{proof}
When $x_i$ appears in clause $C_j$, point $z = M(\Seg{c_ja_j})$ is 
a vertex of $\cfev_i$. 
Since $\| c_ja_j \| = 2$ and $z$ is the  midpoint of 
 $\Seg{c_ja_j}$,
$\CO_L$ can wait at $z$ while $\CO_A$ visits $c_j$.
Therefore, as the lemma states, 
we can cut curve $A$ at vertex $a_j$,
add two edges $\Dir{a_jc_j}$
and then $\Dir{c_ja_j}$ to $A$,
and continue with the same 
curve $A$
from $a_j$ to $A$'s endpoint. 
For the 
modified $A$, still $\distF(A, \cfev_i) \le \epsilon$ holds. 

When $\neg x_i$ appears in clause $C_j$, point $z = M(\Seg{c_jb_j})$ is 
a vertex of $\cfev_i$. 
Since $\| c_jb_j \| = 2$ and $z$ is the  midpoint of 
 $\Seg{c_jb_j}$,
$\CO_L$ can wait at $z$ while $\CO_A$ visits $c_j$
and comes back to $b_j$.
Therefore, as the lemma says, 
we can cut curve $B$ at vertex $b_j$,
add two edges $\Dir{b_jc_j}$
and then $\Dir{c_jb_j}$ to $B$,
and continue with the same 
curve $B$
from $b_j$ to $B$'s endpoint. 
For the 
modified $B$, still $\distF(B, \cfev_i) \le \epsilon$ holds.

\end{proof}

\begin{lemma}\label{lemma:NOTABCanSeeC}
Consider curve $A$ (respectively, $B$) from previous lemma.
For any curve $\cfev_i$, $1\le i \le n+2$,
when $ x_i \notin C_j$ and $\neg x_i \notin C_j$,   
curve $A$ (resp., $B$) can not be modified to visit $c_j$.
\end{lemma}
\begin{proof}
This holds because $dist(w_j,\Seg{a_jc_j}) >1$
and $dist(w_j,\Seg{b_jc_j}) >1$.

\end{proof}


\vspace{0.1 in}

\begin{theorem}
Given a formula $\phi$ with $k$ clauses $C_1, C_2, \dots, C_k$ and $n$ variables $x_1, x_2 \dots, x_n$,
as input, let curve $P$ and pointset $\pset$ be the output of Algorithm \ref{alg:reduction}. 
Then, $\phi$ is satisfiable iff a 
curve $Q \in Curves(S)$ exists such that 
$\distF(P,Q) \le 1$.
\end{theorem}

\begin{proof}


For $(\Rightarrow)$: 
Assume that  formula $\phi$ is satisfiable. 
In Algorithm \ref{alg:buildQ}, we show that 
knowing the truth value of the literals in $\phi$, 
we can build a curve $Q$ which 
visits every point in $\pset$ and $\distF(P,Q) \le 1$.

\begin{algorithm} [h]
\caption {{\sc Build a feasible curve $Q$ }} 
\label{alg:buildQ}
\algsetup{indent=1.5em}
\begin{algorithmic}[1]	
		\baselineskip=0.9\baselineskip
	\REQUIRE  Truth table of variables $x_1, x_2, \dots, x_n$ in $\phi$

	\STATE $Q \eq \emptyset$
	\STATE $Q \eq Q \ap t$ \label{l:startPoint}
	 
	\FOR {$i=1$ to $n$}   
	\IF {$(x_i = 1)$}
	\STATE $\pi \eq \langle ua_1a_2a_3 \dots a_kv \rangle$
	\FORALL {$C_j$ clauses, if $x_i \in C_j$ }
	\STATE let $\pi_1 $ be  subcurve of $\pi$ from $u$	 to $a_j$
	\STATE let $\pi_2 $ be  subcurve of $\pi$ from $a_j$	 to $v$
	\STATE $\pi \eq \pi_1 \ap c_j \ap \pi_2$  \label{l:visitCone}
	\ENDFOR
	\STATE $Q \eq Q \ap \pi$ \label{l:x1}
	\ELSE 	
	\STATE $\pi \eq \langle ub_1b_2b_3 \dots b_kv \rangle$
	\FORALL {$C_j$ clauses, if $\neg x_i \in C_j$ }
	\STATE let $\pi_1 $ be  subcurve of $\pi$ from $u$	 to $b_j$
	\STATE let $\pi_2 $ be  subcurve of $\pi$ from $b_j$	 to $v$
	\STATE $\pi \eq \pi_1 \ap c_j \ap \pi_2$ \label{l:visitCzero}
	\ENDFOR
	\STATE $Q \eq Q \ap \pi$ \label{l:x0}

	\ENDIF
	\STATE $Q \eq Q \ap t$
	\ENDFOR
	\STATE $Q \eq Q \ap \langle ua_1a_2a_3 \dots a_kv \rangle$\label{l:nplusone}
	\STATE $Q \eq Q \ap t$

	\STATE $Q \eq Q \ap \langle ub_1b_2b_3 \dots b_kv \rangle$\label{l:nplustwo}
	\STATE $Q \eq Q \ap t$  \label{l:endPoint}

	\STATE {\bf return} {\sc Q}  
\end{algorithmic}
\end{algorithm}


First, we show $\distF(P,Q) \le 1$, where $Q$
is the output curve of Algorithm \ref{alg:buildQ}.
Recall that by Algorithm \ref{alg:reduction}, 
curve $P$ includes $n$ subcurves $\cfev_{i}$ each corresponds 
to a variable $x_i$. 
Both curves $P$ and $Q$ start and end at a same point $t$.
For each curve $\pi$ which is appended to $Q$ 
in the $i$-th iteration of Algorithm \ref{alg:buildQ} 
(Line \ref{l:x1} or Line \ref{l:x0}), 
$\distF(\pi,\cfev_i) \le 1$  by Lemma \ref{lemma:ABCanSeeC}. 
Notice that $P$ also includes two additional subcurves $\cfev_{n+1}$ and $\cfev_{n+2}$ 
whereas there is no variable $x_{n+1}$ and $x_{n+2}$ in formula $\phi$. 
These two curves are to resolve two special cases: 
when all variables $x_i$ are 1,  no $\neg x_i$ appears in $\phi$,
and when all variables $x_i$ are 0,  
no $x_i$ appears in $\phi$.
Because of these two curves, 
we added two additional curves in line \ref{l:nplusone}
and \ref{l:nplustwo} to $Q$. Finally, by  Observation 
\ref{obs:concat}, $\distF(P,Q) \le 1$.

Next, we show that curve $Q$ visits every point in $S$. First of 
all, by the curves added to $Q$ 
in Line \ref{l:nplusone} and \ref{l:nplustwo}, 
all $a_j$ and $b_j$, $1\le j \le k$, in $S$ will be visited. 
It is sufficient to show that $Q$ will visit all $c_j$ points in $S$  as well.
Since  formula $\phi$ is satisfied, every clause $C_i$ in $\phi$ must be satisfied 
too. Fix clause $C_j$. At least one of the literals in $C_j$
must have a truth value $1$. If $x_i \in C_j$ and $x_i = 1$, 
then by line \ref{l:visitCone}, curve $Q$ visits $c_j$.
On the other hand, if $\neg x_i \in C_j$ and $x_i = 0$, 
by Line \ref{l:visitCzero}, curve $Q$ visits $c_j$. We conclude that 
curve $Q$ is feasible.

Now $(\Leftarrow)$ part:

Let $Q$ be a feasible curve with respect to $P$ and pointset $\pset$.
Notice that curve $P$ consists of $n$ subcurves $\cfev_i$, 
$1 \le i \le n $, where each corresponds to one variable $x_i$. 
From the configuration of each $\cfev_i$ in c-squares, 
one can easily construct formula $\phi$ with 
all of its clauses and literals. 


Imagine two point objects $\CO_Q$ 
and $\CO_P$ walk on $P$ and $Q$, respectively. 
We find the truth value of variable $x_i$ in the formula
by looking at the path that $\CO_Q$ takes to stay in \Frechet distance 1 to $\CO_P$, 
when $\CO_P$ walks on curve $\cfev_i$ corresponding to $x_i$.
If $\CO_Q$ takes path $A$ from Lemma \ref{lemma:PathA} 
while $\CO_P$ is walking on  $\cfev_i$, then $x_i = 1$. But if $\CO_Q$ takes path $B$ from Lemma \ref{lemma:PathB} 
while $\CO_P$ is walking on  $\cfev_i$, then $x_i = 0$. 
Object $\CO_Q$ decides between path $A$ or $B$,  when both $\CO_Q$ and $\CO_P$ are at point $u$. 
Lemma \ref{lemma:NoSwitchFromAtoB} ensures that  
once they start walking, 
$\CO_Q$ can not change its path from $A$ to $B$ 
or from $B$ to $A$. 
Therefore, the truth value of a variable $x_i$ is consistent.

The only thing left to show is the reason that formula $\phi$ is satisfiable. 
It is sufficient to show every clause of $\phi$ is satisfiable. 
Consider any clause $C_j$.
Since curve $Q$ is feasible, 
it uses every point in $\pset$.  
Assume w.l.o.g. that $\CO_Q$ visits $c_j$ 
when $\CO_P$ is walking along curve $\cfev_i$.  
By Lemmas 
\ref{lemma:NoSwitchFromAtoB} and \ref{lemma:ABCanSeeC},
this only happens when either ($x_i$ appears in $C_i$ and $x_i = 1$)
or ($\neg x_i$ appears in $C_i$ and $x_i =0$). 
Therefore, $C_j$ is satisfiable.


%
%

%
%
%
%

The last ingredient of the NP-completeness proof is
to show that the reduction takes polynomial time.  
One can easily see that Algorithm \ref{alg:reduction}
has running time $O(nk)$, 
where $n$ is the number of variables in 
the input formula with $k$ clauses.

\end{proof}

\subsection{Implementation Results}
To show the simplicity of our reduction algorithm, we have implemented it
in Java.  The figures in this chapter are all generated by our program. 
Our test case, as 
an input to the program, is a formula $\phi$ with four clauses. The output is 
three sets $S$, $L$ and $C$ as follows.

Set $S$ is a pointset computed by 
Algorithm \ref{alg:reduction}.
Since $\phi$ has four clauses, 
$S$
contains the following points:

$S = \{ s_1,g_1,c_1, s_2,g_2,c_2, s_3,g_3,c_3, s_4,g_4,c_4 , u, v, t \}$.

Set $L$ is a set of curves, 
where each of it is a configuration of 
$\cfev_i$ in the reduction algorithm.
Choosing  a formula with four clauses as an input, 
enables us to check all possible configurations of curve $\cfev_i$ built by Algorithm \ref{alg:reduction}. 
Let  $x_i$ be a variable in  formula
$\phi$.
Since $x_i$ or $\neg x_i$ or none could appear in a clause, and the formula has four clauses,  set $L$ contains 81 curves $\cfev_i$. 

Set $C$ contains all possible curves $\alpha$, 
each built in this way: 
$\alpha$ starts from point $u$, 
goes through arbitrary points from $\{g_1,c_1,s_1\}$, 
then to arbitrary points in $\{g_2,c_2,s_2\}$, next
to arbitrary points from $\{g_3,c_3,s_3\}$, and lastly 
from $\{g_4,c_4,s_4\}$
and at the end, 
$\alpha$ ends at $v$.
Therefore, set $C$ has almost 1,000,000,000 polygonal curves.

Let $\alpha$ be any curve in $C$ 
and $\ell$ be any curve in $L$. 
We compute in our program, the \Frechet distance between every curve $\alpha$ and $\ell$.
Notice that $C$ has huge amounts of curve data. We implemented our 
program in an efficient way so 
that we could do this computation in 
a fair amount of time. First, 
all 81 curves $\ell$ are computed and then, 
by computing each $\alpha$ in $C$, 
we compute 81 \Frechet distances 
$\distF(\alpha,\ell)$. Therefore, in total, 
almost $81 \times 1,000,000,000$ 
\Frechet distances have been computed by our program. 
The experiment is 
performed on four machines in parallel, each 
has an Intel(R) Core(TM) i7 CPU 2.67GHz and 12GB RAM.

The results show that in all cases,
$\distF(\alpha,\ell) > 1$ 
except for the following cases:
\vspace{0.1 in}

Case I:  $\alpha = \langle u,s_1,g_2,s_3,g_4,v \rangle$, then 
$\distF(\alpha,\ell) \le  1$, for any curve $\ell$ in $L$.

Case II:  $\alpha = \langle u,g_1,s_2,g_3,s_4,v \rangle$, then 
$\distF(\alpha,\ell) \le  1$, for any curve $\ell$ in $L$.
 
Case III:  $\alpha = \langle u,g_1,c_1,g_1,s_2,g_3,s_4,v \rangle$, then 
$\distF(\alpha,\ell) \le  1$, for $\ell$ corresponding to the case 
where $\neg x_i$ appeared in the 
first clause.

Case IV: $\alpha = \langle u,s_1,c_1,s_1,g_2,s_3,g_4,v \rangle$, 
then 
$\distF(\alpha,\ell) \le  1$, for $\ell$ corresponding to the case 
where $x_i$ appeared in the 
first clause 
and so on for other occurrence of 
variable $x_i$ in other clauses.

\vspace{0.1 in}
Case I confirms Lemma \ref{lemma:PathA},
case II confirms Lemma \ref{lemma:PathB},
cases III and IV confirm Lemma \ref{lemma:ABCanSeeC},
and all together confirm Lemma \ref{lemma:NoSwitchFromAtoB}.

\section{Conclusions}
\label{sec:conc}
In this chapter, we investigated the problem of deciding whether a polygonal curve through a given pointset $\pset$ exists, which 
visits every point in $\pset$ and
 is in $\eps$-\Frechet distance 
to a curve $P$. We showed that this problem is NP-complete. 


\chapter{Conclusions and Open Problems}
\label{ch:OpenProblems}

In the first part of 
this thesis, we introduced 
a new generalization of the well-known \Frechet distance between two polygonal
curves, and provided an efficient algorithm for computing it. In our  variant of the \Frechet problem, the speed of 
traversal along each segment of the curves is restricted to be within a specified range. This setting is more realistic than the classical \Frechet distance setting, specially in GIS applications.
We presented an efficient algorithm to solve the decision problem
in $O(n^2 \log n)$ time, which
led to an $O(n^3 \log n)$ time algorithm for finding 
the exact value of the \Frechet distance with speed limits.
Getting a better running time than  $O(n^3 \log n)$ remains open. 

We also studied speed-constrained \Frechet distance in the case where the curves
are located inside a simple polygon.
Several open problems arise from our 
work.
It is interesting to consider speed limits in other variants of the \Frechet distance studied in the literature,
such as the \Frechet distance between two curves lying 
on a convex polyhedron~\cite{AnilFrechet}, or on a polyhedral surface~\cite{WenkC08a}. 

Another open problem, in the context of the first 
part of the thesis, is whether 
our results can be applied in matching planar maps, where the objective is 
to find a path in a given road network that is as close as possible to a vehicle trajectory. 
In~\cite{AltERW03a}, the traditional \Frechet metric is used to match 
a trajectory to a road network. 
If the road network is very congested,  
the \Frechet distance with speed limits introduced in this thesis seems to find a more realistic path in the road network, 
close to the trajectory of the vehicle.
It is also interesting to extend our variant of \Frechet distance 
to the setting where the speed limits on the segments of the curves change as functions over time.

In the second part of this thesis, 
we introduced a data structure, 
called the free space map, which can be used 
as an alternative to the free-space diagram 
in applications related to 
\Frechet distance. 
Our data structure has the same size and construction time as the standard
free-space diagram, and 
encapsulates all the information available in that diagram, 
yet it is capable of answering more general types of queries efficiently. 
One open problem in the 
context of the second part of the thesis
is to quickly answer dynamic queries. Such queries include:
removing  vertices from the input curves 
or adding new vertices to the curves. How fast 
 could we solve the corresponding decision problems or
recompute the \Frechet distance.
\REM{
In the following, we explain the difficulty of constructing such data structure:

There are several interesting dynamic problems related to \Frechet metric
that to our knowledge, have not yet been addressed  in the literature:

Let  $P$ and $Q$ be 
two polygonal curves of length $n$ and $m$, respectively.
By spending $O(nm \log nm)$  , one can compute the \Frechet distance between   $P$ and $Q$ (i.e. $\distF(P,Q)$). 
Now, we want to know how fast we can recompute $\distF(P,Q)$, after:
  splitting some edges of $P$ (resp., $Q$),
  or deleting some vertices  of polyline $P$ (resp., $Q$),
  or merging some consecutive vertices of $P$ (resp., $Q$) into one vertex.

%


Consider the free-space diagram $\BNM$ corresponding to curves $P$ and $Q$, and input parameter $\epsilon \ge 0$.
Let $\row{i} = \Feps \cap y = i, i \in \{1,\dots,m+1\}$. 
Notice that given an arbitrary point $p \in \row{i}$, the number of intervals in $\row{j}$, $j > i$, which are monotonically reachable from $p$, 
is linear in the number of cells in one row. 
Let $\RSet{i}{j}$ denote the set of intervals in $\row{j}$ which are reachable from 
every point $p \in \row{i}$. We observed that for any $i$ and $j$, $j>i$, 
$\RSet{i}{j}$ has linear complexity (proportional to the number of cells in a row of $\BNM$). 
We store the reachability information in $\BNM$ in a data structure 
called {\em free space tree} in order to quickly update that information 
after performing the above operations. 




The free space tree is a balanced binary search tree, denoted by $T_{\epsilon}$, 
which we construct on the top 
of the free-space diagram (See Figure \ref{fig:freeSpaceTree}.a). 
Each leaf $v_i$ in the tree corresponds to $\row{i}$ and
stores  reachability set $\RSet{i}{i+1}$.
The tree is built bottom-up from the leaves to the root:
at an internal node $w$, with leaf $v_i$ as the left child
and leaf $v_j$ as the right child, the reachability information $\RSet{i}{i+1}$
is {\em merged} with $\RSet{j}{j+1}$ and consequently, set $\RSet{i}{j+1}$
is stored at node $w$.
The merge process is recursively continued at each internal node of the tree
by merging the reachability information stored at the left and right child of 
that node, until the root of the tree is reached.
Likewise, at the root of the tree, 
$\RSet{1}{\floor{\frac{m}{2}}}$ (stored at the left child of the root) 
is  merged with $\RSet{\floor{\frac{m}{2}}}{m+1}$ (stored at the right child of it),
and the reachability set $\RSet{1}{m+1}$ is stored  at the root. 
Having built the tree, one can report all reachable intervals in $\row{i+1}$ from a point $p \in \row{1}$ spending $O(1) + O(k)$ time ($k$ is the number of those reachable intervals).
In fact, finding the leftmost and the rightmost point in $\row{m+1}$ reachable from $p \in \row{1}$ can be done in constant time.


In the following, we describe the difficulty of constructing $T_{\epsilon}$ by showing the merge at 
the root of the tree as an example of the challenge in merge operation at internal nodes 
(see Figure \ref{fig:freeSpaceTree}.b for the illustration).
Let  a {\em feasible }interval be the maximal set of white points on the bottom side of a cell. 
Assume the free-space diagram is split in half 
at $\row{\floor{\frac{m}{2}}}$. Consider a feasible interval $I_1 \in \row{\floor{\frac{m}{2}}}$ and 
two pointers $\ell_{m+1}(I_1)$ and $r_{m+1}(I_1)$, where the pointers store the leftmost and 
the rightmost point in $\row{m+1}$, reachable from some points in $I_1$, respectively. 
Imagine another feasible interval $I_2 \in \row{1}$
where  $p = \ell_{m+1}(I_2)$ and $p \in I_1$.
Now, if we could determine the leftmost point in $\row{m+1}$ reachable from $p$, then 
we would be able to determine $\ell_{m+1}(I_2)$ in constant time
and consequently, perform the merge at the root of the tree in linear time. 

Our target is to obtain a linear running time
(or near linear) for the merge operation at internal nodes. 
If we could achieve a linear merge, then we can exploit the tree for computing 
the \Frechet distance after different dynamic operations on curves. 
As an example, suppose that we delete a vertex 
from curve $Q$. This means that two rows are deleted from the free-space diagram. 
Using the free space tree, we update the reachability information 
in $O(n \log n )$ time rather than $O(n^2)$ time. Therefore, 
we could decide whether the target point in $\BNM$ is monotonically reachable or not. 
}


%
 
\REM{
\begin{figure}[h]
	\centering
	\includegraphics[width=0.8\columnwidth]{Pics/Freespacetree2}
	\caption{ The free space tree is built on the top of the free-space diagram}
	\label{fig:freeSpaceTree}
\end{figure}
}

\REM{
We would also want to investigate about 
computing \Frechet distance in the following setting.
Consider a polygonal curve $P$ corresponding to the trajectory 
of a moving vehicle. Assume that to each edge of $P$, a parameter {\em weight}
is associated, which shows the accuracy of the sampling around that edge.
Our goal is to compute the \Frechet distance between two curves 
taking into consideration the weights on the edges. 
In this setting, the free space inside each cell in the free-space diagram 
corresponding to the curves, shrinks. To solve the decision problem, 
one needs to first discover the properties of the free space inside each cell, 
whether it is convex or $xy$-monotone or if it is connected or not. 
Next, the boundary of the free space inside the cells must be computed.
Finally, the free-space diagram is searched to find a monotone path from the bottom-left corner to the top-right corner. 
}


%

In the third part of this thesis, 
we introduced a new variant of
\Frechet distance problem, 
called Curve-Pointset (CPM) Problem, and showed that
given a polygonal curve $P$ of $n$ segments and a set $S$ of $k$ points in $\IR^d$, 
a polygonal curve $Q$ through $S$ of size $O(\min\set{n,k})$ minimizing $\distF(P, Q)$ 
can be computed in $O(nk^2 \log (nk))$ time. 
Then, we proved that if curve $Q$	is required to visit every point in the pointset, 
the problem becomes NP-complete. 
It is interesting to study special cases of 
this problem when the input is a specific 
type of curve, e.g., an $xy$-monotone convex curve, 
or a monotone curve or a non-intersecting curve. 
Also, it will be interesting to study special cases 
of the All-Points CPM problem  under the
condition that each point in the pointset 
can be visited only once.

\REM{	
	At the end, we studied a special case of the last problem  when the input is a convex polygon. 
Many open problems arise from this
work.	An open problem which we would like to mention is whether a polynomial time algorithm could be designed for the case where the input	
to the All-points CPM problem
is some other special type of curves or polygons, for instance, a non-intersecting curve, or a monotone polygon.  
It would be also interesting to study special cases of All-Points CPM problem  under the condition that each point in the pointset can be visited only once or a weight is 
assigned to each point and 
a curve with minimum or maximum weight should be constructed.
}	

}

\bibliographystyle{abbrv}
\bibliography{abbrv,Proposal}

\end{document}